\def\showauthornotes{0}
\def\showkeys{0}
\def\showdraftbox{0}
\def\showcolorlinks{1}
\def\usemicrotype{1}
\def\showfixme{0}
\newtheorem{theorem}{Theorem}[section]
\newtheorem*{theorem*}{Theorem}
\newtheorem{proposition}[theorem]{Proposition}
\newtheorem*{proposition*}{Proposition}
\newtheorem{lemma}[theorem]{Lemma}
\newtheorem*{lemma*}{Lemma}
\newtheorem*{conjecture*}{Conjecture}
\newtheorem{fact}[theorem]{Fact}
\newtheorem*{fact*}{Fact}
\newtheorem*{hypothesis*}{Hypothesis}
\theoremstyle{definition}
\newtheorem{definition}[theorem]{Definition}
\newtheorem*{definition*}{Definition}
\newtheorem{algorithm}[theorem]{Algorithm}
\newtheorem{problem}[theorem]{Problem}
\newtheorem*{problem*}{Problem}
\theoremstyle{remark}
\newtheorem{claim}[theorem]{Claim}
\newtheorem*{claim*}{Claim}
\newtheorem{remark}[theorem]{Remark}
\newtheorem*{remark*}{Remark}
\newtheorem*{observation*}{Observation}
\newcommand{\pref}{\prettyref}
\newcommand{\savehyperref}[2]{\texorpdfstring{\hyperref[#1]{#2}}{#2}}
\newcommand{\Sref}[1]{\hyperref[#1]{\S\ref*{#1}}}
\newcommand{\half}{\nicefrac12}
\newcommand{\Authornote}[2]{{\sffamily\small\color{red}{[#1: #2]}}}
\newcommand{\Authornotecolored}[3]{{\sffamily\small\color{#1}{[#2: #3]}}}
\newcommand{\Authorcomment}[2]{{\sffamily\small\color{gray}{[#1: #2]}}}
\newcommand{\Authorstartcomment}[1]{\sffamily\small\color{gray}[#1: }
\newcommand{\Authorfnote}[2]{\footnote{\color{red}{#1: #2}}}
\newcommand{\Authorfixme}[1]{\Authornote{#1}{\textbf{??}}}
\newcommand{\Authormarginmark}[1]{\marginpar{\textcolor{red}{\fbox{\Large #1:!}}}}
\newcommand{\Authornote}[2]{}
\newcommand{\Authornotecolored}[3]{}
\newcommand{\Authorcomment}[2]{}
\newcommand{\Authorstartcomment}[1]{}
\newcommand{\Authorfnote}[2]{}
\newcommand{\Authorfixme}[1]{}
\newcommand{\Authormarginmark}[1]{}
\newcommand{\Paren}[1]{\left(#1\right)}
\newcommand{\card}[1]{\lvert#1\rvert}
\newcommand{\norm}[1]{\lVert#1\rVert}
\newcommand{\Norm}[1]{\left\lVert#1\right\rVert}
\newcommand{\iprod}[1]{\langle#1\rangle}
\newcommand{\Iprod}[1]{\left\langle#1\right\rangle}
\newcommand{\Esymb}{\mathbb{E}}
\newcommand{\Psymb}{\mathbb{P}}
\newcommand{\Vsymb}{\mathbb{V}}
\DeclareMathOperator*{\E}{\Esymb}
\DeclareMathOperator*{\Var}{\Vsymb}
\DeclareMathOperator*{\ProbOp}{\Psymb}
\DeclareMathOperator*{\pE}{{\tilde\Esymb}}
\renewcommand{\Pr}{\ProbOp}
\newcommand{\tensor}{\otimes}
\newcommand{\textparen}[1]{\text{(#1)}}
\newcommand{\because}[1]{\textparen{because #1}}
\renewcommand{\because}[1]{\textparen{because #1}}
\newcommand{\sbits}{\{\pm1\}}
\newcommand{\defeq}{\stackrel{\mathrm{def}}=}
\newcommand{\mper}{\,.}
\newcommand{\mcom}{\,,}
\newcommand\bdot\bullet
\DeclareMathOperator{\Ind}{\mathbb{I}}
\DeclareMathOperator{\Ind}{\mathds 1}}
\DeclareMathOperator{\Tr}{Tr}
\DeclareMathOperator{\opt}{opt}
\DeclareMathOperator{\polylog}{polylog}
\newcommand{\etal}{et al.\xspace}
\newcommand{\N}{\mathbb N}
\newcommand{\R}{\mathbb R}
\newcommand{\cA}{\mathcal A}
\newcommand{\cC}{\mathcal C}
\newcommand{\cD}{\mathcal D}
\newcommand{\cE}{\mathcal E}
\newcommand{\cF}{\mathcal F}
\newcommand{\cG}{\mathcal G}
\newcommand{\cH}{\mathcal H}
\newcommand{\cI}{\mathcal I}
\newcommand{\cM}{\mathcal M}
\newcommand{\cP}{\mathcal P}
\newcommand{\cS}{\mathcal S}
\newcommand{\cU}{\mathcal U}
\newcommand{\cV}{\mathcal V}
\newcommand{\cX}{\mathcal X}
\newcommand{\draftbox}{\begin{center}
  \fbox{%
    \begin{minipage}{2in}%
      \begin{center}%
%        \begin{Large}%
          \Large\textsc{Working Draft}\\%
%        \end{Large}\\
        Please do not distribute%
      \end{center}%
    \end{minipage}%
  }%
\end{center}
\vspace{0.2cm}}
\newcommand{\draftbox}{}
\let\epsilon=\varepsilon
\numberwithin{equation}{section}
\newcommand\MYcurrentlabel{xxx}
\newcommand{\MYstore}[2]{%
  \global\expandafter \def \csname MYMEMORY #1 \endcsname{#2}%
}
\newcommand{\MYload}[1]{%
  \csname MYMEMORY #1 \endcsname%
}
\newcommand{\MYnewlabel}[1]{%
  \renewcommand\MYcurrentlabel{#1}%
  \MYoldlabel{#1}%
}
\newcommand{\MYdummylabel}[1]{}
\newcommand{\torestate}[1]{%
  % overwrite label command
  \let\MYoldlabel\label%
  \let\label\MYnewlabel%
  #1%
  \MYstore{\MYcurrentlabel}{#1}%
  % restore old label command
  \let\label\MYoldlabel%
}
\newcommand{\restatetheorem}[1]{%
  % overwrite label command with dummy
  \let\MYoldlabel\label
  \let\label\MYdummylabel
  \begin{theorem*}[Restatement of \prettyref{#1}]
    \MYload{#1}
  \end{theorem*}
  \let\label\MYoldlabel
}
\newcommand{\restatedef}[1]{%
  % overwrite label command with dummy
  \let\MYoldlabel\label
  \let\label\MYdummylabel
  \begin{definition*}[Restatement of \prettyref{#1}]
    \MYload{#1}
  \end{definition*}
  \let\label\MYoldlabel
}
\newcommand{\restatelemma}[1]{%
  % overwrite label command with dummy
  \let\MYoldlabel\label
  \let\label\MYdummylabel
  \begin{lemma*}[Restatement of \prettyref{#1}]
    \MYload{#1}
  \end{lemma*}
  \let\label\MYoldlabel
}
\newcommand{\restateprop}[1]{%
  % overwrite label command with dummy
  \let\MYoldlabel\label
  \let\label\MYdummylabel
  \begin{proposition*}[Restatement of \prettyref{#1}]
    \MYload{#1}
  \end{proposition*}
  \let\label\MYoldlabel
}
\newcommand{\restatefact}[1]{%
  % overwrite label command with dummy
  \let\MYoldlabel\label
  \let\label\MYdummylabel
  \begin{fact*}[Restatement of \prettyref{#1}]
    \MYload{#1}
  \end{fact*}
  \let\label\MYoldlabel
}
\newcommand{\restateobs}[1]{%
  % overwrite label command with dummy
  \let\MYoldlabel\label
  \let\label\MYdummylabel
  \begin{observation*}[Restatement of \prettyref{#1}]
    \MYload{#1}
  \end{observation*}
  \let\label\MYoldlabel
}
\newcommand{\restate}[1]{%
  % overwrite label command with dummy
  \let\MYoldlabel\label
  \let\label\MYdummylabel
  \MYload{#1}
  \let\label\MYoldlabel
}
\newcommand{\addreferencesection}{
  \phantomsection
  \addcontentsline{toc}{section}{References}
}
\let\origparagraph\paragraph
\renewcommand{\paragraph}[1]{\origparagraph{#1.}}
\newcommand{\cclassmacro}[1]{\texorpdfstring{\textbf{#1}}{#1}\xspace}
\newcommand{\np}{\cclassmacro{NP}}
\let\citet\cite
\theoremstyle{definition}
\DeclareUrlCommand\email{}
\newcommand{\restateproblem}[2]{%
  % overwrite label command with dummy
  \let\MYoldlabel\label
  \let\label\MYdummylabel
  \begin{problem*}[Restatement of \prettyref{#1}, {#2}]
    \MYload{#1}
    \end{problem*}
  \let\label\MYoldlabel
}
\newcommand{\bT}{{\bf T}}
\newcommand{\bA}{{\bf A}}
\newcommand{\squares}{\mathop{\textrm{squares}}}
\newcommand{\nsurp}{\mathrm{\# new}}
\newcommand{\nrecyc}{\mathrm{\# reused}}
\newcommand{\nhigh}{\mathrm{\# high}}
\newcommand{\nunf}{\mathrm{\# unforced}}
\newcommand{\nshare}{\mathrm{\# share}}
\newcommand{\nret}{\mathrm{\# return}}
\newcommand{\ty}{t}
\newcommand{\sos}{{SoS}\xspace }
\newcommand{\hght}{h}
\newcommand{\wdth}{w}
\newcommand{\tO}{\widetilde{O}}
\newcommand{\tOm}{\widetilde{\Omega}}
\newcommand{\onorm}[1]{\lVert#1\rVert_{\oplus}}
\newcommand{\dnorm}[1]{\lVert#1\rVert_{0}}
\title{Strongly Refuting Random CSPs Below the Spectral Threshold}
\author{%
\normalsize
Prasad Raghavendra \thanks{UC Berkeley,
  \protect\email{prasad@cs.berkeley.edu}. Supported by NSF Career
  Award, NSF CCF-1407779 and the Alfred. P. Sloan Fellowship. }
\and
\normalsize
Satish Rao\thanks{UC Berkeley, \protect\email{satishr@cs.berkeley.edu}.}
\and
\normalsize
Tselil Schramm\thanks{UC Berkeley, \protect\email{tschramm@cs.berkeley.edu}.
Supported by an NSF Graduate Research Fellowship (NSF award no 1106400).}
}
\date{}
\begin{document}

\maketitle

\draftbox

\thispagestyle{empty}

\begin{abstract}
    Random constraint satisfaction problems (CSPs) are known to exhibit threshold phenomena: given a uniformly random instance of a CSP with $n$ variables and $m$ clauses, there is a value of $m = \Omega(n)$ beyond which the CSP will be unsatisfiable with high probability.
    Strong refutation is the problem of certifying that no variable assignment satisfies more than a constant fraction of clauses; this is the natural algorithmic problem in the unsatisfiable regime (when $m/n = \omega(1)$).

    Intuitively, strong refutation should become easier as the clause density $m/n$ grows, because the contradictions introduced by the random clauses become more locally apparent.
    For CSPs such as $k$-SAT and $k$-XOR, there is a long-standing gap between the clause density at which efficient strong refutation algorithms are known, $m/n \ge \widetilde O(n^{k/2-1})$, and the clause density at which instances become unsatisfiable with high probability, $m/n = \omega (1)$.

    In this paper, we give spectral and sum-of-squares algorithms for strongly refuting random $k$-XOR instances with clause density $m/n \ge \widetilde O(n^{(k/2-1)(1-\delta)})$ in time $\exp(\widetilde O(n^{\delta}))$ or in $\widetilde O(n^{\delta})$ rounds of the sum-of-squares hierarchy, for any $\delta \in [0,1)$ and any integer $k \ge 3$.
    Our algorithms provide a smooth transition between the clause density at which polynomial-time algorithms are known at $\delta = 0$, and brute-force refutation at the satisfiability threshold when $\delta = 1$.
We also leverage our $k$-XOR results to obtain strong refutation algorithms for SAT (or any other Boolean CSP) at similar clause densities.

Our algorithms match the known sum-of-squares lower bounds due to Grigoriev and Schonebeck, up to logarithmic factors.

\end{abstract}

\clearpage
\thispagestyle{empty}

\tableofcontents

\clearpage
\setcounter{page}{1}
\section{Introduction}

Random instances of constraint satisfaction problems (CSPs) have been a subject of intense study in computer science, mathematics and statistical physics.
Even if we restrict our attention to random $k$-SAT, there is already a vast body of work across various communities--see \cite{Achli09} for a survey.
In this paper, our focus is on {\it refuting} random CSPs: the task of algorithmically proving that a random instance of a CSP is unsatisfiable.
Refutation is a well-studied problem with connections to myriad areas of theoretical computer science including proof complexity \cite{BB02}, inapproximability \cite{Feige02}, SAT solvers, cryptography \cite{ABW10}, learning theory \cite{DLSS14}, statistical physics \cite{CLP02} and complexity theory \cite{BKS13}.

For the sake of concreteness, we will for a moment restrict our attention to $k$-SAT, the most well-studied random CSP.
In the random $k$-SAT model, we choose a $k$-uniform CNF formula $\Phi$ over $n$ variables by drawing $m$ clauses independently and uniformly at random.
The density of $\Phi$ is given by the ratio $\alpha = m/n$.
It is conjectured that for each $k$, there is a critical value $\alpha_k$ such that $\Phi$ is satisfiable with high probability if $\alpha < \alpha_k$, and unsatisfiable with high probability for $\alpha > \alpha_k$.
Such phase transition phenomena are conjectured to occur for all nontrivial random CSPs; for the specific case case of $k$-SAT, it was only recently rigorously established for all sufficiently large $k$ \cite{DSS15}.

In the unsatisfiable regime, when $\alpha > \alpha_k$, the natural algorithmic problem we associate with random $k$-SAT formulas is the problem of {\it refutation}.
We define the notion of a refutation algorithm formally:
\begin{definition} (Refutation Algorithm)
An algorithm $\cA$ is a {\it refutation} algorithm for random $k$-SAT at density $\alpha$, if given a random instance $\Phi$ of $k$-SAT with density $\alpha$, the algorithm $\cA$:
\begin{compactitem}

\item Outputs YES with probability at least $\frac{1}{2}$ over the choice of $\Phi$.\footnote{The choice of the fraction $\frac{1}{2}$ here is arbitrary, and one could potentially consider any fixed constant.}
\item Outputs NO if $\Phi$ is satisfiable.
\end{compactitem}
\end{definition}
\noindent Note that if the algorithm $\cA$ outputs YES on an instance $\Phi$, it certifies that the instance $\Phi$ is unsatisfiable.

Refuting random $k$-SAT is a seemingly intractable problem in that the best polynomial-time algorithms require density  $\alpha > \tilde{O}(n^{k/2 - 1}) \gg \tO(1)$.
We survey the prior work on refuting CSPs in \pref{sec:prior}.

At densities far exceeding the unsatisfiability threshold, i.e., $\alpha \gg \alpha_k$, a simple union bound argument can be used to show that a random instance $\Phi$ has no assignment satisfying more than a $1 - \frac{1}{2^k} + \delta(\alpha)$ fraction of constraints, where $\delta(\alpha) \to 0$ as $\alpha \to \infty$.  In this regime, a natural algorithmic task is {\it strong refutation}:

\begin{definition} (Strong Refutation)
An algorithm $\cA$ is a {\it strong refutation} algorithm for random $k$-SAT at density $\alpha$, if for a fixed constant $\delta > 0$, given a random instance $\Phi$ of $k$-SAT with density $\alpha$, the algorithm $\cA$:
\begin{compactitem}
\item Outputs YES with probability at least $\frac{1}{2}$ over the choice of $\Phi$.
\item Outputs NO if $\Phi$ has an assignment satisfying at least a $(1-\delta)$-fraction of clauses.
\end{compactitem}
\end{definition}

An important conjecture in complexity theory is Feige's ``R3SAT hypothesis,'' which states that for any $\delta>0$, there exists some constant $c$ such that there is no polynomial-time algorithm that can certify that a random $3$-SAT instance has value at most $1-\delta$ (that is, strongly refute $3$-SAT) at clause density $m/n = c$.
Feige exhibited hardness of approximation results based on the hypothesis for a class of otherwise elusive problems such as densest-$k$ subgraph and min-bisection \cite{Feige02}.
This hypothesis has subsequently been used as the starting point in a variety of reductions (see e.g. \cite{AAMMW11,BKS13,DLS13}).

The problem of strong refutation is non-trivial even for polynomial-time solvable CSPs such as $k$-XOR.\footnote{
    The {\it weak} refutation problem for $k$-XOR can be easily solved using Gaussian elimination.}
A random $k$-XOR instance $\Phi$ on $n$ variables $x_1,\ldots,x_n \in \{\pm 1\}$ consists of $m$ equations of the form $x_{i_1} \cdot x_{i_2} \cdots x_{i_k} = \pm 1$
By a simple union bound, one can show that at all super-linear densities $m/n = \omega(1)$, with high probability, no assignment satisfies more than $\frac{1}{2}+ o(1)$-fraction of the equations.\footnote{Random $k$-XOR can also be equivalently defined in terms of equations of the form $x_{i_1} \oplus \cdots x_{i_k} = 0/1$.  The equivalence follows by mapping $0 \to 1$, $1 \to -1$, and $\oplus \to \cdot $.}
The problem of strong refutation for random $k$-XOR amounts to certifying that no assignment satisfies more than $1-\delta$ fraction of equations for some constant $\delta > 0$.
A natural spectral algorithm can efficiently strongly refute $k$-XOR at densities $m/n \ge n^{k/2-1}$ \cite{COGL04,COGL07,AOW15,BM15}.
However, strong refutation at any lower density is widely believed to be an intractable problem \cite{ABW10, BM15, DLSS14, D16}.
We refer the reader to \cite{D16} for a survey of the evidence pointing to the intractability of the problem.

To expose the stark difficulty of strongly refuting random $k$-XOR, consider the easier task of distinguishing random $k$-XOR instances from those generated from the following distribution:
first, sample a satisfiable instance of $k$-XOR uniformly at random, by sampling a planted solution $z \in \{\pm 1\}^n$ and randomly choosing $m$ equations, each on $k$ variables, satisfied by $z$.
Then, corrupt each of the $m$ equations (so that $z$ does not satisfy it) with probability $\delta$.
Equivalently, this problem can be described as { \it learning parity with noise}, wherein $z \in \{\pm 1\}^n$ defines the unknown parity and each equation $C_i$ is an {\it example} to the learning algorithm.
An algorithm to learn parity from noisy examples can be used to distinguish the planted instances sampled as described above from uniformly random instances of $k$-XOR.
There is no known distinguishing algorithm at any density $m/n < n^{k/2-1}$, and the computational intractability of this problem has recently been used to obtain lower bounds for improper learning \cite{D16}.

\paragraph{Sum-of-Squares Refutations}
A natural proof system for strong refutation is the sum-of-squares (\sos) proof  system.
Given an instance $\Phi$ of a Boolean $k$-CSP,  the fraction of constraints satisfied by an assignment $x$ can be written as a polynomial $P_{\Phi}(x)$ of degree at most $k$ in $x$. Let $\opt(\Phi)$ denote the largest fraction of constraints satisfied by any assignment to the variables, i.e.,
\[
    \opt(\Phi) \defeq  \max_{x \in \{\pm 1\}^n} P_{\Phi}(x) \mper
\]
Therefore, certifying an upper bound $c$ on $\opt(\Phi)$ reduces to certifying that $\max_{x \in \{\pm 1\}^n} P_{\Phi}(x) < c$.

A degree-$d$ {\it sum-of-squares proof}  for this fact is a polynomial identity of the form,
\[
    c - P_{\Phi}(x) = \sum_{i} q_i^2(x) \mod \mathcal{I}\mcom
\]
where $\deg(q_i^2) \leq d$ and $\mathcal{I}$ is the ideal generated by
the polynomials $\{x_j^2 - 1\}$ that define the variety
$\{\pm 1\}^n$.

The size of a degree-$d$ \sos proof is at most $n^{O(d)}$, assuming the coefficients have a bit-complexity of at most $n^{O(d)}$.
Moreover, finding a degree-$d$ \sos proof can be formulated as a semidefinite program, also known as the degree-$d$ sum-of-squares hierarchy or the $d$-round Lasserre/Parrillo SDP hierarchy \cite{Las, Par}.
Therefore, if there is a degree-$d$ \sos proof with bit complexity $n^{O(d)}$, then one can be found in time $n^{O(d)}$.

\sos proof systems are very powerful in that they capture both local arguments, such as resolution-based proofs, and global methods like spectral techniques.
Furthermore, these proof systems subsume various linear programming and SDP hierarchies such as the Sherali-Adams, Lov\'{a}sz Schrijver (LS) and LS+ hierarchies.
In the recent past, the \sos SDP hierarchy has received considerable attention due to its ability to certify the objective value on many candidate hard instances for the unique games problem \cite{BBHKSZ12}.

Unfortunately, the lower bounds of Grigoriev \cite{Gri01} and Schoenebeck \cite{Sch08} rule out efficient strong \sos refutations for random $k$-XOR and random $k$-SAT at densities significantly smaller than $m/n < n^{k/2-1}$.
Specifically, Schonebeck's result implies that with high probability over $k$-XOR instances $\Phi$ with clause density $m/n < O(n^{(k/2 -1)(1-\delta)})$, the \sos hierarchy cannot refute $\Phi$ at degree $O(n^{\delta})$.

Note that this leaves open the possibility that random $k$-XOR and random $k$-SAT admit subexponential-sized strong refutations well-below the $n^{k/2-1}$ threshold.
This sets the stage for our main result.
\medskip

\begin{theorem} \label{thm:maindlin}
    For all $\delta \in [0,1)$ given a random $k$-XOR instance $\Phi$ on $n$ variables, with high probability over $\Phi$, the degree $O(n^{\delta})$ sum-of-squares hierarchy can strongly refute $\Phi$, certifying that
	\[
	    \opt(\Phi) \leq \frac{1}{2}+ \epsilon \mcom
	\]
	for any constant $\epsilon>0$ as long as $\Phi$ has clause density  $m/n \ge \tO (n^{(k/2-1)(1-\delta)})$, where the $\tO$ notation hides logarithmic factors and a dependence on $\epsilon$ and $k$.
	Further, there is a spectral algorithm achieving the same guarantees by computing the eigenvalue of an $2^{\tO(n^{\delta})} \times 2^{\tO(n^{\delta})}$ matrix.
\end{theorem}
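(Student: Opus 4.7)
The plan is to construct, for each instance $\Phi$, a symmetric ``Kikuchi-style'' matrix $A_\Phi \in \mathbb{R}^{N \times N}$ with $N = \binom{n}{\ell}$ and $\ell = \widetilde{\Theta}(n^\delta)$, whose spectral norm simultaneously yields a spectral certificate and an SoS certificate of $\opt(\Phi) \le \tfrac{1}{2} + \epsilon$. I index the rows and columns of $A_\Phi$ by $\ell$-subsets of $[n]$, and for each clause $C_i$ with right-hand side $b_i$ place entry $b_i$ at every pair $(S,T)$ with $S \triangle T = C_i$ and $|S \cap C_i| = k/2$ (assuming $k$ even; odd $k$ is reduced to the even case by Cauchy--Schwarz on $P_\Phi(x)^2$ or by an asymmetric bipartite variant). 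Introducing auxiliary variables $y_S \defeq \prod_{j \in S} x_j$, the identity $y_S y_T = \prod_{j \in S \triangle T} x_j$ (modulo $x_j^2 = 1$) immediately gives
\[
y^\top A_\Phi\, y \;=\; \binom{k}{k/2}\binom{n-k}{\ell - k/2}\cdot m\cdot P_\Phi(x)\mper
\]
Combined with $\|y\|_2^2 = N$, this produces $|P_\Phi(x)| \le \widetilde{\Theta}(n^{k/2}/(m\,\ell^{k/2}))\cdot \|A_\Phi\|_{op}$.

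Next I would bound $\|A_\Phi\|_{op}$ via a trace-method calculation. Each row has expected weight $\bar d \approx m\,\ell^{k/2}/n^{k/2}$, and the signs $b_i$ are i.i.d.\ $\pm 1$, so standard matrix concentration yields $\|A_\Phi\|_{op} = \widetilde O(\sqrt{\bar d})$ as soon as $\bar d \gtrsim \log N \approx \ell \log n$. This threshold rearranges to $m \gtrsim \widetilde O(n^{k/2}/\ell^{k/2-1})$, which for $\ell = \widetilde{\Theta}(n^\delta)$ is exactly the hypothesis $m/n \ge \widetilde O(n^{(k/2-1)(1-\delta)})$, and the resulting value bound is $|P_\Phi(x)| = \widetilde O(n^{-\delta/2}) \le \epsilon$. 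For the SoS upgrade, I use the observation that $\|A_\Phi\|_{op}\, I \pm A_\Phi \sge 0$ are manifestly sums of squares of linear forms in the $y$-variables; after substituting $y_S = \prod_{j\in S}x_j$ (a degree-$\ell$ polynomial in $x$), these become SoS identities modulo $\{x_j^2 - 1\}$ of degree $2\ell = O(n^\delta)$, matching the claimed hierarchy complexity. The spectral algorithm is immediate: compute $\|A_\Phi\|_{op}$ on a matrix of dimension $\binom{n}{\ell} = 2^{\widetilde O(n^\delta)}$, as advertised.

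The main technical obstacle is the spectral bound in the near-threshold regime $\bar d \approx \log N$, where a handful of atypically high-degree rows can in principle dominate $\|A_\Phi\|_{op}$ and naive concentration (matrix Bernstein, row-sum bounds) is too weak. The remedy is a careful moment computation of $\E[\Tr(A_\Phi^{2t})]$ for $t \approx \log N$, preceded if necessary by a row-pruning step that discards the $o(N)$ vertices whose degrees are far above $\bar d$ (their existence being controlled by a tail bound on the binomial degree distribution). Executing this trace bound requires enumerating closed walks of length $2t$ on the Kikuchi graph and carefully classifying the contribution according to how many distinct clauses a walk uses---each clause that appears an odd number of times contributes a surviving independent sign $b_i$, while repeated clauses give cancelling $b_i^2 = 1$ factors, so the bound reduces to counting walks that re-use edges sufficiently often. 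Making this argument work uniformly across all $\delta \in [0,1)$ and all integer $k \ge 3$, and arranging the pruning so it remains SoS-compatible (i.e., so the SoS proof of Step~3 can be written on the pruned matrix), is where most of the effort will go.
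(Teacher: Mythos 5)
Your proposal takes a genuinely different route from the paper: you construct a \emph{Kikuchi-style} matrix indexed by $\ell$-subsets of $[n]$ with entries determined by symmetric differences, whereas the paper flattens the constraint tensor $\bT_\Phi$ to an $n^{k/2}\times n^{k/2}$ matrix $T$, forms the Kronecker power $T^{\otimes d}$ (with $d = \widetilde\Theta(n^\delta)$), averages over the $\cS_{dk/2}$-action on rows and columns, and \emph{deletes} the rows and columns indexed by multisets in $[n]^{dk/2}$ containing any element with multiplicity above $O(\log n)$. The key lemmas are correspondingly different: the paper's trace calculation reduces to counting sequences of hypergraph matchings between vertex multisets, using a two-step sampling argument (simple-edge matchings followed by grouping into hyperedges) plus a delicate encoding argument to control multiplicities in the sparse regime; your calculation would count closed walks on the Kikuchi graph classified by how often each clause is reused. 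Both constructions feed into a trace/moment computation of comparable difficulty. Your approach is arguably combinatorially cleaner (the matrix is indexed by genuine subsets, not multisets, so there is no analogue of the paper's high-multiplicity deletion step on the \emph{row} side), and indeed this route was subsequently developed and shown to work; the paper's approach has the advantage that deleting high-multiplicity rows and columns is a concrete projection whose effect on $P_\Phi(x)^d$ can be controlled directly (\pref{prop:formula}), which makes the SoS compatibility step relatively mechanical. Two small notes: your identity should read $y^\top A_\Phi\, y = \binom{k}{k/2}\binom{n-k}{\ell-k/2}\cdot m\cdot(2P_\Phi(x)-1)$ rather than $\cdots\cdot P_\Phi(x)$, since each term $b_i y_S y_T$ is $\pm 1$; this is harmless bookkeeping. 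More substantively, the row-pruning you flag as needed in the near-threshold regime $\bar d\approx\log N$ is not an optional refinement but the central difficulty — the paper spends \pref{prop:multilin-graph-bound}, \pref{lem:matching-probability-2}, and the surrounding encoding arguments precisely on the analogous issue — and arranging that pruning to remain SoS-certifiable (your Step 3) requires an argument of its own, along the lines of \pref{prop:formula-sos}. Your skeleton is sound, but most of the proof's weight rests inside the two steps you defer.
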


\begin{remark}
The algorithm from \prettyref{thm:maindlin} yields {\it tight} refutations--certifying a tight upper bound of $\opt(\Phi) + \epsilon$ for any constant $\epsilon>0$.
\end{remark}
Notice that the result establishes a smooth trade-off between the clause density of $\Phi$ and the running time of the refutation algorithm.
Specifically for all $\delta \in [0,1)$, the algorithm strongly refutes at density $m/n = \tO(n^{(k/2 -1)(1-\delta)})$ in time $\exp(\tO(n^{\delta}))$, so that when $\delta = 0$ the result matches the performance of the best known polynomial-time algorithms, and at $\delta = 1$, the algorithm refutes instances just above the threshold of satisfiability in exponential time.
Moreover, the degree of the sum-of-squares refutations matches the degree lower bounds of \cite{Gri01,Sch08} up to polylogarithmic factors.

Feige \cite{Feige02} introduced a connection between the refutation of random XOR instances and the refutation of other CSPs, and this connection was later used in several other works (e.g. \cite{FKO06,AOW15,BM15}).
Using the machinery developed by Allen \etal \cite{AOW15}, we apply our algorithm for $k$-XOR to refute other random CSPs involving arbitrary Boolean predicates $P$; for example to $k$-SAT.
\begin{theorem}\label{thm:anycsp}
    Let $P:\{\pm 1\}^k \to \{0,1\}$ be a predicate with expected value $\E[P]$ over a random assignment in $\{\pm 1\}^k$.
For all $\delta  \in (0,1]$, given an instance $\Phi$ of a random $k$-CSP with predicate $P$ on $n$ variables, the degree $O(n^{\delta})$ \sos hierarchy strongly refutes $\Phi$ with high probability, certifying that
\[
    \opt(\Phi) \le \E[P] + \epsilon\mcom
\]
for any constant $\epsilon > 0$ so long as $\Phi$ has density at least $m/n \ge \tO (n^{(k/2 - 1)(1 - \delta)})$, where the $\tO$ hides a dependence on a polylog factor, $k$ and $\epsilon$.
Further, there is a spectral algorithm achieving the same guarantees.
\end{theorem}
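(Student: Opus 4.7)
The plan is to follow the Fourier decomposition reduction in the spirit of Feige and of Allen, O'Donnell and Witmer: reduce strong refutation of a random $k$-CSP with predicate $P$ to the simultaneous strong refutation of a collection of random $t$-XOR instances for $t = 1, \ldots, k$, and then invoke \pref{thm:maindlin} at each Fourier level $t \ge 3$.

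Write the Fourier expansion $P(y) = \sum_{S \subseteq [k]} \hat P(S)\, \chi_S(y)$ with $\chi_S(y) = \prod_{j \in S} y_j$ and $\hat P(\emptyset) = \E[P]$. An instance $\Phi$ with random scopes $\sigma_i \in [n]^k$ and random literal signs $b_i \in \{\pm 1\}^k$ has value polynomial
\[
  P_\Phi(x) \;=\; \E[P] \;+\; \sum_{\emptyset \ne S \subseteq [k]} \hat P(S) \cdot Y_S(x), \qquad Y_S(x) \;\defeq\; \frac{1}{m}\sum_{i=1}^m \chi_S(b_i)\prod_{j \in S} x_{\sigma_i(j)}.
\]
Because the signs $b_{i,j}$ and scope entries $\sigma_i(j)$ are drawn independently and uniformly in the random CSP model, for each non-empty $S$ the numbers $\chi_S(b_i) = \prod_{j \in S} b_{i,j}$ are i.i.d.\ uniform in $\{\pm 1\}$ and the projected scopes $\sigma_i|_S$ are i.i.d.\ uniform in $[n]^{|S|}$; hence $Y_S$ is precisely the value polynomial of a uniformly random $|S|$-XOR instance on $n$ variables with $m$ equations.

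Next, for each non-empty $S$, I would exhibit a degree-$O(n^\delta)$ SoS certificate that $|Y_S(x)| \le \epsilon'$ for $\epsilon' \defeq \epsilon/(2^k \max_{S'} |\hat P(S')|)$. For $|S| \ge 3$, invoke \pref{thm:maindlin} directly, using that the hypothesis $m/n \ge \tO(n^{(k/2-1)(1-\delta)})$ is at least as strong as the corresponding $|S|$-XOR hypothesis $m/n \ge \tO(n^{(|S|/2-1)(1-\delta)})$ since $|S| \le k$. For $|S| = 2$, write $Y_S(x) = x^\top A_S x$; by matrix Bernstein the random matrix $A_S$ (with at most $m$ nonzero entries of magnitude $1/m$) has spectral norm $\tO(1/\sqrt{mn})$, yielding the degree-$2$ SoS bound $|Y_S(x)| \le n \|A_S\|_{\mathrm{op}} = \tO(\sqrt{n/m})$. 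For $|S| = 1$, $Y_S$ is a linear form whose coefficients concentrate at scale $\tO(1/\sqrt{mn})$ by scalar Bernstein, so $|Y_S(x)| \le \tO(\sqrt{n/m})$ with a trivial SoS proof. In all cases the density hypothesis drives the bound below $\epsilon'$, with the $\tO$ absorbing the dependence on $\epsilon$ and $k$.

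Summing the $2^k - 1$ certificates by the SoS triangle inequality then gives a degree-$O(n^\delta)$ SoS proof of $P_\Phi(x) \le \E[P] + \epsilon$. The spectral algorithm is obtained analogously, combining the spectral algorithm of \pref{thm:maindlin} for $|S| \ge 3$ with elementary matrix-Bernstein and scalar-concentration arguments for $|S| \in \{1,2\}$. I expect the main obstacle to be verifying that, at each Fourier level $S$, the sub-instance $Y_S$ is distributed as a genuine random $|S|$-XOR instance in the precise sense required by \pref{thm:maindlin}: this reduces to observing that projection of i.i.d.\ uniform scopes and signs stays i.i.d.\ uniform after aggregation via $\chi_S$, but it requires a careful match between the CSP generation model and the XOR model for which \pref{thm:maindlin} is proved (in particular handling scope tuples that may contain repeated variables or repeated projected scopes across constraints).
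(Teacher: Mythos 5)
Your high-level plan — Fourier-decompose $P$, split $P_\Phi$ into sub-polynomials $Y_S$ indexed by $\emptyset\ne S\subseteq[k]$, and refute each as an $|S|$-XOR instance — is exactly the route the paper takes. However, the obstacle you flag in your last paragraph is not a minor bookkeeping issue: it is where essentially all of the work lies, and your plan does not resolve it.

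The crux is that for $|S|<k$ the sub-polynomial $Y_S$ is \emph{not} an unweighted random $|S|$-XOR instance. When you aggregate $\chi_S(b_i)\prod_{j\in S}x_{\sigma_i(j)}$ over the $m$ clauses, different clauses project to the same scope $L\in[n]^{|S|}$, and the resulting coefficient $c_L$ is a signed sum of $n^{k-|S|}$ independent $\{0,\pm1\}$ contributions. Under the density hypothesis $m/n\ge\tO(n^{(k/2-1)(1-\delta)})$ (i.e.\ $p\ge\tO(n^{-k/2-(k/2-1)\delta})$), collisions are \emph{guaranteed by pigeonhole} for $|S|\lesssim 1+(k/2-1)(1-\delta)$, so $c_L$ has magnitude that grows with $n$; and even when $pn^{k-|S|}<1$, coefficients of magnitude $\Theta(\log n)$ occur with high probability. \pref{thm:maindlin} is proved only for the Bernoulli model with $\pm1$-valued constraints and cannot be ``invoked directly'' on $Y_S$. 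Likewise, your matrix-Bernstein bound for $|S|=2$ assumes $m$ independent $1/m$-magnitude entries, whereas the actual matrix has the same collision-aggregated entries. The paper handles this via a case analysis on $pn^{k-|S|}$: in the dense regime ($pn^{k-|S|}\ge1$, and also the mid-density regime $p\ge n^{-|S|/2}$), it normalizes $c_L$ and invokes the AOW weighted polynomial-time spectral refutation (\pref{thm:AOW}), which is designed precisely for such bounded-weight coefficients; only for $|S|=k$, and for $|S|<k$ with $p<n^{-|S|/2}$, does it reach for \pref{thm:maindlin}, and there it must first split $\Psi_S$ into $r=\log^2 n$ unweighted sub-instances $\Psi_S^{(1)},\ldots,\Psi_S^{(r)}$ (each with roughly the right Bernoulli density $\hat q\in[pn^{k-|S|}/2\log^2 n,\ pn^{k-|S|}]$) before applying \pref{thm:maindlin} to each. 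Your proposal anticipates neither the case split nor the splitting trick, so as written it does not yield a complete proof; the ``careful match'' you defer is the content of \pref{sec:sat}.

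A smaller point: you should be careful asserting that the projected scopes $\sigma_i|_S$ and signs $\chi_S(b_i)$ ``stay i.i.d.\ uniform'' implies what you need — they do, across $i$, but the resulting XOR instance is in the ``$m$ i.i.d.\ constraints with repetition'' model rather than the Bernoulli-on-$[n]^{|S|}$ model of \pref{thm:maindlin}, and reconciling the two (with the aggregation into weighted coefficients, and the restriction to the slice of indices coming from genuinely random scopes with possible repeats) is precisely the issue above.
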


We can extend \pref{thm:anycsp} so that the density/runtime trade-off depends on the \emph{independence parameter} of the predicate $P$ as defined by \cite{AOW15}--we defer the details to \pref{sec:sat}.

\subsection*{Injective tensor norm}

The proof techniques we develop are applicable beyond strongly refuting random $k$-XOR, to the problem of certifying upper bounds on the injective tensor norm of random tensors.

The injective tensor norm generalizes the matrix operator norm, in the following sense.
For an order-$k$ symmetric tensor with all dimensions equal to $n$, the injective tensor norm is defined as
\[
    \|\bT\|_{inj} \defeq \max_{\substack{x\in \R^n\\ \|x\| = 1}} \left| \Iprod{\bT, x^{\tensor k}}\right|,
\]
where by $x^{\tensor k}$ we mean the symmetric rank-1 tensor of order $k$ given by tensoring $x$ with itself, and by the inner product we mean the entry-wise sum of the products of the entries of $\bT$ and $x^{\tensor k}$, as is standard.

When $k=2$, computing $\|\bT\|_{inj}$ is equivalent to computing the matrix operator norm.
Yet when $k \ge 3$, the injective tensor norm is hard to compute.
The hardness of approximating the injective tensor norm is not fully understood, but we do know that, assuming the exponential-time hypothesis, the injective tensor norm requires quasipolynomial time to approximate, even within super-constant factors \cite{BBHKSZ12}.
There are also reductions to the problem from a variety of problems such as Planted Clique \cite{BV09} and Small-Set Expansion \cite{BBHKSZ12}.

The problem is nontrivial even when the tensor has i.i.d. random entries.
It is well-known that the norm of a tensor with i.i.d. symmetric subgaussian entries is of the same order as the norm of a random matrix:
\begin{theorem}[\cite{tomioka2014}]
    If $k\in \N$ is constant and $\bT$ is a symmetric order-$k$ tensor of dimension $n$ with i.i.d. symmetrically distributed subgaussian entries, then with probability at least $1-o(1)$,
  $
       \|\bT\|_{inj} \le \tO(\sqrt{n}).
   $
\end{theorem}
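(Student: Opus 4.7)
The plan is to use a standard $\epsilon$-net argument over the unit sphere combined with subgaussian concentration, mirroring the classical $\epsilon$-net proof of the operator norm bound for random matrices ($k=2$). For a fixed unit vector $x \in \R^n$, the inner product $\iprod{\bT, x^{\tensor k}} = \sum_{i_1,\ldots,i_k} T_{i_1 \ldots i_k} x_{i_1}\cdots x_{i_k}$ is a weighted sum of the independent subgaussian entries of $\bT$. Accounting for the symmetric structure (each independent entry appears in the sum with multiplicity at most $k!$), the sum of squared coefficients is at most a $k$-dependent constant times $\snormt{x^{\tensor k}} = 1$, so the standard subgaussian concentration inequality yields
\[
    \Pr\bigbrac{\abs{\iprod{\bT, x^{\tensor k}}} > t} \le 2\exp\Paren{-\Omega_k(t^2)} \mper
\]

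Second, I would take an $\epsilon$-net $\cN_\epsilon$ of the unit sphere $S^{n-1}$ of size at most $(3/\epsilon)^n$. Setting $t = C\sqrt{n\log(1/\epsilon)}$ for a sufficiently large constant $C = C(k)$ and union bounding over $\cN_\epsilon$ gives
\[
    \max_{y \in \cN_\epsilon} \abs{\iprod{\bT, y^{\tensor k}}} \le C\sqrt{n\log(1/\epsilon)}
\]
with probability at least $1-o(1)$.

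Finally, I would extend the bound from the net to the full sphere. For any unit $x$, pick $y \in \cN_\epsilon$ with $\normt{x-y} \le \epsilon$. The telescoping identity $x^{\tensor k} - y^{\tensor k} = \sum_{i=1}^{k} x^{\tensor (i-1)} \tensor (x-y) \tensor y^{\tensor (k-i)}$ yields $\normt{x^{\tensor k} - y^{\tensor k}} \le k\epsilon$, and hence
\[
    \abs{\iprod{\bT, x^{\tensor k}} - \iprod{\bT, y^{\tensor k}}} \le k\epsilon \cdot \|\bT\|_{inj} \mper
\]
Taking the supremum over $x$ and choosing $\epsilon = 1/(2k)$ produces the self-bounding inequality $\|\bT\|_{inj} \le 2 \max_{y \in \cN_\epsilon} \abs{\iprod{\bT, y^{\tensor k}}} = \tO(\sqrt{n})$.

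There is no deep obstacle: the proof is a direct generalization of the $k=2$ operator norm argument, and the only additional ingredient is the tensor-power Lipschitz estimate used to transfer from the net to the sphere. The main bookkeeping items are (i) checking that symmetry of $\bT$ inflates the variance proxy in the concentration step by only a $k$-dependent constant, and (ii) choosing $\epsilon$ small enough relative to $k$ for the Lipschitz recursion to close. Both are easily handled for constant $k$, and any resulting $k$-dependent factors are absorbed into the $\tO$.
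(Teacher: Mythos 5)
The paper does not prove this statement: it cites it to Tomioka and Suzuki \cite{tomioka2014} and uses it only as motivating background. So there is no in-paper proof to compare against. Your $\epsilon$-net argument is, however, precisely the standard approach and is essentially what \cite{tomioka2014} (and earlier Nguyen--Drineas--Tran) do, so as a reconstruction it is on target.

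One step deserves a closer look. You derive $\lVert x^{\tensor k} - y^{\tensor k}\rVert_2 \le k\epsilon$ from the telescoping identity and then write ``hence $\lvert\iprod{\bT, x^{\tensor k}} - \iprod{\bT, y^{\tensor k}}\rvert \le k\epsilon\,\lVert\bT\rVert_{inj}$.'' That implication is not valid: a Euclidean norm bound on $x^{\tensor k}-y^{\tensor k}$ only pairs with the Frobenius norm of $\bT$ (which is $\Theta(n^{k/2})$, far too large), not with $\lVert\bT\rVert_{inj}$. The correct route is to keep the telescoping decomposition and bound each term $\iprod{\bT,\, x^{\tensor(i-1)}\tensor(x-y)\tensor y^{\tensor(k-i)}}$ individually. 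Each of these is a multilinear form evaluated at $k-1$ unit vectors and one vector of norm at most $\epsilon$, so you need the fact that for a \emph{symmetric} tensor the supremum of the multilinear form over products of unit vectors coincides with $\lVert\bT\rVert_{inj}$ (Banach's theorem); alternatively, define the target quantity as the multilinear supremum from the start and take a net over $(S^{n-1})^k$ of size $(3/\epsilon)^{kn}$, which avoids invoking Banach and only costs a constant factor of $k$ in the exponent. Either fix closes the self-bounding inequality $M \le \max_{\text{net}} + k\epsilon M$ and yields $\tO(\sqrt n)$ as you claim; the concentration and union-bound steps are fine, including the $k!$ variance inflation from symmetrization.
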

\noindent So the question arises naturally: is it easy to certify tensor norm bounds under distributional assumptions on the entries?
The current known polynomial-time algorithms fall short of the bound $\tO(\sqrt{n})$, and can only certify bounds of $\|\bT\|_{inj}\le \tO(n^{k/4})$ for tensors of order $k$ \cite{RM14,HSS15,HSSS15}.
The algorithm of Hopkins et al. \cite{HSS15} is based on the degree-$k$ \sos relaxation for the tensor norm problem.
They also give a lower bound for the \sos relaxation for the order-$3$ tensor at degree $4$, proving that the relaxation has value $\tOm(n^{3/4})$, which implies that their analysis is tight for the \sos hierarchy at degree $4$.

By applying our techniques for random $k$-XOR refutations to the problem of certifying bounds on tensor norms, we have the following result:

\begin{theorem}\label{thm:inj-tensor-informal}
    For any $\delta \in [0,1/120)$, given a symmetric order-$k$ tensor $\bT$ with i.i.d. standard Gaussian entries, with high probability over the choice of $\bT$, the degree $O(n^{\delta})$ \sos hierarchy relaxation certifies that
    \[
	\|\bT\|_{inj} \le \tO( n^{1/2 + (k-2)(1-\delta)/4  + 3k^2\delta^2}) \mcom
    \]
    where the $\tO$ notation hides a polylogarithmic factor and a dependence on $k$.
    Furthermore, there is a spectral algorithm that computes the eigenvalues of a $2^{\tO(n^{\delta})}\times 2^{\tO(n^{\delta})}$ matrix that certifies the same bound.
\end{theorem}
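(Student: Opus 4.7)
The plan is to adapt the $k$-XOR refutation machinery used for \pref{thm:maindlin}, replacing the sparse Rademacher entries with dense Gaussian ones. The strategy has three pieces: reduce to a matrix spectral-norm problem via an SoS multiplier identity, construct the appropriate Kikuchi-style matrix from $\bT$, and bound its operator norm by a trace moment computation.

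\textbf{Reduction to a spectral problem.} Writing $P_\bT(x) = \iprod{\bT, x^{\tensor k}}$, I would certify a bound on $\max_{\|x\|^2 = n} P_\bT(x)$ by multiplying by $(\|x\|^2/n)^t$ for $t \approx n^{\delta}/k$, which is identically one on the constraint surface. Reshaping the resulting degree-$(k + 2t)$ polynomial as a quadratic form in monomials $x_S = \prod_{i \in S} x_i$ indexed by subsets $S \subset [n]$ of size $\ell = \tO(n^{\delta})$ yields a $2^{\tO(n^{\delta})} \times 2^{\tO(n^{\delta})}$ matrix $M_\bT$, and a degree-$O(n^{\delta})$ SoS argument of the same form as for $k$-XOR certifies that $P_\bT(x)/n^{k/2}$ is controlled by $\|M_\bT\|_{op}$ up to a combinatorial normalizing factor. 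This step transfers essentially verbatim from the $k$-XOR proof, since the SoS-to-spectral reduction depends only on the algebraic form of the multiplier and not on the distribution of the coefficients of $P_\bT$.

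\textbf{Bounding the spectral norm.} The bulk of the work is to bound $\|M_\bT\|_{op}$ when the entries of $\bT$ are i.i.d.\ Gaussians. I would use the trace moment method: compute $\E[\Tr(M_\bT^{2q})]$ for $q \approx n^{\delta}$ and apply Markov's inequality. Each closed walk of length $2q$ in the Kikuchi graph on subsets contributes a product of $2q$ Gaussian entries of $\bT$, and by Isserlis's theorem this expectation is nonzero only when the multi-index labels along the walk admit a perfect matching; each such matching contributes a product of variances. The trace bound thus reduces to enumerating labeled closed walks whose tensor labels pair up, parameterized by the number of distinct edges used and the combinatorial type of the pairing.

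\textbf{Main obstacle.} The hard part will be extracting the stated exponent $\tfrac{1}{2} + \tfrac{(k-2)(1-\delta)}{4} + 3k^2 \delta^2$ from this walk count. At $\delta = 0$ the first two terms reproduce the $n^{k/4}$ polynomial-time rate of \cite{HSS15,HSSS15}; the goal is to trade the size of the Kikuchi matrix (parameterized by $\ell \approx n^{\delta}$) against combinatorial savings in the walk count, exactly as in the $k$-XOR analysis. Two complications are specific to the Gaussian setting. First, because $\bT$ is dense, one cannot exploit sparsity to trim walks as in $k$-XOR and must rely entirely on the Wick pairing constraint to collapse the enumeration. Second, the symmetry of $\bT$ under coordinate permutations causes certain walks to contribute with higher-than-generic multiplicity; I suspect the additive $3k^2\delta^2$ slack in the exponent originates in absorbing this overcount. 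I would handle both by the same ``forest versus surplus edge'' accounting used in the XOR proof, with a finer case split at the Wick pairing step to account for the permutation symmetry of $\bT$.
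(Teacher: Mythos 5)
Your proposed route is genuinely different from the paper's, and a few of your guesses about where the difficulties lie don't match the actual argument.

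The paper does not build a Kikuchi-style matrix indexed by \emph{subsets} of $[n]$, nor does it introduce the multiplier $(\|x\|^2/n)^t$. Instead (\pref{alg:eten}, \pref{prop:properrelax}, \pref{thm:D-upper}) it flattens $\bA$ to an $n^{k/2}\times n^{k/2}$ matrix $A$, takes the Kronecker power $A^{\otimes d}$ indexed by ordered \emph{tuples} in $[n]^{dk/2}$, and then symmetrizes by averaging over all row and column permutations: $C_d = \E_{\Pi,\Sigma \in \hat\cS_{dk/2}}[\Pi A^{\otimes d}\Sigma]$. The relaxation inequality $\|\bA\|_{inj}^d \le \|C_d\|$ is immediate from the fact that $x^{\otimes dk/2}$ is fixed by those permutations, so no $\|x\|^2$ multiplier is needed and the SoS certificate uses only the symmetry constraints $X_{A,B}=X_{C,D}$ for equal multisets plus $\sum_i X_{i,i}=1$. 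Restricting to subsets (your construction) corresponds to a further projection onto the multilinear block of $C_d$; the paper only ever does this kind of restriction in the sparse $k$-XOR setting (\pref{alg:lin-sub}), not for Gaussian tensors, precisely because the dense Gaussian entries do not put weight on sparse directions. So be careful: you are importing a piece of the XOR algorithm (multilinear restriction) that the paper does not use here, and omitting the piece it does use (the flattening-then-Kronecker step applied to slices, which is also essential for the odd-$k$ Cauchy--Schwarz step in \pref{alg:oddtensor}).

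Your diagnosis of where the $3k^2\delta^2$ slack comes from is also incorrect. It has nothing to do with the symmetry of $\bT$ under coordinate permutations---the paper deliberately \emph{removes} that symmetry at the outset by replacing $\bA$ with the asymmetrization $\bA'$ (step 1 of \pref{alg:eten}), so that each hyperedge has a canonical orientation and the ``same matching for every copy'' argument in \pref{lem:graph-from-hgraph-nocount2} goes through cleanly. The slack actually comes from the parameter $\beta$ that trades two counting bounds against each other: (i) there are at most $\approx (dk\ell)^{\beta dk\ell/2} n^{(1-\beta)dk\ell/2}$ vertex configurations with fewer than $(1-\beta)dk\ell/2$ distinct labels (\pref{fact:few-configs}), and (ii) for configurations with many distinct labels, \pref{lem:even-matching} shows the hyperedge-grouping step produces an even hypergraph with probability $\le (c_\beta/d)^{(1-O(k\beta))(k/2-1)d\ell}$. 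Balancing gives $\beta = \Theta(\log d / \log n)$, which when $d = n^\delta$ produces the $d^{O(k^2)\log d/\log n} = n^{O(k^2)\delta^2}$ correction. Finally, you mention Isserlis/Wick pairings; the paper sidesteps this by truncating to bounded entries and reducing to the Rademacher moment rule (any odd multiplicity gives zero, else one), which simplifies the trace computation substantially and avoids the combinatorial overcount Wick contractions introduce. Your proposal is a reasonable starting point for a Kikuchi-style alternative, but you would need to redo the walk enumeration from scratch; the ``two-step hyperedge sampling'' lemmas (\pref{lem:graph-from-hgraph-nocount}, \pref{lem:even-matching}, \pref{lem:vtx-edge}) that do the real work in the paper don't transfer to the subset-indexed matrix without modification, because the subsets carry different repetition structure than the ordered tuples.
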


We remark that the above theorem also holds, up to constants, for symmetric tensors with i.i.d. entries from any symmetric distribution $\cD$ over $\R$ with subgaussian tails.
Strong refutation for $k$-XOR instances can be thought of as a special case of the problem of bounding the norm of a random tensor--we elaborate on the connection at the start of \pref{sec:tech-over}.
However, the underlying distribution for random $k$-XOR yields tensors which are extremely {\em sparse}, which poses several additional technical challenges.

In an independent work, Bhattiprolu et al. \cite{BGL16} have obtained a result similar to \pref{thm:inj-tensor-informal} for low-order tensors (for order $k = 3,4$); at $k=3,4$, they certify tighter bounds.
They also obtain a tight lower bound on the integrality gap of degree-$k$ \sos relaxations for $k$-tensor norms.

\paragraph{Techniques}
We give an overview of our techniques in \pref{sec:tech-over}, but here we give a brief teaser.
For both the $k$-XOR problem and the injective tensor norm problem, we want to bound the norm of a random tensor with independent entries (under different distributional assumptions, and only for Boolean vectors in the case of CSPs).
A canonical upper bound for injective tensor norm is to take the maximum eigenvalue of its natural matrix flattening.
This bound is loose, because the top eigenvector of the matrix flattening is not restricted to be a rank-$1$ Kronecker power of a vector in $\R^n$--in a sense, the top eigenvector of the matrix flattening does not have enough symmetry.
Our algorithmic strategy is to exploit this lack of symmetry.

The sum-of-squares SDP at degree $d$ provides a linear map $\pE:\R[x]^{\le d} \to \R$, from the space of polynomials of degree $\le d$ over $x \in \R^{n}$ to the reals, with the property that $\pE[p^2(x)] \ge 0$ for any polynomial $p$ of degree $\le d/2$.
Each monomial $\prod_{i\in S} x_i$ is identified with an SDP variable $X_{S} \defeq \pE[\prod_{i\in S}x_i]$.

At a high level, the sum-of-squares SDP relaxations for CSPs contain two classes of constraints.
First, the \emph{symmetry constraints}, such as the constraint that the SDP variable that we identify with the monomial $X_{ijk} = \pE[x_i x_j x_k]$ be equal to the variable identified with the variable corresponding to the permuted monomial $X_{jki} = \pE[x_j x_k x_i]$.
Second, the \emph{Booleanness} constraints, which enforce $\pE[x_i^2] = 1$ for each $i$.

Roughly speaking, our spectral algorithm relies mainly on the symmetry constraints in the SDP.
In \pref{sec:tech-over}, we describe how to harness these symmetry constraints to obtain better approximations with increasing \sos degree (by showing how the symmetry constraints can be used to improve a spectral algorithm).
This technique adds to the arsenal of tools for algorithm design via the \sos SDP hierarchy, and is the main technical contribution of this work.

\subsection{Related work}
\label{sec:prior}
We briefly survey the prior work on refuting random CSPs--we refer the reader to \cite{AOW15} for a thorough survey on the topic.
Work on refuting random CSPs began with Chv\'{a}tal and Szemer\'{e}di  \cite{CS88}, who showed that a random $k$-SAT instance with clause density $\alpha > c$ (for $c$ constant) with high probability requires Resolution refutations of exponential size.
This lower bound was later complemented by the works of \cite{Fu96,BKPS98}, which show that at clause density $\alpha \ge O(n^{k-1})$, polynomial-sized resolution proofs exist and can be found efficiently.
At the turn of the century, Goerdt and Krivelevich \cite{GK01} pioneered the spectral approach to refuting CSPs, showing that a natural spectral algorithm gives refutations for $k$-SAT in polynomial time when $\alpha = m/n \ge n^{\lceil k/2\rceil - 1}$.
A series of improvements followed, first achieving bounds for $\alpha \ge O(n^{1/2+\epsilon})$ for any constant $\epsilon$ for the special case of $3$-SAT \cite{FG01,FGK05}, then achieving strong refutation at densities $\alpha \ge \tO(n^{\lceil k/2\rceil-1})$ \cite{COGL04,COCF10}.
Finally, the works of Allen et al. and Barak and Moitra gave spectral algorithms for strongly refuting $k$-XOR and $k$-SAT for any $\alpha \ge \tO(n^{k/2-1})$ \cite{AOW15,BM15}, and Allen et al. also give a reduction from any CSP which is far from supporting a $t$-wise independent distribution to $t$-XOR.
These spectral algorithms are the algorithmic frontier for efficient refutations of random CSPs.

Though not algorithmic, the work of \cite{FKO06} is worth mentioning as well.
Feige et al. show that, at clause density $\alpha = m/n \ge \tO(n^{0.4})$, there exists a polynomial-sized (weak) refutation for random $3$-SAT given by a subset of $O(n^{0.2})$ unsatisfiable clauses.
Understanding whether polynomial-sized weak refutations exist for smaller $\alpha$ is an intriguing open problem.

In a concurrent and independent work, Bhattiprolu et al. \cite{BGL16} obtained a result similar to \pref{thm:inj-tensor-informal} for bounding the norms of tensors of order $3$ and $4$.
The bounds obtained in \cite{BGL16} are tighter for $k = 3$ and $k = 4$.
The results of \cite{BGL16} do not imply new results for refutation even for  CSPs of arity $3$ and $4$, since their upper bound is too weak on sparse tensors--a regime that poses additional technical hurdles.

\subsection{Organization}
In \pref{sec:tech-over}, we illustrate our core ideas via a detailed exposition of our proof for certifying bounds on the norm of order-$4$ tensors, and explain how these techniques can be built upon to strongly refute CSPs.
\pref{sec:randombds} contains the full proof our tensor norm results.
\pref{sec:lin} contains our results for refuting $k$-XOR instances.
In \pref{sec:sat}, we combine our $k$-XOR refutation algorithms with the framework of \cite{AOW15} to refute other CSPs.
Finally, in \pref{sec:sos} we argue that our spectral algorithms give \sos proofs.

\subsection{Preliminaries}
We represent tensors by boldface letters such as $\bT$.
We refer to the map from a tensor $\bT$ with entries indexed by $[n]^{\tensor 2k}$ to a matrix $T$ indexed by $[n]^{k} \times [n]^{k}$ as the ``natural flattening'' of $\bT$.
For a matrix or vector $M \in \R^{n\times m}$, the notation $M^{\tensor d}$ refers both to the $n^d \times m^d$ $d$-wise Kronecker power of $M$, or to the $n \times \cdots \times m$ tensor given by the $d$-wise cross-product of $M$ with itself.
For matrices, tensors, or vectors $A,B$ whose entries are identified with the same space, $\iprod{A,B}$ denotes the sum of the entrywise products of $A$ and $B$.

\section{Main Ideas: Proof for Random $4$-Tensors}\label{sec:tech-over}

In this section, we will survey the main ideas in our paper by proving \pref{thm:inj-tensor-informal} (our tensor norm certification algorithm) for the case of random $4$-tensors.
This specific case yields the simplest proof, while encapsulating the core ideas of our techniques for both injective tensor norm and $k$-XOR.
We formally state the injective tensor norm problem here.

\begin{problem}[Certifying injective tensor norm]\label{prob:tnorm}
Given an order-$k$ tensor $\bT$ with dimension $n$, certify that for
all $x\in \R^n$ with $\|x\|=1$, $|\iprod{\bT,x^{\tensor k}}| \le
\|\bT\|_{inj}\le \tau$ for some upper bound $\tau$.
\end{problem}

\paragraph{From $k$-XOR to tensor norms}
First, we briefly outline the connection between $k$-XOR refutation and certifying bounds on tensor norms.
Let $\Phi$ be a random $k$-XOR formula on $x \in \{\pm 1\}^n$  with $m
\approx pn^k$ clauses, sampled as follows: for each $S \subset [n]^k$
independently with probability $p$, add the constraint that
$\prod_{i\in S} x_i = \eta_S$ where $\eta_S$ is a uniform bit $\pm 1$,
and with probability $1-p$, add no constraint.
We can form an order-$k$ tensor $\bT$ so that for each $S \in [n]^k$, $\bT_S = 0$ if there is no constraint, and otherwise $\bT_S = \eta_S$.

For any assignment $x \in \{\pm 1\}^n$, the inner product $\langle
\bT, x^{\otimes k}\rangle $ is equal to the difference in the number
of $\Phi$'s constraints that $x$ does and does not satisfy.  Since
$\Phi$ has $m$ constraints in all, certifying that $\max_{x \in \sbits^n}
|\iprod{\bT,x^{\otimes k}}| \leq o(m)$ is equivalent to certifying that $\opt(\Phi)
\le \tfrac{1}{2}+o(1)$.
On the other hand, certifying the injective tensor norm amounts to
exhibiting an upper bound on $\max_{\norm{y} \leq 1} |\iprod{\bT,y^{\otimes k}}|$ where the maximization is over all unit vectors $y$.
Every Boolean vector $x \in \sbits^n$ is of length $\norm{x} =
\sqrt{n}$, which implies that $\max_{x \in \sbits^n}
|\iprod{\bT,x^{\otimes k}}| \leq n^{k/2} \cdot \norm{\bT}_{inj}$.

While the above reduction from certifying that $\opt(\Phi) \leq \frac{1}{2} +
o(1)$ to certifying a bound on $\norm{\bT}_{inj}$ exposes the
connection between the two problems, it is too lossy to be
useful.
In fact, for $p < 1/n^{k/2}$, the sparsity of the tensor $\bT$ implies
that the form $\iprod{y^{\otimes k},\bT}$ is maximized by sparse real-valued
vectors $y$ that are completely unlike Boolean vectors.  In other
words, almost surely there exists a sparse $y \in \R^n$, $\|y\| =
\sqrt{n}$ with $|\iprod{\bT, y^{\otimes k}}| \gg \max_{x \in \sbits^n} |\iprod{\bT, x^{\otimes k}}|$.
As a result, our refutation algorithm for $k$-XOR is more involved
than the certification algorithm for injective tensor norm in two ways.
First, it crucially uses the {\it non-sparseness} of Boolean
vectors and second, the sparsity of the tensor $\bT$ calls for more nuanced concentration arguments.
We give a short overview of these differences in \pref{sec:xordet} after presenting the broad strokes of the proof, via our algorithm for certifying tensor norms, and full details in \pref{sec:lin}.

\paragraph{Certifying injective tensor norm}
In what follows, we will give a \emph{spectral algorithm} for \pref{prob:tnorm} for random $4$-tensors with i.i.d. subgaussian entries.
We will show that this spectral algorithm is subsumed by a \sos relaxation of appropriate degree in \pref{sec:sos}.
The rest of this section is organized as follows.
\begin{compactenum}
\item We first {\bf describe the matrix} whose maximum eigenvalue provides the upper bound on the injective tensor norm.
Rather than writing down the matrix immediately, we will build up our intuition by first considering a simple spectral approach, and then seeing how we can improve.
\item We then {\bf obtain bounds on the eigenvalues} of the matrix, which will hold with high probability for tensors with i.i.d. subgaussian entries--this is the step in which we analyze the performance of our algorithm.
    Because our matrix is somewhat complicated and not amenable to the application of black-box matrix concentration inequalities, we will apply the {\em trace power method}.
    This amounts to bounding the expected trace of a large power of
    our matrix, a goal which we split in to two steps.
    \begin{compactenum}
    \item First, we {\em reduce computing the expected trace to a hypergraph counting problem}.
    \item Then, we simplify the counting by {\em analyzing a particular hypergraph sampling process}.
    \end{compactenum}
\end{compactenum}

\subsection{Improving on the natural spectral algorithm with higher-order symmetries}
A natural spectral algorithm for \pref{prob:tnorm} is to flatten the tensor to a matrix, and then compute the operator norm of the matrix.
This is a valid relaxation because, given an order-$4$ tensor $\bA$ with symmetric i.i.d. standard normal entries, if we take $A$ to be the natural $n^2 \times n^2$ matrix flattening of $\bA$,
\begin{equation}
 \|\bA\|_{inj}
    = \max_{\substack{x \in \R^n : \|x\|=1}} \left| (x \tensor x)^\top A (x \tensor x) \right|
    \le \max_{\substack{y \in \R^{n^2}:\|y\|=1} } \left|y^\top A y\right|
    =\|A\|_{op}\mper\label{eq:relax}
\end{equation}
So $\|A\|_{op} $ gives a valid upper bound for $ \|\bA\|_{inj}$.
This is great--on the left, we have a program that we cannot efficiently optimize, and on the right we have a relaxation which we can compute in polynomial time.

On the other hand this bound is quite loose--classical results from random matrix theory assert that with high probability, $\|A\|_{op} = \widetilde\Theta(n)$ whereas with high probability $\|\bA\|_{inj} \le O(\sqrt{n})$.
The issue is that the relaxation in \pref{eq:relax} is too lenient--the large eigenvalues of $A$ correspond to eigenvectors $y \in \R^{n^2}$, that are far from vectors of the form $x \tensor x : x\in \R^n$.
We want to decrease the spectrum of $A$ along these asymmetric {\it non-tensor product} directions.

A tensored vector of the form $x \otimes x$ satisfies the symmetry that $(x \otimes x)_{ij} = (x
\otimes x)_{ji} = x_ix_j$.  Therefore, a natural approach to decrease
the spectrum of $A$ along the {\it non-tensor product} directions is
to average the matrix $A$, along these symmetries.  Specifically, for
each $(i,j)$, we would average the $ij^{th}$ and $ji^{th}$ rows, and
then repeat the same operation on columns.  Formally, the averaged
matrix $A'$ is given by,
\[
	A' = \E_{\Sigma, \Pi \in \hat\cS_{2}}\left[\Sigma A\Pi\right]
\]
where $\hat\cS_{2}$ is the set of matrices which perform the permutations corresponding to the symmetric group on $2$ elements on the rows and columns of matrices indexed by $[n]^2$.
Unfortunately, for a symmetric $4$-tensor $\bA$, the matrix $A$ is
also symmetric with respect to these operations, so that $A' = A$.

To better exploit the symmetries of tensored vectors $x \otimes x$, we
will work with higher powers of the injective tensor norm.
For any $d \in \N$, we can write the $d^{th}$-power of
$\norm{\bA}_{inj}$ as
\begin{align}
    \|\bA\|_{inj}^d
    & = \max_{x \in \R^n,\|x\|=1} \left|\langle x^{\otimes 4}, \bA \rangle^{d}\right|
    = \max_{x \in \R^{n}, \|x\|=1} \left|(x^{\otimes 2d})^{\top} A^{\otimes d}x^{\otimes 2d} \right|,\nonumber
    \intertext{where $A^{\otimes d}$ is the natural $n^{2d} \times
    n^{2d}$ matrix flattening of $\bA^{\otimes d}$.
    The symmetric vector $x^{\tensor 2d}$ is fixed by averaging over any permutation of the indices, so averaging over such permutations does not change the maximum:}
    &= \max_{x \in \R^{n}, \|x\|=1} \left|\E_{\Pi,\Sigma\in\hat\cS_{2d}}\left[(\Pi x^{\otimes 2d})^\top A^{\otimes d} (\Sigma x^{\otimes 2d})\right]\right|\mcom\nonumber
    \intertext{and by linearity of expectation,}
    &= \max_{x \in \R^{n}, \|x\|=1} \left|(x^{\otimes 2d})^\top\left(\E_{\Pi,\Sigma\in\hat\cS_{2d}}\left[ \Pi^{\top} A^{\otimes d} \Sigma \right]\right)x^{\otimes 2d}\right|
     \quad \le~~ \left\|\E_{\Pi,\Sigma\in\hat\cS_{2d}}\left[ \Pi^{\top} A^{\otimes d} \Sigma \right]\right\|_{op}\mper \label{eq:properrelax}
\end{align}
The operator norm of the above described matrix will certify our upper bounds:
\begin{proposition} \label{prop:properrelax}
Let $k \in \N$ be even.
    Let $\hat \cS_{kd/2}$ be the set of matrices performing the permutations of $\cS_{kd/2}$ on matrices with rows and columns indexed by $[n]^{kd/2}$.
    For any order-$k$ tensor $\bA$ with matrix flattening $A$,
    \[
	\|\bA\|_{inj}
	\le \left(\left\|\E_{\Pi,\Sigma\in\hat\cS_{kd/2}}\left[ \Pi A^{\otimes d} \Sigma \right]\right\|_{op}\right)^{1/d}\mper
	\]
\end{proposition}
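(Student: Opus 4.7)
The plan is to generalize the derivation for the $k=4$ case that precedes the proposition, applied to arbitrary even $k$ and any integer $d$.

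First I would raise the norm to the $d^{th}$ power and exploit multiplicativity of the tensor product under the inner product, writing
\[
\|\bA\|_{inj}^d = \max_{\|x\|=1}\left|\iprod{x^{\otimes k},\bA}\right|^{d} = \max_{\|x\|=1}\left|\iprod{x^{\otimes kd},\bA^{\otimes d}}\right|.
\]
Since $k$ is even, $kd/2$ is an integer, so the tensor $\bA^{\otimes d}$ (which has order $kd$) has a natural square matrix flattening $A^{\otimes d}$ of dimension $n^{kd/2}\times n^{kd/2}$, and the vector $x^{\otimes kd}$ flattens to $x^{\otimes kd/2}\otimes x^{\otimes kd/2}$. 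This lets me rewrite the inner product as a quadratic form:
\[
\|\bA\|_{inj}^d = \max_{\|x\|=1}\left|(x^{\otimes kd/2})^{\top} A^{\otimes d}\, x^{\otimes kd/2}\right|.
\]

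Next I would use the crucial symmetry: for any permutation $\pi\in\cS_{kd/2}$ acting on the $kd/2$ tensor indices, the vector $x^{\otimes kd/2}$ is fixed, i.e. $\Pi\, x^{\otimes kd/2} = x^{\otimes kd/2}$ for every $\Pi\in\hat\cS_{kd/2}$. Hence for any two independent uniformly random $\Pi,\Sigma\in\hat\cS_{kd/2}$,
\[
(x^{\otimes kd/2})^{\top} A^{\otimes d}\, x^{\otimes kd/2}
= (\Pi\, x^{\otimes kd/2})^{\top} A^{\otimes d}\, (\Sigma\, x^{\otimes kd/2})
= (x^{\otimes kd/2})^{\top}\bigl(\Pi^{\top} A^{\otimes d}\, \Sigma\bigr)(x^{\otimes kd/2}),
\]
and taking expectations and pulling the expectation inside by linearity,
\[
\|\bA\|_{inj}^d = \max_{\|x\|=1}\left|(x^{\otimes kd/2})^{\top}\,\E_{\Pi,\Sigma\in\hat\cS_{kd/2}}\!\left[\Pi^{\top} A^{\otimes d}\,\Sigma\right]\,(x^{\otimes kd/2})\right|.
\]

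Finally I would relax the maximization from vectors of the form $x^{\otimes kd/2}$, which are unit vectors in $\R^{n^{kd/2}}$, to all unit vectors in $\R^{n^{kd/2}}$; this upper-bounds the right-hand side by the operator norm of the averaged matrix, giving
\[
\|\bA\|_{inj}^d \le \left\|\E_{\Pi,\Sigma\in\hat\cS_{kd/2}}\!\left[\Pi^{\top} A^{\otimes d}\,\Sigma\right]\right\|_{op},
\]
and taking $d^{th}$ roots yields the claim. (Note that $\Pi$ and $\Pi^{\top}$ range over the same set $\hat\cS_{kd/2}$ as $\pi\mapsto\pi^{-1}$ is a bijection on $\cS_{kd/2}$, which matches the form in the proposition.) No step here looks like an obstacle; the only thing worth being slightly careful about is verifying that the matrix flattening of $\bA^{\otimes d}$ is indeed $A^{\otimes d}$ up to the index reordering that the symmetrization over $\hat\cS_{kd/2}$ absorbs, but this is a routine bookkeeping check.
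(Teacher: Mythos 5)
Your proof is correct and is essentially the same argument the paper gives: the paper's proof of this proposition simply points back to the chain of identities culminating in its equation (2.2) for the $k=4$ case, and you have written out exactly that chain for general even $k$, including the correct flattening bookkeeping and the observation that $\Pi$ versus $\Pi^\top$ is immaterial because inversion is a bijection on $\cS_{kd/2}$.
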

\begin{proof}
    The sequence of calculations culminating in \pref{eq:properrelax} gives the proof.
\end{proof}
\noindent Now, how can this give an improved upper bound over $\|A^{\otimes d}\|_{op} = \|A\|_{op}^d$?
The reason is that although $\bA$ had $4$-wise symmetry, the tensor $\bA^{\otimes d}$ {\em does not have} $4d$-wise symmetry.
For $I,J \in [n]^{2d}$, $I = (i_1,i'_1),\ldots, (i_d, i'_{d})$ and $J = (j_1,j'_1),\ldots,(j_d,j'_d)$ and for permutations $\pi,\sigma$ on $2d$ elements,
\[
(A^{\tensor d})_{I,J}
~=~
\prod^d_{\substack{\ell = 1\\(i_\ell,i'_{\ell}) \in I\\(j_\ell,j'_{\ell}) \in J}} A_{i_\ell,i'_{\ell},j_{\ell},j'_{\ell}}
~\bm{\neq}~
\prod^d_{\substack{\ell = 1\\(a_\ell,a'_\ell) \in \pi(I)\\(b_\ell,b'_\ell) \in \sigma(J)}} A_{a_\ell,a'_\ell,b_\ell,b'_\ell}
~=~ (A^{\tensor d})_{\pi(I),\sigma(J)}\mcom
\]
because the identity of the base variables in the expression may change under the permutation of the indices $I$ and $J$.
Thus, the typical entry of $\E_{\Pi,\Sigma \in \hat\cS_{2d}} \left[ \Pi(A^{\tensor d})\Sigma\right]$ is an average of $(d/2!)^2$ random variables, which are not independent, but also not identical.
Since the entries of $A$ are distributed symmetrically about zero, we expect the magnitude of the typical entry to drop after this averaging.
If we indulge the heuristic assumption that the entries of $\E_{\Pi,\Sigma \in \hat\cS_{2d}} \left[ \Pi(A^{\tensor d})\Sigma\right]$ are averages of $d^{\Omega (d)}$ {\em independent} random symmetric variables of constant variance, then the magnitude of the typical entry should be $\approx \frac{1}{d^{\Omega (d)}}$.
So heuristically, we have that
\[
    \left\|\E_{\Pi,\Sigma \in \hat\cS_{2d}} \left[ \Pi(A^{\tensor d})\Sigma\right]\right\|_F \le \frac{1}{d^{\Omega (d)}}\cdot\|A^{\otimes d}\|_F.
\]
By Wigner's semicircle law, matrices with independent entries have eigenvalues that are all roughly of the same magnitude.
Because our matrix has roughly independent entries, we may hope that the semicircle law holds for us, so that from the above heuristic calculations and from \pref{eq:properrelax},
\begin{equation*}
    \|\bA\|_{inj}
    \le \left(\left\|\E_{\Pi,\Sigma \in \hat\cS_{2d}} \left[ \Pi(A^{\tensor d})\Sigma\right]\right\|_{op}\right)^{1/d}
    \le \left(\frac{1}{d^{\Omega (d)}}\cdot\|A^{\otimes d}\|_{op}\right)^{1/d}
    \le \frac{n}{d^{\Omega(1)}}\mper
\end{equation*}
Thus, we expect that as we increase $d$, and therefore increase the symmetry of the {\it tensored vectors} $x \otimes x$ relative to the ``noisy'' non-tensor product eigenvectors of $A$, we can certify a tighter upper bound on $\|\bA\|_{inj}$.
Of course, since our certificate is the eigenvalue of a $n^{2d} \times n^{2d}$ matrix, the running time the refutation algorithm grows exponentially in the choice of $d$.

\subsection{Matrix concentration for the certificate}
Our algorithm is now clear: we form our matrix certificate by averaging overrows and columns corresponding to permutations of row and column indices in $A^{\otimes d}$, then use the certificate matrix's eigenvalues to upper bound $\|\bA\|^d_{inj}$ (by \pref{prop:properrelax}).
\begin{theorem}\label{thm:4-upper}
    Let $n,d\in \N$.
Let $\bA$ be an order-$4$ tensor with independent entries, distributed according to subgaussian distribution symmetric about $0$.
Then if $d\log n \ll n$, with high probability over $\bA$,
    \[
	\left\| \E_{\Pi,\Sigma}\left[\Pi A^{\tensor d} \Sigma\right]\right\|^{1/d}
	\le \tO\left(\frac{n}{d^{1/2}}d^{12\frac{\log d}{\log n}}\right)
    \mper
    \]
\end{theorem}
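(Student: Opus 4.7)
I will bound $\|M\|_{op}$ for $M \defeq \E_{\Pi,\Sigma}[\Pi A^{\tensor d} \Sigma]$ via the trace power method: for an even integer $2q$,
\[
\E \|M\|_{op}^{2q} ~\le~ \E \Tr(M^{2q}),
\]
and then take $q$ growing roughly at rate $\log n$ and apply Markov's inequality to convert the moment bound into a high-probability bound on $\|M\|_{op}^{1/d}$. The averaging over $\Pi,\Sigma \in \hat\cS_{2d}$ will produce nonstandard combinatorics whose analysis yields the key cancellation responsible for beating the naive spectral bound $\|A\|_{op}^d \approx n^d$.

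\textbf{Expansion and reduction to a hypergraph counting problem.} Expanding $\Tr(M^{2q})$ gives a sum over (i) closed walks of multi-indices $I_1,\ldots,I_{2q} \in [n]^{2d}$ (with $I_{2q+1} \defeq I_1$) and (ii) tuples of permutations $(\pi_s,\sigma_s)_{s\le 2q}$ with $\pi_s,\sigma_s \in \cS_{2d}$. Each term is a product over $s$ of $(A^{\tensor d})_{\pi_s(I_s),\sigma_s(I_{s+1})}$, which unfolds into a product of $d$ entries of $\bA$; so each term is a product of $2qd$ entries of $\bA$, with every entry specified by four coordinates pulled from the $I_s$'s via the appropriate permutations. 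Since the entries of $\bA$ are independent and symmetric about $0$, the expectation of a monomial vanishes unless every participating entry of $\bA$ appears with even multiplicity. Hence the surviving terms are in bijection with $4$-uniform multi-hypergraphs on $4qd$ labeled slots---organized into $2qd$ hyperedges grouped into $2q$ blocks---in which the hyperedges are partitioned into equivalence classes of size $\ge 2$ (with equivalent hyperedges carrying equal vertex labels), and the slot-labels between adjacent blocks agree wherever they share indices through $I_s$. Subgaussianity then bounds the contribution of each hypergraph by a product of constants (the bounded moments of the entries) times $n^{v}$, where $v$ is the number of distinct vertex labels used.

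\textbf{Sampling process and the main obstacle.} The remaining task is to count the surviving hypergraphs, weighted by the normalization factor $((2d)!)^{-4q}$ from the expectation over $\Pi,\Sigma$. I plan to analyze a sequential sampling process that builds a valid configuration one hyperedge at a time: at each step, the process either \emph{opens} a new hyperedge, assigning fresh vertex labels (paying a factor of $n^{\#\text{ new labels}}$), or \emph{closes} a previously-opened hyperedge (no factor of $n$, but the choice of the remaining permutation coordinates is constrained). This enumeration groups configurations by structural type. The decisive point---and the main obstacle---is to show that after normalizing by $((2d)!)^{4q}$, typical configurations align with a balanced matching across the $2q$ blocks and contribute roughly $n^{2qd}/d^{q}$, yielding the promised $n/d^{1/2}$ factor per application of $M$, while atypical configurations (those with irregular sharing patterns, short cycles, or many ``unforced'' re-used labels) contribute only the correction $d^{12 \log d / \log n}$. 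Once this combinatorial count is in hand, substituting back and choosing $q \asymp \log n$ converts the moment bound into the claimed high-probability bound via Markov.
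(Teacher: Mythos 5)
Your high-level plan—trace power method with $q\asymp\log n$, reduction to a hypergraph-counting problem, subgaussianity giving bounded moments, and then Markov—is the correct frame and matches the paper's starting point. But your ``overall approach'' stops exactly where the real work begins: you explicitly flag the combinatorial count as ``the main obstacle'' and then assert, rather than establish, that the normalization $((2d)!)^{-4q}$ suppresses generic configurations to the target $n^{2qd}/d^{q}$ (up to the $d^{12\log d/\log n}$ fudge factor). The sequential ``open a hyperedge / close a hyperedge'' process you sketch treats hyperedges as atoms, and it is not at all clear how that enumeration interacts with the average over permutations to extract the $d^{-1/2}$-per-step gain; closed even hyperwalks alone already give $\approx n^{2qd}$, and the improvement has to come from showing that for a generic vertex labeling, only a tiny fraction of the permutation choices produce a hypergraph in which every hyperedge has even multiplicity.

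The paper's proof supplies exactly the missing mechanism via a \emph{two-stage} sampling process that you do not have: it first draws a perfect matching of \emph{simple} (pairwise) edges between each consecutive pair of vertex-multisets, and only afterwards groups those edges into hyperedges. This split is what makes the count tractable: (i) the probability that the simple edges are all even is controlled by the trace of an averaged Kronecker power of an $n\times n$ Rademacher matrix, which reduces to $\|B\|^{dk\ell}$ for a Rademacher $B$ and yields the $n^{2d\ell}$ factor (Lemma~\ref{lem:simpleeven} / Proposition~\ref{prop:simple-graph-bound-1}); (ii) conditioned on the simple edges being even and generic, the probability that a random pairing of those edges into hyperedges is also even is at most $(c_\beta/d)^{(1-10\beta)d\ell}$ (Lemma~\ref{lem:matching-columns} / Lemma~\ref{lem:even-matching}), which is the source of the $d^{-1/2}$ per step; and (iii) a supporting graph-theoretic lemma (Lemma~\ref{lem:vtx-edge}) shows that vertex configurations with many distinct vertices necessarily have many distinct edges, so the ``non-generic'' exceptional set is small enough that its naive $n^{(1-\alpha)2d\ell}$ contribution is dominated after balancing $\alpha$ against $d,n$. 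You would need to prove statements playing the roles of (i)--(iii) for your one-hyperedge-at-a-time process; nothing in your proposal does this, so as written the argument is incomplete precisely at the step you call decisive.
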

\noindent As a corollary of \pref{thm:4-upper} and \pref{prop:properrelax}, we get \pref{thm:inj-tensor-informal} for the case of order-$4$ tensors.

At the end of the previous subsection we gave a heuristic argument that a statement along the lines of \pref{thm:4-upper} should be true.
While the heuristic argument is plausible, it is very far from a formal proof; we need to prove that the eigenvalues of $\E[\Pi A^{\tensor d} \Sigma]$ are bounded by $\approx \tO(n/\sqrt{d})^d$  with high probability.
But the matrix $\E[\Pi A^{\tensor d} \Sigma]$ is not a sum of independent random matrices, and it does not have independent entries, so sophisticated matrix concentration tools (like the semicircle law or matrix Chernoff bounds) do not apply.
For tasks of this sort, the trace power method, or the method of moments, is the tool of choice:
\begin{proposition}[Trace power method]
    \torestate{
	\label{prop:tpm}
	Let $n,\ell \in \N$, let $c \in \R$, and let $M$ be an $n \times n$ random matrix.
	Then
	\[
	    \E_M[\Tr((MM^\top)^\ell)] \le \beta \implies
	    \Pr\Paren{\|M\| \ge c \cdot \beta^{1/2\ell} } \ge 1-c^{-2\ell} \mper
\]
}
\end{proposition}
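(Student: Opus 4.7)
The plan is to reduce the operator norm bound to a moment bound via a standard PSD inequality, then apply Markov's inequality. The key observation is that because $MM^\top$ is positive semidefinite, so is $(MM^\top)^\ell$, whose eigenvalues are the $\ell$-th powers of the eigenvalues of $MM^\top$. In particular the largest eigenvalue of $(MM^\top)^\ell$ equals $\|M\|^{2\ell}$, and since every eigenvalue of a PSD matrix is dominated by the sum of its (nonnegative) eigenvalues, we obtain the pointwise deterministic inequality
\[
\|M\|^{2\ell} \;\le\; \Tr\bigl((MM^\top)^\ell\bigr).
\]
Taking expectations and invoking the hypothesis then yields $\E\bigl[\|M\|^{2\ell}\bigr] \le \beta$.

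Next I would apply Markov's inequality to the nonnegative random variable $\|M\|^{2\ell}$ at the threshold $(c\cdot\beta^{1/2\ell})^{2\ell} = c^{2\ell}\beta$:
\[
\Pr\bigl(\|M\| \ge c\cdot\beta^{1/2\ell}\bigr) \;=\; \Pr\bigl(\|M\|^{2\ell} \ge c^{2\ell}\beta\bigr) \;\le\; \frac{\E\bigl[\|M\|^{2\ell}\bigr]}{c^{2\ell}\beta} \;\le\; \frac{\beta}{c^{2\ell}\beta} \;=\; c^{-2\ell}.
\]
Complementing gives $\Pr(\|M\| \le c\cdot\beta^{1/2\ell}) \ge 1 - c^{-2\ell}$, which is the content of the proposition (the inequality in the statement as printed appears to have a reversed sense inside the probability, but the standard reading is the upper-tail bound above).

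There is no genuine obstacle in the proof itself---it is a Markov-on-$2\ell$-th-moments argument, with the trace of $(MM^\top)^\ell$ serving as a convenient deterministic upper bound on $\|M\|^{2\ell}$ that is also amenable to symbolic expansion. The real work lies downstream, in bounding $\E[\Tr((MM^\top)^\ell)]$ via the hypergraph counting problem alluded to earlier in this section; this proposition's role is precisely to license that reduction of a matrix concentration question to a scalar moment computation, where the combinatorial structure of $M = \E_{\Pi,\Sigma}[\Pi A^{\tensor d}\Sigma]$ can be attacked directly.
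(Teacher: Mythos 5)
Your proof is correct and is essentially the same as the paper's: both rest on the deterministic inequality $\|M\|^{2\ell} = \|(MM^\top)^\ell\| \le \Tr((MM^\top)^\ell)$ (operator norm bounded by trace for PSD matrices) followed by Markov's inequality. The only cosmetic difference is that you take expectations first to bound $\E[\|M\|^{2\ell}]$ and then apply Markov to $\|M\|^{2\ell}$, whereas the paper applies Markov directly to $\Tr((MM^\top)^\ell)$; these are the same calculation. You also correctly spotted that the inequality inside the probability in the statement is reversed from what is actually proven.
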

\noindent The proof is essentially an application of Markov's inequality; we give it in \pref{app:useful}.

\subsubsection{From bounding the expected trace to a hypergraph counting problem}
A classic way to apply the trace power method is to reduce to a graph counting problem.
For example, let $M$ be a symmetric $n \times n$ random matrix with independent Rademacher entries.
We can view the row/column index set $[n]$ as a set of ``vertices,'' and the entry $M_{i,j}$ as an ``edge'' variable between vertices $i$ and $j$.
The trace $\Tr(M^{\ell})$ is the sum over products of edge variables along closed walks of length $\ell$ in the graph defined by $M$.
When we take $\E_{M}[\Tr(M^{\ell})]$, any closed walk in which an edge appears with odd multiplicity does not contribute to the sum, since $\E[M_{i,j}^m] = 0$ for odd $m$.
Therefore, $\E[\Tr(M^{\ell})]$ is equal to the number of closed walks
of length $\ell$ in which every edge appears with even multiplicity,
within the complete graph $K_n$, and bounding $\E[\Tr(M^\ell)]$ becomes a counting problem.

We make a similar reduction for our matrix $C \defeq \E_{\Sigma,\Pi}[\Pi A^{\otimes d} \Sigma]$.
The rows and columns of $A$ are indexed by pairs $[n]^2$, we interpret each variable $A_{ij,k\ell}$ as a (multi)hyperedge between the vertices $(i,j)$ and $(k,\ell)$ corresponding to the row and column indices respectively.
In the Kronecker power $A^{\otimes d}$, the rows and columns are indexed by vertex multisets $I,J \in [n]^{2d}$, $I = (i_1,i_1',\ldots,i_d,i_{d}'),\ J = (j_1,j_1',\ldots,j_{d},j_d')$, and the entry $(A^{\otimes d})_{I,J}$ is the product of the hyperedges $\prod_{k = 1}^d A_{i_k i_k',j_k j_k'}$.
We view this as a hyperedge matching between $I,J$, in which the vertices $(i_{k},i_{k}')$ are matched with the vertices $(j_{k},j_{k}')$ for each $k \in [d]$ (see \pref{fig:hypic}).

Now to obtain our matrix $C$, we average over row and column symmetries, so that $C_{I,J} = \E_{\pi,\sigma \in \cS_{2d}} [(A^{\otimes d})_{\pi(I),\sigma(J)}]$.
In each entry of $C$, we average over the permutations of the left and right vertex sets, which is the same as averaging over all perfect hypergraph matchings from $I$ to $J$ (again see \pref{fig:hypic}).

\begin{figure}[t!]
    \center
    \includegraphics[width=0.9\textwidth]{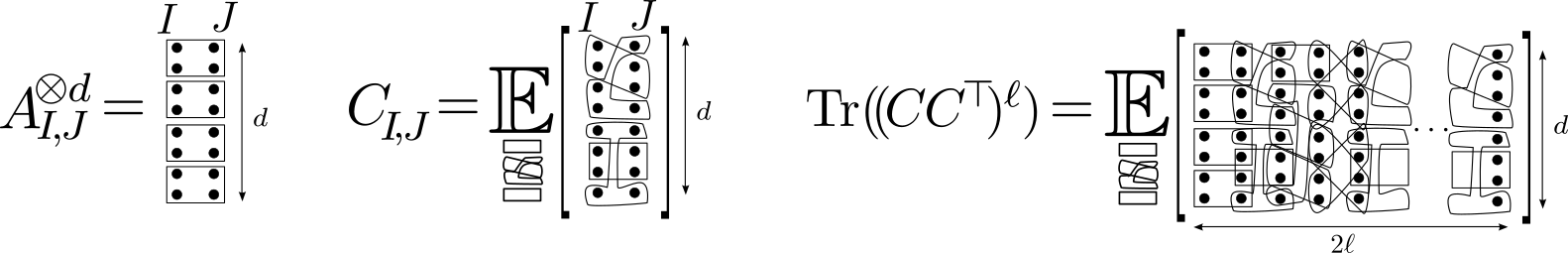}
    \caption{Hypergraph interpretations of the entries of $A^{\otimes d}$, $C_d$, and $\Tr((CC^{\top})^{\ell})$.}\label{fig:hypic}
\end{figure}

Just as in the case of the simple random matrix $M$, we can interpret $\Tr((CC^\top)^\ell)$ as the sum over all closed walks of length $2\ell$ on the complete graph (with self-loops) on the vertex set $[n]^{2d}$, where the edge variable between $I,J$ is the {\em average} over all possible hyperedge matchings between $I$ and $J$.
When we take the expectation over $\bA$, $\E_{\bA}[\Tr((CC^{\top})^\ell)]$, any {\em hyperedge} appearing with odd multiplicity will cause the contribution of the closed walk to be $0$, since the entries of $\bA$ are distributed symmetrically about $0$.

Our reduction is now complete.
Because we will be dealing with subgaussian random variables, the entries of $\bA$ will concentrate well enough for us to reduce to the Rademacher case.
\begin{lemma}\label{lem:red}
    Let $\bA$ be an order-$4$ tensor with i.i.d. Rademacher entries, and let $A$ be its matrix flattening.
    Let $C_d \defeq \E_{\Sigma,\Pi \in \hat \cS_{2d}}[\Pi A^{\otimes d} \Sigma]$.
    For the $2\ell$ multisets of vertices $I_1,\ldots,I_{2\ell} \in [n]^{2d}$, let $\cH$ be the set of all sequences of perfect hyperedge matchings between each $I_j$ and $I_{j+1 \mod 2\ell}$, so that each hyperedge has 2 vertices from $I_j$ and 2 vertices from $I_{j+1}$.
    For a fixed sequence of hyperedge matchings $H \in \cH$, let
    $\cE_{I_1,\ldots,I_{2\ell}}(H \text{ even})$ be the event that
    every hyperedge appears with even multiplicity.
    Then
    \[
	\E_{\bA}\left[\Tr\left((C_dC_d^{\top})^{\ell}\right)\right]
	= \sum_{I_1,\ldots,I_{2\ell} \in [n]^{2d}} \Pr_{H\sim\cH}\left[ \cE_{I_1,\ldots,I_{2\ell}}(H \text{ even}) \right]
    \]
\end{lemma}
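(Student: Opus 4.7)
The plan is to perform a standard method-of-moments expansion: unfold the trace as a sum over closed walks on $[n]^{2d}$, unpack the permutation average defining each entry of $C_d$ as a sum over perfect hyperedge matchings, then evaluate the resulting expectation over Rademacher variables using the identity $\E[A_e^m] = \Ind[m \text{ even}]$. No analytic tools (concentration, matrix inequalities) are required; it is purely combinatorial bookkeeping.

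First, I would expand $\Tr((C_d C_d^\top)^\ell) = \Tr(C_d C_d^\top \cdots C_d C_d^\top)$ via the standard trace identity, obtaining
\[
    \Tr((C_d C_d^\top)^\ell) = \sum_{I_1,\ldots,I_{2\ell} \in [n]^{2d}} (C_d)_{I_1, I_2}(C_d)_{I_3, I_2}(C_d)_{I_3, I_4}(C_d)_{I_5,I_4} \cdots (C_d)_{I_1, I_{2\ell}},
\]
where alternate factors arise in transposed form from $C_d^\top$. The alternating row/column pattern will be immaterial below, because the matching structure we associate with an entry $(C_d)_{I,J}$ is symmetric in $I$ and $J$.

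Next, I would substitute $(C_d)_{I,J} = \E_{\pi,\sigma \in \cS_{2d}}[(A^{\otimes d})_{\pi(I),\sigma(J)}]$ into each factor. For fixed $I,J$, the entry $(A^{\otimes d})_{\pi(I),\sigma(J)}$ equals the product $\prod_{k=1}^{d} A_{\pi(I)_{2k-1},\pi(I)_{2k},\sigma(J)_{2k-1},\sigma(J)_{2k}}$, which is exactly the product of the $d$ hyperedges of a perfect matching between the $2d$ vertices of $I$ and the $2d$ vertices of $J$, with each hyperedge taking two vertices from each side. Averaging over $(\pi,\sigma) \in \cS_{2d}^2$ induces the uniform distribution on these matchings, since by the transitive action of $\cS_{2d}$ on ordered pair-partitions every matching is produced by the same number of permutation pairs. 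Independence of the $(\pi_t,\sigma_t)$ across the $2\ell$ factors then yields
\[
    \prod_{t=1}^{2\ell} (C_d)_{I_t,I_{t+1}} \;=\; \E_{H \sim \cH}\Bigl[\,\prod_{t=1}^{2\ell} \prod_{e \in H_t} A_e\,\Bigr].
\]

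Finally, I would take $\E_\bA$, interchange it with the finite sum over $(I_1,\ldots,I_{2\ell})$ and the finite $H$-expectation, and apply $\E_\bA[\prod_e A_e^{m_e}] = \Ind[\text{every } m_e \text{ is even}]$, valid since the $A_e$ are i.i.d. Rademacher. Grouping the edges of $H = (H_1,\ldots,H_{2\ell})$ by their value in $[n]^4$, the resulting indicator is precisely $\Ind[\cE_{I_1,\ldots,I_{2\ell}}(H \text{ even})]$, whose $H$-expectation is $\Pr_{H \sim \cH}[\cE_{I_1,\ldots,I_{2\ell}}(H \text{ even})]$. Assembling these pieces gives the claimed identity.

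The only step needing care is the identification of the $(\pi,\sigma)$-average with the uniform distribution on $\cH$: one must check that each perfect hyperedge matching between $I$ and $J$ is hit by exactly the same number of pairs in $\cS_{2d}^2$, so that the normalizations line up. This is an elementary symmetric-group fact and not a real obstacle. The substantive hypergraph counting—which is where all the work actually lies—is deferred to subsequent lemmas that bound the right-hand side.
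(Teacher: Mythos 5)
Your proposal is correct and follows essentially the same route the paper takes: expand the trace as a sum over closed walks, identify each entry $(C_d)_{I,J}$ as an average of $(A^{\otimes d})_{\pi(I),\sigma(J)}$ over permutation pairs, observe that this average is the uniform average over perfect hyperedge matchings between $I$ and $J$ (your transitive-action count is the right justification, and it is exactly the kind of check the paper leaves implicit), and finally apply $\E[\prod_e A_e^{m_e}] = \Ind[\text{all } m_e \text{ even}]$ for i.i.d.\ Rademacher variables. The paper's own proof is a one-paragraph appeal to ``the observations preceding the lemma statement''; you have simply spelled those observations out, which is fine. The only remark that is slightly loose is your claim that the alternating row/column orientation in the trace is immaterial because ``the matching structure is symmetric in $I$ and $J$'': the \emph{set} of matchings is symmetric, but since $\bA$ is not assumed symmetric, the ordered variable $A_{i_1 i_2, j_1 j_2}$ assigned to a hyperedge does depend on which side plays the row role, so one should say instead that the hyperedges inherit an orientation from the trace and that the evenness event is understood with respect to those oriented hyperedges on both sides of the identity. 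This is the same imprecision the lemma statement itself carries (and which the paper later handles via explicit asymmetrization in the formal proofs), so it does not invalidate the argument.
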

\begin{proof}
    Any product of Rademacher random variables has expectation $0$ if some variable appears with odd multiplicity, and $1$ otherwise.
This, along with the observations preceding the lemma statement, implies that each $I_1,\ldots,I_{2\ell}$ contributes exactly the probability that hyperedges chosen for it all have even multiplicity (where we get a probability since each entry $C_{I,J}$ is the average over hyperedge matchings from $I$ to $J$).
\end{proof}

\subsubsection{Bounding the probability of an even hypergraph}

From \pref{lem:red} and \pref{prop:tpm}, in order to prove \pref{thm:4-upper} it suffices for us to bound
    \begin{equation}\label{eq:prsum}
	\sum_{I_1,\ldots,I_{2\ell} \in [n]^{2d}} \Pr_{H\sim\cH}\left[ \cE_{I_1,\ldots,I_{2\ell}}(H \text{ even}) \right] \le \tO(n/d^{1/2})^{2d\ell},
    \end{equation}
for $\ell = \Omega(\log n)$.
Since each probability is bounded by $1$ and there are $n^{4d\ell}$ terms in the sum, \pref{eq:prsum} easily gives us an upper bound of $n^{4d\ell}$.
We need to improve upon this naive bound twofold: first, we need the dependence on $n$ to be $n^{2d\ell}$.
This would give a bound of $\|\E[\Sigma A^{\otimes d} \Pi]\| \le \tO(n^{d})$ w.h.p., but we can get this bound trivially by ignoring the symmetrization, as $\|A^{\otimes d}\| \le \tO (n^d)$ w.h.p.
To fully reap the rewards of symmetrization, we must improve by a factor of $\approx (\sqrt{d})^{-2d\ell}$.

At first, bounding \pref{eq:prsum} seems daunting--it is unclear how to count the number of such hypergraphs with even multiplicity, while simultaneously getting the correct dependence on $n$ and $d$.
It will be helpful to use the following two-step process for sampling hypergraphs: for a fixed vertex configuration $I_1,\ldots,I_{2\ell} \in [n]^{2d}$,
\begin{compactenum}
\item First, sample perfect simple edge matchings between $I_j,I_{j+1}$ for each $j \in [2\ell]$.\label{step:medge}
\item Next, pair up the edges between $I_j, I_{j+1}$ and merge each pair to form a hyperedge.\label{step:hedge}
\end{compactenum}
We will use \pref{step:medge} to bound the dependence on $n$, and \pref{step:hedge} to bound the dependence on $d$.
In particular, our arguments from \pref{lem:red} give us the following lemma almost immediately:
\begin{lemma} \label{lem:simpleeven}
    Let $\cM$ be the set of all possible choices of edge sets sampled in \pref{step:medge}.
    Let $\cE_{I_1,\ldots,I_{2\ell}}(E \text{ even})$ be the event that the graph given by the edges $E \in \cM$ on $I_1,\ldots,I_{2\ell}$ has every edge appearing with even multiplicity. Then
    \begin{equation}
	\sum_{I_1,\ldots,I_{2\ell} \in [n]^{2d}} \Pr_{E \sim
	\cM}\left[\cE_{I_1,\ldots,I_{2\ell}}(E \text{ even})\right] =
	\E_{M}\left[\Tr\left((BB^{\top})^{\ell}\right)\right]
	\mcom\label{eq:simplesum}
    \end{equation}
	where $M$ is an $n \times n$ matrix with i.i.d. Rademacher entries, and $B \defeq \E_{\Pi,\Sigma \in \hat\cS_{2d}}[\Pi M^{\otimes 2d}\Sigma]$.
\end{lemma}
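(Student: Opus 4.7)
My plan is to mirror the reduction carried out in the proof of \prettyref{lem:red}, but with \emph{simple} edges in place of hyperedges. The key observation is that $B \defeq \E_{\Pi,\Sigma \in \hat\cS_{2d}}[\Pi M^{\tensor 2d} \Sigma]$ plays, for the Rademacher matrix $M$ and its $2d$-fold Kronecker power, exactly the role that $C_d$ played for $A^{\tensor d}$: each entry $(M^{\tensor 2d})_{I,J}$ with $I,J \in [n]^{2d}$ is a product of $2d$ independent Rademacher variables indexed by simple edges between the row multiset $I$ and column multiset $J$, and averaging over $\hat\cS_{2d}$ on both sides averages over all perfect simple edge matchings between $I$ and $J$.

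Concretely, I would first unpack the definition of $B$ to show that for any $I,J \in [n]^{2d}$,
\[
    B_{I,J} = \E_{\mu}\left[\prod_{(u,v) \in \mu} M_{u,v}\right]\mcom
\]
where $\mu$ ranges over uniformly random perfect simple edge matchings between $I$ and $J$. Next, I would expand the trace in the standard walk form,
\[
    \Tr\left((BB^\top)^\ell\right) = \sum_{I_1,\ldots,I_{2\ell} \in [n]^{2d}} \prod_{j=1}^{2\ell} B_{I_j, I_{j+1}}
\]
(indices modulo $2\ell$), and use independence of the matchings $\mu_j$ between consecutive $I_j, I_{j+1}$ to rewrite $\prod_{j} B_{I_j,I_{j+1}}$ as a single expectation $\E_{E \sim \cM}\bigl[\prod_{e \in E} M_e\bigr]$ over the joint distribution on sequences of matchings--which is precisely \pref{step:medge} of the two-step sampling process.

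Finally, taking $\E_M$ of this expression and using the Rademacher moments $\E[M_{u,v}^r] = \Ind[r \text{ even}]$, each sampled configuration $E$ contributes $1$ if every simple edge appears with even multiplicity, and $0$ otherwise. Averaging over $E \sim \cM$ therefore yields exactly $\Pr_{E\sim\cM}[\cE_{I_1,\ldots,I_{2\ell}}(E \text{ even})]$, and summing over $I_1,\ldots,I_{2\ell}$ gives the claimed identity. There is no serious obstacle: once one identifies $B$ as the simple-edge analogue of $C_d$, the argument is essentially a restatement of \prettyref{lem:red}. The only point requiring some care is the verification that the averaging prescribed by $\hat\cS_{2d}$ corresponds to sampling a uniformly random perfect matching, which follows from the observation that pre- and post-composing $M^{\tensor 2d}$ by index permutations simply permutes the entries of the Kronecker power in a way that realizes every matching with equal multiplicity.
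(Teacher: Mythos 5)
Your proposal is correct and matches the paper's approach: the paper simply states that \prettyref{lem:simpleeven} follows ``almost immediately'' from the arguments of \prettyref{lem:red}, and your proposal carries out exactly that reduction---identifying $B_{I,J}$ as an average over perfect simple-edge matchings between $I$ and $J$, expanding $\Tr\bigl((BB^\top)^\ell\bigr)$ as a sum over closed walks, and using the Rademacher moments $\E[M_{u,v}^r] = \Ind[r\text{ even}]$ to convert each walk's contribution into the probability of the evenness event.
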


\pref{lem:simpleeven} lets us relate the probability that we sample a
perfect matching in which every edge appears with even multiplicity in
\pref{step:medge} to the norm of a matrix $M$ with i.i.d. Rademacher entries, which is an object we understand well: with very high probability, $\|M\| \le O(\sqrt{n})$, and because of the connection between the expected trace and the norm of a matrix, we can then bound \pref{eq:simplesum} by the desired $\tO(n^{1/2})^{4d\ell}$.

To use \pref{eq:simplesum}, we need to relate the probability that the edges sampled in \pref{step:medge} have even multiplicity to the probability that the hyperedges sampled in \pref{step:hedge} have even multiplicity.

\begin{lemma}[somewhat informal statement]\label{lem:graph-from-hgraph-nocount}
    Let $I_1,\ldots,I_{2\ell} \in [n]^{2d}$, and suppose we have sampled hyperedges $H \in \cH$ by first sampling simple edges $E \in \cM$ as in \pref{step:medge} and then grouping them as in $\pref{step:hedge}$.
Then
\[
    \Pr(\cE_{I_1,\ldots,I_{2\ell}}(E \text{ even})~|~ \cE_{I_1,\ldots,I_{2\ell}}(H \text{ even})) \ge \left(\frac{1}{2}\right)^{2d\ell}\mper
\]
\end{lemma}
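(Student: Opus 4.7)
The plan is to describe the conditional distribution of $E$ given $H$ explicitly, and then to exhibit at least one choice of ``splits'' that produces an even $E$ from any even $H$. The key structural observation is that each hyperedge $h$ with left vertices $(u_1,u_2)\in I_j$ and right vertices $(v_1,v_2)\in I_{j+1}$ admits exactly two decompositions into a pair of simple edges: ``split $A$'' equal to $\{(u_1,v_1),(u_2,v_2)\}$ and ``split $B$'' equal to $\{(u_1,v_2),(u_2,v_1)\}$.

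The first step is to bound the number of pairs $(E,\pi)$---where $\pi$ is the pairing of simple edges into hyperedges used in Step~2 of the sampling---that map to a fixed $H$. Any such $(E,\pi)$ is determined by a choice of split for each of the $d\cdot 2\ell = 2d\ell$ hyperedges in $H$ (counted with multiplicity): once the splits are fixed, $E$ is the multiset union of the split edges and $\pi$ is the grouping induced by the hyperedge-to-edge correspondence. Hence the preimage of $H$ has size at most $2^{2d\ell}$, and since the original sampling is uniform over $(E,\pi)$, conditioning on $H$ yields the uniform distribution on this preimage.

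Second, when $H$ is even I produce an explicit $(E,\pi)$ in the preimage with $E$ even. Since every distinct hyperedge in $H$ has even multiplicity, the $2d\ell$ hyperedges (counted with multiplicity) can be partitioned into $d\ell$ pairs of identical copies. Assigning both members of each such pair the same split contributes exactly two copies of each simple edge in the chosen split (and zero copies of the other split's two edges), so each pair contributes an even count to every simple edge. Summing over the $d\ell$ pairs, every simple edge in the resulting $E$ has even total multiplicity. Combined with the preceding paragraph, at least one of the (at most $2^{2d\ell}$) elements of the preimage has $E$ even, so $\Pr(E\text{ even}\mid H\text{ even}) \ge 1/2^{2d\ell}$.

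The only technical care needed is to handle repeated hyperedges at a single position, or repeated vertices in the $I_j$: in such cases, distinct split assignments can collapse to the same $(E,\pi)$, making the preimage strictly smaller than $2^{2d\ell}$. Since this only \emph{increases} the fraction of preimages with even $E$, the lower bound $1/2^{2d\ell}$ still holds. This is the sole obstacle to the proof and is dispatched by noting that we only ever need an upper bound on the preimage size.
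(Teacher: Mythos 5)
Your proof is correct and uses essentially the same mechanism as the paper's sketch: conditional on $H$, each hyperedge occurrence has exactly two possible splits, and you exhibit at least one good split assignment out of the at most $2^{2d\ell}$ equally likely ones, whereas the paper directly bounds the probability that all copies of each distinct hyperedge choose the same split---interchangeable bookkeepings of the same idea. Like the paper's informal sketch, you implicitly treat each occurrence of a given labeled hyperedge as carrying the same left/right bipartition into $I_j$ and $I_{j+1}$ (so that ``the same split'' is well defined across copies), a caveat the paper flags in its footnote and resolves in the formal version (\pref{lem:graph-from-hgraph-nocount2}) by asymmetrizing the tensor.
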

\begin{proof}[Proof (sketch).] For any given hyperedge $(i,j,k,\ell) \in H$, with $i,j \in I_a$ and $k,\ell \in I_{a+1}$, there are only two ways it could have been sampled as pairs of edges, either as a merge of $(i,k),(j,\ell) \in E$ or of $(i,\ell),(j,k) \in E$.
If all copies of a hyperedge of even multiplicity $m$ are sampled the same way, then the corresponding edges also have even multiplicity.\footnote{In the formal proof, we'll have to take care to start with an asymmetric tensor, with $\bA_{ijk\ell} \neq \bA_{\pi(ijk\ell)}$ for permutations $\pi$, so that no hyperedge can appear with even multiplicity by being grouped from the edges $(i,k), (j,\ell)$ and also $(i,j), (k,\ell)$.}
For a hyperedge of multiplicity $m$, every copy of the hyperedge is sampled in the same way with probability at least $(1/2)^{m}$,  which becomes $(1/2)^{2d\ell}$ for the $2d\ell$ hyperedges in the graph.
\end{proof}

Now, using the shorthand $\cE(\cdot\text{ even}) \defeq\cE_{I_1,\ldots,I_{2\ell}}(\cdot \text{ even})$, we already have that
\begin{align*}
    \Pr_{H\sim\cH}\left[ \cE(H \text{ even}) \right]
    &=
    \frac{\Pr\left[ ~\cE(H \text{ even}) , \cE(E \text{ even})~\right] }{\Pr[\cE(E \text{ even})~|~ \cE(H \text{ even}))}
    \le
    2^{2d\ell}\cdot \Pr\left[ \cE(H \text{ even}) ~|~ \cE(E \text{ even})\right] \cdot \Pr\left[\cE(E \text{ even})\right]\mper
\end{align*}
Further, we have our bound from \pref{lem:simpleeven}, so if we could bound $\max_{I_1,\ldots,I_{2\ell}} \Pr[\cE(H \text{ even}) ~|~ \cE(E \text{ even})] \le d^{-2k\ell}$, we would be done.
But this conditional probability is not always small--for example, there is the case when $I_1=\cdots=I_{2\ell}$ are all multisets containing the same vertex $i \in [n]$ with multiplicity $2d$.
In this case, the probability that we sample an even hypergraph is $1$.

Still, so long as there are sufficiently many different vertices in $I_1,\ldots,I_{2\ell}$, we can prove that this conditional probability is small enough:
\begin{lemma}
	\label{lem:matching-columns}
    Let $E_1,\ldots, E_{2\ell} \in [n\times n]^{2d}$ be multisets of edges such that
    every edge is present in the union at least twice, and
    the number of distinct edges in the union is at least $(1-\beta) 2d\ell$, i.e.,  $|\cup_{i=1}^{2\ell} E_i| \geq (1-\beta) 2d\ell$.

Let $P_i$ denote a uniformly random pairing of elements within $E_{i}$ sampled independently for each $i \in[2\ell]$.
Then there exists a constant $c_\beta$ depending only on $\beta$ such that
\[
    \Pr[ \cup_i P_i \text{ has every pair with even multiplicity} ]  \leq \left( \frac{c_\beta}{d} \right)^{(1 - 10\beta)d\ell} \mper
\]
\end{lemma}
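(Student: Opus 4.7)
My plan is to prove the bound by direct counting: enumerate the tuples $(P_1,\ldots,P_{2\ell})$ that yield an all-even hyperedge multiset, then divide by the total $((2d-1)!!)^{2\ell}$. The analysis hinges on identifying the forced structure imposed by the multiplicity-$2$ edges.

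First, a simple counting observation: let $p$ count the edges of multiplicity exactly $2$ and $q$ those of multiplicity $\geq 3$; the hypotheses give $p + q \geq (1-\beta)\cdot 2d\ell$ and $2p + 3q \leq 4d\ell$, so $p \geq (2-6\beta)d\ell$ and the total high-multiplicity occurrence count is at most $12\beta d\ell$. The key structural observation is: if $e$ has multiplicity exactly $2$, then in any all-even outcome both of $e$'s occurrences must be paired with occurrences of a \emph{single common label} $\phi(e)$; otherwise $e$ would lie in two distinct odd-multiplicity hyperedges with no further $e$-occurrences to close them. I split $L_2$ into $L_2'$ (whose partner is also in $L_2$, so $\phi$ restricts to a fixed-point-free involution on $L_2'$) and $L_2''$; since each $L_2''$-label consumes two high-multiplicity items, $|L_2''| \leq 6\beta d\ell$ and hence $|L_2'| \geq (2-12\beta) d\ell$.

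Next I would union-bound over $\phi$. Each matched pair in $\phi$ forces $2$ specific pairs in the block pairings (one in each of two blocks containing $e$ and $\phi(e)$, or two in the single block containing all four occurrences), so $\sum_i r_i = |L_2'|$ where $r_i$ counts forced pairs in $E_i$. The number of pairings of $E_i$ consistent with these forced pairs is at most $2^{r_i/2}(2(d-r_i)-1)!!$ (the $2^{r_i/2}$ absorbs the two-fold same-block ambiguity); dividing by $(2d-1)!!$ and applying Stirling yields
\[
\Pr[\text{tuple consistent with } \phi] \leq \prod_{i=1}^{2\ell}\frac{2^{r_i/2}(2(d-r_i)-1)!!}{(2d-1)!!} \leq (C/d)^{|L_2'|}
\]
for a universal constant $C$. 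To count valid $\phi$, I exploit compatibility: $\phi$ can only match labels sharing the same multiset of blocks. Grouping labels of $L_2'$ by block-multiset $S$ and writing $n_S$ for the group sizes, the number of valid $\phi$ is at most $\prod_S (n_S - 1)!! \leq \prod_S n_S^{n_S/2} \leq (2d)^{|L_2'|/2}$, using the uniform cap $n_S \leq 2d$ (each group contributes at least one item to some fixed block of size $2d$). Combining via union bound and using $|L_2'|/2 \geq (1-6\beta)d\ell \geq (1-10\beta)d\ell$,
\[
\Pr[\text{all-even}] \leq (2d)^{|L_2'|/2}(C/d)^{|L_2'|} = (2C^2/d)^{|L_2'|/2} \leq (c_\beta/d)^{(1-10\beta)d\ell},
\]
with $c_\beta = 2C^2$ a universal constant.

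The principal obstacle is bounding the number of matching structures $\phi$, since a naive count of involutions on $L_2'$ would far exceed the target. The block-compatibility constraint, controlled via the uniform cap $n_S \leq 2d$, is what saves the count. A secondary technicality is verifying that the per-$\phi$ bound $(C/d)^{r_i}$ holds uniformly across blocks including those where $r_i$ is close to $d$, which requires careful use of Stirling on the double-factorial ratio; any slack here is absorbed into the gap between the $(1-6\beta)$ that naturally arises and the $(1-10\beta)$ required.
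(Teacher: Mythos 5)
Your approach is a genuinely different route from the paper's. The paper proves this lemma as the special case $r=2$, $c=d$, $M=2\ell$ of the more general Lemma~\ref{lem:even-matching}, whose proof samples the $r$-groupings sequentially, identifies ``critical configurations'' (last ungrouped occurrence of an infrequent label, with enough room left in the block), and runs a moment-generating-function argument $\E[\alpha^{\sum Z_i}]$ to get exponential concentration. Your proof is a direct union-bound/counting argument specific to $r=2$: you extract the forced involution $\phi$ on the multiplicity-exactly-$2$ labels, observe that $\phi$ can only match labels with identical block-multisets, and charge the union bound against the probability that all the forced pairs are realized. Your observation that block-compatibility caps each group at $n_S\le 2d$ (so the involution count is roughly $(2d)^{|L_2'|/2}$, rather than the hopeless $|L_2|^{|L_2|/2}$) is exactly the right thing, and the Stirling estimate $(2(d-r)-1)!!/(2d-1)!!\le (e/2d)^r\cdot O(1)$ is fine even for $r$ near $d$. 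The paper's proof is more general (it also bounds the probability of having few odd $r$-groups, not just zero, and handles all $r$, both of which are used elsewhere in the paper); your proof is more self-contained and transparent for the pairing case.

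There is a bookkeeping gap in the union bound that needs to be filled. Both the domain $L_2'$ and the involution $\phi$ are determined by the random outcome, so ``union-bound over $\phi$'' must really be a union bound over pairs $(S,\mu)$ where $S\subseteq L_2$ is the domain and $\mu$ is a block-compatible perfect matching of $S$. Your count $\prod_S (n_S-1)!!$ enumerates only perfect matchings of each group, not partial matchings on a variable domain; the correct count (say, via an encoding that scans $L_2$ and for each unprocessed label records ``skip'' or ``matched to one of $\le 2d$ partners'') is at most $2^{|L_2|}\cdot (2d)^{|S|/2}\le 4^{d\ell}(2d)^{|S|/2}$. This extra $4^{d\ell}$ (and the per-block $O(1)$ Stirling slack, which multiplies to $O(1)^{2\ell}$) must be folded into $c_\beta$. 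This is doable, and the slack between the $(1-6\beta)$ you derive and the $(1-10\beta)$ in the statement is more than enough, but it does mean your $c_\beta$ genuinely depends on $\beta$ rather than being universal as you claimed. Once that accounting is done the argument goes through.
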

\begin{proof}[Proof (sketch, details in proof of \pref{lem:even-matching}).]
    Suppose we make our pairing decisions one multiset at a time.
We must pair the last copy of each edge correctly, so that all its pairs have even multiplicity.
There are $2d$ edges per matching, so the probability that we make this last decision correctly is $\approx \Omega(d)^{-1}$.
We make $d$ pairing decisions per matching, and we make the ``last'' decision about half of the time since every edge appears close to twice on average--this gives the probability to be roughly $\Omega(d)^{-d\ell}$.
\end{proof}

Now, as there are only $\approx n^{(1-\alpha) \cdot 2d\ell}$ choices of sets $I_1,\ldots,I_{2\ell}$ which could have at most $(1-\alpha) \cdot 2d\ell$ different edges, these sets contribute negligibly to the sum, and we have that
\begin{align*}
    \sum_{I_1,\ldots,I_{2\ell}}
     \Pr_{H\sim\cH}\left[ \cE_{I_1,\ldots,I_{2\ell}}(H \text{ even}) \right]
     &\le 2^{2d\ell} \cdot \left(n^{(1-\alpha) \cdot 2d\ell} + \Pr[\cE(H \text{ even}) ~|~ \cE(E \text{ even})] \sum_{I_1,\ldots,I_{2\ell}} \Pr\left[\cE_{I_1,\ldots,I_{2\ell}}(E \text{ even})\right]\right)\\
     &\le 2^{2d\ell} \cdot \left(n^{(1-\alpha) \cdot 2d\ell} + \left(\frac{c_{\alpha}}{\sqrt{d}}\right)^{(1-10\alpha)2d\ell} \cdot n^{2d\ell}\right)
\end{align*}
\noindent Balancing the terms concludes the proof; we will fill in the few remaining details in \pref{sec:tensor-D-even}.

\subsection{From $k$-XOR to tensor norms, and odd-order tensors}
\label{sec:xordet}

The proof of \pref{thm:4-upper} generalizes to tensors of all even orders $k$ almost immediately.
For odd $k$ we need an extra idea or two, since all natural flattenings of the tensor to a matrix result in a non-square matrix.
We give the details for even and odd $k$ in \pref{sec:tensor-D-even} and \pref{sec:tensor-D-odd} respectively.

As hinted earlier, to apply these ideas to strongly refute $k$-XOR we need to overcome two main hurdles.
First, as the number of clauses is small, $m \approx p \cdot n^{k} < n^{k/2}$, the tensor corresponding to the instance is sparse enough that the injective tensor norm $\max_{y \in \R^n} | \iprod{\bT,y^{\otimes k}}|$ is maximized by {\em sparse} vectors $y$.
Sparse vectors $y \in \R^n$ are too far from the solutions of interest, namely Boolean vectors $x \in \sbits^n$, which are in a sense maximally dense.

To address this issue, we will consider a sub-matrix of the tensored matrix $A^{\otimes d}$.
Again, let us consider the case of $k = 4$.
Recall that,
$ \max_{x \in \sbits^n} \iprod{\bA, x^{\otimes 4}}^d = \max_{x \in \sbits^n} \left| x^{\otimes 2d} A^{\otimes d} x^{\otimes 2d} \right|$.
The rows and columns of $A^{\otimes d}$ are indexed by $I,J \in [n]^{2d}$.
We refer to a tuple $I \in [n]^{2d}$ as {\it high multiplicity} if there is some $i \in [n]$ which has multiplicity greater than $100\log n$ in $I$ (since we are interested in the case when $d = n^{\delta} \gg \log n$).
The rows and columns of $A^{\otimes d}$ corresponding to such tuples will be referred to as high-multiplicity rows and columns.
Let $\Gamma$ denote the projection on to the low-multiplicity indices, $(\Gamma x)_I = x_I \cdot \Ind[I \text{ not high-multiplicity}]$.

The key idea is that for a Boolean vector $x \in \sbits^{n}$, almost all of the $\ell_2$-norm of $x^{\otimes 2d}$ is concentrated within the low-multiplicity indices, i.e., $\norm{\Gamma x^{\otimes 2d}} \approx \norm{ x^{\otimes 2d}}$.
However, for a sparse vector $y \in \R^n$, $\norm{\Gamma y^{\otimes 2d}} \ll \norm{y^{\otimes 2d}}$.
Therefore, we eliminate the sparse maxima of the polynomial, by restricting the matrix to the low-multiplicity rows and columns, and then apply the averaging over row and column permutations.
Specifically, the spectral upper bound used by the refutation algorithm is,
\begin{align*}
    \max_{x \in \sbits^n} \left|\iprod{\bA, x^{\otimes 4}}\right|^d  =  \max_{x \in \sbits^n} \left| (x^{\otimes 2d})^{\top} A^{\otimes d} x^{\otimes 2d} \right|
    & \approx  \max_{x \in \sbits^n} \left| (x^{\otimes 2d})^{\top} \left(\Gamma A^{\otimes d}\Gamma^T \right)x^{\otimes 2d} \right|\\
 & \leq n^{2d} \cdot \Norm{\E_{\Pi,\Sigma \in \hat{\cS}_{2d}} \left[ \Pi \left(\Gamma A^{\otimes d}\Gamma^T \right) \Sigma \right]}\mper
\end{align*}

The second challenge is that, in the sparse regime where $p \le 1/n^{k/2}$, the entries of the random matrix $A$ are ill-behaved.
Specifically, the entries of $A$ have distributions with unusually large higher moments, completely unlike Gaussian or Rademacher random variables.
For example, the $2r^{th}$ moment of an entry $\E[A_{ijk\ell}^{2r}] = p \gg (\E[A_{ijk\ell}^2])^r = p^r$.
In the trace calculation we outlined earlier, each term of the sum was either $0$ if any variable had odd multiplicity, and otherwise $1$.
In the sparse regime, different terms in the trace contribute vastly different amounts, depending on the multiplicities involved.
So we must count our hypergraphs precisely, taking into account the multiplicity of each hyperedge, rather than the just the parity.
We use the {\it encoding technique} to count the number of hypergraph structures accurately, in a way reminiscent of similar arguments in random matrix theory (e.g. \cite{FK81}).
Although the counting argument involved is more subtle than the case of random $4$-tensors (see \pref{sec:pr-hgraph}), we are still able to use the same $2$-step hyperedge sampling process to simplify the counting.

\section{Injective Tensor Norm for Subgaussian Random Tensors}\label{sec:randombds}

In this section, we show how to certify bounds on the norm of a random tensor, building on our proof of the order-$4$ case in \pref{sec:tech-over}.
We handle the even-order and odd-order cases separately, as the odd-order case contains some additional intricacies.

\pref{sec:tensor-D-even} contains the proof for even tensors.
\pref{sec:tensor-D-odd} contains the proof for odd tensors.
In \pref{sec:combinatorial-lemma}, we prove a combinatorial lemma that we rely upon in both proofs.

\subsection{Even-order tensors}\label{sec:tensor-D-even}
The case of order-$k$ tensors when $k$ is even is almost completely outlined in \pref{sec:tech-over}, in the proof overview of \pref{thm:4-upper}.
Some of the statements from the overview need additional proof, and some need generalization for $k > 4$.
We briefly fill in the gaps.

Recall that in our setting, we are given a symmetric order-$k$ tensor $\bA$ with i.i.d. standard Gaussian entries, where $k$ is even.
Our algorithm consists of computing the operator norm of a certificate matrix; though we described this certificate ion \pref{sec:tech-over}, we will require one small twist to make our proofs easier:
\begin{algorithm}[Certifying even $k$-tensor norms]\label{alg:eten}~

    {\bf Input:} An order-$k$ dimension-$n$ tensor $\bA$, for even $k$.
    \begin{compactenum}
    \item Form the asymmetric tensor $\bA'$ from $\bA$ as follows.
	For each $S \in [n]^k$,
	\begin{compactenum}
	\item if $S$ is lexicographically first among all permutations of $S$, set $\bA'_S = \sum_{\pi\in \cS_k}\bA_{\pi(S)}$.
	\item otherwise, set $\bA'_S = 0$.
	\end{compactenum}
    \item Take the natural $n^{k/2} \times n^{k/2}$ matrix flattening $A$ of $\bA'$, and form $A^{\tensor d}$.
    \item Letting $\hat \cS_{dk/2}$ be the set of all permutation matrices that perform the index permutations corresponding to $\cS_{dk/2}$ on the rows and columns of $A^{\otimes d}$, form
	\[
	    C_d \defeq \E_{\Pi,\Sigma \in \hat\cS_{dk/2}}\left[\Pi A^{\tensor d} \Sigma \right]\mper
	\]
    \end{compactenum}
    {\bf Output:} $\|C_d\|^{1/d}$ as a bound on the objective value.
\end{algorithm}

First, we verify the completeness of the certificate:
\begin{lemma}\label{lem:D-lower}
    Let $\bA$ be a symmetric order-$k$ tensor for even $k$, and let $A$ be the natural matrix flattening of $\bA'$ the asymmetrization of $\bA$ described in \pref{alg:eten}.
	Let $\cS_{dk/2}$ be the symmetric group on $dk/2$ elements, and further let $\hat \cS_{dk/2}$ be the set of $n^{dk/2} \times n^{dk/2}$ matrices that apply the permutations of $\cS_{dk}$ to matrices whose rows and columns are identified with multisets in $[n]^{dk}$.
    Then
    \[
	\left\|\E_{\Pi,\Sigma \in \hat\cS_{dk}}\left[\Pi(A^{\tensor d})\Sigma\right]\right\|^{1/d} \ge \|\bA\|_{inj}\mper
\]
\end{lemma}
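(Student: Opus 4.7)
The plan is to lower-bound $\|C_d\|$ by evaluating the quadratic form $y^\top C_d y$ on the symmetric test vector $y \defeq (x^*)^{\tensor dk/2}$, where $x^* \in \R^n$ is a unit-norm maximizer of $|\iprod{\bA, x^{\tensor k}}|$. Since $\|y\|=1$ and $\|C_d\| \ge |y^\top C_d y|$, it will suffice to show that $|y^\top C_d y| = \|\bA\|_{inj}^d$.

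The key observation is that $y$ is invariant under every coordinate permutation in $\hat\cS_{dk/2}$: its coordinate $y_I = \prod_{j=1}^{dk/2} x^*_{I_j}$ depends only on the multiset of entries of $I$. Thus $\Pi y = y$ and $y^\top \Sigma = y^\top$ for every $\Pi, \Sigma \in \hat\cS_{dk/2}$, so by linearity of expectation
\[
    y^\top C_d y \;=\; \E_{\Pi,\Sigma}\bigl[y^\top \Pi A^{\tensor d} \Sigma y\bigr] \;=\; y^\top A^{\tensor d} y \;=\; \bigl((x^{*\tensor k/2})^\top A\, x^{*\tensor k/2}\bigr)^d \;=\; \iprod{\bA', (x^*)^{\tensor k}}^d.
\]
In other words, averaging over row/column permutations has no effect on a tensor-product test vector, and $y^\top C_d y$ reduces to the $d$-th power of the injective form of $\bA'$ evaluated at $x^*$.

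The remaining step is the orbit-collapsing identity $\iprod{\bA', (x^*)^{\tensor k}} = \iprod{\bA, (x^*)^{\tensor k}}$. The tensor $(x^*)^{\tensor k}$ is invariant under permuting its $k$ index slots, so regrouping $\sum_{S \in [n]^k} \bA_S ((x^*)^{\tensor k})_S$ by $\cS_k$-orbit yields $\sum_{S \text{ lex-first}} \bigl(\sum_{\pi \in \cS_k / \mathrm{Stab}(S)} \bA_{\pi(S)}\bigr) ((x^*)^{\tensor k})_S$, which matches the definition of $\bA'_S$ in \pref{alg:eten} (reading the sum $\sum_{\pi\in\cS_k}$ as running over distinct permutations of $S$). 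Chaining the two identities gives $|y^\top C_d y| = |\iprod{\bA, (x^*)^{\tensor k}}|^d = \|\bA\|_{inj}^d$, and taking $d$-th roots yields the claim.

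The only mild obstacle is the bookkeeping in the orbit-collapsing identity: one must be careful when $S$ has repeated indices (nontrivial stabilizer) that the combinatorial factor from the stabilizer cancels correctly between the two sides. Once that sanity check is in place, the lemma is essentially a direct specialization of \pref{prop:properrelax} applied to $\bA'$ (whose natural flattening is $A$), combined with the observation that symmetric test tensors $x^{\tensor k}$ cannot distinguish $\bA$ from $\bA'$.
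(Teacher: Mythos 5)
Your proof is correct and follows the same route as the paper's: the paper simply notes that the argument is identical to \pref{prop:properrelax} once one observes $\iprod{\bA, x^{\otimes k}} = \iprod{\bA', x^{\otimes k}}$, which is exactly your chain of reductions via the symmetric test vector $(x^*)^{\tensor dk/2}$. Your parenthetical caveat about the stabilizer factor is a fair catch — the paper's formula $\bA'_S = \sum_{\pi \in \cS_k} \bA_{\pi(S)}$ only yields the claimed identity if read as a sum over distinct permutations of $S$ (or, equivalently, if one restricts attention to multilinear $S$), a detail the paper elides.
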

\begin{proof}
    The proof is identical to that of \pref{prop:properrelax}, up to noticing that $\langle \bA, x^{\otimes k}\rangle = \langle \bA', x^{\otimes k}\rangle$.
\end{proof}

Now, we will prove that in the case that $\bA$ is a random tensor with i.i.d. subgaussian entries, our certification algorithm improves smoothly upon the simple spectral algorithm as we invest more computational resources.
\begin{theorem}\label{thm:D-upper}
    Let $n,k,d\in \N$, with even $k$.
    Let $\bA$ be a symmetric order-$k$ tensor with independent entries distributed symmetrically about $0$.
    Let $A$ be the matrix flattening of $\bA'$, the asymmetrization of $\bA$ described in \pref{alg:eten}.
    Then if $d \ll n^{1/3k^2}$, there exists a constant $c$ such that with high probability over $\bA$,
    \[
	\left\| \E_{\Pi,\Sigma}\left[\Pi A^{\tensor d} \Sigma\right]\right\|^{1/d}
    \le (c \log^2 n )^k \cdot d^{\frac{k^2\log d}{4\log n}} \cdot \frac{n^{k/4}}{d^{(k-2)/4}}\mper
    \]
\end{theorem}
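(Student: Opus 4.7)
The plan is to apply the trace power method (\pref{prop:tpm}) to $C_d$ with $\ell = \Theta(\log n)$, so it suffices to bound $\E_\bA[\Tr((C_d C_d^\top)^\ell)]$ by roughly $((c\log^2 n)^k d^{k^2\log d/(4\log n)} n^{k/4}/d^{(k-2)/4})^{2d\ell}$. I would first reduce from the symmetric subgaussian case to the Rademacher case by standard truncation: condition on the event that every entry of $\bA$ is at most $O(\sqrt{\log n})$ in magnitude (which holds w.h.p.\ over $\bA$) and absorb the resulting factor of $(\log n)^{O(k)}$ into the $(c\log^2 n)^k$ factor. The asymmetrization step in \pref{alg:eten} is already built into the certificate, and its only effect on the analysis is to ensure that distinct hyperedges correspond to distinct sign variables (cf.\ the footnote in \pref{lem:graph-from-hgraph-nocount}).

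Second, I would generalize the hypergraph-counting reduction of \pref{lem:red} from $k=4$ to arbitrary even $k$: rows and columns of $A^{\otimes d}$ are indexed by multisets $I \in [n]^{dk/2}$; each entry of $A^{\otimes d}$ is a product of $d$ hyperedges, each with $k/2$ vertices on the left side and $k/2$ on the right; and symmetrization by $\hat\cS_{dk/2}$ on both sides turns each entry of $C_d$ into an average over all perfect hyperedge matchings between the left and right multisets. Writing the trace as a sum over closed walks and taking expectations then gives
\[
    \E_\bA\bigl[\Tr\bigl((C_d C_d^\top)^\ell\bigr)\bigr]
    = \sum_{I_1,\ldots,I_{2\ell} \in [n]^{dk/2}} \Pr_{H\sim\cH}\bigl[\cE_{I_1,\ldots,I_{2\ell}}(H\text{ even})\bigr]\mper
\]

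Third, I would carry out the two-step sampling from \pref{sec:tech-over}: first sample a perfect \emph{simple} edge matching between each $I_j$ and $I_{j+1\bmod 2\ell}$ (each hyperedge contributes $k/2$ simple edges, split arbitrarily across the bipartition), then group consecutive $(k/2)$-tuples of edges to form hyperedges. Generalizing \pref{lem:simpleeven} identifies the sum over even simple-edge configurations with $\E_M[\Tr((BB^\top)^{(k/2)\ell})]$ for an $n\times n$ Rademacher matrix $M$ and $B = \E_{\Pi,\Sigma}[\Pi M^{\otimes dk/2}\Sigma]$; using $\|M\|\le O(\sqrt n)$ w.h.p.\ (Wigner) bounds this by $(c\log n)^{2dk\ell}\cdot n^{dk\ell/2}$. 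Generalizing \pref{lem:graph-from-hgraph-nocount}, the probability that the simple-edge configuration is even conditional on the hyperedge configuration being even is at least $C_k^{-2d\ell}$ for a constant $C_k$ counting the re-groupings of $k/2$ edges into a single hyperedge, which is absorbed into the $(c\log^2 n)^k$ factor.

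Fourth, I would invoke the combinatorial lemma generalizing \pref{lem:matching-columns} (proved in \pref{sec:combinatorial-lemma}): conditioned on the simple-edge configuration being even and having at least $(1-\beta)\cdot dk\ell/2$ distinct edges, the probability of getting an even hypergraph after grouping is at most $(c_\beta/d)^{(1-O(\beta))dk\ell/2}$. Splitting the outer sum according to whether the $I_j$'s span fewer than or at least $(1-\beta)\cdot dk\ell/2$ distinct edges, the first regime contributes at most $n^{(1-\beta)\cdot dk\ell}$ trivially (few distinct edges forces few distinct vertices), while the second is bounded by
\[
    (c_\beta/\sqrt d)^{(1-O(\beta))\cdot dk\ell}\cdot n^{dk\ell/2} \cdot (c\log n)^{O(dk\ell)}\mper
\]
Taking $\beta = \Theta(\log d/\log n)$ balances the two terms, yielding (after taking the $2\ell$-th root and applying \pref{prop:tpm}) the claimed bound; the hypothesis $d \ll n^{1/(3k^2)}$ is what keeps $\beta$ small enough that the $d^{k^2 \log d/(4\log n)}$ factor remains subdominant. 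The main obstacle is the precise combinatorial bookkeeping for general even $k$, namely the generalization of \pref{lem:matching-columns} to grouping $k/2$ edges (rather than $2$) into each hyperedge while preserving the $(1/d)^{\Omega(dk\ell)}$ savings, and the parameter optimization against $\beta$ that produces the specific exponent $k^2\log d/(4\log n)$.
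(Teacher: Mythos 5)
Your proposal mirrors the paper's proof essentially step for step: truncation to bounded entries, trace power method with $\ell=\Theta(\log n)$, the two-step simple-edge/hyperedge sampling, identifying the even-simple-edge count with $\E[\Tr((C'C'^\top)^\ell)]$ for a symmetrized Rademacher Kronecker power (\pref{prop:simple-graph-bound-1}), the conditional-probability bound via the grouping lemma (\pref{lem:even-matching}, using \pref{lem:vtx-edge} to pass between distinct vertices and distinct edges), and balancing with $\beta\approx\log d/\log n$. There are a couple of small arithmetic slips --- the trace power should be $\ell$, not $(k/2)\ell$, and the few-configuration regime should contribute roughly $n^{(1-\beta)dk\ell/2}$ rather than $n^{(1-\beta)dk\ell}$ --- but the structure, lemmas, and parameter optimization all match the paper's argument.
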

\noindent The proof is nearly identical to the $k = 4$ case from \pref{sec:tech-over}, so we will be brief.
\begin{proof}
    We will assume that each entry of $\bA'$ is bounded in absolute value by $\gamma = O(\sqrt{d \log n}))$, as by the subgaussian assumption this is true with high probability, even after symmetrization.
    This assumption preserves the symmetry of the distribution.

    As in the proof of the $k=4$ case from \pref{sec:tech-over}, we will use the trace power method.
    For shorthand, let $C \defeq \E_{\Pi,\Sigma\in\cS_{dk}}\left[\Pi A^{\tensor k} \Sigma\right]$.
Let $\cH$ be the set of all hyperedge configurations possible (the set of all possible length-$2\ell$ sequences of hypergraph matchings on two sets of $dk/2$ vertices).
    Let $\cV$ be the set of all vertex configurations possible (the set of all possible length-$2\ell$ sequences of vertex multisets $I_1,\ldots,I_{2\ell} \in [n]^{dk/2}$).
    We note now that there are not many vertex configurations which use few vertices in $[n]$:
\begin{fact}\label{fact:few-configs}
    Let $\cV_{\alpha}$ be the set of vertex configurations on $dk\ell$ vertices containing fewer than $\alpha dk \ell/2$ distinct vertices from $[n]$.
    Then
    \[
	|V_{\alpha}| \le (\alpha d k \ell/2)^{(1-\alpha/2)dk\ell} \cdot n^{\alpha d k \ell/2}.
    \]
\end{fact}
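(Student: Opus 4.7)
The plan is to bound $|\cV_\alpha|$ by a direct counting argument, parametrised by the number $s$ of distinct vertices appearing in the configuration. First, I would flatten each configuration $(I_1,\ldots,I_{2\ell})$ into a single sequence of length $dk\ell$ in $[n]^{dk\ell}$; since this treats the multisets as ordered tuples, the resulting count is an over-count, which only strengthens the upper bound. A configuration using exactly $s$ distinct vertex labels then corresponds to a function $f\colon [dk\ell]\to[n]$ with $|f([dk\ell])|=s$, and the number of such functions is $\binom{n}{s}\cdot T(dk\ell,s)$, where $T(N,s)$ denotes the number of surjections from $[N]$ onto $[s]$.

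Next, I would bound $T(dk\ell,s)$ by the standard ``anchor'' argument: pick $s$ of the $dk\ell$ positions to serve as the first occurrence of each distinct target in $\binom{dk\ell}{s}$ ways, assign the $s$ targets to these anchors in $s!$ ways, and then fill the remaining $dk\ell-s$ slots with any of the $s$ targets, giving $T(dk\ell,s)\le \binom{dk\ell}{s}\cdot s!\cdot s^{dk\ell-s}$. Combining this with $\binom{n}{s}\le n^{s}/s!$ and $\binom{dk\ell}{s}\le (e\,dk\ell/s)^{s}$, the number of configurations with exactly $s$ distinct vertices is bounded by $n^{s}\cdot(e\,dk\ell/s)^{s}\cdot s^{dk\ell-s}$.

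Finally, I would sum over $s<\alpha dk\ell/2$. In the regime of interest $n\gg dk\ell$ (which holds under the hypothesis $d\ll n^{1/3k^{2}}$ of \pref{thm:D-upper} together with $\ell=O(\log n)$), the map $s\mapsto n^{s}\cdot s^{dk\ell-s}$ is monotone increasing on $[1,\alpha dk\ell/2]$, so the sum is dominated by its value at $s=\alpha dk\ell/2$, yielding $n^{\alpha dk\ell/2}\cdot(\alpha dk\ell/2)^{(1-\alpha/2)dk\ell}$ times a prefactor of the form $(c/\alpha)^{\alpha dk\ell/2}$ accounting for $(e\,dk\ell/s)^{s}$ and the at most $O(dk\ell)$ summands. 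The only mild obstacle is verifying that this prefactor is swallowed by the main term: since $\alpha^{2}dk\ell\gg 1$ in the relevant regime, one has $(\alpha dk\ell/2)^{\alpha dk\ell/2}\ge (c/\alpha)^{\alpha dk\ell/2}$, so the prefactor is absorbed, delivering the claimed bound.
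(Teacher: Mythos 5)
Your stratification by the number $s$ of distinct labels, the first‑occurrence/surjection count, and the observation that the $s\approx\alpha dk\ell/2$ stratum dominates is exactly the argument the paper's one‑line proof of \prettyref{fact:few-configs} is gesturing at, so the route is the same in spirit. You have simply been more explicit: your per‑$s$ bound $n^{s}(e\,dk\ell/s)^{s}s^{dk\ell-s}$ genuinely carries an extra factor $(e\,dk\ell/s)^{s}$, which at $s=\alpha dk\ell/2$ becomes $(2e/\alpha)^{\alpha dk\ell/2}>1$ sitting on top of the right‑hand side of the fact.

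The final ``absorption'' step does not work, though. You argue the prefactor is swallowed because $(\alpha dk\ell/2)^{\alpha dk\ell/2}\ge(c/\alpha)^{\alpha dk\ell/2}$. That inequality is true (it just says $\alpha^{2}dk\ell\gtrsim 1$), but it is not relevant: the right‑hand side of \prettyref{fact:few-configs} is $(\alpha dk\ell/2)^{(1-\alpha/2)dk\ell}\cdot n^{\alpha dk\ell/2}$, and there is no spare factor of $(\alpha dk\ell/2)^{\alpha dk\ell/2}$ sitting there to offset anything. To recover the stated bound literally you would need $(c/\alpha)^{\alpha dk\ell/2}\le 1$, which is false. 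What is actually going on is that \prettyref{fact:few-configs} as written is loose by a factor of order $e^{\Theta(\alpha dk\ell)}$: the paper's one‑line justification implicitly treats the first $\alpha dk\ell/2$ coordinates as the first occurrences of all distinct labels, which suppresses the combinatorial choice of which coordinates are first occurrences (a $\binom{dk\ell}{\alpha dk\ell/2}$-type factor), and the extra $(e\,dk\ell/s)^{s}$ in your derivation is exactly this missing choice. So both proofs establish the fact only up to a $c^{dk\ell}$ correction. That slack is harmless in the fact's single application, since the proof of \prettyref{thm:D-upper} already carries an explicit multiplicative $(c'_\beta k^{k}\gamma^{2}\ell^{k})^{d\ell}$ which dominates any $c^{dk\ell}$ factor. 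Rather than asserting the prefactor disappears, the cleaner statement is that your argument proves $|\cV_\alpha|\le C^{dk\ell}(\alpha dk\ell/2)^{(1-\alpha/2)dk\ell}n^{\alpha dk\ell/2}$ for a constant $C$, and that this suffices where the fact is used.
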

\begin{proof}
    There are only $n^{\alpha d k \ell/2}$ choices for vertex labels, and then $(\alpha d k \ell/2)^{(1-\alpha/2)dk\ell}$ choices for the rest.
\end{proof}

For $H \in \cH$ and $V \in \cV$, we let $w_{\bA}(V,H)$ denote the product of all hyperedge weights in the hyperedge cycle $(V,H)$ when the weights are given by entries of the tensor $\bA$.
Because the entries are distributed symmetrically about $0$, we have that
\begin{align}
    \E_{\bA}\left[ \Tr((CC^\top)^\ell)\right]
    &= \sum_{V \in \cV} \E_{H\in \cH}\left[\E_{\bA}[w_{\bA}(V,H)]\right]\nonumber\\
    &\le \sum_{V \in \cV} \E_{H\in \cH}\left[\gamma^{2d\ell} \cdot \Ind[ (V,H) ~\text{even,}\neq 0~]\right]
    ~=~ \gamma^{2d\ell} \cdot \sum_{V \in \cV} \Pr_{H\in \cH}\left[ (V,H) ~\text{even,}\neq 0~\right],\nonumber
    \intertext{where $\Ind[\cdot]$ is the $0-1$ indicator for an event.
    Notice that now, evenness is not enough to ensure that we have nonzero contribution--because we asymmetrized $\bA$, every hyperedge also has to be lexicographically first, meaning it appears either as $\bA'_{S,T}$ or $\bA'_{T,S}$ depending on whether it comes from a $C$ or $C^\top$ term.
    Using \pref{fact:few-configs} to argue that the number of vertex configurations with fewer than $(1-\beta)dk\ell/2$ distinct vertices (the number of $V \in \cV_{(1-\beta)}$) cannot be too large,}
    &\le \gamma^{2d\ell} \left( \left((\frac{1+\beta}{2}dk\ell)^{(1+\beta)}n^{(1-\beta)}\right)^{dk\ell/2} + \sum_{V \not\in \cV_{(1-\beta)}} \Pr_{H\in \cH}\left[ (V,H) ~\text{even,}\neq 0~\right]\right)\mper \label{eq:left-off}
\end{align}
So for a fixed $V \in \cV$, we will bound $\Pr_{H} [(V,H)~ \text{even,} \neq 0]$.

To do this, we repeat our argument from \pref{sec:tech-over}.
Fixing a vertex configuration $V = I_1,\ldots,I_{2\ell}$, we sample $H \sim \cH$ uniformly in two steps:
\begin{compactenum}
\item Sample a random perfect matching (of edges, not hyperedges) between every two consecutive vertex sets $I_i, I_{i+1}$, letting the configuration of edges we chose be $E$ from the set of all such possible configurations $\cM$. \label{step:medge2}
\item Group the edges between $I_i$ and $I_{i+1}$ into groups of size $k/2$, and merge every group into a hyperedge (of order $k$).\label{step:hedge2}
\end{compactenum}

Let $(V,E)$ be the intermediate graph in this process that produces the hypergraph $(V,H)$.
Notice that now,
    We restate, then prove, a more precise version of \pref{lem:graph-from-hgraph-nocount}
    \begin{lemma}[formal version of \pref{lem:graph-from-hgraph-nocount}]\label{lem:graph-from-hgraph-nocount2}
    Let $V = I_1,\ldots,I_{2\ell} \in [n]^{dk/2}$, and suppose we have sampled hyperedges $H \in \cH$ by first sampling simple edges $E \in \cM$ as in \pref{step:medge2} and then grouping them into groups of $k/2$ as in $\pref{step:hedge2}$.
Then
\[
    \Pr((V,E) \text{ even}~|~ (V,H) \text{ even}, \neq 0) \ge \left(\frac{1}{\frac{k}{2}!}\right)^{2d\ell}\mper
\]
\end{lemma}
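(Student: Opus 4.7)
My plan is to exploit the two-step sampling process and identify a convenient sufficient event. First, I would establish the conditional distribution of $E$ given $H$: because the grouping in Step~2 is a uniformly random partition of the sampled matching edges into blocks of size $k/2$, and because within each layer the $d$ hyperedges are vertex-disjoint on each side (they are built from a matching), conditioned on $H$ the edges forming each hyperedge copy are obtained by choosing one of the $(k/2)!$ perfect matchings between the $k/2$ left and $k/2$ right vertices of that hyperedge, independently across all copies and all layers. The condition $(V,H)\neq 0$ enters here through the asymmetrization in \pref{alg:eten}: since $\bA'$ is supported only on lex-first $k$-tuples, the $(k/2)!$ decompositions of any given hyperedge are genuinely distinguishable at the edge level, ruling out collisions that the tensor's permutation symmetry would otherwise cause.

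Next, I would isolate a sufficient event $F$ for $(V,E)$ to be even: every copy of every hyperedge in $H$ uses the same edge decomposition. If a hyperedge $h$ has (even) multiplicity $m_h$ and all $m_h$ copies share one decomposition, then each of the $k/2$ edges in that decomposition accrues multiplicity $m_h$ from $h$; different hyperedges may happen to decompose into a common edge at the same layer, but each hyperedge contributes an even count, so the total multiplicity of every edge in $E$ remains even.

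Then I would lower bound $\Pr[F \mid (V,H) \text{ even}, \neq 0]$. For a single hyperedge of multiplicity $m_h$, the probability under the product distribution established in the first step that all $m_h$ copies pick the same decomposition is $(k/2)! \cdot ((k/2)!)^{-m_h} \ge ((k/2)!)^{-m_h}$ (pick any of the $(k/2)!$ decompositions for the first copy, then force the remaining $m_h-1$ copies to match). Multiplying over hyperedges, using independence from the first step, gives $\Pr[F] \ge ((k/2)!)^{-\sum_h m_h}$; since there are $d$ hyperedges per layer and $2\ell$ layers, $\sum_h m_h = 2d\ell$, yielding the claimed $((k/2)!)^{-2d\ell}$ bound.

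The main obstacle I anticipate is carefully justifying the conditional-independence claim in the first step, particularly that the asymmetrization prevents two different edge decompositions from mapping to the same nonzero hyperedge entry. Without this, the notion of ``all copies decomposed the same way'' would be ambiguous and the clean product structure of the $(k/2)!$-way choices would break down. Once the conditional distribution is pinned down, the remaining combinatorics are elementary, as only the loose $((k/2)!)^{-m_h}$ bound per hyperedge is needed.
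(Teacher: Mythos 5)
Your proof is correct and follows essentially the same route as the paper's: conditioned on a nonzero even $H$, each hyperedge copy is independently assigned one of the $(k/2)!$ perfect matchings between its left and right vertices, and forcing all $m_h$ copies of each distinct hyperedge $h$ to agree (probability at least $((k/2)!)^{-m_h}$ per hyperedge, yielding $((k/2)!)^{-\sum_h m_h} = ((k/2)!)^{-2d\ell}$) is a sufficient event for $(V,E)$ to be even. The one point you spell out more explicitly than the paper—that edges shared by distinct hyperedges still end up with even total multiplicity because the per-hyperedge contributions are each even—is a genuine subtlety worth making explicit, but the overall decomposition and the role of the lex-first asymmetrization match the paper's argument.
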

\begin{proof}
    Suppose every hyperedge in $H$ is lexicographically first and has even multiplicity.
    Each hyperedge $h$ in $H$, $h$ was sampled from one of the $(k/2)!$ matchings of its left-hand vertices to its right-hand vertices with equal probability.
Let $h_1,\ldots,h_m$ be the distinct labeled hyperedges of our hypergraph.
    Since all our hyperedges are lexicographically first, the same bipartition of vertices is common to every appearance of $h_i$ for all $i \in [m]$.
    Thus, if we choose a uniformly random perfect matching of simple edges in each hyperedge of the hypergraph, we choose the same simple matching for all copies of $h_i$ with probability at least $(\frac{k}{2}!)^{-\# h_i}$.
    It follows that if all hyperedges appear in $(V,H)$ with even multiplicity, then with probability at least $(\frac{k}{2}!)^{-2d\ell}$ all simple edges in $(V,E)$ appear with even multiplicities.
\end{proof}

    Applying \pref{lem:graph-from-hgraph-nocount2},
\begin{align}
    \Pr((V,H)~\text{even}, \neq 0)
    &= \frac{\Pr((V,H) ~\text{even},\neq 0 ~\&~ ~(V,E) ~\text{even})}{\Pr((V,E) ~\text{even} ~|~ (V,H) ~\text{even},\neq 0)}\nonumber \\
    &\le \left(\frac{k}{2}!\right)^{2d\ell} \cdot\Pr((V,E) ~\text{even})\cdot \Pr((V,H) ~\text{even}~|~(V,E) ~\text{even})\mper \label{eq:cond}
\end{align}

We now relate the quantity $\sum_{V \in \cV} \Pr_{E}[(V,E)~\text{even}]$ to a matrix quantity we can control well.
Letting $B$ be an $n\times n$ matrix with symmetric i.i.d. entries uniform from $\{\pm 1\}$, and letting $C' = \E[\Pi B^{\tensor dk/2} \Sigma]$,
\[
    \E\left[\Tr((C' C'^\top)^\ell)\right]
   =
    \sum_{V \in \cV} \Pr_{E}[(V,E)~\text{even}]\mper
\]
We now prove and apply the following proposition, which is a restatement of \pref{lem:simpleeven} for arbitrary $k$:
\begin{proposition}\label{prop:simple-graph-bound-1}
    Let $n,d,k,\ell \in \N$ so that $dk\ell\log n \ll n$.
    Let $C' = \E_{\Pi,\Sigma \in \cS_{dk/2}} \left[\Pi B^{\tensor dk/2}\Sigma\right]$, for an $n \times n$ matrix $B$ with i.i.d. Rademacher entries.
    Then
    \[
    \E\left[\Tr((C' C'^\top)^\ell)\right]
	\le 2^{4dk\ell+1}n^{dk\ell/2 + dk/2}\mper
    \]
\end{proposition}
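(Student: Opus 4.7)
The plan is to exploit the fact that $C'$ is an average of unitary conjugates of $B^{\otimes dk/2}$, so its operator norm is dominated by $\|B\|_{\mathrm{op}}^{dk/2}$; combined with a standard rank-times-operator-norm bound on the trace, this reduces the problem to controlling a large moment of $\|B\|_{\mathrm{op}}$ for a Rademacher matrix, which is classical.

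First, I would show that $\|C'\|_{\mathrm{op}} \le \|B\|_{\mathrm{op}}^{dk/2}$. For any unit vectors $u,v \in \R^{n^{dk/2}}$, each $\Pi,\Sigma \in \hat\cS_{dk/2}$ is a permutation matrix and hence an isometry, so $\|\Pi^\top u\| = \|\Sigma v\| = 1$ and
\[
u^\top C' v \;=\; \E_{\Pi,\Sigma}\!\Bigl[(\Pi^\top u)^\top B^{\otimes dk/2} (\Sigma v)\Bigr] \;\le\; \|B^{\otimes dk/2}\|_{\mathrm{op}} \;=\; \|B\|_{\mathrm{op}}^{dk/2},
\]
using that the singular values of a Kronecker power are products of singular values. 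Since $C'$ is an $n^{dk/2} \times n^{dk/2}$ matrix, its rank is at most $n^{dk/2}$, and the bound $\Tr((MM^\top)^\ell) \le \rank(M)\cdot\|M\|_{\mathrm{op}}^{2\ell}$ gives
\[
\Tr\!\bigl((C'C'^\top)^\ell\bigr) \;\le\; n^{dk/2} \cdot \|C'\|_{\mathrm{op}}^{2\ell} \;\le\; n^{dk/2} \cdot \|B\|_{\mathrm{op}}^{dk\ell}.
\]

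It remains to bound $\E[\|B\|_{\mathrm{op}}^{dk\ell}]$ for the $n \times n$ Rademacher matrix $B$. Standard concentration gives $\Pr[\|B\|_{\mathrm{op}} > 3\sqrt{n}] \le e^{-\Omega(n)}$. Splitting the expectation along this event: on the good side, $\|B\|_{\mathrm{op}}^{dk\ell} \le 3^{dk\ell} n^{dk\ell/2}$, and on the bad side the deterministic bound $\|B\|_{\mathrm{op}} \le n$ yields a contribution of at most $n^{dk\ell} \cdot e^{-\Omega(n)}$, which is $o(3^{dk\ell} n^{dk\ell/2})$ under the hypothesis $dk\ell\log n \ll n$. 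Hence $\E[\|B\|_{\mathrm{op}}^{dk\ell}] \le 2^{O(dk\ell)} n^{dk\ell/2}$, and combining everything yields $\E[\Tr((C'C'^\top)^\ell)] \le 2^{O(dk\ell)} n^{dk\ell/2 + dk/2}$, as required.

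The ``hard'' part is entirely bookkeeping: one must check that the implicit constant in the $2^{O(dk\ell)}$ is at most $4$, which drops out of the crude bound $\|B\|_{\mathrm{op}} \le 3\sqrt{n}$ since $3^{dk\ell} \le 2^{2dk\ell}$. The conceptual content is the one-line observation in the first step that averaging a matrix against pairs of unitaries cannot increase its operator norm; everything else reduces to textbook random matrix estimates.
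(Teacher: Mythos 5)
Your proposal is correct and follows essentially the same route as the paper: bound $\|C'\|\le\|B\|^{dk/2}$ (you do it directly via the operator-norm definition, the paper via triangle inequality plus submultiplicativity, which are equivalent since permutation matrices have norm $1$), then use the rank/trace bound $\Tr((C'C'^{\top})^{\ell})\le n^{dk/2}\|C'\|^{2\ell}$ and finish by splitting $\E[\|B\|^{dk\ell}]$ on a concentration event for the Rademacher matrix. The only cosmetic difference is that you invoke the sharper constant $\Pr[\|B\|>3\sqrt{n}]\le e^{-\Omega(n)}$ where the paper uses its self-contained Theorem~\ref{thm:rademacher} with threshold $16\sqrt{n}$, giving exactly the stated $2^{4dk\ell+1}$; either constant is well within the claimed bound.
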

\begin{proof}
    Let $B$ be an $n \times n$ matrix with i.i.d. Rademacher entries, and let $d,\ell \in \N$.
    We have that for any $N \times N$ PSD matrix $P$,
	$\Tr\left(P^\ell\right)
	\le N \cdot \left\|P^\ell\right\|$,
    and because $C'C'^\top$ is PSD it follows that
    \begin{align}
	\Tr\left((C'C'^\top)^\ell\right)
	&\le n^{dk/2}\cdot \left\|(C'C'^\top)^\ell\right\|\mper\label{eq:trfact}
    \end{align}
    We will get a bound on $\E\|(C'C'^\top)^\ell\|$.
    Because $C'$ is symmetric, $C'C'^\top = (C')^2$.
    Thus, a bound on $\E\|C'^{2\ell}\|$ will suffice.
    We apply the triangle inequality and the submultiplicativity of the norm to deduce that for any $B$,
    \begin{align*}
	\|C'^{2\ell}\|
	~=~ \left\|\left(\E_{\Pi,\Sigma \in \hat\cS_{2d}}\left[ \Pi (B^{\tensor dk/2}) \Sigma\right] \right)^{2\ell}\right\|
	&\le \left(\E_{\Pi,\Sigma \in \hat\cS_{2d}}\left[\| \Pi\| \cdot \| (B^{\tensor dk/2}) \| \cdot \|\Sigma \|\right]\right)^{2\ell}
	~\le~ \| B \|^{dk\ell}\mcom
    \end{align*}
    and now, we can use standard arguments from random matrix theory to get tail bounds on $\|B\|$.
    From \pref{thm:rademacher}, we have that $\Pr[\|B\| -12 n^{1/2}\ge s] \le \exp(-s^2/16)$, and we also have that $\|B\| \le \|B\|_F \le n$, and thus it follows that
    \begin{align*}
	\E\left[\|C'^{2\ell}\|\right]
	~\le~ \E\left[\|B\|^{dk\ell}\right]
	&\le \Pr[\|B\| \le 16 \sqrt{n}] \cdot (16\sqrt{n})^{dk\ell} + \Pr[\|B\| > 16 \sqrt{n}]\cdot n^{dk\ell}\\
	&\le (1-\exp(-n)) \cdot (16\sqrt{n})^{dk\ell} + \exp(dk\ell \log n -n) ~\le~ 2(16\sqrt{n})^{dk\ell}\mcom
    \end{align*}
    and the conclusion follows from combining the above with \pref{eq:trfact}.
\end{proof}

We thus have
\begin{align}
    \sum_{V \in \cV_{(1-\beta)}} \Pr_{E}[(V,E)~\text{even}]
    \le
    \sum_{V \in \cV} \Pr_{E}[(V,E)~\text{even}]
    \le
    \E\left[\Tr((C' C'^\top)^\ell)\right]
    \le 2^{4dk\ell+1}n^{dk\ell/2 + dk/2}\mper\label{eq:left-off-2}
\end{align}

Now, from \pref{eq:cond} we are left to bound $\Pr[(V,H) ~\text{even}~|~(V,E) ~\text{even},~V \in \cV_{(1-\beta})]$.
We apply the following lemma:
\begin{lemma}
    \torestate{
    \label{lem:vtx-edge}
    Let $m,n,c \in \N$, and let $G$ be a graph which is a union of at most $c$ disjoint cycles.
    Suppose furthermore that each vertex receives labels from the set $[n]$, that every labeled edge appears with even multiplicity, and that there are exactly $m$ distinct labeled edges.
    Then letting $L$ be the number of distinct vertex labels, we have
    \[
	L \le m + c.
    \]
}
\end{lemma}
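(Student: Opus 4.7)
The plan is to reduce the claim to a basic inequality about connected components of graphs, by considering the quotient of $G$ induced by the vertex labeling.

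First, I would define the \emph{quotient graph} $H$ associated to the labeling $\phi \colon V(G) \to [n]$: the vertex set of $H$ is $\phi(V(G)) \subseteq [n]$, and the edge set of $H$ consists of the $m$ distinct labeled edges, each recorded as an unordered pair of labels. By construction $|V(H)| = L$ and $|E(H)| = m$, so the inequality to prove becomes a statement purely about $H$ together with a bound on its number of connected components.

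Next, I would show that $H$ has at most $c$ connected components. The map $\phi$ restricted to any single cycle component of $G$ sends adjacent vertices of the cycle to endpoints of a labeled edge, and hence to adjacent vertices of $H$. Therefore the image of each cycle under $\phi$ is a connected subgraph of $H$, and since $G$ is a union of at most $c$ cycles whose images jointly cover $V(H)$ and $E(H)$, the graph $H$ decomposes into at most $c$ connected pieces.

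The final step is the standard fact that a graph with $L$ vertices and $k$ connected components has at least $L-k$ edges (each component on $L_i$ vertices needs at least $L_i - 1$ edges, and summing gives $m \ge L - k$). Applied with $k \le c$, this yields $m \ge L - c$, i.e.\ $L \le m+c$. There is no real obstacle here; the only subtlety worth flagging is that the `even multiplicity' hypothesis plays no role in the argument, which in fact gives the stronger structural statement for any labeling of a union of $c$ cycles. I would conclude with a brief remark noting this, since the cleaner statement is what our trace computation needs.
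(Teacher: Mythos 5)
Your argument is correct and takes a genuinely different route from the paper's. The paper proves \pref{lem:vtx-edge} by a two-stage induction: it first establishes the bound when every labeled edge has multiplicity exactly $2$, by repeatedly locating a vertex whose label appears only once and removing or contracting it, and then reduces the general even-multiplicity case to the exactly-$2$ case by cutting out all copies of a high-multiplicity edge and reassembling the resulting path segments into cycles. Both stages lean on the even-multiplicity hypothesis. Your quotient-graph argument bypasses all of this: build the graph $H$ on the $L$ distinct labels whose edges are the $m$ distinct labeled edges, observe that the image of each cycle of $G$ under the labeling is a closed walk and hence a connected subgraph of $H$, so $H$ has at most $c$ connected components, and apply the elementary bound $|E(H)| \ge |V(H)| - (\text{number of components})$. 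This is shorter, and you are right that it shows the even-multiplicity hypothesis is unnecessary here --- the inequality is a pure connectivity fact about vertex-labelings of unions of cycles, and evenness does its real work elsewhere in the trace calculation, not in this lemma. The one point worth stating explicitly in a polished write-up is that $H$ may contain self-loops (and parallel edges, if labeled edges are read as ordered pairs), but the bound relating edges, vertices, and components holds for multigraphs with loops, so nothing breaks.
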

\noindent The proof of \pref{lem:vtx-edge} proceeds by a cute inductive argument, which we will reserve for \pref{sec:extra-lemmas}.

\pref{lem:vtx-edge} implies that if $(V,E)$ has at least $(1-\beta)dk\ell/2$ distinct vertices, then it must have at least $(1-\beta)dk\ell/2 - dk/2$ distinct edges.
Let $E_1,\ldots, E_{2\ell}$ be the matchings in $E$ so that $E_i$ gives the edges between $I_i, I_{i+1}$.
We invoke and prove a generalization of \pref{lem:matching-columns}:
\begin{lemma}
    \torestate{
    \label{lem:even-matching}
Fix $M,r, \ell, N \in \N$ and $\beta \in  (0,1)$.
Let $E_1,\ldots,E_M \in [N]^{r \cdot c}$ be multisets of elements such that
    the number of distinct elements in the union $\cup_{i\in[M]} E_i$ is at least $(1-\beta)M\cdot r \cdot c/2$.
Let $G_i$ denote a uniformly random $r$-grouping of elements within $E_i$, sampled independently for each $i\in[M]$.
Let $\bigoplus_i G_i$ denote the set of $r$-groups $(a_1,\ldots,a_r) \in [N]^r$ that appear an odd number of times within $\cup_i G_i$.
Then for any $0 < \delta < 3.5\beta$,
\[
    \Pr[ | \oplus_i G_i | \leq \delta Mc ]  \leq  \left( \frac{112} {\beta c} \right)^{ (1-(4r+1)\beta-2\delta) (r-1) Mc/2}
\]
}
\end{lemma}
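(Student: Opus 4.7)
The plan is to generalize the probabilistic sketch given for \pref{lem:matching-columns} from pairings ($r=2$) to general $r$-groupings. We sample $G_1, \ldots, G_M$ sequentially: for each $i$ in turn, repeatedly pick the first ungrouped position in $E_i$ and choose its $r-1$ partners uniformly from the remaining ungrouped positions of $E_i$. Call a grouping decision \emph{critical} if at least one of its $r$ positions is the last unprocessed copy of some label $v$ in the union $\bigcup_i E_i$. Since there are at least $(1-\beta)Mrc/2$ distinct labels in the union and each decision assigns exactly $r$ positions, at least $(1-\beta)Mc/2$ of the $Mc$ grouping decisions are critical.

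The key step is to bound the success probability of a single critical decision. Fix one such decision and let $g$ be the $r$-group it produces, containing a last-copy label $v$. Since $v$ will never reappear, the contribution of $g$ to $\oplus_i G_i$ is now frozen, so $g$ avoids $\oplus_i G_i$ only if it matches some previously-formed $v$-containing group whose running multiplicity is odd. The number of targetable groups is at most the prior multiplicity of $v$, and realizing a specific target $\{v, u_1, \ldots, u_{r-1}\}$ requires the $r-1$ uniform random partners to be positions carrying exactly the labels $u_1, \ldots, u_{r-1}$, an event of probability $O(1)/\binom{n_i - 1}{r-1} = O(c^{-(r-1)})$. For ``light'' labels of prior multiplicity at most $O(1/\beta)$, summing over targets gives a per-decision success probability of at most $\bigl(C/(\beta c)\bigr)^{r-1}$ for an absolute constant $C$.

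To close the argument, several slacks are absorbed into the exponent. The hypothesis $|\oplus_i G_i| \le \delta Mc$ permits up to $\delta Mc$ critical decisions to fail for free; ``heavy'' labels (those with multiplicity exceeding $O(1/\beta)$) account for only an $O(\beta)$ fraction of all positions, because the average label multiplicity is at most $2/(1-\beta)$; and decisions whose last-copy is a partner rather than the picker, as well as decisions touching several last-copies at once, cost a further $O(r\beta)\,Mc$ conceded decisions. Together these corrections reduce the number of binding critical decisions to at least $(1-(4r+1)\beta - 2\delta)\,Mc/2$, and multiplying the per-decision probability $(112/(\beta c))^{r-1}$ produces the claimed bound. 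The main obstacle is exactly this combinatorial bookkeeping: ensuring that the light/heavy split, the $\delta$-slack, and the partner-vs-picker subtleties combine to yield precisely the factor $(1-(4r+1)\beta - 2\delta)$ in the exponent, and that the absolute constant $112$ simultaneously absorbs the $(r-1)!$ partner-ordering factor, the $r$-fold union bound inside each decision, and the constants hidden in the ``light'' threshold and in the Markov-type estimate for heavy labels.
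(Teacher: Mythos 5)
Your proposal captures the right skeleton of the paper's argument: process the grouping decisions sequentially, identify "critical" decisions where a label's last occurrence forces a determined parity, bound the per-decision probability of that decision failing to contribute to $\oplus_i G_i$, and absorb various slacks into the exponent. The per-decision probability estimate $\bigl(O(1)/(\beta c)\bigr)^{r-1}$ also matches. But several load-bearing pieces are missing.

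First, the combining step. You write that "multiplying the per-decision probability produces the claimed bound," but the decisions are strongly dependent: groups formed early change what remains available later, and whether a decision is critical is itself a function of earlier choices. The paper handles this via a moment-generating-function / supermartingale argument (\pref{claim:maxalpha}): it chains conditional expectations of $\alpha^{\sum_i Z_i}$ along the sampling path, where $Z_i$ indicates that the $i$\textsuperscript{th} group is the final occurrence of an odd group, and then applies a Markov-type inequality to $\alpha^{\sum Z_i}$. Your proposal has no substitute for this, and without it the product bound is unjustified.

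Second, the sampling order. You pick the first ungrouped position; the paper picks, at each step, the ungrouped label with the \emph{smallest} remaining count. This adaptive choice guarantees that at a critical configuration the critical label is the ``picker'' rather than a randomly chosen partner, which eliminates your partner-vs-picker branching entirely rather than absorbing it as an $O(r\beta)Mc$ correction. It also makes \pref{claim:criticalonallpath} (a deterministic, worst-case lower bound of $\tfrac{Mc}{2}(1-(4r+1)\beta)$ on the number of critical configurations along \emph{every} sampling path) possible, which is exactly what feeds the MGF chaining.

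Third, the criticality conditions. The paper requires three things: the picked label $s(\cE_t)$ is the final occurrence of an \emph{infrequent} label (absolute threshold $8$, not $O(1/\beta)$), all its prior occurrences were grouped only with infrequent labels, and there remain at least $\beta rc$ ungrouped positions in the current multiset. The last condition is essential: it produces the $\binom{\beta r c}{r-1}$ denominator in the per-step bound $\Pr[Z_{t+1}=0\mid\cE_t]\le (56/\beta c)^{r-1}$, and without it the bound collapses near the end of each $E_i$. Your notion of criticality (some position in the group is a last copy) omits all three refinements, and your light/heavy cutoff of $O(1/\beta)$ is not the one around which the paper's critical-count bookkeeping is built. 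These are not just constants to clean up later; the structure of the exponent $(1-(4r+1)\beta-2\delta)(r-1)Mc/2$ comes precisely from the interaction of these conditions with the MGF computation.
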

\noindent We will prove \pref{lem:even-matching} in \pref{sec:combinatorial-lemma}.

We apply \pref{lem:even-matching} to the multisets $E_i$ with parameters $M\leftarrow 2\ell$, $c\leftarrow d$, $r\leftarrow k/2$, to conclude that if the $S_i$ are each grouped into matchings of hyperedges with $d$ edges each, then
\begin{align*}
    \Pr\left[(V,H) ~\text{even}~|~(V,E) ~\text{has}\ge (1-\beta)dk\ell/2~\text{edges}~\right]
    &\le \left(\frac{112}{\beta d}\right)^{(1-(2k+1)\beta)(k/2-1)d\ell}\\
    &\le c_\beta^{dk\ell/2}\left(\frac{1}{d}\right)^{(1-3k\beta)(k/2-1)d\ell}\mper
\end{align*}
for some constant $c_\beta$ depending only on $\beta$.
Putting this together with \pref{eq:left-off},\pref{eq:cond}, and \pref{eq:left-off-2},
\begin{align*}
    &\E_{\bA}\left[ \Tr((CC^\top)^\ell)\right]\\
    ~~&\le \gamma^{2d\ell} \left( \left(((1+\beta)dk\ell/2)^{(1+\beta)}\cdot n^{(1-\beta)}\right)^{dk\ell/2} + \left(\frac{k}{2}!\right)^{2d\ell}\cdot (2^8n)^{dk\ell/2 + dk/2} \cdot c_{\beta}^{dk\ell/2} \left(\frac{1}{d}\right)^{(1-3k\beta)(k/2-1)d\ell}\right)\\
    ~~&\le (c'_\beta \cdot k^k\gamma^2\ell^{k})^{d\ell} n^{dk/2}
    \left( (d^{1+\beta}n^{1-\beta})^{dk\ell/2} + \left(\frac{1}{d}\right)^{(1-3k\beta)(k/2-1)d\ell}n^{dk\ell/2}\right)
\end{align*}
for some constant $c'_\beta$.
Choosing $\beta = \frac{2(k-1)\log d}{k(3k-7)\log d + \log n}$ balances the terms, so for smaller $\beta$ we have
\[
    \E_{\bA}\left[ \Tr((CC^\top)^\ell)\right]
    ~~\le 2(c'_\beta \cdot k^k\gamma^2\ell^{k})^{d\ell}n^{dk/2} \cdot \left(\frac{n^{k/2}}{d^{k/2-1}}\right)^{d\ell} \cdot d^{\beta(k/2-1)d\ell}\mper
\]
Now, requiring that $d \le n^{1/3k^2}$ and choosing $\beta \leftarrow (k-1)\frac{\log d}{\log n}$, we have that
\[
    \E\left[ \Tr((CC^\top)^\ell)\right]^{1/2\ell}
    \le 2(c'_\beta \cdot k^k\gamma^2\ell^{k})^{d/2}n^{dk/4\ell} \cdot \left(\frac{n^{k/2}}{d^{k/2-1}}\right)^{d/2} \cdot d^{\frac{k^2\log d}{2\log n}\cdot d/2}
\]
Taking $\ell = O(\log n)$ and applying \pref{prop:tpm}, the conclusion of \pref{thm:D-upper} follows.
\end{proof}

\subsection{Odd-order tensors}
\label{sec:tensor-D-odd}

In this section, we give our algorithm for certifying bounds on the injective tensor norm of random odd-order tensors.
Because there is no canonical way to flatten an odd-order tensor to a square matrix, the algorithm includes an additional step, similar to the one we employ for $k$-XOR instances when $k$ is odd (\pref{sec:lin-odd}).

We begin with a brief high-level overview of our algorithm.
To begin with, let $\bA \in \R^{[n]^{k}}$ be an order-$k$ symmetric tensor of dimension $n$.
For convenience, we define an integer $\kappa$ such that $k = 2\kappa+1$.
For the rest of this section, we will use $A_i$ to denote the $[n]^{\kappa} \times [n]^\kappa$ matrix obtained by flattening the $i^{th}$ slice of $\bA$, i.e.,
\[
    A_i (I,J) \defeq \bA_{(i,I,J)} \qquad \qquad \forall I,J \in [n]^\kappa \mper
\]
Using the Cauchy-Schwarz inequality, we can bound the injective norm in terms of the matrices $A_i$,
\begin{align}
\iprod{x^{\otimes 2\kappa+1}, \bA} & = \sum_{i} x_i \cdot \iprod{x^{\otimes
  \kappa},A_i x^{\otimes \kappa}}  \nonumber \\
& \leq \left(  \sum_{i } x_i^2 \right)^{1/2} \cdot \left(  \sum_i \iprod{x^{\otimes \kappa}, A_i x^{\otimes \kappa}}^2 \right)^{1/2}
~ =  \left( \iprod{x^{\otimes 2\kappa}, \left(\sum_i A_i \otimes A_i \right) x^{\otimes 2\kappa}}
  \right)^{1/2}\mper \label{eq:somename0}
\end{align}
Therefore, in order to bound $\|\bA\|_{inj}$, it is sufficient to bound the following quantity.
\begin{align} \label{eq:oddtensor1}
\max_{\norm{x} \leq 1} \Iprod{x^{\otimes 2\kappa}, \left( \sum_{i} A_i \otimes A_i \right) x^{\otimes 2\kappa}
  }
\end{align}

For a tensor $\bA$ whose entries are i.i.d. subgaussian variables, we bound the value of the maximization problem \pref{eq:oddtensor1}.

The matrix $\sum_{i} A_i \tensor A_i$ has large diagonal entries.
However, our tensoring and symmetrizing algorithm requires a matrix with eigenvalues roughly symmetric about $0$ (see the heuristic explanation in \pref{sec:tech-over}).
Thus, we will work with a diagonal-free version of the matrix.
Define the matrix $N \in \R^{[n]^{2\kappa} \times [n]^{2\kappa}}$ as follows:
$$ N_{i} ((a,b),(c,d)) =  A_i(a,c) \cdot A_i(b,d) \cdot \Ind[(a,c) \neq (b,d)]
\qquad \qquad \forall a,b,c,d \in[n]^\kappa $$

We can rewrite the polynomial in \eqref{eq:oddtensor1} as,
\begin{align*}
 \Iprod{x^{\otimes 2\kappa}, \left( \sum_{i} A_i \otimes A_i \right) x^{\otimes 2\kappa}
  } & =  \Iprod{x^{\otimes 2\kappa}, \left( \sum_i N_i \right) x^{\otimes 2\kappa}
 } + \sum_{i \in [n],a,b \in [n]^\kappa} x_a^2 x_b^2\bA^2_{i} (a,b)
 \end{align*}
 And we can upper bound the latter term by
 \begin{align}
     \sum_{i \in [n],a,b \in [n]^\kappa} x_a^2 x_b^2\bA^2_{i} (a,b)
 ~\le~ \sum_{a,b \in [n]^\kappa} x_a^2 x_b^2 \left(\sum_{i}\bA^2_{i}(a,b)\right)
 ~\leq~  \max_{a,b} \left(\sum_{i}\bA^2_{i} (a,b)\right)
  \label{eq:somename1}
\end{align}
where we have used the fact that $\|x\|^2 = 1$.
Bounding the norm of tensor $\bA$ thus reduces to upper bounding $\iprod{x^{\otimes 2\kappa}, \left( \sum_i N_i \right) x^{\otimes 2\kappa}}$.
Now our strategy is as before--we take a $d$th tensor power of our matrix, then average over the symmetries of $x^{\otimes 2\kappa d}$.

Having discussed the differences between the even and odd cases, we are ready to give our algorithm.
\begin{algorithm}[Odd-order Injective tensor norm]\label{alg:oddtensor}~\\
      {\bf Input:} A random tensor $\bA$ of dimension $n$ and odd order $k = 2\kappa+1$, and a parameter $d$.
      \begin{compactenum}
      \item Form the asymmetric tensor $\bA'$ as described in \pref{alg:eten}, so that $\iprod{x^{\otimes k}, \bA}=\iprod{x^{\otimes k}, \bA'}$ but only lexicographically first entries are nonzero.
	\item Let $A_i$ be the $n^\kappa \times n^\kappa$ matrix flattening of the $i$th slice of $\bA'$, and form the matrix
	    \[
		M := \sum_{i\in[n]} A_i \tensor A_i
	    \]
	\item Zero out all entries of the matrix corresponding to $(I_1,I_2),(J_1,J_2) \in [n]^{2\kappa}$ such that $(I_1,J_1) = (I_2,J_2)$, forming a new matrix $N$:
	    \[
		N_{(I_1,I_2),(J_1,J_2)} := M_{(I_1,I_2),(J_1,J_2)}\cdot \Ind((I_1,J_1) \neq (I_2,J_2))\mper
	    \]
	\item Take the $d$th tensor power of $N$,
	\[
	    N \to N^{\tensor d}\mper
	\]
    \item Symmetrize the rows and columns of $N^{\tensor d}$ according to the symmetries of $\cS_{2d\kappa}$ to obtain the matrix $C$,
	\[
	    C_d \defeq \E_{\Pi,\Sigma \in \hat\cS_{2d\kappa}} \left[\Pi(N^{\tensor d}) \Sigma\right]\mper
	\]
      \end{compactenum}
      {\bf Output:} The quantity $\left(\|C_d\|^{1/d} + \max_{a,b \in [n]^\kappa} \sum_{i \in [n]} A_i(a,b)^2\right)^{1/2}$ as an upper bound on $\|\bA\|_{inj}$.
  \end{algorithm}

  \begin{proposition}
      For any symmetric tensor $\bA$, \pref{alg:oddtensor} outputs a valid upper bound on $\|\bA\|_{inj}$.
  \end{proposition}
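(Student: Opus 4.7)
My plan is to verify that each manipulation in \pref{alg:oddtensor} is a valid upper bound and then chain them together. The proof is entirely deterministic: no probabilistic content of $\bA$ is used, only a careful tracking of Cauchy--Schwarz and the symmetrization trick. I do not expect any substantive obstacle; the only care needed is to make sure the flattenings, tensor powers, and permutation groups all line up.

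First, I would note that step~1 preserves the polynomial value, i.e., $\iprod{x^{\tensor k},\bA'} = \iprod{x^{\tensor k},\bA}$ for every $x \in \R^n$, since $\bA'$ is obtained by collecting entries of $\bA$ into permutation equivalence classes. Applying the Cauchy--Schwarz calculation already carried out in \pref{eq:somename0}, but now using the asymmetric slice matrices $A_i$ of $\bA'$, yields
\[
  \iprod{x^{\tensor k},\bA}^2 \;\leq\; \Iprod{x^{\tensor 2\kappa}, M\, x^{\tensor 2\kappa}}, \qquad M := \sum_{i \in [n]} A_i \tensor A_i,
\]
for every unit vector $x \in \R^n$.

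Next, I would split $M = N + D$, where $N$ is the matrix produced by step~3 and $D$ holds the entries indexed by $(I_1,I_2),(J_1,J_2)$ with $(I_1,J_1)=(I_2,J_2)$ that were zeroed out. Expanding $\Iprod{x^{\tensor 2\kappa}, D\, x^{\tensor 2\kappa}}$ produces exactly the double sum $\sum_{a,b \in [n]^\kappa} x_a^2 x_b^2 \sum_i A_i(a,b)^2$ of \pref{eq:somename1}, which for $\|x\|=1$ is bounded by $\max_{a,b} \sum_i A_i(a,b)^2$ because $\sum_{a \in [n]^\kappa} x_a^2 = (\sum_j x_j^2)^\kappa = 1$. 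For the $N$-term, I would apply the same tensoring-and-symmetrizing identity used in the even-order case (\pref{prop:properrelax}): since $x^{\tensor 2\kappa d}$ is fixed by every $\Pi \in \hat\cS_{2\kappa d}$,
\[
  \Iprod{x^{\tensor 2\kappa}, N\, x^{\tensor 2\kappa}}^d \;=\; (x^{\tensor 2\kappa d})^{\top} N^{\tensor d}\, x^{\tensor 2\kappa d} \;=\; (x^{\tensor 2\kappa d})^{\top} C_d\, x^{\tensor 2\kappa d},
\]
whose absolute value is at most $\|C_d\|$ since $\|x^{\tensor 2\kappa d}\| = 1$. Taking the $d$-th root yields $|\Iprod{x^{\tensor 2\kappa}, N\, x^{\tensor 2\kappa}}| \leq \|C_d\|^{1/d}$.

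Combining the two bounds, for every unit $x \in \R^n$,
\[
  \iprod{x^{\tensor k},\bA}^2 \;\leq\; \|C_d\|^{1/d} + \max_{a,b \in [n]^\kappa} \sum_{i \in [n]} A_i(a,b)^2,
\]
and taking a square root recovers the upper bound on $\|\bA\|_{inj}$ claimed by \pref{alg:oddtensor}.
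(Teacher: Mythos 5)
Your proposal is correct and follows essentially the same route as the paper's one-line proof, which likewise chains (i) the fact that asymmetrization preserves $\iprod{x^{\tensor k},\cdot}$, (ii) the Cauchy--Schwarz step of \pref{eq:somename0}, (iii) the diagonal split and bound of \pref{eq:somename1}, and (iv) the invariance of $x^{\tensor 2d\kappa}$ under the symmetrization over $\hat\cS_{2d\kappa}$. You have simply written out each of those steps explicitly, including the $d$-th root that the paper leaves implicit, so the argument is the same.
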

  \begin{proof}
      Our asymmetrization in step 1 ensures that $\iprod{x^{\otimes k},\bA} = \iprod{x^{\otimes k},\bA'}$.
      The proof then follows from the calculations above, beginning at \pref{eq:somename0} and ending at \pref{eq:somename1}, and then using that the symmetrization step fixes vectors of the form $x^{\tensor 2d\kappa}$.
  \end{proof}

  We prove that when $\bA$ has subgaussian, centered, independent entries, \pref{alg:oddtensor} improves over the basic spectral algorithm.
  \begin{theorem}\label{thm:odd-tensor}
      For any symmetric tensor $\bA$ with independent subgaussian centered entries, with high probability over the choice of $\bA$, \pref{alg:oddtensor} certifies that
      \[
	  \|\bA\|_{inj}\le
	\tilde O\left( \frac{n^{k/4}}{k^{(k-2)/4}} \cdot
	d^{ \frac{k^2\log{d}}{2\log n}}\right)\mper
      \]
      so long as $d \log n \ll n^{1/120}$.
  \end{theorem}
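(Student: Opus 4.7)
My plan is to bound the two summands in \pref{alg:oddtensor}'s output separately. The diagonal-correction term $\max_{a,b \in [n]^\kappa} \sum_{i \in [n]} A_i(a,b)^2$ is a maximum of $n^{2\kappa}$ sums of $n$ independent squared subgaussian random variables; each such sum concentrates around $n$ with subgaussian tails, so a union bound over the $n^{2\kappa}$ indices gives $\tilde O(n)$ with high probability, contributing $\tilde O(\sqrt{n})$ under the outer square root and therefore subsumed by the stated bound. The bulk of the work is showing that $\|C_d\|^{1/d} \le \tilde O\bigl(n^{k/2}/d^{(k-2)/2}\bigr) \cdot d^{k^2\log d/\log n}$ with high probability, which after square-rooting yields the theorem.

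To bound $\|C_d\|$, I would invoke the trace power method (\pref{prop:tpm}) with $\ell = \Theta(\log n)$ and reduce to bounding $\E[\Tr((C_d C_d^\top)^\ell)]$. Expanding the product over the $2\ell$ factors of $C_d$ and $C_d^\top$ gives a sum over vertex configurations $V = (I_1, \dots, I_{2\ell}) \in ([n]^{2d\kappa})^{2\ell}$ and, for each $V$, an expectation over random matchings connecting consecutive $I_j$. By the construction of $N$ in \pref{alg:oddtensor}, each ``super-hyperedge'' in such a matching consists of a pair of tensor-$k$ hyperedges of $\bA'$ sharing a common link vertex $i \in [n]$; the diagonal-free indicator in $N$ rules out the degenerate case where the two tensor entries coincide. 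By the symmetric-distribution assumption (after truncating each entry of $\bA'$ at $O(\sqrt{d\log n})$ as in the even case), the expectation over $\bA'$ of the resulting product vanishes unless every $k$-hyperedge of $\bA'$ appears with even multiplicity, and every nonzero surviving term is bounded by $\gamma^{4d\ell}$ with $\gamma = O(\sqrt{d\log n})$.

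I would then mimic the two-step sampling strategy of \pref{thm:D-upper}: for each vertex configuration $V$, (i) first sample a simple-edge configuration $E$ by drawing, between each $I_j$ and $I_{j+1}$, a random $k$-fold grouping together with a choice of link vertex and of which $\kappa$-tuples are assigned to each of the two tensor entries in a super-hyperedge; (ii) merge these edges into $k$-hyperedges of $\bA'$. The conditional probability that an even hyperedge configuration arises from an even edge configuration is $\Omega(k!)^{-2d\ell}$, by the same pairing argument as in \pref{lem:graph-from-hgraph-nocount2}, using the diagonal-free constraint to ensure paired tensor entries within a super-hyperedge cannot collide. The sum over $V$ of the probability of an even edge configuration becomes, via an analogue of \pref{prop:simple-graph-bound-1}, the expected trace power of a matrix built from a $\pm 1$ Rademacher $(2\kappa+1)$-tensor, bounded by $2^{O(dk\ell)} n^{kd\ell/2+O(dk)}$ using the random-matrix concentration on $\|B\|$ applied slicewise and summed over the link variable. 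Combining this with \pref{lem:vtx-edge} (to convert ``few distinct vertices'' into ``few distinct edges'') and \pref{lem:even-matching} with parameters $M \leftarrow 2\ell,\ c\leftarrow d,\ r \leftarrow k/2$ (to bound the conditional probability of an even hypergraph by $(c_\beta/d)^{(1-O(k\beta))(k-1)d\ell/2}$), then tuning $\beta \leftarrow (k-1)\log d/\log n$ to balance the bad and good configurations, gives the target bound.

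The main obstacle is step (i): unlike the even-order case, where each entry of the flattened matrix is exactly one tensor entry, each entry of $N$ is a sum over $n$ products of two tensor entries, so the $d$-fold Kronecker power produces ``super-hyperedges'' whose grouping requires choosing both an edge pairing and a link vertex in $[n]$. Making this sampling formally uniform while preserving the independence structure that \pref{lem:even-matching} demands requires careful bookkeeping, and the sum over the link vertex is what introduces the extra factor of $\sqrt{n}$ per super-hyperedge that, combined with Cauchy--Schwarz's square root, still yields the right $n^{k/4}$ scaling. The diagonal-free restriction must also be tracked throughout: it is precisely what prevents the two tensor entries of a super-hyperedge from collapsing, and hence what enables the constant-order loss $O(k!)^{2d\ell}$ in the hyperedge-to-edge conversion. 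This additional combinatorial layer accounts for the stricter threshold $d\log n \ll n^{1/120}$ compared with the $d \ll n^{1/3k^2}$ threshold of \pref{thm:D-upper}.
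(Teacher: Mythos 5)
Your overall scaffolding (trace power method, two-step sampling, \pref{lem:vtx-edge}, \pref{lem:even-matching}, tuning $\beta$) is the right family of ideas, and your treatment of the diagonal correction term via \pref{lem:easy} matches the paper. But there is a genuine gap in your treatment of the central ``link'' (pivot) vertices $U$, and the gap is quantitative, not cosmetic.

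In the odd case, each entry of $N = \sum_i A_i\otimes A_i - \squares(\cdot)$ is already a sum over $n$ pivot values, so $\Tr((C_dC_d^\top)^\ell)$ carries an extra $n^{d\ell}$ sum over pivot configurations $\sigma\in[n]^{d\ell}$ on top of the $n^{2d\kappa\ell}$ sum over outer vertex configurations. To reach $\|C_d\|^{1/d}\le\tilde O(n^{k/2}/d^{(k-2)/2})$ you need the expected trace to drop by $d^{(k-2)d\ell}=d^{(2\kappa-1)d\ell}$ relative to the trivial bound. Your plan gets savings only from the outer hypergraph grouping via \pref{lem:even-matching} (which, with any reasonable parameter choice, yields at best $d^{-2(\kappa-1)d\ell}$), and you propose no mechanism that produces the additional factor $d^{-d\ell}$ from the pivots. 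The paper extracts precisely this missing factor by a separate combinatorial argument: it models the pivot labelings as edge-labelings of a multigraph whose label-preimages must be Eulerian subgraphs (\pref{lem:fixedF} and \pref{lem:eulerian1}, with \pref{claim:eulerparts} and \pref{claim:numpartitions}), and applies \pref{lem:even-matching} a \emph{second} time with $r\leftarrow 2$ to the block-matching stage $\cM$. Nothing in your proposal does this job; saying the link sum ``introduces the extra $\sqrt{n}$'' addresses the $n$-exponent but not the $d$-exponent.

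Two smaller but real issues. First, your parameter choice $r\leftarrow k/2$ for \pref{lem:even-matching} is ill-defined when $k$ is odd; in the paper's odd-case application the arity parameter is $\kappa=(k-1)/2$ for grouping edges into blocks, and then separately $r=2$ for pairing the blocks, so there is no single $r=k/2$ invocation to mimic. Second, even granting $r=\kappa$, your stated conditional bound $(c_\beta/d)^{(1-O(k\beta))(k-1)d\ell/2}$ (exponent $\kappa d\ell$) is weaker than what is needed by a factor $d^{(\kappa-1)d\ell}$; this is a symptom of the same missing Eulerian counting step. Your closing sentence attributes the stricter threshold $d\log n \ll n^{1/120}$ to ``an additional combinatorial layer,'' which is in the right spirit, but without the Eulerian labeling lemma the argument you describe does not actually reach the stated bound.
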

  First, the very straightforward observation that subtracting the maximum element cannot have too strong of a negative effect:
  \begin{lemma}\label{lem:easy}
    If $\bA$ is an order-$D$ tensor with i.i.d. symmetric subgaussian entries, then
    \[
	\max_{a,b \in [n]^d} \sum_{i\in[n]} \bA(i,a,b)^2 \le O(n\log n),
    \]
    with high probability.
\end{lemma}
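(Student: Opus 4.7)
Fix a pair $(a,b) \in [n]^d \times [n]^d$ and consider the real random variable $S_{a,b} \defeq \sum_{i \in [n]} \bA(i,a,b)^2$. Because the entries of $\bA$ are i.i.d.\ subgaussian with bounded subgaussian norm, each $\bA(i,a,b)^2$ is a centered-plus-constant sub\emph{exponential} random variable with $O(1)$ mean and $O(1)$ subexponential norm. Hence $S_{a,b}$ is a sum of $n$ i.i.d.\ subexponential variables with mean $\mu_{a,b} = \Theta(n)$ and typical fluctuations on the same scale.

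Next, I would invoke Bernstein's inequality for sums of subexponential random variables to conclude that there is a constant $c>0$ such that
\[
    \Pr\bigl[\, S_{a,b} \geq \mu_{a,b} + t \,\bigr] \;\leq\; 2\exp\!\bigl( -c \cdot \min(t^2/n,\, t) \bigr) \qquad \text{for all } t \geq 0.
\]
Choosing $t = C n \log n$ for a sufficiently large absolute constant $C$ places us in the linear regime of the tail bound and yields $\Pr[S_{a,b} > O(n \log n)] \leq \exp(-\Omega(n \log n))$.

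Finally, I would union bound over all $n^{2d}$ choices of $(a,b)$. Since \pref{thm:odd-tensor} operates in the regime $d \log n \ll n^{1/120} \ll n$, the overall failure probability is at most $n^{2d} \cdot \exp(-\Omega(n \log n)) = o(1)$, which proves the claim. There is no genuine technical obstacle here---this is exactly the sort of ``max of $\mathrm{poly}(n)$ well-concentrated sub\-exponentials'' bound the lemma is labeled ``easy'' for; the only thing to watch is that the tail decays fast enough ($\exp(-\Omega(n\log n))$) to absorb the polynomial union bound, which is the reason we allow an extra $\log n$ factor on top of the mean $\Theta(n)$.
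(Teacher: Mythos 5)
Your proposal is correct and matches the paper's own (very terse) proof, which says only to apply a Chernoff-type concentration bound to the sum and then union bound over $(a,b)$. You have filled in the details correctly: since the entries are i.i.d.\ subgaussian, their squares are subexponential, and Bernstein's inequality gives a tail of $\exp(-\Omega(n\log n))$ at deviation $t = Cn\log n$, which comfortably absorbs the $n^{2d}$ union bound because $d \ll n$ in the regime under consideration.
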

\begin{proof}
    The lemma follows from the fact that the variables are subgaussian, and by applying first a Chernoff bound and then a union bound over the indices.
\end{proof}

Now, we bound the norm of the matrix $\|C_d\|$.
\begin{theorem}\label{thm:odd-upper}
    So long as $kd\ell < 4n^{\beta/4}$,
there exists some absolute constant $c_\beta$ depending on $\beta$ such that with high probability over the choice of $\bA$,
\[
 \norm{C_d} \leq \left( c_\beta^d\log n \cdot
	\frac{n^{k/4}}{d^{(k-2)/4 - 6k\beta}} \cdot n^{1/2\ell}
\right)^{d} \mper
\]
\end{theorem}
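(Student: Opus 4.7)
The plan is to adapt the trace power method argument used in Theorem~\ref{thm:D-upper} to the odd-order setting, accounting for the fact that each entry of $N$ is a sum over a shared index $i \in [n]$ of products of two $\bA'$-entries. By Proposition~\ref{prop:tpm}, it suffices to produce an upper bound on $\E_{\bA}\!\left[\Tr\bigl((C_dC_d^\top)^\ell\bigr)\right]$ for $\ell = \Theta(\log n)$. As a preliminary step, I would truncate entries of $\bA'$ to magnitude $\gamma = O(\sqrt{d\log n})$, which holds with high probability by subgaussianity and preserves symmetry about $0$.

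Next I would expand the trace as in the even-order proof. A nonzero term in the expansion corresponds to: a \emph{vertex configuration} $V = (I_1, \ldots, I_{2\ell})$ with each $I_j \in [n]^{2\kappa d}$; for each of the $2\ell$ layers, a choice of $d$ \emph{super-hyperedges} (each a $(2k-1)$-tuple consisting of $2\kappa$ row vertices, $2\kappa$ column vertices, and a summation vertex $i \in [n]$ playing the role of the shared first coordinate in $\bA'$); and a uniform row/column permutation. Since the entries of $\bA'$ are symmetric and asymmetrized to be lexicographically first, the expectation over $\bA'$ vanishes unless every $\bA'$-entry appears with even multiplicity in its lexicographically first position across the whole closed walk. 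This reduces the bound to counting such configurations, weighted by $\gamma^{4d\ell}$.

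I would then apply the same two-step sampling strategy as in the proof of Theorem~\ref{thm:D-upper}: first sample a perfect simple-edge matching between consecutive vertex sets (to be analyzed via Proposition~\ref{prop:simple-graph-bound-1}), then group the simple edges into super-hyperedges with the extra ingredient of choosing a shared summation vertex for each. The simple-edge step is controlled by relating $\sum_V \Pr_E[(V,E) \text{ even}]$ to $\E[\Tr((C' C'^\top)^\ell)]$ for $C'$ built from a Rademacher matrix in dimension analogous to the super-hyperedge width, yielding an $n^{\kappa d\ell}$-scale bound. For the conditional probability that a uniformly random grouping of edges produces a super-hyperedge configuration with every $\bA'$-entry even, I would apply Lemma~\ref{lem:even-matching} with $r$ set to reflect the width of a super-hyperedge (effectively $r \leftarrow 2\kappa+1$), producing the $d^{-(k-2)d\ell/2}$ savings. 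The shared summation vertices $i_{j,s}$ contribute an extra factor of $n^{2d\ell}$ per configuration, which after taking $2\ell$-th roots gives the announced $n^{k/4}$ scaling together with the leading $n^{1/2\ell}$ overhead.

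The main technical obstacle will be the bookkeeping around the shared summation vertices. Unlike the even-order case, the $i$ indices are not forced by the vertex configuration $V$; they must be summed over freely, and for the parity constraint to bite they must also repeat between matched super-hyperedges. I expect that a careful case analysis---splitting on the number of distinct $i$-values used, and on how many pairs of super-hyperedges share the same $(2\kappa+1)$-slot structure---will be needed to ensure that the potential $n^{2d\ell}$ blowup is tamed. Exactly this analysis is what forces the weaker $d^{-(k-2)/4 + 6k\beta}$ exponent compared to the even case, and balancing $\beta$ against the loss of distinct vertices (via Fact~\ref{fact:few-configs} and Lemma~\ref{lem:vtx-edge}) produces the stated bound. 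The diagonal-zeroing step in forming $N$ from $M$ is essential here: it ensures that within a single super-hyperedge the two $\bA'$-entries are distinct, which rules out degenerate configurations where parity is automatically satisfied and would otherwise prevent any $d$-savings.
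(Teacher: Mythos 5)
Your high-level skeleton matches the paper's: trace power method with truncation, the $\cE \to$ hypergraph two-stage sampling, the even-case machinery (Proposition~\ref{prop:simple-graph-bound-1}, Fact~\ref{fact:few-configs}, Lemma~\ref{lem:vtx-edge}) for the simple-edge layer, and the observation that the d-free (diagonal-zeroed) condition is what tames the pivot sum by forcing each pivot to repeat. You are also right that the free summation index is \emph{the} obstacle that makes the odd case harder.

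But the proposal has a genuine gap at exactly that obstacle. You suggest applying Lemma~\ref{lem:even-matching} with $r \leftarrow 2\kappa+1$ to absorb the pivot into the grouping, but the pivots are \emph{not} among the matched edges -- they are fresh labels from $[n]$ chosen after the blocks are paired -- so an $r$-grouping of edges cannot model them. The paper instead splits the sampling into \emph{three} stages (simple edges $\cE$, blocks $\cF$ of size $\kappa$, block-matching $\cM$), applies Lemma~\ref{lem:even-matching} twice (once for the $\cE\to\cF$ grouping and once with $r=2$ for $\cM$), and handles the pivot sum with a separate tool: Lemma~\ref{lem:eulerian1}, which counts labelings $\sigma: E(\cG) \to [n]$ of the block-matching multigraph for which every label class is Eulerian, via the partition-into-Eulerian-subgraphs argument of Claim~\ref{claim:eulerparts} and Claim~\ref{claim:numpartitions}. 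That Eulerian count is what yields the crucial extra factor $n^{-|E_\oplus(\cG)|/6}$ beyond the crude ``each pivot repeats so $\leq n^{d\ell}$'' bound (Claim~\ref{claim:trivialbound}); your ``careful case analysis splitting on the number of distinct $i$-values'' points in the right direction but has no concrete mechanism to produce that saving, and a straight $\leq n^{d\ell}$ bound is not enough for the stated exponent. (As a minor aside, your estimate of an $n^{2d\ell}$ contribution from the pivots is off -- there are $d$ pivots per column and $2\ell$ columns, so the naive bound is $n^{2d\ell}$ only before the repetition forcing, and after it one needs the Eulerian refinement; also note the theorem statement's $n^{k/4}$ appears to be a typo for $n^{k/2} = n^{\kappa+1/2}$, as the proof's last displayed bound and its use in \pref{thm:odd-tensor} -- where a square root is taken -- make clear.)
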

\begin{proof}
    Because the entries of $\bA$ are subgaussian, with high probability all entries of the tensor are bounded in magnitude by $\gamma = O(\sqrt{\kappa\log n})$.
   We will assume this to be the case in the remainder of the proof.

    We bound the expected trace $\E[\Tr\left((CC^\top)^\ell\right)]$ over the choice of $\bA$, in order to apply the tensor power method.
Let $M:=(\sum_{i} A_i \tensor A_i)^{\tensor d}$ for convenience.
    The $(A,B),(C,D)$th entry of $M$ (for $A,B,C,D \in [n]^{d\kappa}$ with $A = a_1,\ldots,a_{d}$ with $a_i \in [n]^\kappa$, and with similar decompositions defined for $B,C,D$) has value
\begin{align*}
    M_{(A,B),(C,D)}
    &=  \prod_{i \in [d]} \left(\sum_{u\in[n]}\bA_{a_i,c_i,u} \cdot \bA_{b_i,d_i,u}\right)
    =  \sum_{U \in [n]^d} \prod_{i \in [d]} \left(\bA_{a_i,c_i,u_i} \cdot \bA_{b_i,d_i,u_i}\right)\mper
\end{align*}
Interpreting the variables $\bA_{a_i,c_i,u_i}$ as $k = (2\kappa +1)$-uniform hyperedges, we have that each entry is a sum over hypergraphs indexed by $U \in [n]^d$.
    \begin{figure}
	\centering
	\includegraphics[width=0.7\textwidth]{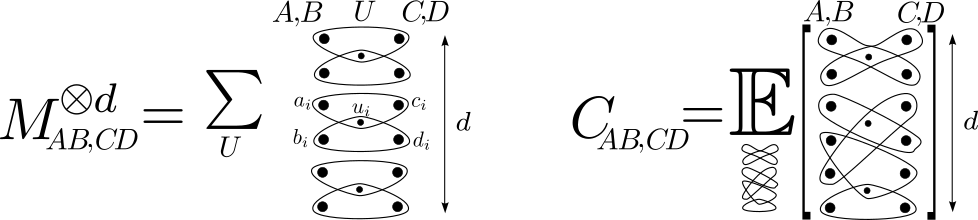}
	\caption{Hypergraphs corresponding to odd certificate entries.}\label{fig:oddten}
    \end{figure}
For each $U \in [n]^d$, we have a hypergraph on the following vertex configuration: on the left, we have the vertices from the multiset $A,B$.
On the right, we have the vertices from the multiset $C,D$.
In the center, we have the vertices from $U$.
On this vertex set, we have $2d$ hyperedges.
Of these hyperedges, $d$ form a tripartite matching on the vertices in $A,C,U$, with $\kappa$ vertices from each of $A,C$ and one vertex in $U$.
The other $d$ form a similar tripartite matching on the vertices in $B,D,U$.
Every hyperedge on $A,C,U$ shares exactly one vertex in $U$ with exactly one hyperedge from $B,D,U$.
See \pref{fig:oddten} for an illustration.

The subtraction of the square terms $\squares(A_u \tensor A_u)$ forces us to never have two hyperedges sharing a vertex in $U$ if they contain vertices of the same type in $[n]$: that is, we can never have $(a_i,c_i) = (b_i,d_i)$ as ordered multisets.
Then, the averaging operation $\E_{\Pi,\Sigma\in\hat \cS_{2d\kappa}}$ takes each such entry to an average over all allowed hyperedge configurations on the vertex set $(A,B), (C,D), U$.

When we take $\Tr(C_{(d)}C_{(d)}^\top)^\ell$, we are taking a sum over all ``cycles'' of length $2\ell$ in such hypergraphs, where the vertices in the cycle are given  by the $(A,B)$ multisets, and the edges are given by the average hyperedge configuration between $(A,B)$ and the next $(C,D)$, with the $U$ vertices in between.

    To this end, we describe an equivalent definition of the matrix $C_{(d)}$.
    Specifically, given $a,b \in [n]^{2\kappa d}$ the entry $C_{(d)}(a,b)$ can be evaluated as follows:
\begin{compactenum}
\item Sample a random matching $\cE = \{ e_1,\ldots,e_{2\kappa d} \}$ between the multisets $a$ and $b$.
\item Group the edges of $\cE$ in to $2d$ groups of size $\kappa$, to obtain $2d$ { \it blocks} $\cF = \{f_1,\ldots,f_{2d}\}$.
\item Pick a random matching $\cM$ between the blocks in $\cF$.
    Let $\cM$ be given by $d$ pairs $\{(h_i,h_i')\}_{i \in [d]}$.

\item For each choice of {\it ``pivot vertices''} $\sigma \in [n]^d$, we get a $(2\kappa+1)$-uniform hypergraph $\cH_{\sigma}$ with $2d$
  hyperedges given by
$$ \{ (\sigma_i, h_i),(\sigma_i,h_i') | i \in [d]\} \mper $$

\item Output the value $\sum_{\sigma \in [n]^d} \prod_{i \in [d]} A_{(\sigma_i,h_i)} \cdot A_{(\sigma_i,h_i')} \cdot \Ind[h_i \neq h_i']  $.
\end{compactenum}

The entries of the matrix $C$ are given by,

\[
    C(a,b) = \E_{\cE}\E_{\cF} \E_{\cM} \sum_{\sigma \in [n]^d} \left[ \prod_{i \in [d]} T_{(\sigma_i,f_i)} \cdot T_{(\sigma_i,g_i)} \cdot
\Ind[h_i \neq h_i']\right]
\]

Returning to the quantity $\Tr((CC^\top)^\ell)$, we can understand this as a sum over cycles in the entries of $C$, which gives us a sum over products of random variables corresponding to the edges in cyclic hypergraphs.
Since we have assumed the entries of $\bA$ are distributed symmetrically about $0$, each term in the sum $\E\Tr((CC^\top)^\ell)$ is non-zero only if every hyperedge appears with even multiplicity.

We can organize the terms in $ \Tr\left((CC^\top)^\ell\right)$ as follows:
\begin{compactitem}
\item For each vertex configuration $V = \{a_1,b_1,a_2,\ldots,b_\ell, a_1\} \in \cV \subset [n]^{2\kappa d}$
\begin{compactenum}
\item Sample matchings $\cE = \{\cE_1,\ldots,\cE_{2\ell}\}$
\item Group the edges in to { \it blocks} $\cF = \{\cF_1,\ldots,\cF_{2\ell}\}$
\item Pick random matchings $\cM = \{\cM_1,\ldots,\cM_{2\ell}\}$ between the
  blocks.

\item For each choice of {\it ``pivots''} $\sigma = \{\sigma_1,\ldots,\sigma_{2\ell}\} \subset [n]^d$ we get a
  $(2\kappa+1)$-uniform hypergraph $\cH_{\sigma}$ with $2d\ell$ hyperedges.

\end{compactenum}
\end{compactitem}

We will call the hypergraph $\cH_{\sigma}$ {\it diagonal-free} (or {\it d-free}) if
 there are no pairs of identical blocks matched with each other in
 $\cM$.
We will use the notation $\|\cdot \|_{\oplus}$ to denote the number of elements of odd multiplicity in a multiset, and similarly the notation $\|\cdot\|_{0}$ to denote the number of distinct elements in a multiset.
We will say $\cH_{\sigma}$ is {\it even} if the number of occurrences
 of each hyperedge is even.
Now, dividing by our upper bound on the absolute value of the maximum entry,
\begin{align}
    \gamma^{-2\kappa d\ell}&\E_T\left[ \Tr\left((CC^\top)^\ell\right)\right]\nonumber \\
    & \le \sum_{V \in \cV} \E_{\cE} \E_{\cF} \E_{\cM}  \left[ \sum_{\sigma}  \Ind[\cH_{\sigma} \text{ even \& d-free}] \right] \nonumber \\
& \leq (\kappa!)^{4d\ell} \cdot  \sum_{V \in \cV} \E_{\cE} \E_{\cF} \E_{\cM}  \left[ \sum_{\sigma}  \Ind[\cH_{\sigma} \text{ even \& d-free}] \cdot \Ind[ \|\cE\|_{\oplus} = 0 ] \right] \qquad (\text{by \pref{lem:graph-from-hgraph-nocount}})\nonumber \\
& = (\kappa!)^{4d\ell} \cdot  \sum_{V \in \cV} \E_{\cE} \Ind\left[\onorm{\cE}=0\right] \E_{\cF} \E_{\cM}  \left[ \sum_{\sigma}  \Ind\left[\cH_{\sigma} \text{ even \& d-free} \right] \right]\nonumber\\
& \leq  (\kappa!)^{4d\ell} \cdot  \sum_{V \in \cV} \E_{\cE} \Ind\big[\onorm{\cE}=0  \wedge \dnorm{\cE} \geq 2d\kappa\ell(1-\beta)\big] \E_{\cF} \E_{\cM}  \left[ \sum_{\sigma}  \Ind[\cH_{\sigma} \text{ even \& d-free} ] \right]  \label{eq:term1} \\
& + (\kappa!)^{4d\ell} \cdot  \sum_{V \in \cV} \E_{\cE} \Ind\big[\onorm{\cE} = 0 \wedge \dnorm{\cE} \leq 2d\kappa\ell(1-\beta)\big] \E_{\cF} \E_{\cM}  \left[ \sum_{\sigma}  \Ind[\cH_{\sigma} \text{ even \& d-free} ] \right] \label{eq:term2}
\end{align}

First we will bound the value of term in \eqref{eq:term2}.
Recall that by \prettyref{lem:vtx-edge}, if $\cE$ is even then the number of distinct labels in $V\in \cV$ is less than $\dnorm{\cE}$.
Therefore,
\[
    \card{\left\{\cE ~|~ \onorm{\cE} = 0 \wedge \dnorm{\cE} \leq 2d\kappa \ell
    (1-\beta)\right\}} <  n^{2d\kappa\ell(1-\beta)} (4d\kappa\ell)!
\]
Now, we will use the following claim:
\begin{claim} \label{claim:trivialbound}
For every choice of $V\in \cV,\cE,\cF,\cM$,
$$ \sum_{\sigma} \Ind [ \onorm{\cH_{\sigma}} = 0 \wedge \cH_{\sigma} \text{ is
diagonal-free}] \leq (2d\ell)! \cdot n^{d\ell}$$
\end{claim}
\begin{proof}
If $\cH_{\sigma}$ is diagonal-free and even, then we claim that each {\it
  pivot} value appears twice.  Suppose not, if $\sigma_i$ is such that
$\sigma_i \neq \sigma_j$ for all $j \neq i$.  Since $\cH_{\sigma}$ is diagonal free, the two
hyperedges involving $\sigma_i$ are distinct.  Since this is the unique
occurrence of these two hyperedges in $\cH_{\sigma}$, $\cH_{\sigma}$ cannot
be {\it even}--a contradiction.  With each {\it pivot} appearing
at least twice, the number of distinct choices of $\sigma$ is at most
$(2d\ell)! n^{d\ell}$.
\end{proof}

By \prettyref{claim:trivialbound}, for each $\cE$ the corresponding
term is at most $(2d\ell)! n^{d\ell}$.  In all, this shows that
\eqref{eq:term2} can be bounded as
\begin{align}
	\text{ \eqref{eq:term2}} &
\leq (\kappa!)^{4d\ell} \cdot  n^{2d\kappa\ell(1-\beta)} (4d\kappa\ell)!
                                   \cdot \left( (2d\ell)! \cdot n^{d \ell} \right)
 \leq \left( \frac{ (d\ell \kappa)^{5\kappa} \cdot n^{2\kappa+1}}{n^{2\kappa\beta}} \right)^{d\ell} \leq \left(
\frac{n^{2\kappa +1}}{d^{2\kappa-1}}\right)^{d\ell} \label{eq:bound100}
 \end{align}
 where the final simplification uses $d \kappa \ell < n^{\beta}{4}$.

 Now we bound \eqref{eq:term1}.
 Using \pref{prop:simple-graph-bound-1} and reasoning similar to that in the proof of \pref{thm:D-upper}, by making an analogy between the set of configurations with even $\cE$ and the norm of a random matrix under our tensoring and averaging operations, we know that
\[
    \sum_{V\in\cV} \E_{\cE} \Ind[\onorm{\cE} = 0 ] \leq c^{2\kappa d\ell} n^{2\kappa d\ell+\kappa d}
\]
for an absolute constant $c > 0$.
Moreover, conditioned on
$\dnorm{\cE} \geq 2d\kappa \ell(1-\beta)$,
we will show the following bound
	$$ \E_{\cF} \E_{\cM} \sum_{\sigma \in [n]^{d\ell}} \Ind[\onorm{\cH_{\sigma}} =
0] \leq \left( \ell c_{\kappa\beta} \cdot \frac{n}{d^{(2\kappa-1) - 8\kappa^2\beta}} \right)^{d\ell}
$$
for a constant $c_{\kappa\beta}$ depending only on $\kappa,\beta$ in
\prettyref{lem:bigone}.  By the preceding pair of inequalities, we get
that
\begin{align} \label{eq:bound200}
    \text{\eqref{eq:term1}} \leq \left(\ell {c'}_{\kappa\beta} \cdot
	\frac{n^{2\kappa+1}}{d^{2\kappa-1 - 8 \kappa^2\beta}} \cdot n^{\kappa/\ell}
	\right)^{d\ell}
\end{align}
From \eqref{eq:bound100} \& \eqref{eq:bound200}, we conclude that
\begin{align*}
	\left( \E_{T} \left[\Tr\left((CC^T)^{\ell}\right)\right]
	\right)^{1/2\ell} \leq  \left(\ell c_{\kappa\beta} \cdot
	\frac{n^{\kappa+\half}}{d^{\kappa-\half - 4 \kappa^2\beta}} \cdot n^{\kappa/2\ell}
	\right)^{d}
\end{align*}

By \prettyref{prop:tpm}, taking $\ell = O(\log n)$, we conclude that
$$ \Pr\left[ \norm{C} \leq \left( c_{\kappa\beta}' \log n \cdot
	\frac{n^{\kappa+\half}}{d^{\kappa-\half - 4 \kappa^2\beta}}
\right)^{d} \right] \geq 1- n^{-100}\mcom$$
\end{proof}

We can now put together the easy bound on the maximum diagonal entry with the bound on $\|C\|$ to prove \pref{thm:odd-tensor}.

\begin{proof}[Proof of \pref{thm:odd-tensor}]
    \pref{alg:oddtensor} returns the upper bound
    \[
	\left(\|C\|^{1/d} + \max_{I,J} \left(\sum_i \bA_{i,I,J}^2 \right)\right)^{1/2}\mper
    \]
    We combine \pref{lem:easy} with \pref{thm:odd-upper}, and we have that with high probability, for constants $c_\beta$ and $c_2$,
	\begin{align}
	\left(\|C\|^{1/d} + \max_{I,J} \left(\sum_i \bA_{i,I,J}^2 \right)\right)^{1/2}
	&\leq \left(\log n\cdot c_{\kappa\beta} \cdot \frac{n^{\kappa+\half}}{d^{\kappa-\half - 4\kappa^2\beta}} + c_2 n \log n\right)^{1/2}
\end{align}
By picking the best possible $\beta$ under the constraint $\beta <
1/30$ and $d\kappa \ell < n^{\beta/4}$, we have that the former term always dominates, and we get the bound:
\begin{align*}
	\norm{\bA}_{inj}
	\leq
	\tilde O\left( \frac{n^{(2\kappa+1)/4}}{d^{(2\kappa-1)/4}} \cdot
	d^{ 2\kappa^2\frac{\log{d}}{\log n}}\right)\mper
\end{align*}
This concludes the proof.
\end{proof}

Now, we prove some of the lemmas we have relied upon in the proof of \pref{thm:odd-upper}.
We begin with a lemma bounding the probability that the hyperedges we sample all have even multiplicity.

\begin{lemma}  \label{lem:bigone}
Suppose $k < n^{\beta/4}$ and $\beta < 1/30$.
Then conditioned on an $\cE$ such that $\dnorm{\cE} \geq 2kd\ell (1-\beta)$,
	$$ \E_{\cF} \E_{\cM} \sum_{\sigma \in [n]^{k\ell}} \Ind[\onorm{\cH_{\sigma}} =
0] \leq \left( \ell c_{d\beta} \cdot \frac{n}{k^{(2d-1) - 8d^2\beta}} \right)^{k\ell}
$$
where $c_{d\beta}$ is a constant depending on $\beta$ and $d$.
\end{lemma}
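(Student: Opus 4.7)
The plan is to bound $\E_{\cF}\E_{\cM}\sum_\sigma \Ind[\onorm{\cH_\sigma}=0]$ by a forcing argument that exploits the density of $\cE$ together with its evenness (which is the context in which this lemma is applied in the proof of \pref{thm:odd-upper}). Since $\onorm{\cE} = 0$ and $\dnorm{\cE} \ge 2kd\ell(1-\beta)$, a direct counting argument shows that almost every edge in $\cE$ appears exactly twice --- call such an edge \emph{twinned} --- while the total mass of \emph{excess} edge-occurrences (those in edges of multiplicity $\ge 4$) is at most $O(\beta k d\ell)$.

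I would then process the $2k\ell$ hyperedges of $\cH_\sigma$ in a canonical (say, lexicographic) order, and call a hyperedge \emph{free} if all its edges are either excess or have their twin in a not-yet-processed hyperedge, and \emph{forced} otherwise. For $\cH_\sigma$ to be even, free and forced hyperedges must come in pairs. For each forced hyperedge $h$, pick a twin-partnership linking an edge of $h$ to an edge of an earlier hyperedge $h'$; this partnership pins down three things: (i) the block-grouping $\cF$ must collect the twin partners of $h'$'s edges into the two blocks of $h$, cutting the roughly $k^{2d}$ possible groupings down to $O_d(1)$; (ii) the block-matching $\cM$ must pair these two forced blocks together, cutting $\approx k$ choices to one; (iii) the pivot $\sigma$ of $h$ must equal $h'$'s pivot, cutting $n$ choices to one. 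Multiplying these, each forced hyperedge contributes at most $O_d(1)\cdot n^{-1}\cdot k^{-(2d-1)}$ to the expectation.

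Next, I would bound the number of free hyperedges. Each excess edge-occurrence enables at most one free hyperedge, and the ``self-paired'' situation (where all twin partners of a hyperedge's edges lie within the same hyperedge) is an additionally rare combinatorial event. Hence the free count is at most $O(d\beta k\ell)$, each costing an extra factor of $k^{O(d)}$ in slack relative to the forced baseline. Summing over the $\binom{2k\ell}{\#\mathrm{free}}$ designations of which hyperedges are free contributes the polynomial $\ell$-factor, and combining all per-hyperedge contributions yields the claimed bound
\[
\E_{\cF}\E_{\cM}\sum_\sigma \Ind[\onorm{\cH_\sigma}=0] \le \left( \ell c_{d\beta} \cdot \frac{n}{k^{(2d-1) - 8d^2\beta}} \right)^{k\ell}\mper
\]

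The main obstacle will be the combinatorial bookkeeping in the forced-hyperedge step, where the twin partners of the $2d$ edges of a newly processed hyperedge can be distributed across several earlier hyperedges simultaneously; this multi-way forcing requires an inclusion-exclusion-style analysis to achieve the sharp exponent $2d-1$ and to keep $c_{d\beta}$ depending only on $d$ and $\beta$. The slack from this analysis is the source of the $k^{8d^2\beta}$ correction in the denominator. The hypothesis $k < n^{\beta/4}$ is needed to ensure that the $n$-contributions from free hyperedges do not overwhelm the $k$-savings from forced ones, and $\beta < 1/30$ guarantees enough density in $\cE$ for the forcing to bite. The structure of the argument closely parallels that of \pref{lem:even-matching}, but with an additional layer for the pivot choices (which is precisely what accounts for the $n$, rather than another $k$, in the numerator).
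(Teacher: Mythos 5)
Your high-level strategy---encode even configurations $\cH_\sigma$ by processing hyperedges in a canonical order and tracking which choices are forced---is certainly of the right flavor; the paper itself uses encoding arguments of this type elsewhere (e.g.\ in \pref{prop:multilin-graph-bound} and \pref{lem:matching-probability-2}). However, your sketch omits the structural step that the paper's proof actually turns on, and this leaves a genuine gap rather than merely unfinished bookkeeping.

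The paper factors the expectation into two stages. First, $\onorm{\cH_\sigma}=0$ forces $\onorm{\cF}=0$, and \pref{lem:even-matching} bounds $\Pr_\cF[\onorm{\cF}=0\mid \cE]$. Second, for any fixed even $\cF$, the proof of \pref{lem:fixedF} builds a multigraph $\cG$ whose vertices are the \emph{distinct blocks} of $\cF$ and whose edges are the $\cM$-pairings; the pivot assignment $\sigma$ then becomes an edge-labeling of $\cG$, and $\onorm{\cH_\sigma}=0$ is shown to be equivalent to every label-preimage $\sigma^{-1}(i)$ inducing an Eulerian subgraph of $\cG$. \pref{lem:eulerian1} counts such labelings via a partition-into-Eulerian-traversals encoding, with the crucial exponent controlled by the number of odd multi-edges $|E_\oplus(\cG)|$, which a second application of \pref{lem:even-matching} keeps large. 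Your per-hyperedge forcing does not reproduce this. In particular, your claim (iii)---that a forced hyperedge's pivot is cut from $n$ choices to one---is incorrect in general: each $\cM$-pair shares a single pivot, so a pivot choice is coupled across two hyperedges at once, and the evenness condition admits long cycles in $\cG$ all of whose edges carry the same pivot label without any individual hyperedge being a repeat of one specific earlier hyperedge in your pairwise sense. What you call ``multi-way forcing requiring inclusion-exclusion'' is exactly the phenomenon that the Eulerian subgraph decomposition handles, and without that structural reformulation it is not clear how the exponent $2d-1$ would emerge.

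There is also a quantitative inconsistency: you assert the free count is at most $O(d\beta k\ell)$, but by your own definition a free hyperedge is one all of whose edges are excess or have their twin in a not-yet-processed hyperedge, i.e.\ essentially a first occurrence. Since $\cE$ is an adversarial even configuration (the expectations in the lemma are over $\cF$ and $\cM$ only, not over $\cE$), one can arrange the twins so that an entire contiguous half of the columns contains only first occurrences, giving $\Theta(k\ell)$ free hyperedges rather than an $O(\beta)$-fraction. You likely intend something narrower by ``free'' (perhaps only self-paired or excess-dominated hyperedges), but as written the accounting does not close, and it is precisely this part of the argument that the paper sidesteps by switching to the Eulerian multigraph picture.
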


\begin{proof}
Note that $\onorm{\cH_{\sigma}} = 0$ implies that $\onorm{\cF} = 0$.
By applying \prettyref{lem:even-matching} with $r \leftarrow d$, $c \leftarrow 2k$, $M \leftarrow 2\ell$, and $E_i \leftarrow \cE_i$, we obtain the following bound over the choice of $\cF$.
\begin{align}
    \Pr_{\cF}[ \onorm{\cF} = 0 | \dnorm{\cE} \geq 2kd \ell(1-\beta)] \leq \left( \frac{112}{2\beta k} \right)^{2(d-1) k\ell (1-(4d+1)\beta)}
\label{eq:11}
\end{align}
Furthermore, if $\dnorm{\cE} \geq 2kd \ell(1-\beta)$ then clearly $\dnorm{\cF} \geq 2k\ell(1-\beta)$.
By \prettyref{lem:fixedF}, for every $\cF$ with $\dnorm{\cF}\geq 2k
\ell(1-\beta)$ we have,
\begin{align}
  \E_{\cM} \sum_{\sigma \in [n]^{k\ell}} \Ind[\onorm{\cH_{\sigma}} = 0]
  & \leq  
\left(4 \beta k \ell \right)! \cdot n^{k \ell} \cdot
  \left(\left(\frac{112}{\beta k}\right)^{k \ell(1-10 \beta)}   +
  n^{-\beta k \ell/3}\right) \nonumber\\
   & \leq  \left( k^{4\beta}  \ell^{4 \beta} \cdot n \cdot
 \left( \left(\frac{112}{\beta k}\right)^{(1-10 \beta)}  +
 n^{-\beta /3}\right) \right)^{k\ell} 
 \label{eq:22}
\end{align}
Using \eqref{eq:11} and \eqref{eq:22} we conclude that,
\begin{align*}
 \E_{\cF} \E_{\cM} \sum_{\sigma \in [n]^{k\ell}} \Ind[\onorm{\cH_{\sigma}} = 0]
 \leq \left( \left(\frac{112}{2\beta k}  \right)^{2(d-1)(1-(4d+1)\beta)} \cdot k^{4\beta}  \ell^{4 \beta} \cdot n \cdot \left( \left(\frac{112}{\beta k}\right)^{(1-10 \beta)}  + n^{-\beta /3}\right) \right)^{k\ell}
\end{align*}
Since $k < n^{\beta/4}$ and $\beta < 1/30$, we have that $k^{1-14 \beta} << n^{\beta/3}$, and so the first term in the latter parenthesis dominates.
This implies that,
\begin{align*}
 \E_{\cF} \E_{\cM} \sum_{\sigma \in [n]^{k\ell}} \Ind[\onorm{\cH_{\sigma}} =
0] \leq
\left( c_{d\beta}\ell \cdot \frac{n}{k^{(2d-1) - 8d^2\beta}} \right)^{k\ell}
\end{align*}
where $c_{d\beta}$ is a constant depending on $d$ and $\beta$, and where we have used the fact that $8d^2 \ge 8d^2 - 6d + 4$ for all $d \ge 1$.

\end{proof}

The following lemma we employ in bounding the probability that our blocks from $\cF$ are matched in a way that gives hyperedges with even multiplicity.
We do this via reducing the problem to counting the number of multigraphs with labeled edges in which every subgraph induced by a given label is Eulerian.

\begin{lemma} \label{lem:fixedF}
For every $\cF$ with $\dnorm{\cF} \geq 2k \ell(1-\beta)$, $$  \E_{\cM} \sum_{\sigma \in [n]^{k\ell}} \Ind[\onorm{\cH_{\sigma}} = 0] \leq  
\left(4 \beta k \ell \right)! \cdot n^{k \ell} \cdot
  \left(\left(\frac{112}{\beta k}\right)^{k \ell(1-10 \beta)}   +
  n^{-\beta k \ell/3}\right)$$
\end{lemma}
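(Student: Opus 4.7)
The plan is to recast $\E_{\cM} \sum_{\sigma} \Ind[\onorm{\cH_{\sigma}} = 0]$ as a problem of counting edge-colorings of a random multigraph subject to an Eulerian color-class constraint, then to bound the count via an encoding argument that exploits the dispersed structure of $\cF$. For each fixed matching $\cM$, I introduce the multigraph $G_\cM$ whose vertex set consists of the distinct abstract blocks appearing in $\cF$ and whose $k\ell$ edges are the pairs of blocks formed by $\cM$; the degree of a block $T$ in $G_\cM$ equals its multiplicity $\mu_T$ in $\cF$. An assignment $\sigma \in [n]^{k\ell}$ is then an edge-coloring of $G_\cM$ with colors in $[n]$, and the condition $\onorm{\cH_\sigma} = 0$ becomes the requirement that for every color $c \in [n]$ the subgraph of $c$-colored edges is Eulerian (all vertex degrees even).

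To control this count I will split the vertices of $G_\cM$ into heavy (degree at least $2$) and light (degree exactly $1$), using the hypothesis $\dnorm{\cF} \ge 2k\ell(1-\beta)$. Since degrees sum to $2k\ell$ while there are at least $2k\ell(1-\beta)$ distinct vertices, subtraction yields at most $2\beta k\ell$ heavy vertices, hence at most $4\beta k\ell$ edge-incidences at heavy vertices. Any edge of $G_\cM$ both of whose endpoints are light cannot belong to an Eulerian monochromatic subgraph (it would contribute odd degree at both endpoints), so $\Ind[\onorm{\cH_\sigma}=0]$ can be nonzero only when $\cM$ produces a $G_\cM$ in which every edge is incident to a heavy vertex.

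Given this structure, the encoding step specifies: (i) which of the $k\ell$ pair-slots carry heavy-vertex incidences and how they are matched, contributing at most $(4\beta k\ell)!$ arrangements, which accounts for the leading factorial factor; and (ii) the residual matching on the blocks, for which I will invoke \pref{lem:even-matching} (conditioned on the heavy incidences so as to keep the distinct-element hypothesis intact on the remainder) to bound the probability that $\cM$ produces a $G_\cM$ admitting any Eulerian coloring by $(112/(\beta k))^{k\ell(1-10\beta)}$. Multiplying by the naive pivot bound $n^{k\ell}$ for the remaining color choices then yields the first term inside the parenthesis.

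The second term $n^{-\beta k\ell/3}$ handles the complementary regime in which any valid coloring must use fewer than $(1-\beta/3)k\ell$ distinct pivot values from $[n]$, so the count of $\sigma$ is bounded directly by $n^{(1-\beta/3)k\ell} = n^{k\ell} \cdot n^{-\beta k\ell/3}$. The main technical obstacle I anticipate is the clean separation of these two regimes together with the conditional invocation of \pref{lem:even-matching}: specifically, after fixing the heavy-incidence pattern I must verify that the induced residual matching still satisfies the distinct-element hypothesis required to apply \pref{lem:even-matching}, and I must ensure that the factorial prefactor absorbs only the configurational multiplicity of heavy endpoints rather than inflating with the combinatorial complexity of the residual.
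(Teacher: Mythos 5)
Your high-level framing is on track: recasting $\E_{\cM}\sum_{\sigma}\Ind[\onorm{\cH_\sigma}=0]$ as a count of edge-labelings of the block-multigraph $\cG$ in which every color class is Eulerian is exactly the paper's reduction. But several of the concrete steps you propose are off, and the central enumeration lemma is missing. First, the heavy/light split is miscalibrated: with roughly $2k\ell$ edges and $\dnorm{\cF}\geq 2k\ell(1-\beta)$ vertices the average degree of $\cG$ is about $2$, so almost \emph{every} vertex has degree at least $2$; the quantity that is actually small is the total \emph{excess} degree $\sum_v (d_v-2) = 2|E(\cG)|-2|V(\cG)| \leq 4\beta k\ell$, and it is this sum (via $\prod_v (d_v-2)!$) that produces the $(4\beta k\ell)!$ factor --- not "which pair-slots carry heavy-vertex incidences." Second, you state you will invoke \pref{lem:even-matching} to bound the probability that $\cM$ "produces a $G_\cM$ admitting any Eulerian coloring," but that is the wrong event: the quantity is a sum over $\sigma$, not an existence indicator, and what \pref{lem:even-matching} is actually used for is to show that the number of odd-multiplicity multi-edges $|E_\oplus(\cG)|$ is likely to be large.

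The essential missing ingredient is the paper's Lemma~\ref{lem:eulerian1}: for a fixed $\cG$ the number of even labelings is at most $(2|E(\cG)|-2|V(\cG)|)!\cdot n^{|E(\cG)|/2 - |E_\oplus(\cG)|/6}$, obtained by (a) counting edge-partitions into Eulerian subgraphs via local matchings at each vertex, and (b) observing that any Eulerian part meeting an odd multi-edge has at least three edges, which caps the number of parts at $|E|/2 - |E_\oplus|/6$. Your sketch never produces the dependence on $|E_\oplus(\cG)|$, and without it there is no mechanism for the $n^{-\beta k\ell/3}$ term: in the paper that term comes from $\E_\cM\bigl[n^{-|E_\oplus(\cG)|/6}\bigr] \leq \Pr[|E_\oplus(\cG)|\leq 2\beta k\ell] + n^{-\beta k\ell/3}$, with \pref{lem:even-matching} controlling the probability. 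Your alternative derivation of the second term (bounding the number of distinct colors and taking $n$ to that power) captures the right intuition but omits the factorial prefactor, which in the actual bound multiplies \emph{both} summands. Finally, the conditional application of \pref{lem:even-matching} that worries you is unnecessary: the paper applies it unconditionally to $\cF$ itself (with $r=2$, $c=k$, $M=2\ell$), so there is no residual-matching bookkeeping to reconcile.
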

\begin{proof}
Define a multigraph $\cG$ as follows.
In the multigraph $\cG$, there is a vertex $v_f$ for each {\it distinct} block $f \in \cF$.
There is an edge in $\cG$ for each edge in the matchings $\cM$ between the blocks.
Every choice of {\it pivot} vertices $\sigma \in [n]^k$ corresponds to a labeling of the edges $\sigma : E(\cG) \to [n]$.
For each edge $e \in E(\cG)$ incident at a vertex $v_f \in V(\cG)$, there is a hyperedge in $\cH_{\sigma}$ corresponding to $(\sigma(e),v_f)$.
The hypergraph $\cH_{\sigma}$ is {\it even} if and only if for each pivot vertex $i
\in [n]$, and each vertex $v_f \in V(\cG)$, the number of edges labeled $i$
incident at $v_f$ is even.  This implies that $\sigma^{-1}(i)$ form an
Eulerian subgraph for each $i \in [n]$.  By \prettyref{lem:eulerian1},
the number of such labelings $\sigma : E(\cG) \to [n]$ is at most
$(2|E(\cG)|-
2|V(\cG)|)! \cdot n^{\nicefrac{E(\cG)}{2}- \nicefrac{E_{\oplus}(\cG)}{6}}$.

By definition of the graph $\cG$, $|V(\cG)| = \dnorm{\cF} \geq 2k
\ell(1-\beta)$ and $E(\cG) = 2k \ell$.  Moreover, by applying
\prettyref{lem:even-matching} with $r \leftarrow 2$, $c \leftarrow k$, $M \leftarrow 2\ell$, $\delta \leftarrow 1$, and $E_i \leftarrow \cF_i$, we conclude that the graph $\cG$ has many odd
multiedges with high probability over the choice of $\cM$.  Formally,
$$\Pr[|E_{\oplus}(\cG)| \leq 2\beta k \ell] \leq
\left(\frac{112}{\beta c}\right)^{k\ell(1-10\beta)}$$
Now we are ready to wrap up the proof of the lemma.
\begin{align*}
\E_{\cM} \sum_{\sigma \in [n]^{k\ell}} \Ind[\onorm{\cH_{\sigma}} = 0] 
& \leq \E_{\cM}(2E(\cG)-2V(\cG))!\cdot  n^{\nicefrac{E(\cG)}{2}-
 \nicefrac{E_{\oplus}(\cG)}{6}} \\
 & \leq \E_{\cM} \left(4 \beta k \ell \right)! \cdot n^{k \ell} \cdot
n^{-|E_{\oplus}(\cG)|/6} \\
 & =  \left(4 \beta k \ell \right)! \cdot n^{k \ell} \cdot \E_{\cM}
 n^{-|E_{\oplus}(\cG)|/6} \\ 
 & \leq  \left(4 \beta k \ell \right)! \cdot n^{k \ell} \cdot
 \left( \Pr[|E_\oplus(\cG)|\leq 2\beta k \ell] + n^{-2\beta k
 \ell/6}\right)\\ 
 & \leq  \left(4 \beta k \ell \right)! \cdot n^{k \ell} \cdot
 \left( \left(\frac{112}{\beta k}\right)^{k \ell(1-10 \beta)}  +
 n^{-2\beta k \ell/6}\right) \\
 & \leq  \left( k^{4\beta}  \ell^{4 \beta} \cdot n \cdot
 \left( \left(\frac{112}{\beta k}\right)^{(1-10 \beta)}  +
 n^{-\beta /3}\right) \right)^{k\ell} 
\end{align*}
\end{proof}

Our final lemma of this section is a bound on the number of labelings of a multigraph such that the subgraphs induced by all edge labels are Eulerian, given a bound on the number of multi-edges appearing with odd multiplicity.
\begin{lemma} \label{lem:eulerian1}
Given a multigraph $\cG$, a labeling of its edges $\sigma: E(\cG) \to [n]$ is said to be {\it even}, if the preimage of every label $i$ forms an
Eulerian subgraph (not necessarily connected) of $\cG$.  Specifically, the set of edges $\sigma^{-1}(i) \subseteq E(\cG)$ induce a subgraph where
the degree of every vertex is even.

$$ \card{ \{ \sigma: E(\cG) \to [n] | \sigma \text{ is even }\} } \leq  (2|E(\cG)| - 2
|V(\cG|))! \cdot n^{|E(\cG)|/2 - |E_{\oplus}(\cG)|/6} $$
where $|E_{\oplus}(\cG)|$ is the number of multi-edges with odd multiplicity
within $\cG$.
\end{lemma}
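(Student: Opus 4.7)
My plan is an encoding argument centered on a ``pairing graph''. Given an even labeling $\sigma$, I will extract an auxiliary $2$-regular graph $H$ from $\sigma$, show that $\sigma$ is determined by $H$ together with one label per connected component of $H$, bound the number of such components, and separately bound the number of possible pairings. (If $\cG$ is not itself Eulerian, then no even labeling exists, so there is nothing to prove; I henceforth assume $\cG$ is Eulerian.)

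For the construction: at each vertex $v \in V(\cG)$ and each label $i \in [n]$, the Eulerian condition on $\sigma^{-1}(i)$ says that the edges at $v$ carrying label $i$ are even in number, so they may be paired up. Collecting these local pairings defines a graph $H$ on vertex set $E(\cG)$ in which every vertex has degree exactly $2$ (one pair at each endpoint of the corresponding edge of $\cG$). Hence $H$ is a disjoint union of simple cycles, and since paired edges share a label, all edges within a common cycle of $H$ carry a single label in $[n]$. Therefore $\sigma$ is determined by $H$ together with one label per cycle.

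The crux of the $n^{|E(\cG)|/2 - |E_\oplus(\cG)|/6}$ factor is a bound on the number of cycles of $H$. A length-$2$ cycle of $H$ consists of two parallel edges of $\cG$ (with the same endpoint pair) paired at both shared endpoints. For a multi-edge of $\cG$ of multiplicity $m$, at most $m$ copies can lie in length-$2$ cycles when $m$ is even, but at most $m-1$ copies when $m$ is odd (one copy must pair with a non-parallel edge). Summing over multi-edges, at most $|E(\cG)| - |E_\oplus(\cG)|$ edges lie in length-$2$ cycles, so $c_2 \leq (|E(\cG)| - |E_\oplus(\cG)|)/2$. Combining with the trivial $2 c_2 + 3 c_{\geq 3} \leq |E(\cG)|$ and maximizing the total count $c_2 + c_{\geq 3}$ under these constraints gives $c_2 + c_{\geq 3} \leq |E(\cG)|/2 - |E_\oplus(\cG)|/6$, so at most $n^{|E(\cG)|/2 - |E_\oplus(\cG)|/6}$ labelings are compatible with any fixed pairing.

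The remaining step, which will be the main obstacle, is to bound the number of admissible pairings by $(2|E(\cG)| - 2|V(\cG)|)!$. My strategy is to fix a spanning forest $T$ of $\cG$ (with $|V(\cG)| - c$ tree edges, where $c$ is the number of connected components) and to injectively encode each pairing as a sequence of length at most $2|E(\cG)| - 2|V(\cG)|$, where the entries specify a ``partner choice'' for each non-tree half-edge, with partners along the forest being forced by the Eulerian closure. The delicate point is that the naive pairing count $\prod_v (d_v - 1)!!$ can exceed $(2|E(\cG)| - 2|V(\cG)|)!$ for very unbalanced degree sequences, so the encoding must genuinely exploit the spanning-forest structure to absorb tree half-edges for free. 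Granting this bound, combining the three estimates gives
\[
|\{\sigma : \sigma \text{ is even}\}| \;\leq\; (\text{\# pairings}) \cdot n^{|E(\cG)|/2 - |E_\oplus(\cG)|/6} \;\leq\; (2|E(\cG)|-2|V(\cG)|)! \cdot n^{|E(\cG)|/2 - |E_\oplus(\cG)|/6},
\]
as desired. The cycle-counting and pairing-graph portions are straightforward; the factorial encoding is where the real work lies.
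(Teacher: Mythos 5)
Your decomposition is exactly the paper's: the pairing graph $H$ is precisely the collection of per-vertex matchings $\{M_v\}_{v\in V(\cG)}$ that the paper constructs, its cycles are the parts of the paper's partition of $E(\cG)$ into closed Eulerian traversals, and your optimization over $c_2 + c_{\ge 3}$ subject to $c_2 \le (|E(\cG)|-|E_\oplus(\cG)|)/2$ and $2c_2 + 3c_{\ge 3} \le |E(\cG)|$ reproduces the paper's fractional-weight accounting (assign weight $1/|E_i|$ to each edge of a part $E_i$, observe $w_e \le 1/2$ always and $w_e \le 1/3$ for one edge of each odd-multiplicity multi-edge, and sum). Those two pieces of your argument are correct and match the paper.

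The gap is precisely the step you flag and then defer: showing that the number of pairing graphs $H$ (equivalently, of per-vertex matching systems) is at most $(2|E(\cG)| - 2|V(\cG)|)!$. Your worry is legitimate — the termwise inequality $(d_v-1)!! \le (d_v-2)!$ fails at $d_v = 4$ (it reads $3 \le 2$), so one cannot simply write $\prod_v (d_v-1)!! \le \prod_v (d_v-2)! \le \bigl(\sum_v (d_v-2)\bigr)!$. But ``granting this bound'' is not a proof, and the spanning-forest sketch does not obviously close it: with a spanning forest on $c$ components there are $2(|E(\cG)|-|V(\cG)|+c)$ non-tree half-edges, already exceeding the target sequence length $2|E(\cG)|-2|V(\cG)|$, and after partners are chosen for the non-tree half-edges a vertex of high tree-degree can still have four or more unmatched tree half-edges, whose pairing is not ``forced by Eulerian closure.'' You would need either to show the residual tree structure contributes only boundedly many further choices, or to find a different injective encoding. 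As written, the pairing-count bound — which you yourself identify as the crux — is not established, so the proof is incomplete.
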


\begin{proof}
We will count the number of {\it even} labelings $\sigma$ as follows:
\begin{compactitem}
\item Pick a unordered partition of the edges of the graph in to
  Eulerian subgraphs.  By \prettyref{claim:eulerparts}, there
  are at most $(2|E(\cG)|-2|V(\cG)|)!$ of them.

\item Assign a label from $[n]$ to each Eulerian subgraph in the
  partition.  The number of labelings is clearly at most $n^t$ where
  $t$ is the number of subgraphs in the partition.  By
  \prettyref{claim:numpartitions}, there are at most $\frac{|E(\cG)|}{2} -
  \frac{|E_{\oplus}(\cG)|}{6}$ subgraphs in any partition. Hence, there are at
  most $n^{\frac{|E(\cG)|}{2} - \frac{|E_{\oplus}(\cG)|}{6}}$ labelings for each
  partition of $\cG$ in to Eulerian subgraphs.
\end{compactitem}
The lemma follows immediately from the
\prettyref{claim:eulerparts} and \prettyref{claim:numpartitions}
which we will show now.

\begin{claim} \label{claim:eulerparts}
The number of unordered partitions of the edges of the graph in to
Eulerian subgraphs is at most $(2|E(\cG)|-2|V(\cG)|)!$.
\end{claim}
\begin{proof}

Let $d_v$ denote the degree of vertex $v \in V(\cG)$.
We can specify a partition of the edges of $\cG$ in to Eulerian
subgraphs, by specifying a sequence of Eulerian traversals whose union
covers all the edges in the graph exactly once.

Consider a vertex $v$.  Any sequence of traversals induces a matching $M_v$
between the edges incident at $v$ -- where $e,e'$ are matched if one
of the traversals goes along $e \rightarrow v \rightarrow e'$.  Furthermore, given a set
of matchings $\{M_v | v\in V(\cG)\}$, it uniquely identifies a set of
  traversals.

Therefore the number of partitions of $E(\cG)$ in to Eulerian subgraphs
is at most

\begin{align*}
 \prod_{v \in V(\cG)} |\# \text{ matchings of edges incident at} v| & \leq  \prod_{v \in V(\cG)} \left( \frac{d_v !}{(d_v/2)!}
\cdot \frac{1}{2^{d_v}} \right) \\
& \leq \prod_{v \in V(\cG)} (d_v -2)! \\
& \leq (\sum_{v \in V(\cG)} (d_v - 2))! \leq \left( 2E(\cG)- 2V(\cG) \right)!
\end{align*}
\end{proof}

\begin{claim} \label{claim:numpartitions}
In any partition of $\cG$ in to Eulerian subgraphs, the number of
partitions is at most $\frac{|\cE(G)|}{2} - \frac{|\cE_{\oplus}(G)|}{6}$
\end{claim}

\begin{proof}
  Suppose $E(\cG) = \cup_{i =1}^{t} E_i(\cG)$ denote a partition of
  $E(\cG)$ into Eulerian subgraphs.  For each edge $e \in E_i(\cG)$ assign a weight
  $w_e = \frac{1}{|E_i(\cG)|} $.  By definition of the weights, we have
$$ \sum_{e \in E(\cG)} w_e = t \mper$$
Note that $w_e \leq \frac{1}{2}$ for all $e \in E(\cG)$, since each subset $E_i(\cG)$
contain at least two edges by virtue of being Eulerian.  Moreover,
$w_e = \frac{1}{2}$ if the edge $e$ belongs to an Eulerian subgraph $E_i
(\cG)$ with exactly two edges.  In particular, $E_i(\cG) = \{e,e'\}$ where
$e$ and $e'$ form a $2$-cycle.  For every multiedge $(a,b)$ with odd
multiplicity, at least one of its edges has $w_{e} \leq \frac{1}{3} = \frac{1}{2} -
\frac{1}{6}$.

Therefore we conclude that
\[
    t = \sum_{e \in E(\cG)} w_e \leq \sum_{e \in E(\cG)} \frac{1}{2} -
\sum_{(a,b) \in E_{\oplus}(\cG) } \frac{1}{6} = \frac{|E(\cG)|}{2} - \frac{|E_{\oplus}(\cG)|}{6}\mper
\]
\end{proof}
These claims together finish the proof.
\end{proof}
\subsection{Useful combinatorial lemmas}
\label{sec:combinatorial-lemma}
\label{sec:extra-lemmas}

Define an \emph{$r$-grouping} to be a partition of a set of size $c \cdot r$ into $c$ subsets of size $r$.
The following lemma bounds the probability that, given a multiset with many distinct elements, an $r$-grouping of the elements results in few $r$-sets with odd multiplicity.
We rely on this lemma in our injective tensor norm upper bounds, to bound the probability that a hypergraph sampled from a simple graph has the evenness property.

\restatelemma{lem:even-matching}
\begin{proof}
We will refer to each $s \in [N]$ as a ``type''.
Call a type $s \in[N]$ { \it infrequent} if the number of occurrences of $s$ within $\cup_{i} E_i$ is nonzero but at most $8$.

Suppose a type $s \in [N]$ appears exactly once in the sets $E_1,\ldots,E_M$,then irrespective of the choice of the grouping, the group involving $s$ appears exactly once.
If there are more than $r\delta M c$ types that appear exactly once then,
\[
    \Pr[|\oplus_{i} G_i| \leq \delta M c] = 0,
\]
and the lemma holds.
Henceforth, we assume that all but $r\delta Mc$ types appear at least twice.

Call a type to be {\it frequent} if it occurs more than $8$ times within $\cup_{i} E_i$.
Out of the $rMc$ elements, at most an $8 \beta$ fraction are occurrences of {\it frequent} types.
Otherwise, the number of distinct types would be less than $rMc\left( (1-8\beta)/2  + 8 \beta/ 8 + \delta \right) < \frac{rMc}{2} (1-\beta)$.

Moreover, this implies that the number of distinct {\it frequent} types is at most $ 8 \beta r M c/8 \leq \beta r M c$.
Finally, the number of distinct infrequent types is at least $\frac{rc M}{2}(1-\beta) - \beta rMc \geq \frac{rM c}{2} \cdot(1-3 \beta)$.

Let us sample uniform random $r$-groupings $\{G_i\}_{i \in [M]}$ one group at a time.
Specifically, we will sample groups $g_1, \ldots,g_{cM}$ where $G_{i} = \{g_{(i-1)c+1}, \ldots,g_{ic}\}$, one group at a time.
We sample the $i^{th}$ grouping $G_i$ as follows:
\begin{compactitem}
\item For $j = 1$ to $c$
\item  Pick the element $s$ with the smallest number of ungrouped
  occurrences left within $\cup_{j=i}^{M}E_j$ (breaking ties
  lexicographically).
\item Sample the group  $g_{(i-1)c+j}$ by picking the remaining $r-1$
  elements uniformly at random from ungrouped elements in $E_i$
\end{compactitem}
It is clear that the above sampling procedure picks a uniformly random
grouping $\{G_i\}_{i \in [M]}$.

We will refer to the groups picked at any stage to be {\it configuration}.  So, the configuration at the end of $i^{th}$ stage is $\cE_i \defeq \{g_1,\ldots,g_i\}$.
Given a current configuration $\cE_i$, there is a unique element $s(\cE_i)$ that will be grouped in the next step.
A configuration $\cE_i$ is said to be {\it critical} if
\begin{compactenum}
\item $s(\cE_i)$ is its final ungrouped occurrence of an {\it infrequent} type.
\item All previous occurrences of $s(\cE_i)$ has been grouped with {\it infrequent} types.
\item There are at least $ \beta rc$ ungrouped elements within the current multiset $E_j$ that is being grouped.
\end{compactenum}

\begin{claim} \label{claim:criticalonallpath}
    For every sequence of random choices, the sampling procedure encounters at least $\frac{c M}{2} \cdot (1-(4r+1)\beta)$ {\it critical} configurations.
\end{claim}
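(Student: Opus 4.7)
My plan is to upper-bound the number of non-critical configurations by $cM - \frac{cM}{2}(1 - (4r+1)\beta)$. A configuration is non-critical whenever at least one of conditions 1, 2, 3 fails, and I would handle each failure mode in turn.

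Condition 3 is the easiest: within each multiset $E_j$, the last $\beta c$ of its $c$ iterations have fewer than $\beta rc$ ungrouped elements, contributing at most $\beta cM$ non-critical configurations in total.

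The main step handles condition 1 via a potential function argument. I would define $\Phi = \sum_{\tau}(\text{count}(\tau) - 1)$ summed over currently ungrouped types $\tau$. Initially $\Phi_0 = rMc - D$, where $D$ is the number of distinct types, and by the hypothesis $D \ge \tfrac{rMc}{2}(1-\beta)$, giving $\Phi_0 \le \tfrac{rMc(1+\beta)}{2}$. The crucial observation is that at any iteration where $s(\cE_i)$ has count at least $2$ (equivalently, where the minimum ungrouped count is at least $2$), all $r$ elements of the resulting group come from types of count $\ge 2$, each losing one occurrence and contributing $-1$ to $\Phi$, so $\Phi$ drops by exactly $r$. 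Since $\Phi$ stays nonnegative, at most $\Phi_0 / r \le \tfrac{Mc(1+\beta)}{2}$ iterations fail to have $s$ be the final copy of its type, leaving at least $\tfrac{Mc(1-\beta)}{2}$ iterations with $s$ final. Among these, at most $8\beta rMc$ have $s$ of a frequent type (bounded by the total number of frequent occurrences), giving at least $\tfrac{Mc(1-\beta)}{2} - 8\beta rMc$ iterations satisfying condition 1.

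For condition 2, each failure corresponds to an infrequent type $t$ with a non-final copy appearing in some ``mixed'' group (one containing at least one frequent element). The number of mixed groups is at most the number of frequent occurrences $\le 8\beta rMc$, and each mixed group involves at most $r-1$ distinct infrequent types, bounding condition-2 failures by a term that is linear in the product. Combining the three bounds with careful arithmetic would yield the lower bound $\tfrac{cM}{2}(1 - (4r+1)\beta)$ on the number of critical configurations.

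The chief obstacle is the potential function argument for condition 1: its correctness rests on using the minimum-count tie-breaking rule to argue that every one of the $r$ grouped elements has count $\ge 2$ whenever $s$ has count $\ge 2$, so the potential drops at the full rate $r$ at each ``bad'' iteration. The remaining work on frequent occurrences and mixed groups is routine, but the constants must be tracked carefully to obtain the specific $(4r+1)\beta$ factor stated in the claim.
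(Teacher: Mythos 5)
Your approach is genuinely different from the paper's. The paper argues directly: it counts the $T$ ``good'' types (infrequent, always grouped with infrequent types) and observes that each such type has a last ungrouped occurrence; since the procedure always picks the minimum-count element as $s$, at most $r-1$ of these final occurrences per group can land in the non-$s$ slots, so at least $T/r$ of the $s$-picks are good finals, from which conditions~1 and~2 follow. You instead bound non-critical steps by failure mode and use a potential function to handle the ``$s$ not a final copy'' case.

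The potential-function step is where the proof breaks. You assert that when $s(\cE_i)$ has count $\ge 2$, each of the $r$ grouped elements loses one occurrence and contributes $-1$, so $\Phi$ drops by exactly $r$. This fails whenever two or more occurrences of the same type land in one group and that type's count hits zero: if type $\tau$ has count $c$ and $k$ of its copies are grouped, its contribution $(c-1)$ falls to $\max(0,c-k-1)$, a drop of $k$ only if $c-k\ge 1$, but just $k-1$ if $c-k=0$. So the per-iteration drop is $r-t_i$, where $t_i$ is the number of types exhausted in that iteration, and in a ``bad'' iteration $t_i$ can be as large as $\lfloor r/2\rfloor$. A concrete failure: $r=2$, $E_1=\{a,a,b,b\}$, and the procedure happens to group $\{a,a\}$ then $\{b,b\}$. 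Here $\Phi_0 = rMc - D = 4-2 = 2$, both iterations have $s$ at count $2$, but each drops $\Phi$ by only $1$, so there are $2$ bad iterations while your claimed bound is $\Phi_0/r = 1$. The flaw is not a matter of constants: even after correcting to the true worst-case drop of $\lceil r/2\rceil$, the bound on bad iterations becomes $2\Phi_0/r \le 2Mc - 2D/r \le Mc(1+\beta)$, which is vacuous since there are only $Mc$ iterations total. The potential approach cannot see the structural fact the paper exploits --- that exhaustions in bad iterations require at least two copies per type, and that the $s$-pick rule biases finals toward the $s$-slot --- so the gap is in the method, not merely in the bookkeeping. Your treatment of conditions 2 and 3 tracks the paper's counting and is fine; it is the condition-1 step that needs the paper's direct ``one good final per $r$'' argument (or an equivalent counting of types reaching count $1$ in $s$-slots) rather than a potential.
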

\begin{proof}
There are at most $8\beta rc M$ occurrences of { \it frequent} types.
This implies that among the $\frac{rMc}{2} (1-3\beta)$ infrequent labels, at least $\frac{rMc}{2} (1-3\beta) - (8\beta r M c)(r-1) \geq  \frac{rMc}{2}(1-(4r - 1)\beta)$ are grouped only with infrequent types.

For each of these $\frac{r Mc}{2}(1-(4r-1)\beta)$ types there is one final ungrouped occurrence.
Even assuming we match all these final occurrences among themselves,  both conditions (1) \& (2) are met at least $\frac{Mc}{2}(1-(4r-1)\beta)$ times during the sampling procedure.

Finally, there are at most $\beta c$ groups that are picked among the final $\beta rc$ elements within the sets $E_i$.
Therefore, for at least $\frac{Mc}{2}(1 - (4r-1)\beta) - \beta Mc \ge \frac{Mc}{2}(1 - (4r+1)\beta)$ steps,  $\cE_i$ is a critical configuration.
\end{proof}

Define random variables $\{Z_i\}_{i \in [m]}$ as follows:
\[
    Z_i \defeq \Ind[ g_i \text{ is final occurrence of an odd group in } \cup_{j} G_j] \mper
\]
By definition, we have
$$ \card{ \oplus_{j} G_j} = \sum_{i \in [cM]} Z_i$$

Set $\alpha = \left(\frac{56}{\beta c}\right)^{r-1}$.  In order to obtain concentration bounds on $\sum_{i \in
  [cM]}Z_i$ we will bound $ \E[\alpha^{\sum_{i} Z_i}]$.

\begin{claim} \label{claim:maxalpha}
For all $ \alpha \leq \left( \frac{56}{\beta c} \right)^{c-1}$, for all $t \in [cM]$ and all critical configurations $\cE_t$,
$$\E[\alpha^{\sum_{i=t}^{c M} Z_{i+1}} | \cE_t] \leq 2\alpha \cdot \max_{\cE_{t+1} | \cE_t} \E[\alpha^{\sum_{i=t+1}^{cM}Z_{i+1}} | \cE_{t+1}] $$
where the maximum is taken over all feasible configurations $\cE_{t+1}$ from $\cE_t$.
\end{claim}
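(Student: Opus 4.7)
The plan is to peel off the single index $Z_{t+1}$ by conditioning on $\cE_{t+1}$ via the tower rule, then bound its contribution by the pointwise inequality $\alpha^{Z_{t+1}}\leq \alpha$, which is valid because $\alpha\geq 1$ and $Z_{t+1}\in\{0,1\}$. Concretely, writing $S \defeq \sum_{i=t+1}^{cM} Z_{i+1}$ for the tail sum, I would begin with
\[
    \E\left[\alpha^{Z_{t+1} + S} \mid \cE_t\right] \;=\; \E_{\cE_{t+1} \mid \cE_t}\!\left[\E\left[\alpha^{Z_{t+1}} \cdot \alpha^{S} \mid \cE_{t+1}\right]\right] \mcom
\]
pull out the factor $\alpha$ using $\alpha^{Z_{t+1}}\leq \alpha$ in the inner expectation, and then bound the outer expectation by the maximum over feasible next configurations. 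This already establishes the desired inequality with the tighter constant $\alpha$ in place of $2\alpha$.

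A subtlety worth flagging is that $Z_{t+1}$ is a function of the entire future of the sampling procedure (it depends on whether $g_{t+1}$ is the last copy of its group type, which is only determined by all subsequent choices $g_{t+2},\ldots,g_{cM}$), so it is \emph{not} measurable with respect to $\cE_{t+1}$. The argument nevertheless goes through because we only use the trivial pointwise bound $\alpha^{Z_{t+1}}\leq \alpha$, which holds sample-by-sample; no attempt is made to compute the conditional distribution of $Z_{t+1}$ given $\cE_{t+1}$.

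There is no real obstacle in the present claim: the per-step inequality does not require the hypothesis that $\cE_t$ is critical, and the factor $2$ in the stated bound is comfortable slack. The criticality property is rather exploited in the \emph{complementary} direction elsewhere in the proof of \prettyref{lem:even-matching}: combined with \prettyref{claim:criticalonallpath}, which guarantees $\Omega(Mc)$ critical steps along every realization of the sampling procedure, criticality will be used to show that $Z_{t+1}=1$ with substantial probability at each critical step, yielding the $\alpha^{\Omega(Mc)}$ moment generating function bound needed to conclude concentration of $\sum_i Z_i$ via a reverse Markov argument.
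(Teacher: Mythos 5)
There is a genuine gap, and it is fatal: your key premise $\alpha \geq 1$ is false, so the pointwise bound $\alpha^{Z_{t+1}} \leq \alpha$ that drives your argument does not hold. In the proof of \pref{lem:even-matching} the parameter is set to $\alpha = \left(\tfrac{56}{\beta c}\right)^{r-1}$ and the lemma is invoked with $c$ of order $d$ or $k$, so $\alpha \ll 1$. This is not incidental: the concentration step reads $\Pr\bigl[\sum_i Z_i \leq \delta cM\bigr] = \Pr\bigl[\alpha^{\sum_i Z_i} \geq \alpha^{\delta cM}\bigr]$ followed by Markov, and the sign flip in the indicator, as well as the usefulness of the final bound $(2\alpha)^{\Omega(cM)}$, both require $\alpha < 1$. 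But for $\alpha < 1$ and $Z_{t+1}=0$ one has $\alpha^{Z_{t+1}}=1>\alpha$, so the only sample-by-sample bound available is the trivial $\alpha^{Z_{t+1}}\leq 1$. Your proposed proof therefore only reproduces the factor-$1$ bound one would use at a \emph{non}-critical step, giving $\E[\alpha^{\sum_i Z_i}]\leq 1$, from which the reverse-Markov step yields nothing.

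Consequently the criticality hypothesis is not ``comfortable slack'' to be exploited elsewhere; it is consumed precisely inside this claim. The paper's argument decomposes over the value of $Z_{t+1}$,
\[
\E\bigl[\alpha^{Z_{t+1}+\mathrm{tail}}\mid\cE_t\bigr]
= \alpha\cdot\Pr[Z_{t+1}=1\mid\cE_t]\cdot\E\bigl[\alpha^{\mathrm{tail}}\mid\cE_t,Z_{t+1}=1\bigr]
+\Pr[Z_{t+1}=0\mid\cE_t]\cdot\E\bigl[\alpha^{\mathrm{tail}}\mid\cE_t,Z_{t+1}=0\bigr]\mcom
\]
bounds the first coefficient by $\alpha$ using $\Pr\leq 1$, and bounds the second coefficient by $\alpha$ using the combinatorial estimate $\Pr[Z_{t+1}=0\mid\cE_t]\leq (56)^{r-1}/\binom{\beta rc}{r-1}\leq \left(\tfrac{56}{\beta c}\right)^{r-1}\leq\alpha$. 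That estimate is exactly where the definition of a critical configuration enters (infrequent final occurrence, at least $\beta rc$ ungrouped elements remain, so all but $56^{r-1}$ of the $\binom{\beta rc}{r-1}$ completions of $s(\cE_t)$'s last group form a first-and-only occurrence, forcing $Z_{t+1}=1$). The two $\alpha$'s sum to the $2\alpha$ in the claim, and this factor $2\alpha<1$ is the per-critical-step gain that makes the whole lemma nontrivial. A correct proof of the claim must therefore use criticality to bound $\Pr[Z_{t+1}=0\mid\cE_t]$; it cannot be avoided by a pointwise peeling argument.
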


\begin{proof}

At a critical configuration $\cE_t$, the next group is the last occurrence of $s(\cE_t)$.
Recall that $s(\cE_t)$ is infrequent in that it has at most $7$ previous occurrences.
Moreover, each of its previous occurrences is grouped to an infrequent type (appearing less than 8 times).

There are at least $\beta rc$ ungrouped elements from which the remaining $r-1$ elements of the group are chosen.
For all but  at most $(56)^{r-1}$ group choices, the group contains a type $s'$ such that this is the first occurrence of $s$ with $s'$ in a group.

Therefore, for all but at most $(56)^{r-1}$ choices, the group sampled is its first and only occurrence.
In particular, this implies that for a critical configuration $\cE_t$,
$$ \Pr[Z_{t+1} = 0 | \cE_t] \leq \frac{(56)^{r-1}}{\binom{\beta
    rc}{r-1}} \leq \left( \frac{56}{\beta c} \right)^{r-1}$$
Finally, we have
\begin{align*}
\E[\alpha^{\sum_{i = t+1}^{cM}}] & = \Pr[Z_{t+1} = 1| \cE_t] \cdot \alpha \cdot \E[\alpha^{\sum_{i=t+2}^{cM} Z_{i}} | \cE_{t}, Z_{t+1} = 1] + \Pr[Z_{t+1} = 0| \cE_t] \cdot
  \E[\alpha^{\sum_{i=t+2}^{cM} Z_{i}} |\cE_{t}, Z_{t+1} = 0 ] \\
& \leq \alpha \cdot  \E[\alpha^{\sum_{i=t+2}^{cM} Z_{i}} | \cE_{t}, Z_{t+1} = 1] + \alpha \cdot  \E[\alpha^{\sum_{i=t+2}^{cM} Z_{i}} |\cE_{t}, Z_{t+1} = 0 ] \\
& \leq 2\alpha  \cdot \max_{\cE_{t+1} | \cE_t} \E[\alpha^{\sum_{i=t+1}^{cM}Z_{i+1}} | \cE_{t+1}]
\end{align*}

\end{proof}

Combining \prettyref{claim:maxalpha} and \prettyref{claim:criticalonallpath}, we have that

$$ \E[\alpha^{\sum_{i \in [cM]}z_i}] \leq (2\alpha)^{Mc (1-(4r + 1)\beta)/2} \mcom$$

which yields the following concentration bound for all $\delta > 0$,

$$ \Pr[ \card{ \oplus_{i} G_i } \leq \delta c M] <  (2\alpha )^{(1-(4r+1)\beta - 2\delta) cM/2} \leq \left( \frac{112}{\beta k} \right)^{ (1-(4r+1)\beta-2\delta) (r-1) Mc/2}$$

\end{proof}

The lemma below shows that in a simple graph formed by matchings with the evenness property, there cannot be too many more distinct vertices than distinct edges.
\restatelemma{lem:vtx-edge}
\begin{proof}
    We first prove the following claim:
    \begin{claim*}
	If each labeled edge appears with multiplicity exactly $2$, then $L \le m + c$.
    \end{claim*}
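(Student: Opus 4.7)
The plan is to reduce the claim to a standard fact about connected components of a simple graph, by passing to a quotient graph. Let $G'$ be the multigraph on vertex set equal to the $L$ labels used in $G$, obtained by collapsing every vertex of $G$ to its label; its edge multiset has size $|E(G)| = 2m$ since each of the $m$ distinct labeled edges contributes multiplicity exactly $2$. Let $H$ be the underlying simple graph obtained from $G'$ by collapsing parallel edges (but keeping self-loops, which arise when adjacent vertices in $G$ share a label). Then by construction $H$ has exactly $L$ vertices and exactly $m$ edges, one per distinct labeled edge.

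The key observation is that each cycle of $G$ projects to a closed walk in $G'$, which in particular lies inside a single connected component of $H$. Since every edge of $H$ is the image of at least one edge of $G$, and every edge of $G$ belongs to one of the at most $c$ cycles, every connected component of $H$ contains the image of at least one of those cycles. Hence the number of connected components of $H$ is at most $c$.

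To finish, I will apply the standard fact that a simple graph on $L$ vertices with $k$ connected components has at least $L - k$ edges: for each component one can take a spanning tree of its vertices, which requires exactly (component size) $-\,1$ non-loop edges, and self-loops only add to the count. Applied to $H$ with $k \le c$ components, this yields $m \ge L - k \ge L - c$, which rearranges to $L \le m + c$, proving the claim.

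The main subtlety, rather than a real obstacle, is handling self-loops in $H$ correctly: when two $G$-adjacent vertices share a label, the corresponding labeled edge becomes a loop at that label, but this loop still counts toward $m$ and still prevents the label from being an isolated vertex in $H$. Once this bookkeeping is in place, the chain of inequalities $L \le m + (\text{components of }H) \le m + (\text{cycles of }G) \le m + c$ is immediate, and the subsequent induction on edge multiplicities handling the case of general even multiplicities (needed for the full \pref{lem:vtx-edge}) can build on this base case.
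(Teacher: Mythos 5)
Your argument is correct, and it is genuinely different from the paper's. The paper proves this base case by a double induction on the number of components $c$ and the number of vertices: it locates a vertex whose label appears once, then either deletes a 2-cycle (decreasing $c$) or contracts a vertex in a longer cycle (decreasing $L$ by one while preserving $m$ and $c$). You instead pass to the quotient multigraph $G'$ on label vertices, collapse parallel edges to get a simple graph $H$ (loops allowed) with $L$ vertices and $m$ edges, observe that the at most $c$ cycle images are connected and cover $H$ so $H$ has at most $c$ components, and then invoke the spanning-forest bound $|E| \ge |V| - (\text{\# components})$ after discarding loops. This is a cleaner, non-inductive structural argument: it makes explicit the role of $c$ as a bound on the cyclomatic data rather than hiding it in an induction, and the only bookkeeping is the loop case, which you handle correctly since removing loops changes neither the component count nor the direction of the edge inequality. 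The paper's inductive route is more hands-on but has the modest advantage of being phrased in exactly the vocabulary (vertex deletion/contraction) used in the reduction from general even multiplicities to multiplicity~2 that follows; your argument composes with that reduction just as well, since that reduction is a separate surgery argument that only invokes this claim as a black box.
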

    \begin{proof}
	In this case, there are exactly $2m$ edges and exactly $2m$ vertices.
We proceed by induction on $c$ and $m$.
In the base case, we have $c = 1$ component with $2$ vertices, in which case we have at most $2$ distinct labels on the vertices, confirming the claim.

Assuming the claim for $c\ge 1$ components and $2m\ge 2$ vertices, consider an instance on $2m + 2$ vertices.
If all labels appear $\ge 2$ times, we are done, since there are $2m + 2$ vertices and thus at most $L\le m+1$ labels.
Otherwise, locate a vertex $v$ whose label has multiplicity $1$.

If $v$ is in a cycle of length $2$, remove $v$ and its neighbor from the graph, obtaining a smaller instance with $L'$ labels, $c'$ components, and $m'$ distinct edge types, with $L' + 2 \ge L$, $c' = c-1$, and $m' = m$.
By the induction hypothesis, $L' \le m' + c' = m + c-1$, and therefore $L \le m+1 + c$, as desired.

If $v$'s cycle has length $> 2$, both $v$'s vertex neighbors must have the same label in order for the edges incident on $v$ to appear twice.
We remove $v$ and identify its neighbors, obtaining an instance with $L' + 1 = L$, $m' = m$, $c' = c$.
Appealing to the induction hypothesis, we have $L' \le m' + c$, from which we conclude that $L \le m + 1 + c$, as desired.
\end{proof}
Now, we reduce our lemma to the above case.
Say an edge appears with even multiplicity $ \mu > 2$, and that the labels of the edge are $(a,b) \in [n]^2$.
We will remove the occurrences of this edge, and put the graph segments back together.
When we remove all occurrences of the edge $(a,b)$, we get $3$ kinds of graph segments: paths from $a$--$b$, paths from $a$--$a$, and paths from $b$--$b$.
Since $a$, $b$ each have to appear $\mu$ times, we can form a matching between segments of type $a$--$b$, gluing them together at the $a$ endpoint to get a $b$--$b$ segment.
Now, we make one cycle by gluing together $a$--$a$ segments, and a separate cycle by gluing together $b$--$b$ segments.
Our number of distinct edges has decreased by $1$, and our number of cycles has increased by at most $1$, since we broke up at least one cycle to remove the edge $(a,b)$.
We recursively apply this process to our instance, until we reach an instance in which there are only edges of multiplicity $2$, never increasing the quantity $m+c$.
In conjunction with our above claim, the conclusion follows.
\end{proof}

\section{Refuting Random $k$-XOR Instances}\label{sec:lin}

In this section, we give our algorithm for refuting random $k$-XOR instances.
In \pref{sec:lin-even}, we describe the algorithm for even $k$; in \pref{sec:lin-odd}, we describe the algorithm for odd $k$.
We first recall the problem:

\begin{definition}[Random $k$-XOR with density $\alpha = pn^{(k-1)/2}$]
    A \emph{random instance of $k$-XOR with density $\alpha = pn^{(k-1)/2}$} is a formula $\Phi$ on $n$ variables $x \in \{\pm 1\}^n$, sampled so that for each $S \in [n]^k$:
    \begin{compactitem}
    \item Independently with probability $p$, add constraint $C_S: \prod_{i \in S} x_i = \eta_S$, for $\eta_S$ a uniformly random Rademacher variable.
    \item Otherwise, with probability $1-p$, add no constraint.
    \end{compactitem}
    We let $m \approx pn^{k}$ be the number of constraints, and for any assignment $x \in \{\pm 1\}^n$, $P_{\Phi}(x)$ is the fraction of constraints satisfied by $x$.
\end{definition}

\begin{problem}[Strongly refuting random $k$-XOR]
    Given a random $k$-XOR instance $\Phi$, certify with high probability over the choice of $\Phi$ that for all assignments $x \in \{\pm 1\}^n$,
    \[
	P_{\Phi}(x) \le \frac{1}{2} + \delta + o(1),
    \]
for some constant $\delta \in [0,\half)$, where $P_{\Phi}(x)$ is the fraction of $\Phi$'s constraints satisfied by $x$.
\end{problem}

As described in \pref{sec:tech-over}, there is a natural random order-$k$ tensor that we can identify with any $k$-XOR instance $\Phi$.
Given a $k$-XOR instance $\Phi$ with constraints $C_1,\ldots,C_m$, form the tensor $\bT_{\Phi}$ as follows: for each constraint $C_i: \prod_{j \in S_i} x_i = \eta_{S_i}$, set the entry $\bT_{S_i} = \eta_{S_i}$; in all other entries place a $0$.
We then have that for any assignment $x \in \{\pm 1\}^n$,
\[
    \iprod{\bT_{\Phi},x^{\otimes k}} = \sum_{i \in [m]} \eta_{S_i} \cdot \prod_{j \in S_i} x_j = m \cdot \left( P_{\Phi}(x) -  \frac{1}{2}\right)
\]
That is, the inner product $\iprod{\bT_{\Phi}, x^{\otimes k}}$ gives the difference between the number of constraints $x$ satisfies and the number of constraints $x$ violates.
Our strong refutation algorithm will be based on showing that
\begin{equation} \label{eq:xorref}
    \left|\Iprod{\bT_{\Phi}, x^{\otimes k}}\right| \le (\delta + o(1)) \cdot m \quad \forall x \in \{\pm 1\}^n,
\end{equation}
for a constant $\delta$ arbitrarily close to $0$.

From \pref{eq:xorref}, it is clear that a good bound on $\|\bT_{\Phi}\|_{inj}$ would give a refutation algorithm, and so we could hope that our algorithms for bounding tensor norms would suffice.
However, when the probability of sampling a constraint $p \le n^{-k/2}$, the tensor $\bT_{\Phi}$ becomes sparse enough that its norm is maximized by sparse vectors, so that $\|\bT_{\Phi}\|_{inj} \approx 1$.
We are only interested in balanced vector $x \in \{\pm 1\}^n$, and so this is a poor upper bound--it will only let us certify that $P_{\Phi}(x) \le \frac{1}{2} + \frac{n^{k/2}}{m} \ge 1$.

So our algorithm for the case of $k$-XOR is almost identical to our algorithm for bounding tensor norms, but with an additional twist to get rid of the sparse vectors.
We form our certificate as we did in the tensor norm algorithm: we flatten $\bT_{\Phi}$ to a matrix $T$, then take the $d$th Kronecker power of $T$, and we average over rows and columns corresponding to permutations of the same index set.
But now, there is one additional step: {\bf we delete any row or column indexed by a multiset $S \in [n]^{kd/2}$ which contains an element $i$ with multiplicity greater than $O(\log n)$.}

It is not difficult to see why this should help: supposing we started with a sparse vector, say the standard basis vector $e_1 \in \R^n$, this will ensure that $e_1^{\otimes kd/2}$ has $0$ projection onto our matrix.
On the other hand, the choice of $O(\log n)$ as our upper bound on the multiplicity makes sense, since we are eliminating an $o(1)$-fraction of the Frobenius norm of the certificate in this way, even when $d \ge n^{1/2}$--if we were to delete all rows and columns in which an element appears with multiplicity $\ge 2$, then once $d = n^{1/2}$ we would be deleting a constant fraction of the rows and columns, by the birthday paradox.

This introduces some technicalities in the analysis--in particular, once we delete these rows and columns, it is no longer obvious that we are working with a valid relaxation of $\iprod{\bT_{\Phi}, x^{\otimes k}}$ over $x \in \{\pm 1\}^n$.
But as before, the main theorems of this section will have to do with bounding the norm of our matrix certificate--arguing that the matrix certificate is valid will be straightforward.

We begin by detailing our algorithm for even $k$, then give the somewhat more involved analysis for odd $k$ (the additional complication introduced by the lack of a natural matrix flattening for odd-order tensors).

\subsection{Even $k$-XOR} \label{sec:lin-even}

We begin by describing our matrix certificate for this case, and establishing an upper bound on its norm--as mentioned above, this is the main result of this section.
Later, in \pref{sec:valid-cert}, we will show how to use the matrix to get a valid certificate.

\begin{algorithm}[Even $k$-XOR Certificate at level $d$]\label{alg:lin-sub}~\\
    {\bf Input:} A $k$-XOR instance $\Phi$ for even $k$ on $n$ variables and $m$ clauses.
    Parameters $d \in \N$.
    \begin{compactenum}
    \item Form the tensor $\bT_{\Phi}$ from $\Phi$ as described above (see \pref{eq:xorref}).
    \item Take the natural $n^{k/2} \times n^{k/2}$ matrix flattening $T$ of $\bT_{inj}$, and take the Kronecker power $T^{\otimes d}$
	\item Letting $\hat \cS_{dk/2}$ be the set of all permutation
		matrices that perform the permutations
		corresponding to $\cS_{dk/2}$ on the rows and columns of $T$, form
	\[
	    C_{(d)}\defeq \E_{\Pi,\Sigma \in \hat\cS_{dk/2}}\left[\Pi T^{\tensor d} \Sigma \right]\mper
	\]
    \item Zero out any row or column of $C_{(d)}$ indexed by a multiset in $[n]^{kd/2}$ containing more than $10 \log n$ copies of any $i \in [n]$.
    \end{compactenum}
    {\bf Output:} The value $\|C_{(d)}\|$.
\end{algorithm}
The following theorem gives a bound on the value output by \pref{alg:lin-sub}.
\begin{theorem}
    \torestate{\label{thm:lin-upper}
    Let $k,n,d,k \in \N$, so that $d\log n \ll n$, $k$ is even.
    Let $\Phi$ be a random instance of $k$-XOR on $n$ variables with $\Theta(p n^k)$ clauses (so each constraint is sampled uniformly and independently with probability $p$).
    Let $C_{(d)}$ be the matrix formed from the instance $\Phi$ as described in \pref{alg:lin-sub}.
    Then if $p \cdot d^{(k/2-1)}n^{k/2} > 1$, there is a constant $c_k$ depending on $k$ such that with high probability,
    \[
	\|C_{(d)}\|^{1/d} \le
	\left( c_k \log^{2k} n \cdot \frac{p^{1/2}n^{k/4}}{d^{(k-2)/4}}\right) \mper
    \]
}
\end{theorem}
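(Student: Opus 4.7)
The plan is to adapt the template from the proof of \pref{thm:D-upper} (the even-order tensor norm case) to the sparse Bernoulli-Rademacher setting of random $k$-XOR, handling two extra complications: (i) each entry of the flattened tensor $T$ is nonzero only with probability $\sim p$, so $\E[T_{ij}^{2r}] = p$ for every $r \ge 1$ rather than a universal constant, and (ii) the algorithm zeros out the rows and columns of $C_{(d)}$ indexed by multisets containing any $i \in [n]$ with multiplicity exceeding $10 \log n$. Call such multisets \emph{admissible}, and let $\cV_{\mathrm{adm}}$ denote the set of length-$2\ell$ sequences of admissible multisets in $[n]^{kd/2}$.

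First I would invoke the trace power method (\pref{prop:tpm}) and expand $\E_{\Phi}\bigl[\Tr\bigl((C_{(d)} C_{(d)}^\top)^\ell\bigr)\bigr]$ for $\ell = \Theta(\log n)$ as a sum over hypergraph cycles of length $2\ell$ supported on $\cV_{\mathrm{adm}}$. Because the entries of $\bT_{\Phi}$ are symmetrically distributed about $0$, only configurations in which every distinct hyperedge appears with even multiplicity contribute, and each such \emph{even} configuration contributes exactly $p^{s(H)}$, where $s(H)$ denotes the number of distinct hyperedges in $H$. The sum thus reduces to
\[
    \E_{\Phi}\left[\Tr\left((C_{(d)} C_{(d)}^\top)^\ell\right)\right]
    = \sum_{V \in \cV_{\mathrm{adm}}} \E_{H}\left[p^{s(H)}\,\Ind[(V,H)\text{ even}]\right],
\]
which should be compared with the analogous sum in \pref{thm:D-upper}, where the $p^{s(H)}$ weight was absent.

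Next, I would reuse the two-step hypergraph sampling from \pref{sec:tech-over}: sample a random perfect simple-edge matching $E$ between each consecutive pair $I_j, I_{j+1}$, and then group $k/2$-tuples of simple edges into hyperedges. Applying \pref{lem:graph-from-hgraph-nocount2} lets us factor $\Pr[(V,H)\text{ even}]$ through $\Pr[(V,E)\text{ even}]\cdot \Pr[(V,H)\text{ even}\mid (V,E)\text{ even}]$. By \pref{lem:vtx-edge}, evenness of $(V,E)$ forces all but an $O(\beta)$ fraction of vertex and edge labels to be repeated, so \pref{lem:even-matching} yields a conditional probability of $(c_\beta/d)^{(1-O(\beta))(k-2)d\ell/2}$, supplying the desired $d^{-(k-2)d\ell/2}$ factor, while \pref{prop:simple-graph-bound-1} bounds $\sum_V \Pr[(V,E)\text{ even}]$ by $n^{kd\ell/2}$ up to constants.

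The main obstacle, and where the argument genuinely diverges from the subgaussian proof, is incorporating the extra weight $p^{s(H)}$. The target trace bound carries a factor of $p^{d\ell}$, one $p$ per distinct hyperedge in a ``typical'' even configuration, so I must argue that even hypergraphs with substantially fewer than $d\ell$ distinct hyperedges contribute negligibly. I would handle this with an encoding argument in the spirit of Füredi-Komlós \cite{FK81}: traverse the cycle step by step, classifying each hyperedge as either \emph{new} (contributing a factor of $p$ balanced against the entropy cost of specifying its vertex content) or \emph{repeated} (no $p$ gain, but constrained entropy), and use admissibility to cap the re-indexing entropy at $O(\log n)$ bits per vertex label. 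The $10\log n$ multiplicity restriction is essential here, since it prevents pathological configurations in which $s(H)$ is tiny while $p^{s(H)}$ would otherwise be large; the cost is only a $\log^{O(k)} n$ prefactor, absorbed into the $\log^{2k} n$ term. Balancing $\beta = \Theta(\log d/\log n)$ to trade off the two contributions exactly as in the tensor case, and applying \pref{prop:tpm} with $\ell = \Theta(\log n)$, yields the claimed bound $\|C_{(d)}\|^{1/d} \le c_k \log^{2k} n \cdot p^{1/2} n^{k/4}/d^{(k-2)/4}$.
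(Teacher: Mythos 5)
Your high-level outline is correct --- trace power method, two-step sampling of hyperedges from simple matchings, and a F\"uredi--Koml\'os encoding to handle the $p^{s(H)}$ weight --- and you correctly identify the $p^{s(H)}$ weighting as the place where the subgaussian argument breaks down. But the specific assembly of lemmas you propose does not close. You want to reuse \pref{prop:simple-graph-bound-1} (bounding $\sum_V \Pr[(V,E)\text{ even}]$ via a Rademacher matrix norm) and \pref{lem:even-matching} (bounding the number of odd $r$-groups) to recover the tensor-norm rate, and then bolt on an encoding argument to recover the factor $p^{d\ell}$. The problem is that these two steps cannot be decoupled: the exponent of $p$ is the number of distinct hyperedges $M$ in $H$, and the exponent of $n$ is governed by the number of distinct \emph{vertex} labels, and these are tightly correlated (via \pref{lem:vtx-edge}). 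A bound that sums over all $V$ without recording the number of distinct edges, such as \pref{prop:simple-graph-bound-1}, throws away exactly the information the $p^M$ weight depends on; and \pref{lem:even-matching} bounds the probability of having few odd groups, which is not the same quantity as the probability of ending up with a prescribed number $\tau$ of distinct hyperedges.

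The paper's proof replaces both of these with finer, edge-count-parameterized statements: \pref{prop:multilin-graph-bound}, an FK-style encoding that counts vertex configurations yielding a graph with exactly $\ty$ distinct labeled edges with a prescribed multiplicity profile (this is where the $R$-multilinearity of rows/columns enters to cap the $(R\wdth)^{\dots}$ entropy terms), and \pref{lem:matching-probability-2}, a second encoding that bounds $\Pr[(V,H)\text{ even with }\tau\text{ distinct hyperedges} \mid (V,E)\text{ even}]$ as a function of $\tau$, $d$, and $R$. The decomposition then carries $M$ (and the intermediate simple-edge count $m \le kM/2$) through every step, sums the geometric series $\sum_M (pd^{k/2-1}n^{k/2})^M$, and uses the hypothesis $p\,d^{k/2-1}n^{k/2} > 1$ to argue that $M = d\ell$ dominates. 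So your instinct that an FK encoding is the key missing ingredient is correct, but you need it to replace, not supplement, the black-box counting lemmas from the dense case; as written, your proposal would yield a bound with only a single factor of $p$ rather than $p^{d\ell}$, which is far too weak for the sparse regime $p \ll n^{-k/2}$ that the theorem targets.
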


We will prove the theorem below, in \pref{sec:lin-norm-sec}.
First, we will see how to use this certificate, with the deleted high-multiplicity rows and columns, to strongly refute $k$-XOR instances.

\subsubsection{Validity of certificate with deleted rows and columns}\label{sec:valid-cert}
When we zero out the high-multiplicity rows and columns in \pref{alg:lin-sub},
\[
    \Iprod{\bT_{\Phi}, x^{\otimes k}}^d  = (x^{\otimes dk/2})^\top \left( C_{(d)} + C_{\ge} \right) x^{\otimes dk/2},
\]
where $C_{\ge}$ is a matrix containing only the zeroed out rows and columns.
So our upper bound on $\|C_{(d)}\|$ from \pref{thm:lin-upper} is not enough.
It is not hard to bound the $\ell_1$-norm of $C_{\ge}$.
However, because for our values of $p$, $\|C_{(d)}\|$ is close to $0$, the $\ell_1$-norm bound is too costly when we try to bound $\iprod{\bT_{\Phi}, x^{\otimes k}}$.
For this reason, we will work with $P_{\Phi}(x)$, the fraction of satisfied constraints, which is bounded away from $0$.
We will relate $(P_{\Phi}(x))^d$ to the matrix norms of $C_{1},\ldots C_{d}$.

Lets write
\[
    P_{\Phi}(x) = \E_{i \sim [m]} \left[P_i(x) \right] = \E_{i\sim[m]} \left[\frac{1}{2} (1+C_i(x))\right],
\]
where $P_1(x),\ldots,P_m(x)$ are the $0-1$ valued predicates of the instance $\Phi$, and $C_1(x),\ldots,C_m(x)$ are the $\pm 1$-valued predicates of the instance $\Phi$.
We have that
\[
    (P_{\Phi}(x))^d = \E_{i_1,\ldots,i_d \sim [m]^d}\left[\prod_{\ell = 1}^d P_{i_\ell}(x)\right]\mper
\]
We will prove that the quantity above is not changed very much if we remove sets $i_1,\ldots,i_d$ corresponding to high-multiplicity rows and columns.

\begin{proposition}\label{prop:formula}
    Let $\Phi$ be a random $k$-XOR formula in which each clause is sampled independently with probability $p$.

    Let $\cC^d_{low} \subset [m]^d$ be the set of all ordered multisets of clauses $C_{i_1},\ldots,C_{i_d}$ from $\Phi$ with the property that if we form two multisets of variables $I,J \in [n]^{dk/2}$ with $I$ containing the first $k/2$ variables of each $C_{i_\ell}$ and $J$ containing the last $k/2$ variables of each $C_{i_\ell}$, then $I,J$ are both low-multiplicity multisets, in that both have no element of $[n]$ with multiplicity $\ge 100\log n$.

    Suppose that no variable appears in more than $m_{\max}$ clauses.
    Then if $d \ll n$ and
$d k m_{\max} < 200 \epsilon m \log n$,
    \[
	P_{\Phi}(x) \le \left(\E_{i_1,\ldots,i_d \sim \cC_{low}^d}\left[\prod_{\ell=1}^d P_{i_\ell}(x)\right]\right)^{1/d} + \epsilon
    \]
for all $x \in \{\pm 1\}^n$ with high probability.
    Furthermore when $p \ge 200 \frac{\log n}{n^{k-1}}$, we have that $\epsilon = o(1)$ with high probability.
\end{proposition}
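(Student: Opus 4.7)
The plan is to reduce the claim to a combinatorial probability estimate and then control that estimate via Chernoff. First I would write
\[
    P_\Phi(x)^d = \E_{(i_1,\ldots,i_d)\sim [m]^d}\left[\prod_{\ell=1}^d P_{i_\ell}(x)\right]
\]
using independence of the uniform draws, and then decompose the average according to whether the sampled tuple lies in $\cC^d_{low}$. Since $\prod_\ell P_{i_\ell}(x) \in \{0,1\}$, this yields
\[
    P_\Phi(x)^d \le \E_{(i_1,\ldots,i_d)\sim\cC^d_{low}}\left[\prod_{\ell=1}^d P_{i_\ell}(x)\right] + \Pr_{(i_1,\ldots,i_d)\sim [m]^d}\left[(i_1,\ldots,i_d)\notin \cC^d_{low}\right].
\]
Applying the concavity-based inequality $(a+b)^{1/d} \le a^{1/d} + b^{1/d}$ for $a,b\ge 0$ reduces the task to showing that $\Pr_{[m]^d}[(i_1,\ldots,i_d)\notin\cC^d_{low}] \le \epsilon^d$.

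To bound this probability, I would fix a variable $v \in [n]$ and let $Z_v$ denote the multiplicity of $v$ in the multiset $I$ (the analogous argument handles $J$). Writing $Z_v = \sum_{\ell=1}^d f_\ell(v)$, where $f_\ell(v) \in \{0,1,\ldots,k/2\}$ counts the occurrences of $v$ in the first half of $C_{i_\ell}$, the random variables $f_\ell(v)$ are independent because the $i_\ell$ are drawn independently from $[m]$, and each has mean at most $k m_{\max}/(2m)$, since $v$ appears in at most $m_{\max}$ clauses and can occupy at most $k/2$ first-half positions per clause. Under the hypothesis $dk m_{\max} < 200\epsilon m \log n$, the total mean satisfies $\mu_v \le 100\epsilon \log n$. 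A Chernoff bound for sums of independent variables bounded by $k/2$ then gives
\[
    \Pr[Z_v \ge 100 \log n] \le \left(\frac{e\mu_v}{100\log n}\right)^{200\log n/k} \le (e\epsilon)^{200\log n/k},
\]
and a union bound over the $2n$ events (one per variable, per side) yields the desired tail bound after a suitable choice of $\epsilon$.

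For the ``Furthermore'' part, I would observe that when $p \ge 200\log n/n^{k-1}$, the number of clauses containing any fixed variable $v$ has expectation $\Theta(kpn^{k-1}) \ge \Omega(k\log n)$, so a standard multiplicative Chernoff argument plus a union bound over $v \in [n]$ shows that $m_{\max} = O(kpn^{k-1}) = O(km/n)$ with high probability over $\Phi$. Substituting into the hypothesis $dkm_{\max} < 200\epsilon m \log n$ reduces it to $dk^2/n \ll \epsilon \log n$, which, together with $d \ll n$, allows us to take $\epsilon = o(1)$.

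The main technical obstacle is making the Chernoff estimate strong enough to beat $\epsilon^d$ as $d$ grows: because each $f_\ell(v)$ is bounded by $k/2$ rather than by $d$, the Chernoff exponent scales with $(\log n)/k$, and one has to calibrate the multiplicity threshold and the constant in the hypothesis so that the resulting tail bound is comfortably smaller than $\epsilon^d/n$ in the regime of interest. The rest of the argument--the expectation identity, the split by $\cC^d_{low}$, the concavity inequality, and the union bound--is standard.
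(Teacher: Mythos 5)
Your opening decomposition is fine, and your Chernoff analysis correctly identifies the difficulty, but the difficulty you flag at the end is not a matter of "calibration"---it is a hard obstruction that makes this route fail for the values of $d$ the proposition actually needs. Your plan requires $\Pr_{[m]^d}[(i_1,\dots,i_d)\notin\cC_{low}^d]\le \epsilon^d$. But as you compute, the exponent in your Chernoff bound is $200\log n/k$---a quantity that depends only on $n$ and $k$, not on $d$, because the threshold $100\log n$ is fixed and each $f_\ell(v)$ is bounded by $k/2$. So for any fixed $\epsilon<1$ (or even $\epsilon=\Theta(1/\log n)$, which is what the proposition ultimately takes), your tail bound after the union bound is roughly $n^{-\Theta(\log(1/\epsilon)/k)}$, while $\epsilon^d$ is $\exp(-\Theta(d\log(1/\epsilon)))$. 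These are comparable only when $d=O(\log n/k)$; for $d=n^{\delta}$ (the regime the paper needs, since \prettyref{thm:refutation} takes $d$ up to $\ll n$), $\epsilon^d$ is super-exponentially smaller than any achievable bound on $\Pr[A^c]$, and no choice of constants closes the gap. Intuitively, once $d\gg\log n$, a uniformly random tuple from $[m]^d$ has a non-negligible (merely polynomially small) chance of repeating some variable $\ge 100\log n$ times, so the additive error term cannot be driven below $\epsilon^d$.

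The paper sidesteps this by arguing \emph{multiplicatively} rather than additively. Rather than comparing $\E_{[m]^d}$ and $\E_{\cC_{low}^d}$ via a bad-event decomposition, it samples a uniform tuple from $\cC_{low}^d$ sequentially, one clause at a time, and shows (Claim \ref*{claim:exclusions}) that at step $t$ only about $t\cdot\frac{km_{\max}}{200\log n}$ of the $m$ clauses are excluded. This yields a conditional lower bound $\E[X_{t+1}\mid C_1,\dots,C_t]\ge\bigl(P_\Phi(x)-t\tfrac{km_{\max}}{200 m\log n}\bigr)\cdot X_t$, and iterating gives $\E_{\cC_{low}^d}[\prod_\ell P_{i_\ell}(x)]\ge(P_\Phi(x)-\epsilon)^d$ directly. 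The multiplicative loss per step is bounded uniformly by $\epsilon$ regardless of $d$, so the final bound degrades only to $(P_\Phi(x)-\epsilon)^d$ rather than to $P_\Phi(x)^d - \epsilon^d$; taking $d$th roots then works for all $d\ll n$. This is the missing idea: replace the union-bound/Chernoff control on the global bad event with a per-step conditional-expectation bound. Your "Furthermore" paragraph (that $m_{\max}=O(pn^{k-1})$ w.h.p., hence $\epsilon=o(1)$) is correct and matches the paper.
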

\begin{proof}
    Let $m_{\max}$ be an upper bound on the number of clauses any variable $x_i$ appears in the instance $\Phi$.
    We sample a uniform element  $\cC\sim\cC_{low}^d$, $\cC = C_1,\ldots,C_d$ in the following way:
    \begin{compactitem}
	\item For $t = 1,\ldots,d$:
	    Let $\cA_t \subset \Phi$ be the set of clauses such that for any $C' \in \cA$, the multiset $C_1,\ldots,C_{t-1}, C'$ is not excluded from $\cC_{low}^{t}$.
		Choose a uniformly random $C \sim \cA_t$ and set $C_t := C$, adding $C$ to $\cC$.
    \end{compactitem}
	This sampling process clearly gives a uniformly random element of $\cC^d_{low}$.
    \begin{claim}\label{claim:exclusions}
	At step $t+1$ there are at least $m - t\frac{k\cdot m_{\max}}{200\log n}$ clauses that can be added.
    \end{claim}
    \begin{proof}
	In order to exclude any variable $i \in [n]$, we must add at least $200\log n$ copies of $i$.
	Further, to exclude $\ell$ distinct variables in $[n]$, at least $200\log n$ copies of each variable, for a total of $200\ell\log n$ variables, which requires adding at least $200\ell\log n/k$ clauses.
	If $\ell$ distinct variables are excluded, then at most $\ell \cdot m_{\max}$ clauses are excluded.
	The claim now follows.
    \end{proof}

    Now, define the random variable $X_{t} = \prod_{j=1}^t P_{j}(x)$ to be the value of $x$ on $\cC_t$.
    We apply \pref{claim:exclusions}, along with the observation that the total number of satisfied clauses can only drop by $1$ for each clause removed regardless of the assignment $x$, to conclude that
    \begin{align*}
	\E[X_{t+1}|C_1,\ldots,C_t]
	\ge \left(P_{\Phi}(x) -t \frac{k\cdot m_{\max}}{m \cdot 200 \log n}\right) \cdot X_{t}
    \end{align*}
    From this we have that
    $ \E[X_t]
    \ge \left(P_{\Phi}(x) -  t\frac{k m_{\max}}{200 m\log n}\right) \cdot \E[X_{t-1}]$
    from which we have that as long as $d k m_{\max} \le \epsilon 200 m\log n$,
    \[
	\E[X_d] \ge \prod_{t=1}^{d} \left(P_{\Phi}(x) - t\frac{k m_{\max}}{200 m\log n}\right)
	\ge \left(P_{\Phi}(x) - \epsilon \right)^d \mper
    \]
Which by definition of $X_d$ gives us our first result.

    Now, we can establish that $dk m_{\max} \le \epsilon \cdot 200 m \log n$ with high probability.
    A Chernoff bound implies that when $p \ge 200\log n/n^{k-1}$, $ 2 pn^k \ge m \ge pn^k/2$ with probability at least $1-2\exp(-pn^k/8)$, and that $m_{\max} \le 2pn^{k-1}$ with probability at least $1 - \exp(-pn^{k-1}/2)$, and so by a union bound and using the assumption that $p n^{k-1} \ge 200\log n$, we have our result by taking $\epsilon = \Theta( 1/\log n)$.
\end{proof}

Now, we will relate the right-hand-side of \pref{prop:formula} to the matrices from \pref{alg:lin-sub}.
We recall that given a $k$-XOR instance $\Phi$, $\cC^d_{low} \subset [m]^d$ is the set of all ordered multisets of clauses $C_{i_1},\ldots,C_{i_d}$ from $\Phi$ with the property that if we form two multisets of variables $I,J \in [n]^{dk/2}$ with $I$ containing the first $k/2$ variables of each $C_{i_\ell}$ and $J$ containing the last $k/2$ variables of each $C_{i_\ell}$, then $I,J$ are both low-multiplicity multisets, in that both have no element of $[n]$ with multiplicity $\ge 100\log n$.
By \pref{prop:formula},
\begin{align}
    \left(P_{\Phi}(x) - o(1)\right)^d
    &\le \E_{i_1,\ldots,i_d \sim \cC_{low}^d}\left[\prod_{\ell=1}^d P_{i_\ell}(x)\right]
    ~=~ \E_{i_1,\ldots,i_d \sim \cC_{low}^d}\left[\prod_{\ell=1}^d \frac{1}{2}(1+C_{i_\ell}(x))\right] \nonumber
    \intertext{and expanding the product on the right and applying the symmetry of the uniform distribution on $\cC_{low}^{d}$,}
    &= \left(\frac{1}{2}\right)^d \sum_{j = 0}^d \binom{d}{j} \E_{i_1,\ldots,i_j \sim \cC_{low}^j}\left[\prod_{\ell=1}^d C_{i_\ell}(x)\right] \nonumber
    \intertext{and by definition, for any assignment $x \in \{\pm 1\}^n$,}
    &= \left(\frac{1}{2}\right)^d \sum_{j = 0}^d \binom{d}{j} \frac{1}{|\cC_{low}^j|} \cdot (x^{\otimes jk/2})^{\top} C_{(j)} (x^{\otimes jk/2}),\nonumber
    \intertext{
	where $C_{(j)}$ is the matrix output by \pref{alg:lin-sub} when $d \leftarrow j$. We won't get a good bound on $\|C_{(j)}\|$ when $j$ is too small, but we can take}
    &=
    \left(\frac{1}{2}\right)^d \sum_{j = 0}^t \binom{d}{j}
    ~+~ \left(\frac{1}{2}\right)^d \sum_{j = t+1}^d \binom{d}{j} \frac{1}{|\cC_{low}^j|} \cdot (x^{\otimes jk/2})^{\top} C_{(j)} (x^{\otimes jk/2})\mper \label{eq:upbd}
\end{align}
We'll take $t = \alpha \cdot d$ for some small constant $\alpha$, so that the sum on the left is small, and the sum on the right we will bound by applying \pref{thm:lin-upper}, our upper bound on $\|C_{(j)}\|$.
We will also need a bound on $|\cC_{low}^j|$, which we can easily get by modifying our proof of \pref{prop:formula}:
\begin{lemma}\label{lem:sizeofmultilinear}
    If $\Phi$ is a random $k$-XOR instance on $n$ variables with $m$ clauses such that no variable participates in more than $m_{\max}$ clauses, then so long as $d\ll n$ and $d k m_{\max} \le 200 \epsilon m \log n$,
 \[
     |\cC_{low}^d| \ge (1-\epsilon)^dm^d\mper
 \]
 Furthermore, when $p \ge 200\frac{\log n}{n^{k-1}}$, we can take $\epsilon = o(1)$ with high probability.
\end{lemma}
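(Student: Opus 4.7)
The plan is to mimic the sampling argument used in the proof of \pref{prop:formula}, but now use it as a \emph{lower bound on the number} of valid sequences rather than as a conditional expectation bound. Concretely, I would sample an ordered tuple $(C_1,\dots,C_d) \in \cC_{low}^d$ one clause at a time: at step $t+1$, having already chosen $C_1,\dots,C_t$, let $\cA_{t+1}$ be the set of clauses $C \in \Phi$ such that $(C_1,\dots,C_t,C)$ still satisfies the low-multiplicity condition on both the ``left'' multiset of variables and the ``right'' multiset of variables. By the counting argument already carried out in \pref{claim:exclusions} (exactly the same argument works here, since it is purely combinatorial and does not use the assignment $x$), we have $|\cA_{t+1}| \ge m - t \cdot \frac{k \cdot m_{\max}}{200 \log n}$.

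Multiplying these lower bounds over $t = 0,1,\dots,d-1$ yields
\[
    |\cC_{low}^d| \;\ge\; \prod_{t=0}^{d-1}\!\left(m - t\cdot \frac{k\, m_{\max}}{200 \log n}\right) \;\ge\; \left(m - d\cdot \frac{k\, m_{\max}}{200 \log n}\right)^{\!d}.
\]
Under the assumption $d k m_{\max} \le 200 \epsilon m \log n$, the quantity subtracted in each factor is at most $\epsilon m$, so the right-hand side is at least $(1-\epsilon)^d m^d$, giving the first claim.

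For the ``furthermore'' part, I would reuse the Chernoff bound estimates from the end of the proof of \pref{prop:formula}: when $p \ge 200\log n / n^{k-1}$, with high probability $m \ge p n^k /2$ and $m_{\max} \le 2 p n^{k-1}$. Substituting these into the ratio
\[
    \frac{d k m_{\max}}{200\, m \log n} \;\le\; \frac{d k \cdot 2 p n^{k-1}}{200 \cdot (p n^k/2) \log n} \;=\; \frac{2 d k}{100\, n \log n},
\]
which is $o(1)$ whenever $d \ll n$. Hence one may take $\epsilon = o(1)$.

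I do not anticipate a serious obstacle: the entire combinatorial exclusion bookkeeping has already been established in \pref{claim:exclusions}, and the only novelty here is applying it multiplicatively rather than in expectation. The one point to be careful about is that $\cA_{t+1}$ depends on the already-chosen clauses, but since the lower bound $m - t \cdot k m_{\max}/(200\log n)$ is uniform over all choices of $(C_1,\dots,C_t)$, multiplying through is legitimate and there is no subtle dependence issue.
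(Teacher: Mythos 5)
Your argument is correct and is exactly the route the paper gestures at: the paper's proof of this lemma is a one-line remark that one should repeat the sampling construction from \pref{prop:formula} and use \pref{claim:exclusions} to lower-bound the number of admissible clauses at each step, then multiply (equivalently, bound the probability of choosing a bad clause), which is precisely what you do. Your handling of the prefix-dependence of $\cA_{t+1}$ via a uniform lower bound, and the Chernoff estimates for the ``furthermore'' part, also match the paper's intent.
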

The proof proceeds exactly as the proof of \pref{prop:formula}, but instead of bounding the decrease in the value as each clause is added, one bounds the probability that a clause is chosen which will make the multiplicity of some index too high.

We are now ready to prove that computing the norm of $O(d)$ matrices $C_{(\alpha d)},\ldots,C_{(d)}$ will give us a strong refutation algorithm for random $k$-XOR.
This concludes the proof of the refutation theorem, modulo the proof of the $C_{(d)}$ matrix norm bound from \pref{thm:lin-upper}, which we give in the next subsection.
\begin{theorem}\label{thm:refutation}
    Let $k$ be even, and let $d \ll n$.
    Then there is an algorithm that certifies with high probability that a random $k$-XOR instance has value at most $\frac{1}{2} + \gamma +o(1)$ for any constant $\gamma >0$ at clause density $m/n = \tilde O\left(\frac{n^{k/2-1}}{d^{(k/2-1)}}\right)$ (where the $\tilde O$ hides a dependence on $\gamma$ and $k$) in time $n^{O(d)}$.
\end{theorem}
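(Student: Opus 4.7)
The plan is to combine the matrix-norm bound of \pref{thm:lin-upper} with the binomial expansion of \pref{eq:upbd}, the size estimate of \pref{lem:sizeofmultilinear}, and the reduction of \pref{prop:formula}. The algorithm runs \pref{alg:lin-sub} once for each $j \in \{1,\ldots,d\}$ to produce $C_{(j)}$ and compute $\|C_{(j)}\|$, then assembles an upper bound on $P_\Phi$ from these $d$ norms via \pref{eq:upbd}. Since each $C_{(j)}$ has dimension $n^{O(d)}$, the total running time is $n^{O(d)}$.

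For correctness, I would start from \pref{eq:upbd} and split its sum at $j = \alpha d$ for a small constant $\alpha = \alpha(\gamma) > 0$:
\[
(P_\Phi(x) - o(1))^d \;\le\; (1/2)^d \sum_{j \le \alpha d} \binom{d}{j} \;+\; (1/2)^d \sum_{j > \alpha d} \binom{d}{j}\, q_j(x),
\]
where $q_j(x) \defeq |\cC_{low}^j|^{-1}\,(x^{\otimes jk/2})^{\top} C_{(j)}\, x^{\otimes jk/2}$. The low-$j$ half is handled by the entropy tail estimate $\sum_{j \le \alpha d} \binom{d}{j} \le 2^{H(\alpha) d}$; choosing $\alpha$ small enough (depending on $\gamma$) that $2^{H(\alpha)-1} \le 1/2 + \gamma/4$ bounds this half by $(1/2+\gamma/4)^d$. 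For the high-$j$ half, I use $q_j(x) \le n^{jk/2}\, \|C_{(j)}\| / |\cC_{low}^j|$ (valid for Boolean $x$, since $\|x^{\otimes jk/2}\|^2 = n^{jk/2}$) and substitute $\|C_{(j)}\|^{1/j} \le \tilde O(p^{1/2} n^{k/4} / j^{(k-2)/4})$ from \pref{thm:lin-upper}, $|\cC_{low}^j| \ge (1-o(1))^j m^j$ from \pref{lem:sizeofmultilinear}, and $m = \Theta(pn^k)$. After cancellation and using $j \ge \alpha d$, this collapses to
\[
q_j(x) \;\le\; \left( \frac{\tilde O(1)\,\alpha^{-(k-2)/4}}{p^{1/2} n^{k/4}\, d^{(k-2)/4}} \right)^j.
\]

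Choosing the density $m/n \ge \tilde O_{\gamma,k}(n^{k/2-1}/d^{k/2-1})$ with the implicit constant and polylog factor large enough to absorb the $(c_k \log^{2k} n)^j$ losses from \pref{thm:lin-upper}, the $\alpha^{-(k-2)/4}$ from the split, and an additional factor of $4/\gamma$, makes $q_j(x) \le (\gamma/2)^j$ uniformly in $x$. The high-$j$ sum is then at most $(1/2)^d \sum_j \binom{d}{j} (\gamma/2)^j = (1/2 + \gamma/4)^d$, so adding both halves gives $(P_\Phi(x) - o(1))^d \le 2(1/2+\gamma/4)^d$. Taking the $d$-th root and absorbing $2^{1/d} \to 1$ into the additive $o(1)$ (legitimate for $d$ at least a constant depending on $\gamma$) yields $P_\Phi(x) \le 1/2 + \gamma + o(1)$ simultaneously for every $x \in \{\pm 1\}^n$, which is the strong refutation claim. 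A union bound over the $O(d)$ high-probability events of \pref{thm:lin-upper}, \pref{prop:formula}, and \pref{lem:sizeofmultilinear} closes the argument. The main obstacle is purely bookkeeping: arranging that all polylog and $\gamma$-dependent constants across the three ingredients can be simultaneously absorbed into the $\tilde O$ of the density assumption; no estimate beyond those already proved in the section is needed.
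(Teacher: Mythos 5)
Your proof is correct and follows essentially the same route as the paper's: it starts from the binomial expansion of \pref{eq:upbd}, splits the sum at $j = \alpha d$, bounds the low-$j$ half via the entropy estimate, controls the high-$j$ half using \pref{thm:lin-upper} and \pref{lem:sizeofmultilinear}, and unions over the $O(d)$ high-probability events. The one place you diverge is cosmetic: the paper bounds the high-$j$ sum by $1$ times its largest term (since $2^{-d}\sum_{j>\alpha d}\binom{d}{j} < 1$), whereas you keep the full sum and evaluate it exactly as $(1/2)^d(1+\gamma/2)^d = (1/2+\gamma/4)^d$ after establishing $q_j(x)\le(\gamma/2)^j$ uniformly over $j>\alpha d$. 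Both work; yours gives a slightly cleaner final arithmetic, but it is a refinement of the same argument, not a different one. One small caveat worth noting, which you did flag: the step absorbing $2^{1/d}$ into the additive slack and the applicability of \pref{thm:lin-upper} for every $j\in(\alpha d,d]$ require $p\cdot(\alpha d)^{k/2-1}n^{k/2}>1$; this is fine because $\alpha=\alpha(\gamma,k)$ is a constant that is folded into the $\tilde O_{\gamma,k}$ of the density assumption, exactly as the paper does implicitly with its choice of $\delta$.
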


\begin{proof}
    Define $\beta := \frac{dkm_{\max}}{200m\log n}$, where $m_{\max}$ is the maximum number of clauses any variable participates in.
    By \prettyref{prop:formula} and the proceeding calculations culminating in \pref{eq:upbd},  with high probability over the
	choice of the instance $\Phi$, for any $x \in \{\pm 1\}^n$,
    \begin{align*}
	\left(P_{\Phi}(x) - \beta\right)^d
	&\leq
    \frac{1}{2^d} \sum_{j = 0}^t \binom{d}{j}
    ~+~ \frac{1}{2^d} \sum_{j = t+1}^d \binom{d}{j} \frac{1}{|\cC_{low}^j|} \cdot (x^{\otimes jk/2})^{\top} C_{(j)} (x^{\otimes jk/2})\mper
	\intertext{Setting $t = \delta d$ for some $\delta < 1$,}
	&\leq \frac{1}{2^d} \cdot \left(\sum_{j = 1}^{\delta d} \binom{d}{j}\right)
	+ \sum_{j = \delta d+1}^{d} \frac{\binom{d}{j}}{2^d} \cdot \frac{1}{|\cC_{low}^j|} \cdot (x^{\otimes jk/2})^{\top} C_{(j)} (x^{\otimes jk/2})
    \end{align*}
	For the terms in the right-hand sum, we can apply our bound on $|\cC_{low}^j|$ from \pref{lem:sizeofmultilinear} and the fact that $\|x\| = n^{1/2}$, to conclude that
    \begin{align*}
	\frac{1}{|\cC_{low}^j|} \cdot (x^{\otimes jk/2})^{\top} C_{(j)} (x^{\otimes jk/2})
    &\le \frac{\|x\|^{jk}}{|\cC^j_{low}|} \cdot \|C_{(j)}\|
    = \frac{n^{jk/2}}{|\cC^j_{low}|}\cdot \|C_{(j)}\|\\
    & \leq   \frac{n^{jk/2}}{(1-\beta)^j m^j} \cdot \left(\frac{n^{k/2}p \cdot c_k \log^{2k}{n}}{j^{(k/2-1)}}\right)^{j/2}   \qquad \text{ (by \prettyref{lem:sizeofmultilinear} and \prettyref{thm:lin-upper})} \\
    & \leq   \frac{n^{jk/2}}{(1-\beta)^j (0.1 pn^{k})^j} \cdot \left(\frac{n^{k/2}p \cdot c_k \log^{2k}{n}}{j^{(k/2-1)}}\right)^{j/2}   \qquad \text{ (since $m = \Theta(pn^k)$ w.h.p.)} \\
	& \leq  \left(\frac{c_k' \log^{2k}{n}}{(1-\beta)^2pn^{k/2} \cdot j^{k/2-1}} \right)^{j/2}\mper
  \end{align*}
	  for some constant $c'_d$, where the second inequality holds with high probability from the conditions of \pref{thm:lin-upper}, and so also holds with high probability simultaneously for all $j\in[\delta d,d]$ by a union bound.

The term comprised of the sum of binomial coefficients is at most
\[
    \frac{1}{2^d}\sum_{j=0}^{\delta d} \binom{d}{j}
    \le 2^{(H(\delta)-1)d},
\]
where $H(\cdot)$ is the binary entropy function, $H(\delta) = -\delta \log_2 \delta - (1-\delta)\log_2(1-\delta)$.

    Since the coefficients of the $C_{(j)}$ terms sum to $< 1$, we have that for some $\alpha \in [\delta , 1]$,
\[
    P_{\Phi}(x) -\beta
    \le \left(2^{(H(\delta) - 1)d} + \left(c_k'\frac{\log^{2k}n}{(1-\beta)^2pn^{k/2} (\alpha d)^{k/2-1}}\right)^{\alpha d/2}\right)^{1/d}
\]
Now, for $p\gg \tilde O(n^{-k/2}d^{-(k/2-1)})$, where the $\tilde O$ hides a dependence on $k$ and $\alpha > \delta$, $\beta = o(1)$ and the latter quantity is $o(1)$.
	Thus, for sufficiently large $n$ the term in the parenthesis is at most $(1+o(1))2^{H(\delta) - 1}$, which we can take to be a constant arbitrarily close to $\frac{1}{2}$ by choosing sufficiently small constant $\delta$.
	  We can certify this bound in time $n^{O(d)}\cdot d$, by running \pref{alg:lin-sub} to compute $\|C_{(j)}\|$ for each $j\in[\delta d,d]$.
\end{proof}

\subsubsection{Bounding the even certificate spectral norm}\label{sec:lin-norm-sec}

Here, we prove the norm bound on the matrix $\|C_{(d)}\|$ given in \pref{thm:lin-upper}, the main theorem of this section.

\restatetheorem{thm:lin-upper}

The proof is similar to that of \pref{thm:D-upper}, except that, because the moments of the entries of $\bT_{\Phi}$ depend on $p$, and because we rely on getting an accurate bound in terms of $p$, our counting arguments have to be much more precise.
So we require stricter, specialized analogues of our even simple graphs count (\pref{prop:simple-graph-bound-1}) and our even hypergraph sampling probability (\pref{lem:matching-columns}).
\begin{proof}
    We will apply the trace power method (\pref{prop:tpm}) to $C_{(d)}$, for which it suffices to obtain an upper bound on $\E[\Tr((C_{(d)} C_{(d)}^\top)^\ell)]$.
    We recall from \pref{sec:tech-over} our interpretation of the $(S,T)$th entry of $C_{(d)}$ as the average over all $k$-hypergraph matchings between two multisets $S,T \in [n]^{dk/2}$; additionally, now by construction we can restrict our attention to $S,T$ which do not have more than $R \defeq 100\log n$ copies of any one vertex (since those rows/columns are zeroed out).
    For convenience, we say such sets are $R$-multilinear.

    We also recall that the trace gives us a sum over all $R$-multilinear vertex configurations consisting of sets $S_1,\ldots,S_{2\ell} \in [n]^{dk/2}$, and for each vertex configuration an average over all choices of sequences of hypergraph matchings.
    Let the set of all valid $R$-multilinear vertex configurations be denoted $\cV_R$, and let the set of all hyperedge matching sequences be denoted $\cH$.
    For $H\in\cH$ and $V \in \cV_R$, denote by $(V,H)$ the hypergraph given by the hyperedges $H$ on the vertex configuration $V$.
    Applying the above observations, and recalling that we have assembled $C_{(d)}$ from the random tensor $\bT := \bT_{\Phi}$, we have that
    \begin{align*}
	\E[\Tr(C_{(d)} C_{(d)}^\top)^{\ell}]
	&= \sum_{V \in \cV_R} \E_{H\in\cH} \left[ \prod_{(i_1,\ldots,i_d) \in (V,H)} \bT_{i_1,\ldots,i_d}\right]\mcom
    \end{align*}
    The expectation of each product is $0$ if any hyperedge in $(V,H)$ appears with odd multiplicity, and is $p^M$ if exactly $M$ distinct hyperedges  appear in $(V,H)$.
    Thus,
    \begin{align}
	\E[\Tr(C_{(d)} C_{(d)}^\top)^{\ell}]
	&\le \sum_{V \in \cV_R} \sum_{M=1}^{d\ell} p^M \cdot \E_{H \in \cH}\left[\Ind((V,H) ~\text{even}) \cdot \Ind((V,H)~\text{has}~M~\text{hyperedges})\right]\nonumber\\
	&= \sum_{V \in \cV_R} \sum_{M=1}^{d\ell} p^M \cdot \Pr_{H \in \cH}\left[(V,H) ~\text{even with}~M~\text{hyperedges})\right]\mper \label{eq:prbdM}
    \end{align}

    To bound this probability, we will again sample uniformly $H\sim \cH$ in a two step process.
    \begin{compactenum}
	\item Sample a uniformly random perfect matching (with $2$-edges rather than hyperedges) between each set $S_i, S_{i+1} \in \cV_R$--call the edge set sampled in this manner $E$, so that we now have the graph $(V,E)$.
	\item Sample hyperedge matching configuration from $E$ by choosing a uniform random grouping of the edges between $S_i,S_{i+1}$ into groups of $k/2 = \kappa$ edges.
	\end{compactenum}
	We invoke the following lemma, which is a very slight embellishment upon \pref{lem:graph-from-hgraph-nocount}:
\begin{lemma}
    \torestate{
    \label{lem:graph-from-hgraph}
    Let $\hght,\wdth,\kappa,\ty,\tau \in \N$.
    Let $V \in \cV_R$ be a vertex configuration with $R$-multilinear vertex sets $S_1,\ldots,S_{\wdth} \in [n]^{\kappa\hght}$.
    Let $H \in \cH$ be a hypergraph configuration with $\wdth$~ $2\kappa$-uniform hypergraph matchings between the sets $S_i, S_{i+1} \forall i \in [\wdth]$, with $\kappa$ vertices from $S_i$ and $\kappa$ vertices from $S_{i+1}$ in each hypergraph matching.

    Suppose that $(V,H)$ has $\tau$ distinct labeled hyperedges and the evenness property, where hyperedges on the same vertex set but with a different partition into $S_i,S_{i+1}$ count as distinct.
Suppose that we sampled $H$ by first choosing a set of simple-edge perfect matchings $E$ on $V$, then grouping them into hyperedges.
Then
\[
    \Pr(~(V,E) ~\text{even with}~\ty \le \kappa\tau ~\text{edges}~|~ (V,H)~ \text{even with}~\tau ~\text{edges}) \ge \left(\frac{1}{\kappa!}\right)^{\wdth \hght}\mper
\]
}
\end{lemma}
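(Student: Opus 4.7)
The plan is to follow the strategy of the proof of \pref{lem:graph-from-hgraph-nocount2} essentially verbatim, with some additional bookkeeping for the edge-count constraint $\ty \le \kappa\tau$. Conditioned on the hypergraph $(V,H)$, the simple-edge matching $E$ is uniform over the $(\kappa!)^{\wdth\hght}$ joint decompositions of $H$: each $2\kappa$-uniform hyperedge in $H$ has $\kappa$ vertices from some $S_i$ and $\kappa$ from $S_{i+1}$, and can be broken into $\kappa$ simple edges in exactly $\kappa!$ ways (one bijection between left and right endpoints per decomposition). The decompositions of the $\wdth\hght$ hyperedges are independent.

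The key observation--which the lexicographic-first convention in the asymmetrization step of \pref{alg:lin-sub} is designed to support--is that every copy of a given distinct hyperedge shares the same left/right bipartition, and therefore all copies face the same menu of $\kappa!$ decomposition choices. I would then isolate the following sufficient event $\cE$: \emph{all} copies of each distinct hyperedge are decomposed identically. Under $\cE$, each copy of a hyperedge produces the same set of $\kappa$ simple edges, so each simple edge inherits even multiplicity from the even multiplicity of its parent hyperedge, giving $(V,E)$ the evenness property. Moreover, at most $\kappa$ distinct simple edges arise per distinct hyperedge, so the total number $\ty$ of distinct simple edges is at most $\kappa\tau$, meeting both conditions of the event we wish to bound.

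To lower-bound $\Pr(\cE)$, fix an arbitrary decomposition for a single representative copy of each of the $\tau$ distinct hyperedges; the remaining $\wdth\hght - \tau$ copies must each replicate the corresponding representative's decomposition, which by the independence of the decomposition choices has probability
\[
    (\kappa!)^{-(\wdth\hght - \tau)} \;\ge\; (\kappa!)^{-\wdth\hght}\mper
\]
This is the bound claimed in the lemma.

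The only potentially subtle point is making sure that distinct hyperedges' decomposition choices cannot collide to ``glue'' two different distinct hyperedges into the same simple-edge set, which would undermine the counting. This is precisely what the lexicographic-first convention prevents: because each distinct hyperedge's bipartition is fixed and common to all of its copies, the decomposition of one distinct hyperedge never produces simple edges that could be mistakenly attributed to another distinct hyperedge. Beyond this, the argument is a direct adaptation of the $k=4$ case, and I would not expect any genuine obstacles.
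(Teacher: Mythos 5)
Your proof is correct and is essentially the same as the paper's: you condition on $(V,H)$, observe that the conditional law of $E$ is a uniform independent choice of one of $\kappa!$ decompositions for each hyperedge occurrence, and lower-bound the probability of the event that all copies of each distinct hyperedge decompose identically (which forces both evenness of $E$ and $\ty\le\kappa\tau$). The only cosmetic difference is that your bookkeeping gives the slightly sharper intermediate bound $(\kappa!)^{-(\wdth\hght-\tau)}$ before rounding down to the stated $(\kappa!)^{-\wdth\hght}$.
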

\begin{proof}
    The proof is almost identical to that of \pref{lem:graph-from-hgraph-nocount}--choosing a random matching within each hyperedge gives a uniformly random $E$ from which $H$ is sampled, and that with probability at least $(\kappa!)^{-\wdth \hght}$ we choose the same matching in every copy of every hyperedge.
    We need only add that if a hyperedge $h_i \in (V,H)$ has multiplicity $a$, then if we chose the same matching in every copy of $h_i$, all $\kappa$ of the simple edges making up $h_i$ will have multiplicity at least $a$, so if $\ty$ is the total number of distinct edges in $(V,E)$, we have $\ty \le \kappa\tau$ and also the evenness property.
\end{proof}

    Letting $\cE_{H}^M$ be the event that $(V,H)$ is even with $M$ distinct hyperedges and letting $\cE_{E}^{kM/2}$ be the event that $(V,E)$ is even with at most $kM/2$ distinct hyperedges, \pref{lem:graph-from-hgraph} (and the asymmetry of $\bT_{\Phi}$) with $\wdth \leftarrow 2\ell$, $\hght \leftarrow d$, $\tau \leftarrow M$, $\kappa \leftarrow k/2$ implies that
\begin{align*}
    \Pr_{H\in\cH}(~(V,H)~\text{even with}~ M~ \text{edges})
    = \frac{ \Pr(\cE_{H}^M,\cE_{E}^{kM/2})}{\Pr(\cE_{E}^{kM/2}~|~\cE_{H}^{M})}
    &\le \left(\frac{k}{2}!\right)^{2d\ell} \Pr(\cE_{H}^{M},\cE_{E}^{kM/2}) \qquad (\text{ by \pref{lem:graph-from-hgraph}})\\
    &\le \left(\frac{k}{2}\right)^{dk\ell} \Pr(\cE_{E}^M) \cdot \Pr(\cE_H^{M}~|~\cE_{E}^{kM/2})\mper
\end{align*}

Therefore, from \pref{eq:prbdM} we have
\begin{align*}
    \E[\Tr(C_{(d)} C_{(d)}^\top)^{\ell}]
    &\le \left(\frac{k}{2}\right)^{dk\ell} \sum_{V \in \cV_R} \sum_{M=1}^{d\ell} \Pr(\cE_E^{kM/2}) \cdot \Pr(\cE_H^M | \cE_E^{kM/2})  \cdot p^M\mcom
\end{align*}
Now, we use a lemma to bound the conditional probability of sampling an even hyperedge matching with $M$ hyperedges, given that we sampled an even matching with at most $kM/2$ edges:
\begin{lemma}
    \torestate{\label{lem:matching-probability-2}
	Suppose $\hght,\wdth,\kappa,n,\tau \in \N$.
	Let $G =(V,E)$ be a graph consisting of $\wdth$ sets of $\kappa \hght$ vertices each with $R$-multilinear labels from $[n]$, where
	$E$ is a set of $\wdth$ perfect matchings $M_1,\ldots,M_{\wdth}$, so that $M_i$ is a perfect matching between $S_i$ and $S_{i+1}$,
	and $\alpha = a_1,\ldots, a_{\ty}$ is a list of even edge multiplicities of $E$ on the labeled vertex set $V$, so that $\sum a_i = \kappa\wdth\hght$.

	Suppose we sample a hyperedge matching configuration $H$ from $E$ by uniformly grouping the edges in each matching from $S_i$ to $S_{i+1}$ into hyperedges of order $2\kappa$, and let $\tau$ be a number of distinct hyperedges that is possible to sample from $(V,E)$ in this way.
    Then,
\[
    \Pr((V,H)\text{ even with } \tau\text{ edges} ~|~ (V,E) \text{ even })
    \le \frac{(2e^\kappa \kappa^\kappa R^{\kappa+1} \wdth )^{\wdth \hght} }{(\kappa\hght)^{(\kappa-1)(\wdth\hght - \tau)}}\mper
\]}
\end{lemma}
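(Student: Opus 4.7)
The plan is to bound the conditional probability by enumerating groupings that produce the target hypergraph structure and dividing by the total number of groupings. I will sample a uniform grouping of $E$ by processing each matching $M_i$ independently: repeatedly pick the lexicographically-smallest ungrouped edge of $M_i$ as an \emph{anchor}, then pick $\kappa-1$ further ungrouped edges of $M_i$ to complete a hyperedge. This procedure produces a uniformly random grouping of $M_i$, so bounding the fraction of realizations of this sampling process that produce an even hypergraph with $\tau$ distinct hyperedges suffices.

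Classify each of the $\wdth\hght$ hyperedge-slots produced by the process as either \emph{new} (it introduces a hyperedge-label not already present in $H$) or \emph{closing} (it completes a previously-introduced hyperedge-label). Since the target hypergraph must have exactly $\tau$ distinct hyperedges, each of even multiplicity, exactly $\tau$ of the slots will be new and the remaining $\wdth\hght-\tau$ will be closing. This partition is the key quantitative lever: closing slots are heavily constrained, whereas new slots are almost unconstrained but only occur $\tau$ times.

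At a closing slot, the anchor is forced by the sampling process, and the $\kappa-1$ other edges picked must jointly realize one of the $\kappa$ prescribed right-endpoints of an earlier copy of the hyperedge being closed. The $R$-multilinearity of the vertex multisets says that each label in $[n]$ appears at most $R$ times, so there are at most $\kappa!\cdot R^{\kappa-1}$ valid tuples of $\kappa-1$ edges that complete the required hyperedge (the $\kappa!$ factor absorbs the unordered matching of labels). The total number of tuples of $\kappa-1$ edges available at step $j$ of $M_i$ is $\binom{\kappa\hght-\kappa(j-1)-1}{\kappa-1}$, which is $\Omega((\kappa\hght)^{\kappa-1}/(\kappa-1)!)$ for all but the last few steps of each matching. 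Therefore each closing slot contributes conditional probability at most $O((\kappa!)^2 R^{\kappa-1}/(\kappa\hght)^{\kappa-1})$. At each new slot I use the cruder bound of simply enumerating the hyperedge-label being introduced: by $R$-multilinearity the vertex-multiset of a hyperedge admits at most $(R\wdth\hght)^{\kappa-1}$ possibilities, which I absorb into the prefactor $(2e^\kappa \kappa^\kappa R^{\kappa+1}\wdth)^{\wdth\hght}$ after applying Stirling to $\kappa!$.

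The main obstacle is handling the final few closing slots of each matching, where the number of ungrouped edges is smaller than $\kappa\hght/2$ and the denominator bound $\Omega((\kappa\hght)^{\kappa-1}/(\kappa-1)!)$ degrades. My plan is to observe that the number of such ``late'' steps is at most a constant (depending on $\kappa$) per matching, and the excess factors can be charged against the generous slack in $(e^\kappa\kappa^\kappa)^{\wdth\hght}$. A secondary subtlety is the combinatorial overcount at a closing slot when several earlier distinct hyperedges share the same vertex-multiset (distinguished only by their partition into $S_i$ and $S_{i+1}$); this multiplies the closing probability by at most $R^\kappa$ per slot, which is exactly the extra $R$ factor appearing in $R^{\kappa+1}$ rather than $R^\kappa$ in the claimed bound. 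Combining the $\tau$ new-slot factors and the $\wdth\hght-\tau$ closing-slot factors yields the stated inequality.
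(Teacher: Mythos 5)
The overall architecture of your proposal — process matchings sequentially, classify hyperedge-slots as ``new'' vs.\ ``closing,'' and trade enumeration at new slots against probability decay at closing slots — is a valid reframing of the paper's argument, which instead runs an explicit encoding count (bound the number of sampleable $H$ with $\tau$ distinct even hyperedges, divide by the total number $\left((\kappa h)!/((\kappa!)^h h!)\right)^w$ of groupings). These are morally the same dual arguments, and the closing-slot analysis you give is roughly right: $\kappa!\,R^{\kappa-1}$ tuples close a \emph{given} labelled hyperedge, by $R$-multilinearity. But there is a genuine quantitative gap in your new-slot accounting, and a secondary gap in how you specify which prior hyperedge a closing slot completes.

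The main gap is the new-slot bound. You propose to enumerate the introduced hyperedge-label, bounded by $(R\wdth\hght)^{\kappa-1}$ per new slot, and then absorb this into the $h$-free prefactor $(2e^\kappa\kappa^\kappa R^{\kappa+1}\wdth)^{\wdth\hght}$. This cannot work: across $\tau$ new slots the enumeration contributes $h^{(\kappa-1)\tau}$, which is not $\mathrm{const}^{\wdth\hght}$. Since every hyperedge has even multiplicity, $\tau\le\wdth\hght/2$, so $h^{(\kappa-1)\tau}$ fights against the denominator $(\kappa\hght)^{(\kappa-1)(\wdth\hght-\tau)}$ and in the extremal case $\tau=\wdth\hght/2$ cancels it entirely — your bound degenerates to $\mathrm{const}^{\wdth\hght}$ while the target is smaller by a factor $(\kappa\hght)^{(\kappa-1)\wdth\hght/2}$. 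The fix, which is exactly what the paper's encoding does, is to record not the hyperedge \emph{label} but the \emph{positions} of the $\kappa-1$ additional simple edges in the current column: at most $(\kappa\hght)^{\kappa-1}$ options. Then the new-slot enumeration cancels (up to $\mathrm{const}^{\wdth\hght}$, via Stirling) against the per-step normalizer $\binom{\kappa\hght-\kappa(j-1)-1}{\kappa-1}$, and all the savings come from the closing slots, as desired. The second, smaller gap: at a closing slot you need to specify \emph{which} earlier hyperedge is being completed; the anchor's labelled edge can occur in many previously-seen hyperedges (one per earlier column, potentially), and your ``secondary subtlety'' paragraph addresses a different degeneracy. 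The paper handles this by recording the column index of first appearance, which is the source of the $\wdth$ factor in $(R^\kappa\wdth)^{\wdth\hght-\tau}$. Also, your claim that the number of ``late'' steps where $\binom{\kappa\hght-\kappa(j-1)-1}{\kappa-1}$ degrades is $O(1)$ per matching is false — it is $\Theta(\hght)$ — but this turns out not to matter if you just use the exact identity $\prod_{j}\binom{\kappa\hght-\kappa(j-1)-1}{\kappa-1}=\frac{(\kappa\hght)!}{(\kappa!)^\hght\hght!}$ as your normalizer, which is what the counting formulation automatically gives.
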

We'll prove \pref{lem:matching-probability-2} below in \pref{sec:pr-hgraph}.
For now, we apply \pref{lem:matching-probability-2} with $\wdth \leftarrow 2\ell$, $\hght \leftarrow d$, $\kappa \leftarrow k/2$ and $\tau \leftarrow M$, which for $R$-multilinear $V \in \cV_R$ implies that
\[
\Pr(\cE_H^M | \cE_E^{kM/2}) \le
\frac{(4e^{k/2} R^{k/2+1} (k/2)^{k/2}\ell)^{2d\ell}}{(dk/2)^{(k/2-1)(2d\ell - M)}}\mper
\]
Combining this with the above and letting $c_1 := 4e^{k/2}(k/2)^{k/2}$ for convenience,
\begin{align}
    \E[\Tr(C C^\top)^{\ell}]
    &\le \left(\frac{k}{2}\right)^{dk\ell} \sum_{V \in \cV_R} \sum_{M=1}^{d\ell} \Pr(\cE_E^{kM/2}) \cdot \frac{(c_1R^{k/2+1}\ell)^{2d\ell}}{(dk/2)^{(k/2-1)(2d\ell - M)}}  \cdot p^M\mper \label{eq:sum-over-all}
\end{align}
It remains for us to bound $\Pr((V,E)~\text{even with}~\le kM/2~\text{edges})$.
We now interchange the order of the summation, and bound the sum over $V$ for a fixed value of $M$.
Letting $\cM$ be the set of all possible edge configurations $E$, we have
\begin{align}
    \sum_{V \in \cV_R} \Pr_{E}(\cE_E^{kM/2})
    &= \sum_{V \in \cV_R} \frac{\left|\{ E ~|~(V,E) ~\text{ even with} \le kM/2 ~\text{edges}\}\right| }{|\cM|}\nonumber\\
    &= \frac{\left|\right\{E , V  ~|~ (V,E) ~\text{ even with} \le kM/2 ~\text{edges}\left\}\right| }{|\cM|}\mper\label{eq:pr-to-count}
\end{align}
We will bound this quantity with the following proposition, which counts the number of $V \in \cV_R$ that yield and even graph with at most $\ty$ edges on a fixed $E \in \cM$.
\begin{proposition}
    \torestate{
	\label{prop:multilin-graph-bound}
	Let $\wdth, \hght, n \in \N$.
	Let $\alpha = \alpha_1,\ldots,\alpha_\ty$ be a sequence of $\ty$ even numbers so that $\sum_{i=1}^\ty \alpha_i = \wdth \cdot \hght$.
	Let $E = M_1,\ldots,M_\wdth$ be a sequence of perfect matchings between two sets of size $\hght$.

	Let $\cG_{\wdth \times \hght}^{\alpha,E}$ be the set of all graphs which have a vertex set comprised of $\wdth$ $R$-multilinear multisets $S_1,\ldots, S_\wdth \in [n]^{\hght}$, and have edges forming the perfect matching $M_i$ between $S_i, S_{i+1}$ (where the indexing is modulo $\wdth$), so that the labels in $[n]$ assigned to the vertices induce exactly $\ty$ distinct labelings for the edges, and the labeled edges have multiplicities $\alpha_1,\ldots,\alpha_\ty$.
    In words, $\cG_{\wdth\times\hght}^{\alpha,E}$ is the set of $\wdth \times \hght$ matching cycles with matchings specified by $E$ that have edge multiplicities $\alpha$ when labeled with $R$-multilinear labels from $[n]$.

    If $\wdth\cdot\hght \le n$,
    \[
	|\cG^{\alpha,E}_{\wdth,\hght}| \le
	(5R\wdth)^{\wdth\hght} (\wdth\hght)^3\cdot n^{\ty + \hght}\mper
    \]
}
\end{proposition}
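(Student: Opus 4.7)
The plan is to prove \pref{prop:multilin-graph-bound} via an encoding argument, decomposing the data of a graph in $\cG^{\alpha,E}_{\wdth,\hght}$ into (i) the set of labels in $[n]$ that appear on its vertices, and (ii) a per-vertex record of how each position is assigned. First I would bound the number of distinct labels. Since every vertex participates in exactly one edge of $M_{i-1}$ and one of $M_i$, the underlying unlabeled graph is $2$-regular and so decomposes into disjoint simple cycles; each such cycle must visit every layer $S_i$ at least once, so there are at most $\hght$ cycles. Applying \pref{lem:vtx-edge} with $m\leftarrow \ty$ and $c\leftarrow \hght$ — whose hypothesis is satisfied because every $\alpha_j$ is even — yields that the number of distinct vertex labels is at most $\ty + \hght$, contributing a factor of at most $n^{\ty+\hght}$ for choosing the label set.

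Next I would process the $\wdth \hght$ vertex positions in a canonical order (sweeping $M_1,M_2,\ldots,M_\wdth$, and within each matching in lexicographic order on edges), tagging each vertex as \emph{fresh} if it introduces a label not previously encountered in the walk, or \emph{stale} otherwise. Fresh assignments are paid for by the $n^{\ty+\hght}$ factor above. For stale assignments, the $R$-multilinearity of each $S_i$ limits occurrences of any single label to $R$ per layer, hence to at most $R\wdth$ across the whole graph, which restricts the list of candidate previous occurrences to which a stale vertex may be identified. Moreover, evenness of the $\alpha_j$ together with the $2$-regularity forces most stale labels: once the walk arrives at the last surviving endpoint of an edge of a previously-seen edge type, its label is pinned down by prior decisions along that edge's matches. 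A careful case split, charging at most $5R\wdth$ options per vertex to encode both the fresh/stale flag and the structural choice of which previous occurrence to match, then yields the $(5R\wdth)^{\wdth\hght}$ factor, while the $(\wdth \hght)^3$ polynomial slack absorbs one-time bookkeeping such as listing which of the $\ty$ edge types acts as a distinct representative.

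The delicate point, and the main obstacle, will be verifying that the per-vertex charge is truly bounded by $O(R\wdth)$ rather than growing with $\ty$ or $L$. The key leverage is that in each cycle of the underlying $2$-regular graph, once one has fixed the label at the first occurrence of each edge type, subsequent labels along that cycle are largely forced by the evenness constraint on edge multiplicities; only a bounded number of genuinely free cycle-starting labels remain, and those are already absorbed into the $n^{\ty+\hght}$ label-set factor. The $R$-multilinearity hypothesis is essential here: without it, a label appearing many times within a single layer would blow up the number of stale options a vertex could match to, and the $(5R\wdth)^{\wdth \hght}$ bound would fail. I expect the bulk of the writeup to be a careful casework ensuring that every choice made in the walk (fresh label, stale label, forced by evenness, or structurally constrained by $2$-regularity) is charged to one of the three factors in the claimed bound.
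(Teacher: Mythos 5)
Your first step—invoking \pref{lem:vtx-edge} with the observation that the underlying unlabeled graph is $2$-regular and decomposes into at most $\hght$ cycles—is correct and a nice way to see that the number of distinct labels is at most $\ty + \hght$. That part is sound.

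The gap is in your claim that every stale vertex costs only $O(R\wdth)$ to encode. Split the stale vertices by whether the incoming edge's labeled type has been seen before. If it has (a ``return'' edge), your reasoning does work: the other endpoint's label $a$ is already known, and to pin down which previously-seen type involving $a$ this edge is, one can point to a previous occurrence of an edge incident to label $a$; by $R$-multilinearity there are at most $O(R\wdth)$ such occurrences. But if the edge's labeled type is \emph{new} while the vertex's label is \emph{old}—the ``reused-endpoint'' case—there is no previous occurrence of the edge to point to, and identifying the reused label requires pointing to one of up to $\wdth\hght$ earlier vertex positions. Your $R$-multilinearity argument bounds the number of occurrences of a \emph{fixed} label, but that does not help: before you know which label is being reused, you cannot restrict the pointer to $O(R\wdth)$ candidates. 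So for these vertices the per-vertex charge is $\wdth\hght$, not $O(R\wdth)$, and $\wdth\hght$ can be much larger than $R\wdth$ when $\hght\gg R$.

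This is precisely where the hypothesis $\wdth\hght\le n$—which you state but never use—comes in, and its unused status is a telltale sign. The correct accounting, as in the paper, is that each reused-endpoint edge simultaneously \emph{saves} one factor of $n$ in the fresh-label budget (a reused endpoint consumes one of the $\ty$ distinct edge types without introducing a new label), so the fresh-label exponent is $\le \ty + \hght - \nrecyc$ rather than $\ty+\hght$. Multiplying this saved $n^{-\nrecyc}$ against the $(\wdth\hght)^{\nrecyc}$ pointer cost gives $(\wdth\hght/n)^{\nrecyc}\le 1$. Your budget, by contrast, pays $n^{\ty+\hght}$ up front for the label set and then additionally pays the stale costs, never realizing the $n^{-\nrecyc}$ savings, and never invoking $\wdth\hght\le n$. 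To repair the argument you would need to (i) track the reused-endpoint vertices explicitly as a separate case with a $\wdth\hght$-option pointer, (ii) observe that the number of fresh labels is $L\le \ty+\hght-\nrecyc$ rather than just $\ty+\hght$, and (iii) cancel the resulting $(\wdth\hght/n)^{\nrecyc}$ using the stated hypothesis. At that point you would essentially have reproduced the paper's encoding argument with a slightly different derivation of the $L$ bound at the front.
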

We will prove this proposition below, in \pref{sec:pr-hgraph}.
Applying \pref{prop:multilin-graph-bound} with $\hght \leftarrow kd/2$, $\wdth \leftarrow 2\ell$, and $\ty \leftarrow m$, we have that for a fixed $E \in \cM$ and for a fixed list of edge multiplicities $a_1,\ldots,a_{m}$,
\[
    \left|\{ V ~|~ (V,E)~\text{has}~ m ~\text{edges with multiplicities}~a_1,\ldots,a_m\}\right| \le
	(10R\ell)^{dk\ell} \cdot (dk\ell)^2 \cdot n^{m + dk/2}.
\]
where we have used the assumption that $dk\ell \ll n$ to meet the requirements of \pref{prop:multilin-graph-bound}.
The number of possible edge multiplicity lists $a_1,\ldots,a_m$ for a given value of $m$ is at most $\binom{m + dk\ell-1}{m-1} \le 2^{2dk\ell}$.
Thus, applying \pref{eq:pr-to-count}  and noting that there are $|\cM|$ choices for $E$ for each $V$,
\begin{align*}
    \sum_{V \in \cV_R} \Pr_{E}(\cE_E^{kM/2})
     &\le \frac{1}{|\cM|} \cdot \sum_{m=1}^{kM/2} |\cM| \cdot\sum_{a_1,\ldots,a_m}
	\left|\{ V ~|~ (V,E)~\text{has}~ m ~\text{edges with even mult.s}~a_1,\ldots,a_m\}\right| \\
	 &\le \sum_{m = 1}^{kM/2} 2^{2dk\ell} \cdot (10R\ell)^{dk\ell} \cdot (dk\ell)^2 \cdot n^{m + dk/2}
	~\le~  (40 R\ell)^{dk\ell} \cdot (dk\ell)^2 \cdot n^{kM/2+ dk/2+1}\mper
\end{align*}

Combining \pref{eq:sum-over-all} and the above, there is a constant $c_2$ depending on $k$ so that
\begin{align*}
    \E[\Tr(C_{(d)} C_{(d)}^\top)^{\ell}]
    &\le \left(\frac{k}{2}\right)^{dk\ell} \sum_{M = 1}^{d\ell} p^M \cdot \frac{(c_1 R^{k/2+1}\ell)^{2d\ell}}{(dk/2)^{(k/2-1)(2d\ell - M)}} \sum_{V \in \cV_R} \Pr(\cE_E^{kM/2})\\
    &\le \left(\frac{c_2 R^{2k + 2} \ell^{k+2} }{d^{2(k/2-1)}}\right)^{d\ell} \cdot(dk\ell)^2 n^{dk/2+1}
   \cdot \sum_{M=1}^{d\ell}  \left(p d^{(k/2-1)}n^{k/2}\right)^M
    \intertext{By assumption, $p\cdot (d^{k/2-1}n^{k/2}) \ge 1$, so the term $M = d\ell$ dominates:}
    &\le \left(\frac{c_2 R^{2k + 2} \ell^{k+2} }{d^{2(k/2-1)}}\right)^{d\ell} \cdot(d\ell)^2 n^{dk/2+1}
    \cdot d\ell \left(d^{(k/2-1)}pn^{k/2}\right)^{d\ell} \\
    &= \left(\frac{c_2 R^{2k + 2} \ell^{k+2} \cdot p n^{k/2} }{d^{k/2-1}}\right)^{d\ell} \cdot(d\ell)^3 n^{dk/2+1}\mper
\end{align*}
Choosing $\ell = O(d\log n)$, recalling that $R = 100\log n$, and invoking \pref{prop:tpm}, we have that with probability $1 - n^{-100}$, for some constant $c_k:= c(k)$,
\[
    \|C_{(d)}\|^{1/d}
\le c_k \log^{2k} n \cdot \left(\frac{p^{1/2} n^{k/4}}{d^{(k-2)/4}}\right)\mper
\]
The conclusion follows.
\end{proof}

%=================

\subsection{Odd $k$-XOR}\label{sec:lin-odd}

In this section, we modify our algorithm for refuting random even $k$-XOR instances to handle odd $k$-XOR instances.
The odd $k$-XOR algorithm is extremely similar to the algorithm for even $k$-XOR, save for complications introduced by the fact that an odd-order tensor has no natural matrix flattening.

The solution is to apply the Cauchy-Schwarz inequality to the objective value.
Let $k = 2\kappa + 1$ for some integer $\kappa$.
For the tensor $\bT_{\Phi}$ formed by the constraints of $\Phi$ and for its $n^{\kappa} \times n^{\kappa}$ slices $T_i ~\forall i \in [n]$, we have that
\[
    \Iprod{x^{\tensor k}, \bT_{\Phi}}^2
    \le \left(\sum_{i\in[n]}  x_i^2\right)\left( \sum_{i\in[n]}\left((x^{\tensor \kappa})^\top T_i x^{\tensor \kappa}\right)^2\right)
    = n \cdot (x^{\tensor 2\kappa})^\top \left(\sum_{i \in [n]} T_i \tensor T_i \right)x^{\tensor 2\kappa}\mper
\]
Now, the first technicality arises--since the entries $(T_i \tensor T_i)_{(ab),(cd)} = \bT_{a,c,i}\cdot \bT_{b,d,i}$ are always squares when $a=b$ and $c=d$, we must subtract them from the matrix $\sum_i T_i \tensor T_i$, as otherwise they contribute too much to the norm.
Thus, using $\squares(\cdot)$ to refer to the part of the matrix for which $a=b$ and $c = d$, we instead will use that the number of constraints $m = (x^{\tensor 2\kappa})^\top \squares\left(\sum_{i \in [n]} T_i \tensor T_i \right)x^{\tensor 2\kappa}$, and that
\begin{equation}
    \Iprod{x^{\tensor k}, \bT_{\Phi}}^2 - mn
    \le n \cdot (x^{\tensor 2d})^\top \left(\sum_{i \in [n]} T_i \tensor T_i -\squares(T_i \tensor T_i) \right)x^{\tensor 2d} \mper\label{eq:psisat}
\end{equation}
We can also view this as doing one step of resolution, so that we have gotten a $4\kappa$-XOR instance starting from a $(2\kappa + 1)$-XOR instance.
That is how we will treat our new instance from now on.
\medskip

Suppose $\Phi$ has $\pm 1$-constraint predicates $C_1,\ldots,C_m$, so that $C_a(x) = \eta_a \cdot \prod_{j \in S_a} x_j$.
We create a new $2(k-1)$-XOR instance $\Psi$ as follows.
For each $a,b \in [n]$, $a \neq b$:
if $C_a$ and $C_b$ both contain the variable $i$ in the $k$th position, add the $\pm 1$ constraint predicate $C'_{ab}(x) = \eta_a \cdot \eta_b \cdot \left(\prod_{j \in S_a} x_j \right)\left(\prod_{j \in S_a} x_j \right)$ to $\Psi$.
Let $m'$ be the number of clauses in $\Psi$.

The right-hand side of \pref{eq:psisat} is $n \cdot \sum_{ab} C'_{ab}(x) = n \cdot 2m' \cdot (P_{\Psi}(x) - \tfrac{1}{2})$, where $P_{\Psi}(x)$ is the fraction of clauses of $\Psi$ satisfied by $x$.
Combining this with the above calculations,
\begin{align}
    \left(2m\left(P_{\Phi}(x) - \frac{1}{2}\right)\right)^2
    &\le nm + n \cdot 2m' \cdot \left(P_{\Psi}(x) - \frac{1}{2}\right),\nonumber\\
    P_{\Phi}(x)
    &\le \frac{1}{2} + \frac{1}{2m}\sqrt{ nm  + 2nm' \cdot \left(P_{\Psi}(x)-\frac{1}{2}\right) }\mper\label{eq:psi-upper}
\end{align}

Now, we will essentially apply our even-$k$-XOR strategy to $\Psi$.
The only issue is that the clauses of $\Psi$ are not independent, so we will need to zero out not only rows and columns indexed by high-multiplicity subsets of $[n]^{2\kappa}$, but also get rid of terms that contain the same slice with too high a multiplicity.
So, instead of taking the $d$th tensor power of the matrix $\sum_{i} T_i \tensor T_i - \squares(T_i\tensor T_i)$, we omit the cross-products in which $T_i \tensor T_i$ appears more than $100\log n$ times for any $i \in [n]$.

Formalizing this, we introduce our matrix certificate for the odd case:

\begin{algorithm}[Odd $k$-XOR certificate at level $d$]\label{alg:lin-sub-odd}~\\
    {\bf Input:} A $k$-XOR instance for odd $k = 2\kappa+1$ on $n$ variables
    with $m$ clauses $C_1,\ldots,C_m$, where $C_a(x) = \eta_a \cdot \prod_{j \in S_a} x_j$ for
    $S_j \in [n]^{k}$ and $\eta_a \in \{\pm 1\}$.
    \begin{compactenum}
    \item Form the tensor $\bT := \bT_{\Phi}$ by setting $\bT_{S_a} = b_a$ for all $a \in [m]$, and setting all other entries to $0$.
	  \item Initialize an empty $n^{2d\kappa} \times n^{2d\kappa}$ matrix $\Gamma$.
	  \item For each ordered multiset $U \in [n]^{d}$ in which no entry appears with multiplicity $ >100\log n$:
	      \begin{compactenum}
	      \item Add the squared tensor of the slices of $\bT_{\Phi}$ corresponding to the indices in $U$:
		  \[
		      \Gamma := \Gamma + \bigotimes_{i \in U} \left( T_i \tensor T_i - \squares(T_i \tensor T_i)\right)
		  \]
		  where $\squares(\cdot)$ is the restriction to entries $(I,J),(K,L)$ such that $(I,K) = (J,L)$ as ordered multisets.
	\end{compactenum}
	\item Letting $\hat \cS_{2d\kappa}$ be the set of all permutation
		matrices that perform the index permutations
		corresponding to $\cS_{2d\kappa}$ on the rows and columns of $\Gamma$, form
	\[
	    \Gamma_{(d)}:= \E_{\Pi,\Sigma \in \hat\cS_{2d\kappa}}\left[\Pi \Gamma \Sigma \right]\mper
	\]
    \item Set to zero all rows and columns of $\Gamma_{(d)}$ indexed by multisets $S \in [n]^{2d\kappa}$ which contain some element of $[n]$ with multiplicity $> 100\log n$.
    \end{compactenum}
    {\bf Output:} The value $\|\Gamma_{(d)}\|$.
\end{algorithm}
The following theorem, which is the main theorem of this section, gives a bound on the value output by \pref{alg:lin-sub}.
\begin{theorem}
    \torestate{
	\label{thm:odd-lin-upper}
    Let $k,n,d \in \N$, so that $d\log n \ll n$, and furthermore let $k$ be odd so that $k = 2\kappa + 1$.
    Let $\Phi$ be a random instance of $k$-XOR on $n$ variables $x_1,\ldots,x_n$, with $\Theta(p n^k)$ clauses (so each constraint is sampled uniformly and independently with probability $p$).
    Let $\Gamma_{(d)}$ be the matrix formed from the instance $\Phi$ as described in \pref{alg:lin-sub-odd}.
    Then if $d^{(k-2)/2}n^{k/2}p > 1$, there exists a constant $c_k$ depending on $k$ such that with high probability over the choice of $\Phi$,
    \[
	\|\Gamma_{(d)}\|^{1/d} \le
    \tilde O \left(\frac{pn^{k/2}}{d^{(k-2)/2}}\right)\mper
    \]
}
\end{theorem}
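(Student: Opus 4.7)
The plan is to combine the matrix--and counting--structure used in the proof of \pref{thm:odd-upper} (the odd-order tensor norm bound) with the sparse moment bookkeeping used in the proof of \pref{thm:lin-upper} (the even $k$-XOR bound). As in both of those proofs, the core tool will be the trace power method (\pref{prop:tpm}) applied to $\Gamma_{(d)}$, so our main task will be to bound $\E_{\Phi}[\Tr((\Gamma_{(d)}\Gamma_{(d)}^\top)^\ell)]$ and apply \pref{prop:tpm} with $\ell = O(\log n)$. Expanding this trace exactly as in \pref{sec:tensor-D-odd} gives a sum indexed by (a) an $R$-multilinear vertex configuration $V = (S_1,\dots,S_{2\ell}) \in \cV_R \subseteq ([n]^{2\kappa d})^{2\ell}$ along the cycle, (b) a sequence of perfect hyperedge matchings $H$ between consecutive $S_i$'s, each matching consisting of $2d$ tripartite hyperedges of order $k=2\kappa+1$ with two ``sides'' and one ``pivot'' vertex, where the pivots come from an $R$-multilinear ordered multiset $U_i \in [n]^d$ (this is the effect of the step of \pref{alg:lin-sub-odd} that restricts the sum over $U$), and (c) a pivot assignment. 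The entries of $\bT_\Phi$ are independent and symmetric with $\E[\bT_S^{2r}] = p$ for all $r\ge 1$, so the expectation of a given term is $p^M$ where $M$ is the number of \emph{distinct} labeled hyperedges in $(V,H)$ provided every hyperedge appears with even multiplicity, and $0$ otherwise.

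Next, I would sample $H$ by the now-standard three-step process: first a simple-edge matching $\cE$ between $S_i$ and $S_{i+1}$, then group the edges into ``blocks'' $\cF$ of size $\kappa$ (the two sides of each hyperedge), then pair up the blocks via a matching $\cM$ with a pivot vertex from $U_i$ attached to each pair (the diagonal-free condition from \pref{sec:tensor-D-odd} is automatically enforced because \pref{alg:lin-sub-odd} subtracts $\squares(T_i \tensor T_i)$). Conditioning on each intermediate event and using the same argument as in \pref{lem:graph-from-hgraph} / \pref{lem:graph-from-hgraph-nocount2}, the probability of sampling a particular $(V,H)$ can be bounded by the probability of sampling the associated even $\cE$, times the probability of grouping $\cE$ into the correct $\cF$, times the probability of pairing $\cF$ into the correct $\cM$ with the right pivots, up to multiplicative losses of $(\kappa!)^{O(d\ell)}$. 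The sum over $V$ and $\cE$ of the probability of an even $\cE$ with at most $k M/2$ distinct edges will be controlled by the $R$-multilinear counting bound \pref{prop:multilin-graph-bound}, yielding a factor of roughly $n^{M + \kappa d}$ times a polynomial loss. The pairing of blocks and pivots is then handled by \pref{lem:even-matching}, which provides the key $d^{-(k-2)/2}$ saving per block, and by \pref{lem:eulerian1}/\pref{lem:bigone} to count the number of valid pivot labelings as a function of the number of odd multiedges between blocks.

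Putting these pieces together, the sum decomposes, for each value of $M \in [1,d\ell]$, into a product of (i) a factor $p^M$ from the moments, (ii) a factor $n^{M + O(d)}$ from the choice of $V$ and edge labels under the constraint of even simple-edge multiplicities, (iii) a factor $(d\kappa)^{-(\kappa-1)(2d\ell - M)}$ from the hyperedge grouping probability with precisely $M$ distinct hyperedges as in \pref{lem:matching-probability-2}, and (iv) polylogarithmic-in-$n$ factors from the $R$-multilinearity cutoff $R = 100\log n$, from the pivot labeling, and from the block-pairing bookkeeping. Under the hypothesis $p \cdot d^{(k-2)/2} n^{k/2} \ge 1$ the sum over $M$ is dominated by $M = d\ell$, giving
\[
\E[\Tr((\Gamma_{(d)}\Gamma_{(d)}^\top)^\ell)] \;\le\; \tilde O(1)^{d\ell}\cdot n^{O(d)} \cdot \left(\frac{p n^{k/2}}{d^{(k-2)/2}}\right)^{2d\ell},
\]
and then \pref{prop:tpm} with $\ell = \Theta(\log n)$ gives the claimed bound on $\|\Gamma_{(d)}\|^{1/d}$.

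The main obstacle, and where the bulk of the technical work lies, is the last step: running the hyperedge-grouping-and-pivot-matching argument from the odd tensor case (\pref{lem:bigone}, \pref{lem:fixedF}, \pref{lem:eulerian1}) in the presence of the sparse moment factor $p^M$, so that the precise dependence on the number of distinct hyperedges $M$ carries through. In particular, one must argue simultaneously that (a) having only $M$ distinct hyperedges gives a combinatorial saving of order $d^{-(\kappa-1)(2d\ell-M)}$ from the grouping step (an analogue of \pref{lem:matching-probability-2} for the odd, pivot-indexed setting), and (b) the number of vertex/pivot labelings with the evenness property scales as $n^{M + O(d)}$ rather than $n^{2d\ell + O(d)}$; the interplay between the asymmetric tripartite hyperedges, the $R$-multilinearity restriction imposed in step 5 of \pref{alg:lin-sub-odd}, and the subtraction of $\squares(T_i \tensor T_i)$ that enforces the diagonal-free condition is what makes this bookkeeping delicate. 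Once this combined count is established, all other steps are essentially a transcription of the proofs of \pref{thm:odd-upper} and \pref{thm:lin-upper}.
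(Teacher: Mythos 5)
Your high-level roadmap correctly mirrors the paper's strategy: trace power method, hypergraph interpretation of $\Tr((\Gamma_{(d)}\Gamma_{(d)}^\top)^\ell)$, the three-stage sampling (simple edges $\to$ $\kappa$-blocks $\to$ pivot-paired hyperedges), the sparse moment bookkeeping via $p^M$, and reuse of \pref{prop:multilin-graph-bound} and \pref{lem:matching-probability-2} from the even $k$-XOR proof. You are also right that \pref{lem:matching-probability-2} supplies the $d^{-(\kappa-1)(2d\ell-M)}$ saving from the grouping step.

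However, there is a concrete gap in the pivot-labeling step, and the roadmap you sketch for it would not close it. You propose to handle the count of valid pivot labelings via \pref{lem:eulerian1}/\pref{lem:bigone} (possibly adapted). But those lemmas, as used in the dense odd-tensor proof (\pref{thm:odd-upper}), bound the pivot count as $n^{|E(\cG)|/2 - |E_\oplus(\cG)|/6}$ --- a bound in terms of the number of \emph{odd multi-edges} in the block-pairing multigraph, tuned to yield a fixed multiplicative saving of $d^{-\Theta(d\ell)}$ relative to the trivial $n^{d\ell}$. They carry no dependence on the number $M$ of distinct hyperedges, which is the quantity that must interface with the sparse moment factor $p^M$: if you only get $n^{d\ell(1 - c/\log d)}$ or similar from the pivots, the $p^M n^{\kappa m + M/2}$ balance at $M = m = d\ell$ does not close. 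What the paper actually uses is a new encoding lemma (\pref{lem:odd-match}) that directly bounds the number of even $(V,U,H)$ sampleable from $(V,G)$ with at most $\tau$ distinct hyperedges by $(\frac{h}{2}!)^w (2hw)^2(4hR^2)^{hw}(hn)^{\tau/2}$. The $n^{\tau/2}$ comes from a clean pairing observation: every \emph{new} hyperedge (one introducing a fresh pivot label) is necessarily paired --- via the shared pivot vertex --- with a \emph{sharing} hyperedge that cannot be identical to it (because of the subtraction of $\squares(T_i\tensor T_i)$), so $\nsurp \le \tau/2$. This counting-by-encoding argument in the style of \pref{prop:multilin-graph-bound} and \pref{lem:matching-probability-2}, rather than an Eulerian-subgraph argument, is the missing ingredient. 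You do correctly flag the pivot count as ``the main obstacle,'' and you correctly state the required shape ($n^{O(M)}$), so your diagnosis is accurate; but the proposal does not supply the idea, and the dense-case lemmas it points to would fail to produce it.
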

We will prove the theorem below, in \pref{sec:odd-lin-norm-sec}, and we will now show how to take this matrix and acquire a certificate from it.

\subsubsection{Validity of the odd certificate}

Again, our strategy will be to work with the polynomial $(P_{\Psi}(x))^d$, which is not much altered by removing terms corresponding to the high-multiplicity rows, columns, or slice cross-products.

\begin{proposition}\label{prop:formula-odd}
    Let $\Phi$ be a random $k$-XOR formula in which each clause is sampled independently with probability $p$, and let $\Psi$ be the $2(k-1)$-XOR instance obtained from $\Phi$ as described above, where $\Psi$ has $m'$ clauses $\{C_{ab}\}_{ab}$ corresponding to pairs of clauses from $\Phi$ sharing the same final variable.

    Let $\hat\cC^d_{low} \subset [m']^d$ be the set of all ordered multisets of clauses $C_{a_1b_1},\ldots,C_{a_db_d}$ from $\Psi$ with the property that if we form three multisets of variables, $I,J \in [n]^{d(k-1)}$ and $S \in [n]^d$, with $I$ containing the first $(k-1)/2$ variables of each $C_{a_\ell}, C_{b_{\ell}}$, $J$ containing the next $(k-1)/2$ variables of each $C_{a_\ell},C_{b_{\ell}}$, and $S$ containing the last (shared) variable of $C_{a_\ell}$ and $C_{b_\ell}$, then $I,J$ are both low-multiplicity multisets, in that both have no element of $[n]$ with multiplicity $> 100\log n$.

    Let $o_{\max}$ be the maximum number of clauses of $\Psi$ any variable appears in.
    Then if $d\ll n$ and $d (2k-1) o_{\max} < 200 \epsilon m' \log n$,
    \[
	P_{\Psi}(x) \le \left(\E_{a_1 b_1,\ldots,a_d b_d \sim \hat\cC^d_{low}}\left[\prod_{\ell=1}^d \frac{1}{2}\left(1 - C_{a_\ell}(x)C_{b_\ell}(x)\right)\right]\right)^{1/d} + \epsilon
    \]
for all $x \in \{\pm 1\}^n$ with high probability.
    When $p \ge 200 \frac{\log n}{n^{k-1}}$, we have that $\epsilon = o(1)$ with high probability.
\end{proposition}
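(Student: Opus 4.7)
My plan is to adapt the proof of \pref{prop:formula} to the pair-of-clauses setting of \pref{prop:formula-odd}, since structurally the low-multiplicity condition on $(I,J)$ depends only on the multiset of variables accumulated so far from the already-chosen pairs. Concretely, I will sample a uniformly random element of $\hat\cC^d_{low}$ sequentially: at step $t+1$, let $\cA_{t+1} \subseteq [m']$ be the set of clause-pairs $(a,b)$ such that appending $(a,b)$ to the previously drawn $(a_1,b_1),\dots,(a_t,b_t)$ still lies in $\hat\cC^{t+1}_{low}$, and then pick $(a_{t+1},b_{t+1})$ uniformly from $\cA_{t+1}$. This clearly produces the uniform measure on $\hat\cC^d_{low}$ exactly as in the even case.

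The key bookkeeping step is the analogue of \pref{claim:exclusions}: each pair contributes exactly $(k-1)$ variables to $I$ and $(k-1)$ to $J$, so forbidding $\ell$ distinct variables from appearing again requires at least $100\ell\log n/(k-1)$ pairs to have been sampled already; hence at most $(k-1)t/(100\log n)$ variables are excluded after $t$ steps, and each excluded variable eliminates at most $o_{\max}$ pairs from $[m']$. This gives $|\cA_{t+1}| \ge m' - t(k-1)o_{\max}/(100\log n)$. Setting $X_t \defeq \prod_{j\le t} \tfrac{1}{2}(1 - C_{a_j}(x) C_{b_j}(x))$ and observing that restricting the uniform draw to $\cA_{t+1}$ can decrease the conditional mean of the next factor by at most the fraction of excluded pairs, I obtain
\[
    \E[X_{t+1} \mid \cA_{t+1}] \ge \Paren{P_\Psi(x) - \tfrac{t(k-1)o_{\max}}{100 m' \log n}} \cdot X_t \mper
\]
Iterating $d$ times and invoking the hypothesis $d(2k-1)o_{\max} < 200\epsilon m'\log n$ (which provides the slack needed to absorb the factor-$(k-1)$ coefficient) yields $\E[X_d] \ge (P_\Psi(x)-\epsilon)^d$, from which the first claim follows by taking $d$-th roots and using that $X_d$ is the value of the sampled product.

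For the ``high probability'' part, I need Chernoff-type estimates on $m'$ and $o_{\max}$ in the derived instance $\Psi$: with high probability $m_{\max} = O(pn^{k-1})$, each of the $n$ candidate shared variables contributes $\Theta((pn^{k-1})^2)$ pairs giving $m' = \Theta(p^2 n^{2k-1})$, and any variable $x_j$ lies in at most $o_{\max} = O(m_{\max}^2) = O(p^2 n^{2k-2})$ clauses of $\Psi$ (since $x_j \in C'_{ab}$ requires $x_j \in S_a \cup S_b$ and a shared last-position variable). Thus $d(2k-1)o_{\max}/(m'\log n) = O(d/(n\log n)) = o(1)$ whenever $d \ll n$, so $\epsilon = o(1)$ is admissible. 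I expect the main obstacle to be the bookkeeping for $o_{\max}$ rather than any conceptual novelty: clauses of $\Psi$ sharing a ``parent'' clause of $\Phi$ are highly correlated, so one must be careful that the sequential argument only uses the worst-case bound $o_{\max}$ (which it does) and that the probabilistic estimates on $o_{\max}$ and $m'$ use the original independence of $\Phi$'s clauses. A minor cosmetic point is the sign in $\tfrac{1}{2}(1 - C_{a_\ell}C_{b_\ell})$; under the convention that $P_\Psi$ counts agreements this should be a $+$, but either way the algebra of the martingale-type argument is unaffected.
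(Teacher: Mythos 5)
Your proposal follows the same route as the paper's proof: sequential uniform sampling from $\hat\cC^d_{low}$, a per-step conditional-expectation lower bound via counting excluded pairs, iterating $d$ times and invoking \sos-type monotonicity (here just ordinary convexity via taking $d$\th roots), then Chernoff estimates on $m'$ and $o_{\max}$. Your remarks about the correlations among $\Psi$'s clauses and why they pose no difficulty (the sequential argument only ever invokes the deterministic worst-case $o_{\max}$, while the high-probability estimates on $m'$, $o_{\max}$ use only the independence of $\Phi$'s clauses) are exactly the right things to say and match the paper's reasoning.

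There is one genuine bookkeeping slip in the exclusion count. A variable can be excluded by crossing the $100\log n$ threshold in \emph{either} $I$ or $J$. After $t$ steps each of $I,J$ holds $t(k-1)$ positions, so the number of variables excluded via $I$ is at most $t(k-1)/(100\log n)$, and similarly for $J$; the total number excluded is therefore at most $2t(k-1)/(100\log n)$, not $t(k-1)/(100\log n)$ as you claim. (The paper is even looser, charging all $2k-1$ positions per pair including the shared one, which is safely pessimistic.) Your per-step inequality should therefore carry a coefficient $2(k-1)$ (or $2k-1$) rather than $k-1$. This does not break the argument: the hypothesis $d(2k-1)o_{\max} < 200\epsilon m'\log n$ has exactly the factor-$2$ slack required, so the conclusion $\E[X_d]\geq(P_{\Psi}(x)-\epsilon)^d$ still holds once the per-step coefficient is corrected. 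Your concentration estimates $m'=\Theta(p^2 n^{2k-1})$ and $o_{\max}=O(p^2 n^{2k-2})$, and the resulting $o_{\max}/m'=O(1/n)$ giving $\epsilon=o(1)$ for $d\ll n$, agree with the paper. The sign convention for $\tfrac12(1\pm C_{a_\ell}C_{b_\ell})$ is indeed only cosmetic (and the paper itself uses both signs in different places).
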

\begin{proof}
    The proof is very similar to that of \pref{prop:formula-odd}.
    First, let $m'$ be the number of clauses in $\Psi$, and let $o_{\max}$ be the maximum number of clauses of $\Psi$ that any variable $i \in [n]$ appears in (even if it the shared variable and is included with multiplicity 2).

    By definition, we have that $P_{\Psi}(x)$ gives the proportion of satisfied clauses, so
    \begin{equation}
	(P_{\Psi}(x))^d = \E_{a_1 b_1,\ldots,a_d b_d \sim [m']^d}\left[\prod_{\ell=1}^d \frac{1}{2}\left(1 - C_{a_\ell}(x)C_{b_\ell}(x)\right)\right].
    \end{equation}
    Since only a $o(1)$ fraction of the multisets of indices, $[n]^{dk}$, will not contain any item with multiplicity more than $100\log n$, we will be able to prove that those terms contribute negligibly.

    We sample a uniform element  $\cC\sim\hat \cC^d_{low}$, $\cC = (C_{a_1},C_{b_2}),\ldots,(C_{a_{d}},C_{b_{d}})$ in the following way.
    For $t = 1,\ldots, d$:
    \begin{compactitem}
	    \item Let $\cA_t \subset \cI$ be the set of pairs of clauses such that for any $(C',C'') \in \cA$, $(C_{a_1},C_{b_1}),\ldots,(C_{b_{t-1}},C_{b_{t-1}}), (C',C'')\in \hat \cC^t_{low}$, that is, the set of clauses from $\Psi$ that maintain the low-multiplicity conditions.
	\item Choose a uniformly random $(C',C'') \sim \cA_t$ and set $C_{a_t},C_{b_t} := (C',C'')$, adding $(C',C'')$ to $\cC$.
    \end{compactitem}
	This sampling process clearly gives a uniformly random element of $\hat \cC^t_{low}$.
    \begin{claim}\label{claim:exclusions-odd}
	At step $t+1$ there are at least $m' - t\frac{(2k-1)\cdot o_{\max}}{R}$ clauses that can be added.
    \end{claim}
    \begin{proof}
	In order to exclude any variable $i \in [n]$, we must add at least $100\log n$ copies of $i$..
	Further, to exclude $\ell$ distinct variables in $[n]$, we must add at least $100\log n$ copies of each variable, for a total of $100\ell\log n$ variables, which requires adding at least $100\ell\log n/(2k-1)$ pairs (since each pair contains $2k-1$ variables).
	If $\ell$ distinct variables are excluded, then at most $\ell \cdot o_{\max}$ pairs of clauses are excluded.
	The claim now follows.
    \end{proof}
    Now, define the random variable $X_{t} = \prod_{j=1}^t \frac{1}{2}(1-C_{a_j}(x)C_{b_j}(x))$--this is the $0$-$1$ value of $x$ on $\cC_t$.
    We apply \pref{claim:exclusions-odd}, along with the observation that $P_{\Psi}(x)$ can only drop by $1/m'$ for each clause pair removed, to conclude that
    \begin{align*}
	\E[X_{t+1}|C_{a_1b_1},\ldots,C_{a_tb_t}]
	\ge \left(P_{\Psi}(x) - t\frac{ (2k-1)\cdot o_{\max}}{100 m'\log n}\right) \cdot X_{t}\mper
    \end{align*}
    From this we have that
    $ \E[X_t]
    \ge \left(P_{\Psi}(x) -  t\frac{(2k-1)o_{\max}}{100m'\log n}\right) \cdot \E[X_{t-1}]$
    from which we have that as long as $d(2k-1) o_{\max} \le \epsilon 100m'\log n $,
    \[
	\E[X_k] \ge \prod_{t=1}^{d} \left(P_{\Psi}(x) - t\frac{(2k-1) o_{\max}}{100m'\log n}\right)
	\ge \left(P_{\Psi}(x) - \epsilon \right)^d \mper
    \]
So taking $\epsilon = 1/\log n$, if we can establish that the inequality $d(2k-1) o_{\max} \le 100 m'$ with high probability when $p\ge \Omega(\log n / n^{k-1})$, then we are done.

    A Chernoff bound implies that $pn^{k}/2\le m \le 2pn^k$ with probability at least $1-\exp(-\Omega(pn^k))$, and that each variable's degree $m_i$ is $pn^{k-1}/2 \le m_i \le 2 pn^{k-1}$ with probability at least $1 - \exp(-\Omega(pn^{k-1}))$.
    We have that $m' = (\sum_i m_i^2) - m$, and so by a union bound and using the assumption that $p n^{k-1} \ge \Omega(\log n)$, we have that
    \[
	m' \ge  p^2 n^{2k-1}/4.
    \]

    Let $o_i$ be the degree of variable $i$ in $\Psi$.
    To bound $o_{max}$, we observe that $o_i$ is made up of occurrences of pairs in which $i$ is the shared variable, and of pairs in which $i$ is not the shared variable.
    The contribution of the first category is $m_i^2$, and with high probability by our union bound $m_i^2 \le (2pn^{k-1})^2$.
    In the second category, we have $\sum_{j} m_{j} \cdot m_{ij}$, where $m_{ij}$ is the number of clauses containing $i$ and $j$.
    By our previous assumption regarding the concentration of the $m_i$, we have that $\sum_j m_j \cdot m_{ij} \le 2pn^{k-1}\sum_j m_{ij}$.
    The quantity $\sum_j m_{ij} = m_i$, and so we can conclude that $o_{\max} \le 4 p^2 n^{2k-2}$, so that $o_{\max}/ m' \le 16/n $, yielding our result.
\end{proof}

The proof above can be modified to give the following lemma, which gives a lower bound on the number of low-multiplicity terms.
\begin{lemma}\label{lem:sizeofmultilinear-odd}
    If $\Phi$ is a random $k$-XOR instance, then so long as $d \ll n$ and $d(2k-1)o_{max} < 200\epsilon m'\log n$,
 \[
     |\hat\cC_{low}^d| \ge ((1-\epsilon)m')^d \cdot\mper
 \]
 Furthermore, $\epsilon = o(1)$ with high probability when $p \ge \Omega(n^{-k+1}\log n )$.
\end{lemma}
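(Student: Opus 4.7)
The plan is to mirror the sampling argument already used in \pref{prop:formula-odd}, but now use it to count (rather than to maintain a martingale-type inequality on the value of $x$). Specifically, I would sample an ordered element $\cC = ((C_{a_1},C_{b_1}),\ldots,(C_{a_d},C_{b_d})) \in \hat\cC^d_{low}$ one coordinate at a time, and at each step $t$ bound from below the number of admissible choices $\cA_t$ of the next pair using \pref{claim:exclusions-odd}. The number of ordered sequences in $\hat\cC^d_{low}$ is then exactly $\prod_{t=0}^{d-1}|\cA_t|$, because every element of $\hat\cC^d_{low}$ can be produced by the sampling process in exactly one way.

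Concretely, \pref{claim:exclusions-odd} gives $|\cA_t| \ge m' - t\,\tfrac{(2k-1)\, o_{\max}}{100 \log n}$, and the hypothesis $d(2k-1) o_{\max} \le 200\epsilon m'\log n$ gives, for every $t \le d$,
\[
    |\cA_t| \;\ge\; m' - d\cdot \tfrac{(2k-1)\, o_{\max}}{100 \log n}\;\ge\; (1-2\epsilon)\, m'\mper
\]
Multiplying over $t = 0,\ldots, d-1$ yields $|\hat\cC^d_{low}| \ge ((1-2\epsilon) m')^d$, which, after absorbing the factor of $2$ into $\epsilon$, gives the stated bound $((1-\epsilon) m')^d$.

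For the ``furthermore'' clause, I would invoke exactly the same high-probability estimates that were established at the end of the proof of \pref{prop:formula-odd}: a Chernoff and union bound yields $m' \ge p^2 n^{2k-1}/4$ and $o_{\max} \le 4 p^2 n^{2k-2}$ with high probability when $p \ge \Omega(\log n/n^{k-1})$, so that $o_{\max}/m' = O(1/n)$. Under the assumption $d \ll n$, the quantity $d(2k-1) o_{\max}/(m' \log n)$ is then $O(d/(n \log n)) = o(1/\log n)$, so we may take $\epsilon = \Theta(1/\log n) = o(1)$ and the hypothesis of the lemma is satisfied with high probability.

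There is no serious obstacle here: the only thing to be careful about is that the sampling process is a bijection onto $\hat\cC^d_{low}$ (so that multiplying the lower bounds on $|\cA_t|$ really lower bounds the cardinality), and that the quantitative degree bound on $o_{\max}$ is driven by pairs that share the ``pivot'' variable (the $m_i^2$ term) rather than by the auxiliary $\sum_j m_j m_{ij}$ term; both of these are already handled in the proof of \pref{prop:formula-odd} and carry over verbatim.
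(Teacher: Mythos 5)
Your proposal is correct and matches the paper's intended argument: both run the same sequential sampling process from \pref{prop:formula-odd}, invoke \pref{claim:exclusions-odd} to lower bound the number of admissible choices at each step, and then either count directly (your route) or equivalently bound the probability that a uniformly sampled clause violates the multiplicity restriction. The only quibbles are cosmetic: the phrase ``exactly $\prod_{t} |\cA_t|$'' should really be an inequality using the uniform lower bound on $|\cA_t|$ (since $\cA_t$ depends on prior choices), and the parenthetical claim that $o_{\max}$ is ``driven by'' the $m_i^2$ term is not quite right --- the $m_i^2$ and $\sum_j m_j m_{ij}$ contributions are the same order --- but neither affects the argument.
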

The proof proceeds exactly as the proof of \pref{prop:formula-odd}, but instead of bounding the decrease in the value as each clause is added, one bounds the probability that a clause the multiplicity restriction is chosen.

Now, we have that
\begin{align}
    \E_{a_1b_1,\ldots,a_db_d \in \hat \cC^d_{low}}\left[ \prod_{\ell = 1}^d \frac{1}{2}(1 + C_{a_\ell}(x)C_{b_{\ell}}(x))\right]
    &=
    \frac{1}{2^d} \sum_{S \subseteq [d]} \E_{a_1b_1,\ldots,a_db_d \in \hat \cC^d_{low}}\left[\prod_{\ell \in S} C_{a_\ell}(x)C_{b_\ell}(x)\right]\nonumber
    \intertext{and by the symmetry of the uniform distribution over $\hat\cC^d_{low}$,}
    &= \frac{1}{2^d} \sum_{j=0}^d \binom{d}{j} \cdot \E_{a_1b_1,\ldots,a_jb_j \in \hat \cC^j_{low}}\left[\prod_{\ell=1}^j C_{a_\ell}(x)C_{b_\ell}(x)\right]\nonumber \\
    &\le \frac{1}{2^d} \sum_{j=0}^t \binom{d}{j}  + \frac{1}{2^d}\sum_{j=t+1}^d \E_{a_1b_1,\ldots,a_jb_j \in \hat \cC^j_{low}}\left[\prod_{\ell=1}^j C_{a_\ell}(x)C_{b_\ell}(x)\right]\nonumber \\
    &=\frac{1}{2^d} \sum_{j=0}^t \binom{d}{j}  +\frac{1}{2^d} \sum_{j=t+1}^d \binom{d}{j} \cdot \frac{1}{|\hat \cC_{low}^j|} (x^{\otimes 2j\kappa})^{\top}\Gamma_{(j)}x^{\otimes 2j\kappa}\mper\label{eq:expansion}
\end{align}

Now, we can use the spectral norm of $\Gamma_{(j)}$ as a certificate, for values of $j \in [\delta d, d]$--we stitch together the details below.
The following concludes the proof of the refutation theorem, modulo the proof of the $\Gamma_{(d)}$ matrix norm bound from \pref{thm:odd-lin-upper}, which we give in the next subsection.
\begin{theorem}\label{thm:refutation-odd}
    Let $k = 2\kappa +1$ be odd, and let $d \ll n$.
    Then for sufficiently large $n$ there is an algorithm that with high probability certifies that a random $k$-XOR instance has value at most $\frac{1}{2} + \gamma + o(1)$ for any constant $\gamma > 0$ at clause density $m/n = \tilde O\left(\frac{n^{(k-2)/2}}{d^{(k-2)/2}}\right)$ (where the $\tilde O$ hides a dependence on $\gamma$ and $k$) in time $n^{O(d)}$.
\end{theorem}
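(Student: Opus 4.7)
The strategy mirrors the proof of \pref{thm:refutation} for even $k$, with the Cauchy--Schwarz reduction (\ref{eq:psisat})--(\ref{eq:psi-upper}) used at the very start to pass from strong refutation of $\Phi$ (odd arity $k=2\kappa+1$) to strong refutation of the derived $4\kappa$-XOR instance $\Psi$. All of the real work has already been done by \pref{prop:formula-odd}, \pref{lem:sizeofmultilinear-odd}, the expansion (\ref{eq:expansion}), and, crucially, the spectral bound \pref{thm:odd-lin-upper} on $\|\Gamma_{(d)}\|$.

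First I would use (\ref{eq:psi-upper}) to observe that in order to certify $P_{\Phi}(x) \le \tfrac{1}{2}+\gamma+o(1)$, it suffices to certify $P_{\Psi}(x)\le \tfrac{1}{2}+\gamma'+o(1)$ for $\gamma'=\Theta(\gamma^2)$. Indeed, at density $m/n \ge \tilde\Omega(n^{(k-2)/2}/d^{(k-2)/2})$ a Chernoff bound gives $m=\Theta(pn^k)$ and $m'=\Theta(p^2 n^{2k-1})$ w.h.p., so $\tfrac{1}{2m}\sqrt{nm}=\tfrac{1}{2}\sqrt{n/m}=o(1)$ and $\tfrac{1}{2m}\sqrt{2nm'\gamma'} = O(\sqrt{\gamma'})=O(\gamma)$. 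These same concentration estimates also let me verify the hypotheses $d\ll n$ and $d(2k-1)o_{\max}\ll m'\log n$ needed to invoke \pref{prop:formula-odd} and \pref{lem:sizeofmultilinear-odd} with $\epsilon=o(1)$.

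Next, I would chain \pref{prop:formula-odd} with the expansion (\ref{eq:expansion}) to obtain, for any $\delta\in(0,1)$,
\[
(P_{\Psi}(x)-o(1))^{d} \;\le\; \frac{1}{2^{d}}\sum_{j=0}^{\delta d}\binom{d}{j}
\;+\;\frac{1}{2^{d}}\sum_{j=\delta d+1}^{d}\binom{d}{j}\cdot\frac{(x^{\otimes 2j\kappa})^{\!\top}\Gamma_{(j)}(x^{\otimes 2j\kappa})}{|\hat{\cC}^{j}_{low}|}.
\]
The left binomial tail is bounded by the entropic estimate $2^{(H(\delta)-1)d}$, which can be made $\le (\tfrac{1}{2}+\tfrac{\gamma'}{2})^{d}$ by taking $\delta$ a sufficiently small absolute constant (depending on $\gamma$ and $k$). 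For the right-hand sum, I apply $\|x\|^{4j\kappa}=n^{2j\kappa}=n^{j(k-1)}$, the lower bound $|\hat{\cC}^{j}_{low}| \ge (1-o(1))^{j}(m')^{j} \ge (p^{2}n^{2k-1}/5)^{j}$ from \pref{lem:sizeofmultilinear-odd}, and \pref{thm:odd-lin-upper} (which I invoke simultaneously for all $j\in[\delta d,d]$ via a union bound over $O(d)$ values). Each $j$-th term becomes
\[
\left(\frac{\text{polylog}(n)}{p\cdot n^{k/2}\cdot d^{(k-2)/2}}\right)^{j}\!,
\]
so under the hypothesis $p\ge \tilde\Omega(n^{-k/2}/d^{(k-2)/2})$ (equivalently $m/n\ge \tilde\Omega(n^{(k-2)/2}/d^{(k-2)/2})$) the entire sum is $o(1)^{\delta d}=o(1)$. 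Taking $d$-th roots and combining with the entropy bound yields $P_{\Psi}(x)\le \tfrac{1}{2}+\gamma'+o(1)$ uniformly in $x$, which by the reduction in step one gives $P_{\Phi}(x)\le \tfrac{1}{2}+\gamma+o(1)$. The algorithm simply runs \pref{alg:lin-sub-odd} for each $j\in[\delta d,d]$ and checks $\|\Gamma_{(j)}\|$; the total running time is $n^{O(d)}$.

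The main obstacle is arithmetic bookkeeping: verifying that the loss from Cauchy--Schwarz (gap $\gamma$ becomes gap $\gamma^{2}$) does not force a stronger density requirement. This is where the specific shape of \pref{thm:odd-lin-upper} matters: the bound $\|\Gamma_{(d)}\|^{1/d}\le \tilde O(pn^{k/2}/d^{(k-2)/2})$ is exactly the \emph{square} of the even-case bound (\pref{thm:lin-upper}), and the denominator $|\hat{\cC}^{j}_{low}|\approx (p^{2}n^{2k-1})^{j}$ is correspondingly the square of the even-case denominator $(pn^{k})^{j}$. These two squarings cancel the Cauchy--Schwarz loss cleanly, producing the same density threshold $m/n=\tilde\Omega(n^{(k-2)/2}/d^{(k-2)/2})$ that appears in the even case of \pref{thm:refutation}; once this cancellation is checked, the rest is identical to the even proof.
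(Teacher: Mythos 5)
Your proposal is correct and follows the same route as the paper's proof of \pref{thm:refutation-odd}: the Cauchy--Schwarz reduction to $\Psi$, followed by \pref{prop:formula-odd}, the expansion \pref{eq:expansion}, \pref{lem:sizeofmultilinear-odd}, and \pref{thm:odd-lin-upper} applied via a union bound over $j\in[\delta d,d]$, with the binomial tail controlled by the entropy bound. Your remark on how the squaring of $p$ in both $\|\Gamma_{(d)}\|$ and $|\hat\cC^j_{low}|$ cancels the $\gamma\mapsto\gamma^2$ loss from Cauchy--Schwarz makes explicit a cancellation that the paper performs implicitly.
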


\begin{proof}
    As argued in the proof of \pref{prop:formula-odd}, we have that $m' = \Theta(p^2 n^{2k-1}) $ and $m= \Theta(pn^{k})$ with high probability, as long as $p \ge \Omega(\frac{\log n}{n^{k-1}})$.
    Suppose that no variable appears in more than $o_{\max}$ clauses in $\Psi$.
    Then for $\beta = \frac{d(2k-1)o_{\max}}{200 m' \log n}$,
from \pref{prop:formula-odd} and \pref{eq:expansion},
	\begin{align*}
	    (P_{\Psi}(x)-\beta)^d
	    &\le
	    \frac{1}{2^d} \sum_{j=0}^t \binom{d}{j}  +\sum_{j=t+1}^d \frac{\binom{d}{j}}{2^d} \cdot \frac{1}{|\hat \cC_{low}^j|} (x^{\otimes 2j\kappa})^{\top}\Gamma_{(j)}x^{\otimes 2j\kappa}\mper\label{eq:expansion}
	\end{align*}
	If we choose $t = \delta d$ for some constant $\delta > 0$, then we can bound the $j$th term in the second summation by
    \begin{align*}
	\frac{1}{|\hat \cC_{low}^j|} (x^{\otimes 2j\kappa})^{\top}\Gamma_{(d)}x^{\otimes 2j\kappa}
	&= \frac{\|x\|^{2j(k-1)}}{|\hat \cC^j_{low}|} \cdot \|\Gamma_{(j)}\|
	~=~ \frac{n^{j(k-1)}}{|\hat\cC^j_{low}|}\cdot \|\Gamma_{(j)}\|\\
	& \leq   \frac{n^{j(k-1)}}{(1-\beta))^j (m')^j} \cdot
	\tilde O \left(\frac{pn^{k/2}}{j^{(k-2)/2}}\right)^j\mcom
	\qquad \text{ (by \prettyref{lem:sizeofmultilinear-odd} and
	\prettyref{thm:odd-lin-upper})} \\
	&\le \tilde O \left(\frac{1}{(1-\beta) p n^{k/2}  j^{(k-2)/2}}\right)^j\mcom
    \qquad \text{(using that $m' = \Theta(p^2n^{2k-1})$ w.h.p.)}
	  \end{align*}
	  where the inequality holds with high probability from the conditions of \pref{thm:odd-lin-upper}, and therefore also holds with high probability simultaneously for all $j\in[\delta k,k]$ by a union bound.

	  The term comprised of the sum of binomial coefficients is at most
	\[
	    \frac{1}{2^d}\sum_{j=0}^{\delta d} \binom{d}{j}
	    \le 2^{(H(\delta)-1) d},
	\]
	Where $H(\cdot)$ is the binary entropy function, $H(\delta) = -\delta \log \delta - (1-\delta) \log (1-\delta)$.
	Also, $\beta = o(1)$ with high probability.
	Therefore, for some $\alpha \in [\delta , 1]$,
	\[
	    P_{\Psi}(x)
	    \le \left(2^{(H(\delta)-1)d} + \tilde O\left(\frac{1}{pn^{k/2} (\alpha d)^{(k-2)/2}}\right)^{\alpha d}\right)^{1/d} + o(1).
	\]
	Now, for $p\ge \tilde O(n^{-(k/2)}d^{-((k-2)/2)})$, the latter quantity  is $o(1)$, where the $\tilde O$ hides a $\polylog n$ and a dependence on $\delta$ and $k$.
	Thus, for $n$ sufficiently large the full term is at most $(1+o(1))2^{H(\delta) - 1}$.
	We can choose $\delta$ sufficiently small so as to bound $P_{\Psi}(x) \le \frac{1}{2} + \epsilon$ for any constant $\epsilon>0$.

	Using the fact that $c$, $m' \le 4 p^2  n^{2k-1}$ with high probability for sufficiently large $p \ge \tilde O(n^{-k+1})$ (see the proof of \pref{prop:formula-odd}), and that $m \ge pn^{k}/2$ with high probability, combining with \pref{eq:psi-upper} we have that
	\begin{align*}
    P_{\Phi}(x)
	    &\le \frac{1}{2} + \frac{1}{2m}\sqrt{nm  + 2nm' \cdot \left(P_{\Psi}(x)- \frac{1}{2}\right)}\\
	    &\le \frac{1}{2} + \sqrt{\Theta\left(\frac{1}{pn^{k-1}}\right) +\frac{\epsilon n m'}{2m^2}}\\
	    &\le \frac{1}{2} + \sqrt{o(1) +\frac{\epsilon 4 p^2n^{2k}}{2(\frac{1}{2})^2p^2n^{2k}}}
	    ~=~ \frac{1}{2} + \sqrt{\Theta\left(\frac{1}{pn^{k-1}}\right) +8\epsilon }
	\end{align*}
	and we can take the quantity within the square root to be an arbitrarily small constant by choosing a constant $\epsilon$ sufficiently small.

	  We can certify this bound in time $n^{O(d)}\cdot d$, by running \pref{alg:lin-sub-odd} to compute the top eigenvalue of $\Gamma_{(j)}$ for each $j\in[\delta d,d]$.
\end{proof}

\subsubsection{Bounding the odd certificate spectral norm}\label{sec:odd-lin-norm-sec}

Now, we bound $\|\Gamma_{(d)}\|$ for the matrices $\Gamma_{(d)}$ defined above in \pref{alg:lin-sub-odd}.
Before stating our theorem, we describe the hypergraphs corresponding to entries of $\Gamma_{(d)}$, and to $((\Gamma_{(d)})(\Gamma_{(d)})^\top)^{\ell}$.

We obtain $\Gamma_{(d)}$ by averaging over row and column symmetries of the matrix
\[
    \sum_{i\in[n]} \sum_{\substack{U \in [n]^d\\U ~ \text{low-mult} }} \bigoplus_{u \in U} T_u \tensor T_u - \squares(T_u \tensor T_u)\mcom
\]
then setting rows and columns indexed by high-multiplicity multisets to $0$.
Ignoring the subtracted squares for now, this can in turn be understood as the low-multiplicity restriction of the matrix
\[
    \left(\sum_{u} T_u \tensor T_u\right)^{\tensor d},
\]
where the low-multiplicity restriction is occurring on the Cauchy-Schwarz'd mode $u$, as well as on the rows and columns.
We begin with the hypergraph interpretation of the matrix $(\sum_{u} T_u \tensor T_u)^{\tensor k}$, from which the interpretation for $\Gamma_{(d)}$ will follow by our understanding of symmetrization over $\cS_{2d\kappa}$ and of low-multiplicity restrictions.
Let $M:=(\sum_{u} T_u \tensor T_u)^{\tensor d}$ for convenience.
We have that the $(A,B),(C,D)$th entry of $M$ (for $A,B,C,D \in [n]^{d\kappa}$ with $A = a_1,\ldots,a_{d}$ with $a_i \in [n]^\kappa$, and with similar decompositions defined for $B,C,D$) has value
\begin{align*}
    M_{(A,B),(C,D)}
    &=  \prod_{i \in [d]} \left(\sum_{u\in[n]}\bT_{a_i,c_i,u} \cdot \bT_{b_i,d_i,u}\right)
    =  \sum_{U \in [n]^d} \prod_{i \in [d]} \left(\bT_{a_i,c_i,u_i} \cdot \bT_{b_i,d_i,u_i}\right)\mper
\end{align*}
Interpreting the variables $\bT_{a_i,c_i,u_i}$ as $k = (2\kappa +1)$-uniform hyperedges, we have that each entry is a sum over hypergraphs indexed by $U \in [n]^d$.
For each $U \in [n]^d$, we have a hypergraph on the following vertex configuration: on the left, we have the vertices from the multiset $A,B$.
On the right, we have the vertices from the multiset $C,D$.
In the center, we have the vertices from $U$.
On this vertex set, we have $2d$ hyperedges.
Of these hyperedges, $d$ form a tripartite matching on the vertices in $A,C,U$, with $\kappa$ vertices from each of $A,C$ and one vertex in $U$.
The other $d$ form a similar tripartite matching on the vertices in $B,D,U$.
Every hyperedge on $A,C,U$ shares exactly one vertex in $U$ with exactly one hyperedge from $B,D,U$.
See \pref{fig:oddten} for an illustration.

Now, we detail the impact of subtracting the squares, and of removing high-multiplicity rows, columns, and Kronecker powers.
\begin{compactitem}
\item The subtraction of the square terms $\squares(T_u \tensor T_u)$ forces us to never have two hyperedges sharing a vertex in $U$ if they contain vertices of the same type in $[n]$: that is, we can never have $(a_i,c_i) = (b_i,d_i)$ as ordered multisets.
\item The deletion of high-multiplicity indices, both in the Cauchy-Schwarz'd mode and in the rows and columns, forces us to exclude hypergraphs with $(A,B)$, $(C,D)$, or $U$ containing more than $100\log n$ repetitions of any one vertex type.
\item The averaging operation $\E_{\Pi,\Sigma\in\hat \cS_{2d\kappa}}$ takes each such entry to an average over all allowed hyperedge configurations on the vertex set $(A,B), (C,D), U$.
\end{compactitem}
When we take $\Tr(\Gamma_{(d)}\Gamma_{(d)}^\top)^\ell$, we are taking a sum over all ``cycles'' of length $2\ell$ in such hypergraphs, where the vertices in the cycle are given  by the $(A,B)$ multisets, and the edges are given by the average hyperedge configuration between $(A,B)$ and the next $(C,D)$, with the $U$ vertices in between.

We are now ready to prove our upper bound on $\Gamma_{(d)}$.

\restatetheorem{thm:odd-lin-upper}
\begin{proof}
We fix $d$ and take $\Gamma := \Gamma_{(d)}$ for convenience.
    Also, fix $R := 100\log n$, and call a multiset $S \in [n]^m$ $R$-multilinear if no element of $[n]$ appears with multiplicity more than $R$ in $S$.
    We will apply the trace power method (\pref{prop:tpm}) to $\Gamma$, for which it suffices to obtain an upper bound on
    \[
	\E\left[\Tr\left(\Gamma_{(d)} \Gamma_{(d)}^\top)^\ell\right)\right]\mper
    \]
As described above, this amounts to bounding the number of hypergraph cycles of length $2\ell$, where each ``vertex'' of the cycle is comprised of an $R$-multilinear multiset $(A,B) \in [n]^{2d\kappa}$, and the ``edges'' in the cycle between $(A,B)$ and $(C,D)$ are the sum over all $R$-multilinear $U \in [n]^d$ of the average over all possible hypergraphs (because of the symmetries) that contain a tripartite hypergraph matching with $2d$ edges between $(A,B), U, (C,D)$ in which every hyperedge contains $d$ vertices from $(A,B)$, $d$ vertices from $(C,D)$, and one vertex from $U$.
    Hypergraphs that contain two identical hyperedges sharing a vertex from $U$ have contribution $0$ to the sum (due to the subtraction of the ``squares'').

    Let the set of all valid $R$-multilinear vertex configurations $V$ comprising $V = S_1,\ldots,S_{2\ell} \in [n]^{2d\kappa}$  be denoted $\cV_R$.
    Let the set of all $R$-multilinear center vertex configurations $U = U_1,\ldots,U_{2\ell} \in [n]^d$ be denoted $\cU_R$.
    Let the set of all hyperedge matching sequences $H = H_1,\ldots,H_{2\ell}$ with $H_i$ a matching between $S_i,U_i,S_{i+1}$ be denoted $\cH$.
    For $H\in\cH, V \in \cV_R, U \in \cU_R$, denote by $(V,U,H)$ the hypergraph given by the hyperedges $H$ on the vertex configuration $V$.
    We think of the elements in the sum $\Tr(\Gamma \Gamma^\top)^\ell$ as being indexed by $\E_{H}[(V,U,H)]$, where the expectation over $H$ is a result of our symmetrization/averaging operation.

    Applying the above observations, and recalling that we have assembled $\Gamma$ from the random tensor $\bT$, we have that
    \begin{align*}
	\E_{\bT}[\Tr(\Gamma \Gamma^\top)^{\ell}]
	&= \sum_{U \in \cU_R}\sum_{V \in \cV_R} \E_{H\in\cH} \left[\E_{\bT}\left[ \prod_{(i_1,\ldots,i_d) \in (V,U,H)} \bT_{i_1,\ldots,i_d}\right]\right]\mcom
    \end{align*}
    Because $\bT_{S} \neq \bT_{\pi(S)}$, our hyperedges are ordered, and so two hyperedge variables are not identical unless the vertices appear in the same order (in particular, the partition into $a_i, b_i, u_i$ and the order within each should be the same).
    The expectation over $\bT$ of a term is $0$ if any ordered hyperedge in $(V,U,H)$ appears with odd multiplicity or if two identical ordered hyperedges share the same vertex in some $U_i$, and is $p^M$ if exactly $M$ distinct hyperedges  appear in $(V,U, H)$.
    Thus,
    \begin{align}
	\E_{\bT}[\Tr(\Gamma \Gamma^\top)^{\ell}]
	&\le \sum_{U\in \cU_R}\sum_{V \in \cV_R} \sum_{M=1}^{d\ell} p^M \cdot \E_{H \in \cH}\left[\Ind((V,U,H) ~\text{even, nonsharing}) \cdot \Ind((V,U,H)~\text{has}~M~\text{hyperedges})\right]\nonumber\\
	&= \sum_{U\in\cU_R}\sum_{V \in \cV_R} \sum_{M=1}^{d\ell} p^M \cdot \Pr_{H \in \cH}\left[(V,U,H) ~\text{even, nonsharing with}~M~\text{hyperedges})\right]\mper \label{eq:prbdModd}
    \end{align}

    To bound this probability, we will sample uniformly $U \in \cU_R$ and $H\sim \cH$ in a three-step process.
    \begin{compactenum}
	\item Fix $V \in \cV_R$.
	\item Sample a uniformly random perfect matching (with $2$-edges rather than hyperedges) between each set $S_i, S_{i+1} \in \cV_R$--call the edge set sampled in this manner $E$, so that we now have the graph $(V,E)$.\label{step:matching}
	\item Sample a hyperedge matching configuration $G$ from $E$ by choosing a uniform random grouping of the edges between $S_i,S_{i+1}$ into groups of $d$ edges, to obtain the hypergraph $(V,G)$ (when $k = 3 \implies \kappa = 1$, this step is skipped).\label{step:hyp}
	\item Sample a pairing $H$ of the hyperedges in $G$, a center vertex for each pair in $H$ and an order on the center vertices to form $U$, to obtain the hypergraph $(V,U,H)$.\label{step:pair}
	\end{compactenum}
	For \pref{step:matching} and \ref{step:hyp}, we will employ the same \pref{prop:multilin-graph-bound} and \pref{lem:matching-probability-2} that we used in the proof of the even case (\pref{thm:lin-upper}) to bound the probability that we sample a $(V,E)$ and $(V,G)$ with a certain edge multiplicity and the evenness property.
	For \pref{step:pair}, we will need another lemma along the same lines.
	\medskip

	We note that if $(V,U,H)$ has every hyperedge appearing an even number of times and there are $M$ distinct edges, then even if all center vertices are removed to obtain a $2\kappa$-hypergraph $(V,G)$, every ordered hyperedge must still appear with even multiplicity, and there can only be at most $M$ distinct hyperedges.
	Therefore, letting $\cE^{M}_H$ be the event that $(V,U,H)$ is even with $M$ edges and no square/sharing hyperedges, letting $\cE^{m}_G$ be the event that $(V,G)$ is even with at most $m$ edges, we have that
	\begin{align}
	    \Pr[\cE^M_H]
	    = \sum_{m \le M} \Pr[\cE^M_H, \cE^m_G]
	    = \sum_{m \le M} \Pr[\cE^M_H | \cE^m_G] \cdot \Pr[\cE^m_G]\mper\label{eq:HG}
	\end{align}
	Now, let $\cE^{\le m'}_E$ be the event that $(V,E)$ is a simple graph with the evenness property and at most $m'$ distinct edges.
	We use the asymmetry of $\bT$ to invoke \pref{lem:graph-from-hgraph} with $\hght \leftarrow 2d$, $\wdth \leftarrow 2\ell$, $\tau \leftarrow M$ and $\kappa \leftarrow \kappa$ which gives us that $\Pr(\cE^m_G ~|~ \cE^{\le dm}_{E}) \ge (\kappa!)^{-4d\ell}$.
	From this, we have
	\begin{align}
	    \Pr[\cE^M_H]
	    &= \sum_{m \le M} \Pr[\cE^M_H | \cE^m_G] \cdot \Pr[\cE^m_G] \qquad\text{(by \pref{eq:HG})}\nonumber \\
	    &= \sum_{m \le M} \Pr[\cE^M_H | \cE^m_G] \cdot \frac{\Pr[\cE^m_G,\cE_E^{\le \kappa m}]}{\Pr[\cE_E^{\le \kappa m}~|~\cE^m_G]}\nonumber \\
	    &\le (\kappa!)^{4d\ell} \sum_{m \le M} \Pr[\cE^M_H | \cE^m_G] \cdot\Pr[\cE^m_G,\cE_E^{\le \kappa m}] \qquad (\text{by \pref{lem:graph-from-hgraph}})\nonumber \\
	    &\le \kappa^{4d\ell\kappa}\sum_{m \le M} \Pr[\cE^M_H | \cE^m_G] \cdot\Pr[\cE^m_G~|~\cE_E^{\le \kappa m}]\cdot \Pr[\cE_E^{\le \kappa m}]\mper \label{eq:conc}
	\end{align}

	We will bound $\Pr(\cE_{G}^{m}|\cE_E^{\le \kappa m})$, using \pref{lem:matching-probability-2} with $\hght \leftarrow 2d$,$\wdth \leftarrow 2\ell$, $\tau \leftarrow m$, which gives us that
	\[
    \Pr(\cE_{G}^{m}|\cE_E^{\le dm})\le \Pr((V,G)\text{ even with } m\text{ edges} ~|~ (V,E) \text{ even })
    \le
    \frac{( 4e^\kappa \kappa^\kappa R^{\kappa+1} \ell)^{4d\ell} }{(2d\kappa)^{(\kappa-1)(4d\ell - m)}}\mper
	\]

	And so now, combining with \pref{eq:conc}, we have that for some constant $c_1$ depending on $\kappa$,
	\begin{align}
	    \Pr[\cE^M_H]
	    &\le \kappa^{4d\ell\kappa} \sum_{m \le M} \Pr[\cE^M_H | \cE^m_G] \cdot\left(\frac{( 4e^\kappa \kappa^\kappa R^{\kappa+1} \ell)^{4d\ell} }{(2d\kappa)^{(\kappa-1)(4d\ell - m)}}\right)\cdot \Pr[\cE_E^{\le \kappa m}]\nonumber\\
	    &\le \frac{(c_1 R^{d+1}\ell)^{4d\ell}}{d^{(\kappa-1)4d\ell}} \sum_{m \le M} d^{(\kappa-1)m}\cdot \Pr[\cE^M_H | \cE^m_G] \cdot \Pr[\cE_E^{\le \kappa m}]
	\end{align}

    And therefore, with \pref{eq:prbdModd},
    \begin{align}
	\E[\Tr(\Gamma \Gamma^\top)^{\ell}]
	&= \frac{(c_1 R^{\kappa+1} \ell)^{4d\ell}}{d^{(\kappa-1)4d\ell}} \cdot  \sum_{M=1}^{d\ell} p^M \cdot\sum_{m \le M} \sum_{\substack{U \in [n]^d\\R-\text{multi}}}d^{(\kappa-1)m}\cdot \Pr[\cE^M_H | \cE^m_G]\sum_{V \in \cV_R}  \cdot \Pr[\cE_E^{\le \kappa m}]
    \end{align}

    We now bound $\Pr(\cE_{E}^{\le \kappa m})$.
If we interchange the order of summation, sum over these probabilities for a fixed value of $m$, letting $\cM$ be the set of all possible edge configurations $E$, we have
\begin{align}
    \sum_{V \in \cV_R} \Pr_{E}(\cE_E^{\kappa m})
    &= \sum_{V \in \cV_R} \frac{\left|\{ E ~|~(V,E) ~\text{ even with} \le \kappa m ~\text{edges}\}\right| }{|\cM|}\nonumber\\
    &= \frac{\left|\right\{E , V  ~|~ (V,E) ~\text{ even with} \le \kappa m ~\text{edges}\left\}\right| }{|\cM|}\mper\label{eq:pr-to-count}
\end{align}
We will bound this quantity with \pref{prop:multilin-graph-bound}, which counts the number of $V \in \cV_R$ that yield and even graph with at most $m$ edges on a fixed $E \in \cM$.
From \pref{prop:multilin-graph-bound} with $\wdth \leftarrow 2\ell$, $\hght \leftarrow 2d$, $\ty \leftarrow m'$, we have that
for a fixed $E \in \cM$ and for a fixed list of edge multiplicities $a_1,\ldots,a_{m'}$,
\begin{align*}
    \left|\{ V ~|~ (V,E)~\text{has}~ m' ~\text{edges with multiplicities}~a_1,\ldots,a_{m'}\}\right|
    &\le (c_2 R\ell)^{4d\kappa \ell} \cdot (d\ell)^2 \cdot n^{m' + 2d\kappa}
    \end{align*}
    for some constant $c_2$ depending on $k$, where we have used the assumption that $d \ll n$ to meet the requirements of \pref{prop:multilin-graph-bound}.
The number of possible edge multiplicity lists $a_1,\ldots,a_{m'}$ for a given value of $m'$ is at most $\binom{m' + 4d\kappa\ell-1}{m'-1} \le 2^{4dk\ell}$.
Thus,
\begin{align*}
    &\frac{\left|\right\{E, V ~|~ (V,E) ~\text{ even with} \le \kappa m ~\text{edges}\left\}\right| }{|\cM|}\\
    &\qquad\qquad\le \frac{1}{|\cM|} \cdot \sum_{m'=1}^{\kappa m} \sum_{a_1,\ldots,a_{m'}} |\cM| \cdot
    \left|\{ V ~|~ (V,E)~\text{has}~ m' ~\text{edges with even mult.s}~a_1,\ldots,a_{m'}\}\right| \\
	 &\qquad\qquad\le \sum_{m' = 1}^{\kappa m} 2^{4dk\ell} \cdot (c_2 R\ell)^{4d\kappa\ell} \cdot (d\ell)^2 \cdot n^{m' + 2d\kappa}
  ~\le~  (c_3 R\ell)^{4dk\ell} \cdot (d\ell)^2 \cdot n^{\kappa m+ 2d\kappa +1}\mcom
\end{align*}
for some constant $c_3$ depending on $k$.
So there is a constant $c_4$ depending on $k$ so that,
\begin{align*}
	\E[\Tr(\Gamma \Gamma^\top)^{\ell}]
	&\le \frac{(c_2 R^{\kappa+1}\ell)^{4d\ell}}{d^{(\kappa-1)4d\ell}} \cdot  \sum_{M=1}^{d\ell} p^M \cdot\sum_{m \le M} \sum_{\substack{U \in [n]^k\\R-\text{multi}}}d^{(\kappa-1)m}\cdot \Pr[\cE^M_H | \cE^m_G]\left((c_3R\ell)^{4d\kappa\ell} (d\ell)^2 n^{\kappa m+2d\kappa+1}\right)\\
	&= \frac{(c_4 R\ell)^{4d\ell(\kappa +1)}}{d^{(\kappa-1)4d\ell}}(d\ell)^2 n^{2d\kappa +1} \cdot  \sum_{M=1}^{k\ell} p^M \cdot\sum_{m \le M} d^{(\kappa-1)m}\cdot n^{\kappa m}\sum_{\substack{U \in [n]^k\\R-\text{multi}}} \Pr[\cE^M_H | \cE^m_G]\mper
\end{align*}

For a given hypergraph $(V,G)$ and a fixed $U$, there are $\left(d! \cdot \left(\frac{(2d)!}{d!2^d}\right)\right)^{2\ell}$ hyperedge groupings from which we can sample $(V,U,H)$--first we choose a matching of the $2d$ hyperedges in each column, and then we choose an ordering on the $d$ vertices of $U$ to determine which belong to each hyperedge pair.
We now appeal to the following lemma, which bounds the number of pairings and choices of $U$ that can result in the evenness property for a given $(V,G)$:
\begin{lemma}
    \torestate{
\label{lem:odd-match}
    Suppose $R,\wdth,\hght\in \N$ such that $\hght \le n$ and $\hght$ is even.
    Let $G$ be an even $R$-multilinear hypergraph with $\hght$ hyperedges per column and $\wdth$ columns, and hyperedge multiplicity profile $\alpha = a_1,\ldots,a_{\ty}$.
Let $H$ be a hypgergraph we sample from $G$ by matching edges in a column to each other, then adding a vertex in between with a label from the set $[n]$, with the additional constraint that the columns of center labels be $R$-multilinear, and that no two identical ordered hyperedges from $G$ are matched to the same center vertex.
Let $\tau$ be a valid number of distinct hyperedges sampleable from $G$.
Then
\[
    \left(\#~H \text{ even with }\le \tau \text{ edges} ~|~ G\right)
    \le (\frac{\hght}{2}!)^{\wdth}\cdot (2\hght \wdth)^2 (4\hght R^2)^{\hght\wdth} \cdot (\hght n)^{\tau/2}\mper
\]
}
\end{lemma}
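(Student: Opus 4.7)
The plan is to prove Lemma~\ref{lem:odd-match} by an encoding argument on the pair events that determine $(U, H)$ from $G$. Write $P = \hght\wdth / 2$ for the total number of pair events. I would process the pair events in a canonical column-major order, and for each one simultaneously specify both the partner hyperedge (which completes the matching in that column) and the center vertex $u \in [n]$.

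For each pair event I would classify it as \emph{fresh} if both of the two resulting hyperedges $(e_1, u), (e_2, u)$ are absent from the already-processed prefix of $H$, and \emph{closing} otherwise. A fresh event is encoded by directly writing $u$ (at most $n$ choices) together with the partner index within the column (at most $\hght$ choices), for a total cost of $\hght n$; crucially, since it contributes exactly $2$ new distinct hyperedges to $H$, the number of fresh events is at most $\tau / 2$. A closing event is encoded by pointing at a prior pair event that already contains one of its hyperedges, together with a side indicator and the partner-within-column index. Here the $R$-multilinearity of the center column and of the vertex multisets $S_i$ limits the number of valid targets of such a pointer: each hyperedge of $H$ can recur at most $R \cdot R$ times (at most $R$ copies of the center label per column, combined with the $R$-multilinear vertex labels), so the cost per closing event is at most $4 \hght R^2$.

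Combining the two classes and taking the worst case over the number of fresh events (attained at the maximum value $\tau/2$, since $n$ dwarfs $\hght R^2$), the total cost of specifying $(U, H)$ after the matching is at most $(\hght n)^{\tau/2} \cdot (4 \hght R^2)^{\hght\wdth}$, where I have inflated the exponent of the closing-event cost from $P - \tau/2$ up to $\hght\wdth$ at a benign cost. Multiplying by the $((\hght/2)!)^\wdth$ bound on perfect matchings within each column (using $(\hght-1)!! \le (\hght/2)!\cdot 2^{\hght/2}$ and absorbing $2^{\hght\wdth/2}$ into the closing-event factor) and a boundary-correction factor $(2 \hght \wdth)^2$---which accounts for the initial pair event of each column, where there is no prior occurrence to point back to, and for the last occurrence of each distinct hyperedge, which needs explicit bookkeeping to keep the encoding decodable---yields the claimed bound.

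The main obstacle is making the closing-event encoding provably injective while keeping the $4 \hght R^2$ upper bound tight: the pointer must uniquely identify the prior pair event being closed given only the encoding prefix and $G$, and one must show that the $R$-multilinearity of both the vertex and center columns, together with the asymmetry enforced by the ``no two identical ordered hyperedges in a single pair event'' constraint, restricts the candidate prior events to $O(\hght R^2)$. A secondary obstacle is ensuring that the canonical column-major order interacts gracefully with the classification, so that at each step the number of closing events accessible to the decoder is indeed bounded by the encoding cost and so that the $\tau/2$ bound on fresh events is matched by the $n^{\tau/2}$ portion of the encoding rather than leaking into the closing-event budget.
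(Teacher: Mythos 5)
Your encoding argument is the same approach the paper takes: process the pairing decisions in a canonical column order, classify each decision, and charge the full label $n$ only to a class that occurs at most $\tau/2$ times, using $R$-multilinearity to make the remaining decisions cheap. Your reorganization into per-pair ``fresh/closing'' events (vs.\ the paper's four per-hyperedge types new/reuse/sharing/return) is cosmetic; the observation that each fresh pair event introduces two new distinct hyperedges is the same counting that underlies the paper's $\nsurp \le \nshare$, hence $\nsurp \le \tau/2$.

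However, there is a concrete gap in your charge for closing events. You assert that ``each hyperedge of $H$ can recur at most $R\cdot R$ times,'' justified by the per-column multiplicity caps of $R$ on the center labels and on the $G$-hyperedge labels. This does not bound the recurrence by $R^2$: those caps are \emph{per column}, and a fixed labeled hyperedge $(e,u)$ may appear in each of the $\wdth$ columns, so the correct bound is on the order of $\wdth R$, not $R^2$. A raw pointer to a prior occurrence therefore needs $\Omega(\wdth R)$ range, and there is no way to get an $O(R^2)$ bound on the multiplicity from $R$-multilinearity alone. The paper handles this exactly at this spot: for a return hyperedge it records the column of first appearance ($\wdth$ choices) and then two indices that are each at most $R$ by per-column $R$-multilinearity, for a per-return cost of $\wdth R^2$; its stated $(4\hght R^2)^{\hght\wdth}$ then implicitly uses $\wdth \le \hght$ in the final line. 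To repair your encoding you would need to record the column of the prior occurrence explicitly (contributing a $\wdth$ factor), and you would also want to replace the $\hght$-choice ``partner within column'' index with the paper's cheaper $R$-choice ``index of the center vertex among same-labeled centers in the current column'' (relying on the separately-recorded $((\hght/2)!)^\wdth$ matching data to resolve the pairing), since in the intended application $\hght \gg R$ and an $\hght$ factor per closing event would be far more costly than an $R$ factor.
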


Applying \pref{lem:odd-match} with $\wdth \leftarrow 2\ell$, $\hght \leftarrow 2d$,$\tau \leftarrow M$, we have that since all of our $U$-configurations are $R$-multilinear, and since we forbid two identical ordered edges to share a center vertex, we sum over all possible hyperedge multiplicity profiles (at most $2^{4d\ell}$ ) we divide by the number of possible hyperedge groupings and get that for some constant $c_5$ depending on $k$,
\[
    \sum_{U \in \cU_R} \sum_{\alpha} \Pr[\cE_{H}^{M} ~|~ \cE_G^{\le m}]
    \le 2^{8d\ell}\cdot \frac{(d!)^{2\ell}(8d\ell)^2 (4\ell R^2)^{4d\ell} \cdot (2d n)^{M/2}}{\left(d!\cdot \frac{(2d)!}{2^d d!}\right)^{2\ell}}
    \le (c_5R^2\ell)^{4d\ell}\frac{(d\ell)^2 \cdot (d n)^{M/2}}{d^{2d\ell}}
\]
And combining this with the above, there exists a constant $c_6$ depending on $k$ so that
\begin{align*}
	\E[\Tr(\Gamma \Gamma^\top)^{\ell}]
	&\le \frac{(c_4 R\ell)^{4d\ell(\kappa+1)}}{d^{(\kappa-1)4d\ell}}(d\ell)^2n^{2d\kappa +1} \cdot  \sum_{M=1}^{2d\ell} p^M \cdot\sum_{m \le M} d^{(\kappa-1)m}\cdot n^{\kappa m}\left(\frac{(d\ell)^2 (c_5\ell R^2)^{4d\ell} (dn)^{M/2}}{d^{2d\ell}}\right)\\
	&\le \frac{(c_6 R\ell)^{4d\ell(\kappa + 8)}}{d^{(\kappa-1)4d\ell}}\cdot \frac{1}{d^{2d\ell}} (d\ell)^4\cdot n^{2d\kappa+1} \sum_{M=1}^{2d\ell} p^M (dn)^{M/2} \cdot\sum_{m \le M} d^{(\kappa-1)m}\cdot n^{\kappa m}\\
	&\le \frac{(c_6 R\ell)^{4d\ell(\kappa + 8)}}{d^{(\kappa-1/2)4d\ell}} \cdot (d\ell)^4 n^{2d\kappa+1} (2d\ell)\sum_{M=1}^{2d\ell} p^M \cdot d^{(\kappa-1/2)M}\cdot n^{(\kappa+1/2)M}\\
	\intertext{And so long as $pk^{(\kappa-1/2)}n^{(\kappa+1/2)} \ge 1$,}
	&\le \frac{(c_6 R\ell)^{4d\ell(\kappa+8)}}{d^{(\kappa-1/2)4d\ell}} \cdot n^{2d\kappa+1} (d\ell)^6 \cdot p^{2d\ell} \cdot d^{(\kappa-1/2)2d\ell}\cdot n^{(\kappa+1/2)2d\ell}\\
	&\le n^{2d\kappa+1} (d\ell)^6 \cdot\left(c_6 (R\ell)^{2(\kappa+8)}\cdot\frac{pn^{\kappa+1/2}}{d^{\kappa-1/2}}\right)^{2d\ell},
\end{align*}
and now taking $\ell = O(\log n)$ and recalling that $R = 100\log n$, with high probability by \pref{prop:tpm} we have that
\[
    \|\Gamma\|^{1/d} \le \tilde O \left(\frac{pn^{k/2}}{d^{(k-2)/2}}\right)\mper
\]
\end{proof}
This concludes our bound on $\|\Gamma \|$--in the next subsection, we prove the bounds on the sampling probabilities that we relied upon in the proofs of \pref{thm:lin-upper} and \pref{thm:odd-lin-upper}.

\subsection{Bounding probabilities of sampling even hypergraphs}
\label{sec:pr-hgraph}

Our first proposition counts the number of vertex configurations with the evenness property and a given set of edge multiplicities for a fixed edge set $E$.
\restateprop{prop:multilin-graph-bound}

We remark that this proposition resembles a lemma used in establishing exact bounds on the order of the deviation of the second eigenvalue of a Wigner matrix, in the work of \cite{FK81}.
Unfortunately, their statement does not directly imply the bounds we require, as they work in a slightly different setting and wished to precisely bound the constant.
Our proof is similar to the exposition of \cite{FK81} in \cite{tao}.
\begin{proof}
    We bound the number of such graphs by encoding each graph as a unique string.
    Since $E$ is known, it suffices to encode enough information to recover the labels of the vertices.

    We will call $M_i$ (the matching in $E$ between $S_i$ and $S_{i+1}$) the $i$th \emph{column} of edges.
    We choose an ordering on the edges of $E$: we order them first by column, and within each column arbitrarily.
    Given a $G \in \cG_{\wdth\times\hght}^{\alpha,E}$, we will process the edges one at a time in this pre-specified order, and for each edge we will record either the labels of its incident vertices, or enough information to recover the labels from what we have previously recorded.
    To reduce the amount of recorded information, it will be helpful to specify several edge types:
    \begin{compactitem}
    \item Edges we see for the first time:
	\begin{compactitem}
	\item \emph{new-endpoint edges}: never-before seen edges $e_i$ with $a_i = 2$ which take us to a vertex with a label we have not seen before.
	    Let there be $\nsurp$ such edges.
	\item \emph{reused-endpoint edges}: never-before seen edges $e_i$ with $a_i = 2$ which take us to a vertex with a label we have already seen.
	\end{compactitem}
	\item Edges we see for the second (and last) time:
	    \begin{compactitem}
	\item \emph{return edges}: edges $e_i$ with $a_i = 2$ which we see for the second (and last) time.
	\item \emph{unforced edges}: return edges $e_i$ with $a_i = 2$ which are not the only possible labeled edge we can use from the endpoint vertex of the previous edge.
	\end{compactitem}
    \item Edges we see more than twice:
	\begin{compactitem}
	\item \emph{high-multiplicity edges}: edges $e_i$ with $a_i > 2$.
	    \end{compactitem}
    \end{compactitem}
    Suppose there are $\nsurp$ new-endpoint edges, $\nrecyc$ reused-endpoint edges, $\nunf$ forced edges, and $\nhigh$ high-multiplicity edges.
    As we process the edges in our pre-specified order, we record:
    \begin{compactitem}
    \item The labels of each vertex belonging to the first column set $S_1$ (at most $n^{|S_1|} = n^{h}$ choices).
	\item The edge type of every edge in the graph:
	    whether it is a new-endpoint edge,
	    a reused-endpoint edge,
	    a high-multiplicity edge,
	    or a forced or unforced return edge
	    (at most $5^{|E|}=5^{\wdth\hght}$ choices).
	\item For each new-endpoint edge, we record the label of its second endpoint (at most $n^{\nsurp}$ choices).
	\item For each reused-endpoint edge, we record the location of the first appearance of its second endpoint (at most $|V|^{\nrecyc} = (\wdth\hght)^{\nrecyc}$ choices).
	\item For each unforced return edge, we record the column index of the first appearance of that edge, and the index within that column of the label involved in the edge (at most $(R\wdth)^{\nunf}$ choices).
	\item For each high multiplicity edge $e_i$ we record the second endpoint of its first appearance (at most $n^{\nhigh}$ choices).
	For each consequent appearance of $e_i$, we record the column index of the first appearance of $e_i$, and the index within that column of the label involved in the high multiplicity edge (in total, at most $(R\wdth)^{\sum_{a_i>2} a_i}$ choices).
    \end{compactitem}

    \medskip
    \begin{claim*}
    The recorded information, along with $E$ and $\alpha$, suffices to reconstruct $G$.
    \end{claim*}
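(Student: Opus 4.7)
The plan is to prove the claim by induction on the edges of $G$ processed in the pre-specified order, maintaining the invariant that after processing $e_1, \ldots, e_i$ the decoder knows the labels of every vertex incident to these edges, together with, for each distinct labeled edge seen so far, the number of times it has appeared. The base case is immediate: the labels on $S_1$ are recorded explicitly. Moreover, since $M_j$ is a perfect matching between $S_j$ and $S_{j+1}$, once all edges of column $j$ have been decoded we know every label in $S_{j+1}$. Hence when processing $e_i \in M_j$ with (unlabeled) endpoints $u \in S_j$ and $v \in S_{j+1}$, by the induction hypothesis $u$'s label is known from $E$ (which the decoder has), and only the label of $v$ needs to be recovered.

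For the inductive step, the decoder reads off the recorded edge type of $e_i$ and performs the corresponding lookup: for a \emph{new-endpoint edge} or the first appearance of a \emph{high-multiplicity edge}, the label of $v$ was recorded directly; for a \emph{reused-endpoint edge}, the recorded index points to a previously labeled vertex sharing $v$'s label; for an \emph{unforced return edge} or a subsequent appearance of a high-multiplicity edge, the recorded column and in-column index of the first appearance uniquely specify both labels of the edge, so $v$'s label can be looked up. In each of these cases the decoder's state allows it to recover $v$'s label and update its bookkeeping to maintain the invariant.

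The only subtle case, which I expect to be the single nontrivial step, is that of a \emph{forced return edge}, where no label-specific information is recorded. Here the decoder uses the fact that $\alpha$ is part of its input: knowing $\alpha$ and the labeled edges recovered so far, it can determine which labeled edges are ``half-used'' (have appeared an odd number of times so far but must appear an even number of times in total). The definition of ``forced'' is precisely that, given $u$'s label and the current half-used set, there is a unique labeled edge $(u,w)$ incident to $u$ that still owes an additional appearance; hence $v$'s label must be $w$. Crucially, the encoder's decision to mark $e_i$ as forced rather than unforced depends only on the state that the decoder has also reconstructed at the same step, so the two notions of ``forced'' coincide and the reconstruction is unambiguous.

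Putting these cases together, by induction the decoder recovers every vertex label of $G$ from $E$, $\alpha$, and the recorded string, establishing the claim. The proof of the overall proposition then follows by multiplying together the number of choices for each recorded quantity, and bounding $\nsurp + \nhigh \le \ty$, $\nunf + \sum_{a_i > 2} a_i \le \wdth \hght$, and $\nrecyc$ by the absorbed $(\wdth\hght)^3$ slack.
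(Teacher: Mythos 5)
Your proof is correct and takes essentially the same approach as the paper: an induction over the edges of $G$ in the fixed column order, with the base case supplied by the explicit labels of $S_1$ and each inductive step resolved by a case analysis on the recorded edge type. The paper phrases its induction per column rather than per edge, but these are equivalent since the edge order respects columns and each $M_j$ is a perfect matching, as you note. Your treatment of the \emph{forced return edge} case is actually a bit more careful than the paper's, which dismisses it in one sentence ("there is only one choice for the second endpoint"); you correctly make explicit the point that the encoder's notion of "forced" must be computable from the decoder's reconstructed state at the same step — this is the right observation, and the encoding would indeed be ambiguous without it.
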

\begin{proof}
    We will prove this by induction.
    Our inductive claim is that in the $i$th step, we can reconstruct the labels of the $i$th column of vertices, $S_i$.

    For the first column, we have recorded all of the labels, so we have $S_1$.

    In any subsequent column, assuming we know $S_i$, we will process the edges in order, and determine their endpoint in $S_{i+1}$.
    For each edge, we can the determine from our edge information whether it is a new-endpoint, reused-endpoint, high-multiplicity, unforced return, or forced return edge.
    Depending on the type of edge we use different information to discern the label of its endpoint in $S_{i+1}$:

    \begin{compactitem}
    \item If we traversed a new-endpoint edge: we have recorded the label of the endpoint in $S_{i+1}$.
	\item If we traversed a reused-endpoint edge: we have recorded the position of the first appearance of the vertex's label, and we can look it up.
	\item If we traversed an unforced return edge: we have recorded the column index of the first appearance of the edge, as well as the index $I$ within that column of the label corresponding to this edge's endpoint in $S_i$.
	    We go to the column, and then choose the label in $S_{i+1}$ by finding the $I$th edge in that column with a label matching our edge's known label from $S_i$.
	\item If we traversed a forced return edge: there is only one choice for the second endpoint.
	\item If we traversed a high-multiplicity edge: we have recorded the column index of the other appearances of this edge.
	    If this is the first appearance of the edge, we have recorded the label of the second endpoint.
	    Otherwise, we have recorded the column in which this edge first appeared, as well as the index $I$ within that column of the label corresponding to this edge.
	    We go to the column, and then choose the label in $S_{i+1}$ by finding the $I$th edge in that column with a label matching our edge's known label from $S_i$.
    \end{compactitem}
    This proves the inductive claim.
\end{proof}

Thus,
    \begin{equation}
	|\cG^{\alpha,E}_{\wdth\times\hght}|
	\le \sum_{\substack{\nrecyc\\\nsurp\\\nunf}} 5^{\wdth\hght} \cdot  (\wdth\hght)^{\nrecyc}\cdot (R\wdth)^{\sum_{a_i>2} a_i + \nunf}\cdot n^{\nsurp + \nhigh + \hght}.
	\label{eq:rough-bound}
    \end{equation}
    All that remains is for us to translate between the above quantity, which is in terms of edge types, to our desired quantity in terms of the parameters $\ty,\wdth,\hght$.
    We do this by observing that
	\[
	    \nsurp = \ty - \nhigh - \nrecyc\mper
	\]
	This is because there are a total of $\ty$ distinct labeled edges, and of those, $\nhigh$ are high-multiplicity, and $\nrecyc$ do not introduce new labels.

	We use this observation to simplify $n$'s exponent, and we use the fact that $\sum_{\alpha_i>2} a_i +\nunf + \nrecyc \le \sum_{i} \alpha_i = \wdth\hght$ to simplify $\wdth$ and $R$'s exponents, giving us that
    \begin{equation}\label{eq:sub-bound}
	|\hat \cG_{\alpha}^\ty| \le
    (5R\wdth)^{\wdth\hght}\cdot n^{\hght} \sum_{\substack{\nrecyc\\\nsurp\\\nunf}}
	\left(\frac{\wdth\hght}{n}\right)^{\nrecyc} \cdot n^{\ty}.
    \end{equation}
    The number of possible combinations of values for $\nsurp$, $\nunf$, and $\nrecyc$ is at most $(\wdth\hght)^3$.
    Furthermore, because $\wdth\hght \le n$, from \pref{eq:sub-bound} we may conclude,
    \[
	|\cG^{\alpha,E}_{\wdth\times\hght}|
	\le (5R\wdth)^{\wdth\hght} (\wdth\hght)^3 \cdot n^{\ty + \hght}
    \]
    as desired.
\end{proof}

\medskip

%===================================

We now prove \pref{lem:matching-probability-2}, which gives us a bound on the probability that we sample an even hypergraph cycle with the correct edge multiplicity from a simple edge cycle.
Again, the proof of \pref{lem:matching-probability-2} is different from the proof of \pref{lem:matching-columns}, and is more similar to the proof of \pref{prop:multilin-graph-bound} (although it is already quite different from the proof of \cite{FK81}).

    \restatelemma{lem:matching-probability-2}

\begin{proof}
    We will count the number of possible even $H$ one can sample from $E$ with at most $\tau$ unique hyperedges by encoding each such $H$ uniquely as a string, then counting the number of strings.

    First, we fix an ordering on the edges of $E$, ordering them first by column, then arbitrarily within each column.
    Now, define a \emph{new} hyperedge to be a labeled hyperedge which we have never seen before, and define an \emph{old} hyperedge to be a hyperedge which has already been seen.
    Let $H$ be some even hypergraph sampled from $G$ with at most $M$ unique labeled hyperedges.
    We encode $H$ in a string as follows.
    We will process the hyperedges of $H$ one at a time, ordering hyperedges by the first simple edge they contain.
    \begin{compactitem}
	\item For every hyperedge encountered in $H$, record whether it is new or old ($2^{|H|} = 2^{\wdth\hght}$ options).
	\item For every new hyperedge encountered: record the indices of the $2,...,d$ simple edges that it contains ($\left((d\hght)^{(d-1)}\right)^{\tau}$ options).
	    The identity of the first edge will be obvious from the edge ordering.
	\item For every old hyperedge encountered, record the column index $j$ of its first appearance ($\wdth^{\wdth\hght-\tau}$ choices).
	If any of the simple edges $e_{i_1},\ldots,e_{i_d}$ appear with multiplicity $>1$ in the old ($j$th) column, record the indices of those edge within the identical edges in that column (at most $(R^{d})^{\wdth\hght -\tau}$ choices, since no simple edge can appear more than $R$ times in a column).
    \end{compactitem}
    We claim that given $V,E$, and this encoding, we can uniquely recover $H$.
    We process the simple edges in our specified order.
    If the edge is contained in a new hyperedge, we can deduce the other simple edges belonging with it from what we recorded.
    If the edge is contained in an old hyperedge, we know the column in which the hyperedge first appears, and we can determine the other edges in the group by looking up the edge in the previous column--if there are multiple copies of the edge in the old column, we have recorded which copy to look up.
    Furthermore, if the grouping is insufficient due to multiplicities within this column, we have recorded the relative indices of the relevant edges.

    The number of such strings is at most
    \begin{equation}
	(\text{\# encodings})
	\le 2^{\wdth\hght} \cdot (d\hght)^{(d-1)\tau} \cdot (R^{d}\wdth)^{\wdth\hght - \tau}\label{eq:blah1}
    \end{equation}
    giving an upper bound on the number of $H$ we can sample from $(V,E)$ with at most $\tau$ distinct edges.
    There are a total of
    \begin{equation}
	(\text{\# sampleable graphs})
	= \left(\frac{1}{\hght!}\prod_{j=0}^{\hght-1} \binom{d(\hght -j)}{d}\right)^{\wdth}
     = \left(\frac{(d\hght)!}{(d!)^{\hght}\hght!}\right)^{\wdth} \label{eq:blah2}
    \end{equation}
    possible hyperedge graphs sampleable from $(V,E)$, and from this we have that
    \begin{align*}
	\Pr(H\text{ has }\tau \text{ edges } | (V,E) \text{ even })
    \le \frac{(\text{\# encodings})}{(\text{\# sampleable graphs})}
	\le \frac{(2d^{d} e^d\wdth R^{d})^{\hght\wdth}}{(d\hght)^{(d-1)(\hght\wdth - \tau)}}
    \end{align*}
where we have combined \pref{eq:blah1} with \pref{eq:blah2} and applied Stirling's inequality.
Our conclusion follows.
\end{proof}

%==========================================================

Now we prove a lemma that bounds the number of even $k$-hyperedge configurations with $\tau$ hyperedges, each paired and sharing a center vertex, sampleable from an even $2d$-hyperedge configuration by pairing and labeling.
\restatelemma{lem:odd-match}
\begin{proof}
We will count the number of such $H$ by encoding each instance uniquely as a string.
Fix an order on the hyperedges, first in column order.
\begin{compactitem}
    \item A \emph{new} hyperedge is a hyperedge which introduces a new center vertex.
    \item A \emph{reuse} hyperedge is a hyperedge which we see for the first time, and whose center vertex is the first of its type in its column, but which reuses a center vertex from a previous column.
    \item A \emph{sharing} hyperedge is a hyperedge which we see for the first time, but which shares a center vertex with another hyperedge in its own column.
    \item A \emph{return} hyperedge is a hyperedge which we see for the second or later time.
\end{compactitem}
    Our encoding is as follows:
    \begin{compactitem}
	\item In the first $\hght\wdth$ positions, we record the type of every hyperedge we see ($4^{\hght\wdth}$ choices).
	\item In the next $\nsurp$ positions, we record the labels of new hyperedges ($n^{\nsurp}$ choices).
	\item In the next $\nrecyc$ positions, we record the position of the first appearance of the reused label ($(\hght\wdth/2)^{\nrecyc}$ choices)
	\item In the next $\nshare$ positions, we record the partner of the sharing edge within the column ($(\hght/2)^{\nshare}$ choices, since there cannot be more than $\hght$ new labels in a column).
	\item In the next $\nret$ positions, we record the column of the first appearance of the hyperedge (a total of $\wdth^{\nret}$ choices), the index of the previous occurrence of the hyperedge among hyperedges with the same labels in the previous column (a total of $R^{\nret}$ choices), and the index of the hyperedge's center vertex among center vertices of the same label within the current column (a total of $R^{\nret}$ choices).
	\item We record the permutation of the middle labels in each column ($(\frac{\hght}{2}!)^\wdth$ choices).
    \end{compactitem}
    Given this information, we can uniquely reconstruct an $H$ from $G$.
    For every surprise hyperedge we encounter, we have recorded the center label.
    For every recycle hyperedge we encounter, we can determine the center label by looking at the previous occurrence.
    For every sharing hyperedge, we can determine its partner.
    For every return hyperedge, we can determine the label of the center vertex by looking at the previous occurrence, and we can determine partnership by knowing the index of the center label's occurrence within the column.

    We thus have
    \begin{align*}
	(\# H )
	&\le \sum_{\substack{\nsurp\\\nrecyc}} \left(\frac{\hght}{2}!\right)^{\wdth}(4\wdth)^{\wdth\hght}\cdot R^{2\nret} \cdot n^{\nsurp}\cdot \hght^{\nrecyc + \nshare}\mper
    \end{align*}
    We use some observations about these quantities to simplify the above expression.
    We have that
    \[
	\tau = \nsurp + \nshare + \nrecyc\mcom
    \]
    since every hyperedge must appear for the first time.
    Furthermore, we have that $\nsurp \le \nshare$, since every surprising hyperedge must be paired with a sharing hyperedge, since it introduces a new vertex label which hasn't been seen before, so it's partner must be a sharing edge since we forbid two hyperedges with the same labels to share a center vertex.
    It follows that
    \[
	\nsurp \le \tau/2\mper
    \]

    Putting these together,
    \begin{align*}
	(\# H )
	&\le \left(\frac{\hght}{2}!\right)^{\wdth}(4\wdth R^2)^{\hght\wdth}\sum_{\substack{\nsurp\\\nrecyc}}  n^{\nsurp}\cdot \hght^{\tau - \nsurp}\\
	&= \left(\frac{\hght}{2}!\right)^{\wdth}(4\wdth R^2 )^{\hght\wdth} \hght^{\tau} \cdot \sum_{\substack{\nsurp\\\nrecyc}}  \left(\frac{n}{\hght}\right)^{\nsurp}\\
	&\le \left(\frac{\hght}{2}!\right)^{\wdth}(2\hght \wdth)^2 (4\hght R^2)^{\hght\wdth} \cdot (\hght n)^{\tau/2}\mcom
    \end{align*}
    where in the last line we have assumed that $\hght < n$, and have used that $\nsurp$ and $\nrecyc$ take on at most $\tau < \wdth\hght/2$ values.
    The conclusion follows.
\end{proof}

\section{Strong Refutation for All CSPs}\label{sec:sat}

In this section, we consider the problem of refuting Boolean CSP's with arbitrary predicates.

\begin{problem}[Refuting CSP's with predicate $P$]
    Let $P:\{\pm 1\}^k \to \{0,1\}$ be a predicate on $k$ variables.
    Then we sample a \emph{random instance of CSP-$P$}, $\Phi$, with clauses $C_1,\ldots,C_m$, as follows:
    for each $I \in [n]^k$:
	\begin{compactitem}
	    \item With probability $p$, sample a uniformly random $\sigma \in \{\pm 1\}^k$ and add the constraint $P(x_I \oplus \sigma) = 1$ to $\Phi$ as clause $C_I$, where $\oplus$ denotes the entry-wise product and $x_I$ denotes the ordered subset of variables $x_i$ for each $i \in I$.
\end{compactitem}
The problem of \emph{strongly refuting CSP-$P$} is to devise an
algorithm that given an instance $\Phi$ sampled as above, with high
probability over $\Phi$, outputs a certificate that
\[
    \opt(\Phi) \leq 1 - \gamma
\]
for an absolute constant $\gamma>0$.
\end{problem}

Our result is the following:
\begin{theorem}\label{thm:csp-refutation}
    Let $\Phi$ be a random instance of a $k$-CSP with predicate $P$, with clause density $m/n \ge \tilde O(n^{(k/2 - 1)(1-\delta)})$.
    Then with high probability over the choice of $\Phi$, there is a spectral algorithm which strongly refutes $\Phi$ in time $\exp(\tilde O(n^{\delta}))$, certifying that
    \[
	\opt(\Phi) \le \E_{x \sim \{\pm 1\}^k}[P(x)] + \epsilon
    \]
for any constant $\epsilon > 0$.
    Furthermore, the degree-$O(n^{\delta})$ \sos relaxation also certifies this bound.
\end{theorem}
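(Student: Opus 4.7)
The plan is to apply the Fourier-analytic reduction of \cite{AOW15}, which expresses a Boolean $k$-CSP refutation as a collection of $t$-XOR refutations at Fourier levels $t = 1, \ldots, k$, and then to invoke our strong $k$-XOR refutation machinery (\pref{thm:refutation} for even orders, \pref{thm:refutation-odd} for odd orders) at each level. Expanding $P(y) = \sum_{S \subseteq [k]} \hat P(S) \prod_{i \in S} y_i$ with $\hat P(\emptyset) = \E[P]$ and averaging $P(x_I \cdot \sigma_I)$ over $\Phi$'s $m$ clauses (where $x_I \cdot \sigma_I$ denotes the entrywise product) gives
\begin{equation*}
    P_\Phi(x) - \E[P] = \sum_{\emptyset \neq S \subseteq [k]} \hat P(S) \cdot Q_S(x), \qquad Q_S(x) \defeq \frac{1}{m}\sum_{I \in \Phi}  \Bigparen{\prod_{i \in S}\sigma_{I,i}} \Bigparen{\prod_{i \in S} x_{I_i}}.
\end{equation*}
I would then view each $Q_S(x) = m^{-1}\iprod{\bT^{(S)}, x^{\otimes |S|}}$, where $\bT^{(S)}$ is an order-$|S|$ random tensor whose entry at $J \in [n]^{|S|}$ is the sum of the independent symmetric random signs $\prod_{i\in S}\sigma_{I,i}$ over all $I \in \Phi$ with $I|_S = J$. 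These entries are mean-zero, symmetric, independent across distinct $J$, and subgaussian.

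For each non-empty $S$ with $|S| = t$, $\bT^{(S)}$ is precisely the sort of random tensor treated in \pref{sec:lin}: a $t$-XOR refutation applied to $\bT^{(S)}$ yields, with high probability over $\Phi$, a degree-$O(n^\delta)$ \sos certificate that $|Q_S(x)| \le \epsilon'$ for all $x \in \{\pm 1\}^n$, for any desired constant $\epsilon' > 0$. The density hypothesis $m/n \ge \tO(n^{(k/2-1)(1-\delta)})$ is more restrictive than the requirement $m/n \ge \tO(n^{(t/2-1)(1-\delta)})$ needed by the $t$-XOR refutation for each $t \le k$, so it suffices uniformly across the Fourier levels. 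Summing the \sos certificates with coefficients $\hat P(S)$ of magnitude at most $1$, the triangle inequality yields $|P_\Phi(x) - \E[P]| \le 2^k \epsilon'$ as a degree-$O(n^\delta)$ \sos proof; choosing $\epsilon' \defeq \epsilon/2^k$ together with a union bound over the $O_k(1)$ subsets $S$ completes the claim. The spectral certificate is likewise the block-diagonal aggregation of the individual $t$-XOR matrix certificates.

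The main obstacle is that the entries of $\bT^{(S)}$ for $|S| < k$ are not the sparse single-sign entries analyzed in \pref{thm:lin-upper}: each entry is a sum of roughly $m/n^{|S|}$ independent symmetric Rademacher contributions. However, the trace-power-method analysis of \pref{sec:lin-norm-sec} relies only on the entries being independent, mean-zero, symmetric, and having bounded moments---all of which are preserved under summation. Thus the hypergraph-counting arguments behind \pref{prop:multilin-graph-bound} and \pref{lem:matching-probability-2} transfer with only cosmetic modifications to the moment accounting, and the density threshold at each Fourier level remains $\tO(n^{(t/2-1)(1-\delta)})$ up to $k$-dependent constants.
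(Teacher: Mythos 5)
Your Fourier-level decomposition is the same one the paper uses (it is the standard \cite{AOW15} reduction), and your independence observation about the entries of $\bT^{(S)}$ is correct. The gap is in the middle step: you assert that \pref{thm:refutation} and \pref{thm:refutation-odd} can be invoked directly on $\bT^{(S)}$, with the trace-power-method analysis transferring ``with only cosmetic modifications.'' This does not hold, and the paper's proof does something substantially different to get around it.

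\pref{thm:lin-upper} and its odd companion are proved specifically for \emph{unweighted} $t$-XOR instances: each entry of $\bT_\Phi$ is $0$ with probability $1-p$ and $\pm 1$ with probability $p/2$ each. Two places depend on this structure in a non-cosmetic way. First, the trace computation \pref{eq:prbdM} replaces a term having $M$ distinct hyperedges by exactly $p^M$; if instead an entry is a sum of $\sim n^{k-|S|}$ independent symmetric Bernoulli signs, its even moments scale as $(pn^{k-|S|})^r$ rather than $p$, so the entire moment accounting and the resulting density threshold change. Second, the certificate's validity argument (\pref{prop:formula} and the expansion culminating in \pref{eq:upbd}) treats $P_\Phi$ as a uniform average of $0$--$1$ clause predicates and tracks low-multiplicity clause tuples; this has no direct analogue when the ``instance'' is a weighted sum rather than a set of unit constraints. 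Your proposal simply postulates that these carry over; that is the missing idea.

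The paper instead does a case analysis per Fourier level $|S| = t$. When $pn^{k-t} \ge 1$, the tensor $\bT^{(S)}$ is dense with entries of magnitude up to $O(\sqrt{pn^{k-t}\log n})$, and the paper rescales and invokes the \emph{polynomial-time} certificate of \cite{AOW15} (\pref{thm:AOW}), whose coarser $\tO(\sqrt{q}\,n^{3t/4})$ bound turns out to suffice there. When $pn^{k-t} < 1$ and $p \ge n^{-t/2}$, it again uses \pref{thm:AOW}. Only when $t = k$, or when $t < k$ and $p < n^{-t/2}$, does it fall back to \pref{thm:maindlin}; and in the latter weighted sparse case it first \emph{splits} $\Psi_S$ into $r = \log^2 n$ genuinely unweighted sub-instances by randomly distributing each coefficient's unit contributions, checks each sub-instance is a bona fide random $t$-XOR instance with an effective probability $\hat q = \Theta(pn^{k-t}/\log^2 n)$, and only then applies \pref{thm:maindlin}. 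Both the dense-regime fallback to \pref{thm:AOW} and the unweighted-splitting trick are load-bearing and are absent from your proposal. Without them, the invocation of the $t$-XOR theorems at intermediate Fourier levels is unjustified.
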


We will employ the framework of Allen et al. \cite{AOW15} to prove that we can strongly refute any $k$-CSP with predicate $P$ at densities as low as $\tilde O(1)$, given sufficient time.
The strategy is as follows: given some random $k$-CSP on $x\in\{\pm 1\}^n$ with clauses $C_1,\ldots,C_m$, $C_i:\{\pm 1\}^k \to \{\pm 1\}$:
\begin{compactitem}
\item Expand $C_1(x),\ldots,C_m(x)$ using the Fourier expansion.
\item Split the Fourier expansions of $C_1,\ldots,C_m$ into XOR instances.
\item Refute each XOR instance.
\end{compactitem}

Because a $k$-CSP predicate has a Fourier expansion of degree at most $k$, the above strategy in combination with our $k$-XOR refutation results will allow us to tightly strongly refute any $k$-CSP in time $\exp(n^{\delta})$ at densities $\ge \tilde O(n^{(k/2-1)(1-\delta)})$.
However, as a result of the work of \cite{AOW15}, we are able to show that for predicates satisfying some additional properties, it is possible to strongly refute more quickly at lower densities.
We elaborate further:
\begin{definition}
    Let $1 \le t \le k$.
    A predicate $P:\{\pm 1\}^k \to \{0,1\}$ is $\delta$-far from $t$-wise supporting if every distribution $\cD$ on $\{\pm 1\}^k$ which has uniform marginals on all subsets of $t$ variables only satisfies $P$ with probability at most $1-\delta$, i.e.
    \[
	\E_{x\sim \cD}[P(x)] \le 1 - \delta.
    \]
\end{definition}

Allen et al. give the following characterization of $\delta$-far from $t$-wise supporting predicates:
\begin{theorem}[Lemma 3.16 and Theorems 4.9 and 6.6 of \cite{AOW15}]\label{thm:od}
Let the predicate $P:\{\pm 1\}^k \to \{0,1\}$ be $\delta$-far from $t$-wise supporting, for $0 \le t \le k$.
    Then there exists a multilinear polynomial $Q:\{\pm 1\}^t \to \R$ such that $\E_{x\sim \{\pm\}^t}[Q(x)] = 0$ and $P(x) \le (1-\delta) + Q(x)$ for any $x \in \{\pm 1\}^k$.
Furthermore, $Q$ can be obtained by solving a linear program of constant size, and there is a degree-$k$ \sos proof that $\pE[P(x)] \preceq (1-\delta) + \pE[Q(x)]$.
\end{theorem}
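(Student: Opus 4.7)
The plan is to prove the theorem through linear-programming duality: the hypothesis that $P$ is $\delta$-far from $t$-wise supporting is an assertion about the value of a small LP, and the polynomial $Q$ together with the pointwise inequality $P(x) \le (1-\delta) + Q(x)$ will arise as an optimal dual certificate. The degree-$k$ sum-of-squares statement will then follow from the standard fact that every nonnegative function on the Boolean hypercube admits a degree-$k$ SOS decomposition modulo $\{x_i^2-1\}$.

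First I would set up the primal LP whose variables are $\cD(x)$ for $x \in \sbits^k$:
\[
    \max_{\cD} \; \sum_{x \in \sbits^k} \cD(x) P(x) \quad \text{s.t.} \quad \cD(x) \ge 0, \; \sum_x \cD(x) = 1, \; \sum_x \cD(x) \chi_S(x) = 0 \;\; \forall\, 1 \le |S| \le t,
\]
where $\chi_S(x) = \prod_{i \in S} x_i$. The character constraints are exactly equivalent to $\cD$ having uniform marginals on every subset of at most $t$ variables, so by the $\delta$-farness hypothesis the optimum is at most $1-\delta$. The dual LP introduces a scalar $c$ for normalization and scalars $\hat Q(S)$ for each $1 \le |S| \le t$, and reads:
\[
    \min_{c,\hat Q} \; c \quad \text{s.t.} \quad c + \sum_{0 < |S| \le t} \hat Q(S) \chi_S(x) \ge P(x) \;\; \forall\, x \in \sbits^k.
\]
The primal is feasible (take $\cD$ uniform) and bounded, so LP strong duality applies and yields an optimal dual solution $(c^\ast, \hat Q^\ast)$ with $c^\ast \le 1-\delta$.

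Next I would define $Q(x) := \sum_{0 < |S| \le t} \hat Q^\ast(S) \chi_S(x)$. Since every character appearing has $S \neq \emptyset$, $\E_{x\sim\sbits^k}[Q(x)] = 0$; and the dual feasibility condition gives $P(x) \le c^\ast + Q(x) \le (1-\delta) + Q(x)$ pointwise on $\sbits^k$. Because $k$ and $t$ are constants, the LP has $2^k$ variables and $O(k^t)$ constraints, so $Q$ is produced by a constant-size LP as claimed. Note also that $Q$ is a degree-$\le t$ multilinear polynomial in the $k$ Boolean variables, which is the content of the theorem's polynomial statement.

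For the SOS portion, let $R(x) := (1-\delta) + Q(x) - P(x)$; this is a multilinear polynomial of degree $\le k$ and is nonnegative at every point of $\sbits^k$. I would then invoke the standard Boolean-Positivstellensatz fact that any nonnegative function on $\sbits^k$ can be written as $\sum_i q_i(x)^2 \pmod{\{x_j^2 - 1\}}$ with each $q_i$ multilinear of degree $\le \lceil k/2 \rceil$, yielding a degree-$k$ SOS proof that $R(x) \ge 0$ and hence the pseudoexpectation inequality $\pE[P(x)] \preceq (1-\delta) + \pE[Q(x)]$. The only genuine obstacle is verifying this final SOS step carefully — one must ensure the degree of the SOS certificate is at most $k$ rather than $2k$, but for multilinear $R$ of degree $\le k$ on $\sbits^k$ this is a classical consequence of diagonalizing $R$ as a PSD operator on the space of multilinear polynomials of degree $\le k/2$ (via its Fourier expansion and the isomorphism of $\R[\sbits^k]$ with that space). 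Everything else — the LP setup, strong duality, and reading off $\E[Q]=0$ — is routine.
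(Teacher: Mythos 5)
This statement is imported verbatim from \cite{AOW15} (the paper gives no proof of its own, only the citation), so the right comparison is between your reconstruction and the standard argument in that reference. Your LP-duality derivation of $Q$ is exactly that standard argument and is correct: the character constraints $\sum_x \cD(x)\chi_S(x)=0$ for $1\le|S|\le t$ are equivalent to uniform marginals on all subsets of size at most $t$, the primal is feasible and bounded, and the optimal dual certificate gives $c^\ast\le 1-\delta$ together with the pointwise inequality and $\E[Q]=0$. The constant-size claim is also fine since $k$ is constant.

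The one soft spot is the SOS step. The fact you invoke --- that every nonnegative function on $\sbits^k$ is a sum of squares of multilinear polynomials of degree at most $\lceil k/2\rceil$ modulo $\{x_j^2-1\}$ --- is true, but it is a genuine theorem (the exactness of the level-$\lceil k/2\rceil$ SOS relaxation on the hypercube, due to Blekherman, Gouveia, and Pfeiffer), not a "classical consequence of diagonalizing $R$ as a PSD operator" on the degree-$\le k/2$ multilinear polynomials. A nonnegative function need not a priori admit a PSD Gram matrix supported on that monomial basis; that is precisely what the exactness theorem establishes, and it is tight (some nonnegative functions require degree $\lceil k/2\rceil$). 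So as written, your justification of the "degree-$k$" claim is circular. If you either cite the hypercube-exactness theorem properly or settle for the immediate certificate $R=(\sqrt{R})^2 \bmod \{x_j^2-1\}$ with $\sqrt{R}$ the multilinear interpolant of $x\mapsto\sqrt{R(x)}$ (degree $\le k$, hence a degree-$2k$ SOS proof), the argument is complete; the degree-$2k$ version suffices everywhere this theorem is used in the paper, since the pseudoexpectations there have degree $O(n^\delta)\gg 2k$.
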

We will use this theorem to extend the results of \cite{AOW15} for $\delta$-far from $t$-wise independent predicates below the spectral threshold.
\begin{theorem}\label{thm:twise}
Let the predicate $P:\{\pm 1\}^k \to \{0,1\}$ be $\delta$-far from $t$-wise supporting, for $0 \le t \le k$ and a constant $\delta > 0$.
    Let $\Phi$ be a random instance of a $k$-CSP with predicate $P$, with clause density $m/n \ge \tilde O(n^{(t/2 - 1)(1-\delta)})$.
    Then with high probability over the choice of $\Phi$, there is a spectral algorithm which strongly refutes $\Phi$ in time $\exp(\tilde O(n^{\delta}))$, certifying that
    \[
	\opt(\Phi) \le 1-\delta + \epsilon
    \]
for any constant $\epsilon > 0$.
    Furthermore, the degree-$O(n^{\delta})$ \sos relaxation also certifies this bound.
\end{theorem}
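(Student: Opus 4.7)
The plan is to reduce strong refutation of a random $k$-CSP with predicate $P$ to strong refutation of a bounded collection of random $t'$-XOR sub-instances with $t' \le t$, each of which we can handle using \pref{thm:refutation} and \pref{thm:refutation-odd}. (For clarity, in what follows let $\delta$ denote the parameter from ``$\delta$-far from $t$-wise supporting'' and let $\delta'$ denote the runtime-tradeoff parameter implicit in the density bound $m/n \geq \tilde O(n^{(t/2-1)(1-\delta')})$ and the runtime $\exp(\tilde O(n^{\delta'}))$; the statement overloads the symbol.)

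First, I would invoke \pref{thm:od} to obtain a multilinear polynomial $Q \colon \{\pm 1\}^k \to \R$ of degree at most $t$ with $\E_{y}[Q(y)] = 0$ and a pointwise inequality $P(y) \le (1-\delta) + Q(y)$ that admits a degree-$k$ \sos proof. Averaging over the $m$ clauses of $\Phi$ yields
\[
    P_\Phi(x) \;=\; \frac{1}{m}\sum_{i=1}^m P(C_i(x)) \;\le\; (1-\delta) \;+\; \frac{1}{m}\sum_{i=1}^m Q(C_i(x))\mcom
\]
with a degree-$k$ \sos proof of this inequality. It therefore suffices to certify, with a degree-$\tilde O(n^{\delta'})$ \sos proof, that the right-most term is at most $\epsilon$ for every $x \in \{\pm 1\}^n$.

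Next, I would Fourier-expand $Q(y) = \sum_{\varnothing \neq S \subseteq [k],\, |S| \le t} \hat Q(S) \prod_{j \in S} y_j$ and, for each such $S$, extract a $|S|$-XOR sub-instance $\Phi_S$ whose clauses are indexed by the clauses of $\Phi$: the $i$-th clause of $\Phi_S$ records the constraint $\prod_{j \in S} x_{I^{(i)}_j} = \prod_{j \in S} \sigma^{(i)}_j$, where $I^{(i)} \in [n]^k$ is the variable tuple and $\sigma^{(i)} \in \{\pm 1\}^k$ is the negation pattern of the $i$-th clause of $\Phi$. This gives the pointwise equality
\[
    \frac{1}{m}\sum_{i=1}^m Q(C_i(x))
    \;=\; \sum_{\varnothing \neq S \subseteq [k],\, |S| \le t} \hat Q(S) \cdot \frac{1}{m}\sum_{i=1}^m \prod_{j \in S} \sigma^{(i)}_j \, x_{I^{(i)}_j}\mper
\]
The marginal distribution of each $\Phi_S$ is exactly that of a random $|S|$-XOR instance with clause density $m/n$, because projecting $I^{(i)}$ to the coordinates in $S$ yields a uniform tuple in $[n]^{|S|}$ and $\prod_{j \in S}\sigma^{(i)}_j$ is a uniformly random sign, independently across $i$.

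For each nonempty $S$ with $|S| \le t$, I would then apply \pref{thm:refutation} (when $|S|$ is even) or \pref{thm:refutation-odd} (when $|S|$ is odd) to certify that $\big| \tfrac{1}{m}\sum_{i}\prod_{j \in S} \sigma^{(i)}_j x_{I^{(i)}_j}\big| \le \epsilon'$ via the spectral norm of a symmetrized matrix of dimension $\exp(\tilde O(n^{\delta'}))$, with a degree-$\tilde O(n^{\delta'})$ \sos proof courtesy of \pref{sec:sos}. The density requirement for $|S|$-XOR, $m/n \ge \tilde O(n^{(|S|/2-1)(1-\delta')})$, is monotone increasing in $|S|$, so the assumption $m/n \ge \tilde O(n^{(t/2-1)(1-\delta')})$ handles the binding case $|S| = t$ and dominates all smaller cases. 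Choosing $\epsilon' = \epsilon/(2^k \cdot \max_S |\hat Q(S)|)$ and summing the $O(2^k)$ \sos certificates with coefficients $\hat Q(S)$, followed by combining with the degree-$k$ \sos proof that $P \preceq (1-\delta) + Q$, yields the claimed bound $\opt(\Phi) \le 1 - \delta + \epsilon$.

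The main obstacle is the bookkeeping that the induced XOR sub-instances really are random in the sense our XOR refutation theorems require. The signs $\prod_{j \in S}\sigma^{(i)}_j$ across different $S$ are correlated within a single clause, but this correlation is irrelevant: we apply a union bound over the $O(2^k) = O(1)$ Fourier characters and combine their refutations linearly, so only the marginal distribution of each $\Phi_S$ matters, and that marginal is precisely uniform $|S|$-XOR at density $m/n$. (A minor additional detail: Theorems~\ref{thm:refutation}/\ref{thm:refutation-odd} are stated for the ``$\pm 1$'' XOR model with uniformly random right-hand sides, which exactly matches the distribution of $\prod_{j\in S}\sigma^{(i)}_j$, so no further reduction is needed.)
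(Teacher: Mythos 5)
Your high-level plan -- invoke \pref{thm:od} to reduce to the degree-$t$ polynomial $Q$ with no constant term, Fourier-expand $Q$, and handle each nonempty $S\subseteq[k]$ via an XOR sub-instance -- matches the paper's proof, which states that Theorem~\ref{thm:twise} follows by running the proof of Theorem~\ref{thm:csp-refutation} on $Q$ in place of $P$. However, there is a genuine gap in your central claim that ``the marginal distribution of each $\Phi_S$ is exactly that of a random $|S|$-XOR instance with clause density $m/n$.'' This is not correct, and it elides precisely the technical work the proof of Theorem~\ref{thm:csp-refutation} is devoted to.

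The paper's $|S|$-XOR model (and the model to which \pref{thm:refutation} and \pref{thm:refutation-odd} apply) is a Bernoulli model: each position $L\in[n]^{|S|}$ independently either receives one constraint $\pm 1$ or receives none, so the associated tensor has entries in $\{-1,0,1\}$. When you project the $m\approx pn^k$ clauses of $\Phi$ onto the coordinates in $S$, many distinct $I\in[n]^k$ map to the same $L\in[n]^{|S|}$. The coefficient multiplying $\chi_L(x)$ is therefore a signed \emph{sum} $c_L$ over all $J\in[n]^{k\setminus S}$ of the contributions from clauses $C_{J\cup L}$; it is an integer-valued random variable that can be $0$ (cancellation) or have magnitude $>1$, with typical magnitude $\approx\sqrt{pn^{k-|S|}}$ when $pn^{k-|S|}\ge 1$. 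This is a \emph{weighted} XOR instance, not a draw from the Bernoulli model, and the trace-power analysis of \pref{thm:lin-upper}/\pref{thm:odd-lin-upper} does not apply to it. Keeping the $m$ clauses indexed by $i$ rather than $L$ does not escape this: once you evaluate a fixed $x\in\{\pm1\}^n$, the $m$-term sum collapses to $\sum_L c_L\chi_L(x)$ anyway, and the matrix certificate lives over positions $L$, not clause indices $i$.

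The paper resolves this with a case analysis on $pn^{k-|S|}$ that your proposal does not perform: when $pn^{k-|S|}\ge 1$, the coefficients are normalized by $O(\sqrt{pn^{k-|S|}\log n})$ and \pref{thm:AOW} (the polynomial-time AOW refutation) is applied; when $pn^{k-|S|}<1$ but $|S|<k$ and $p<n^{-|S|/2}$, the coefficients are still possibly $\pm2,\pm3,\dots$, and the proof introduces the random-splitting device (dividing $\Psi_S$ into $\log^2 n$ sub-instances $\Psi_S^{(i)}$) to reduce to unweighted $|S|$-XOR at a re-scaled probability $\hat q$; and $|S|=1$ is handled by a direct term-by-term bound since \pref{thm:refutation}/\pref{thm:refutation-odd} do not apply to $1$-XOR. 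Your observation that $m/n\ge\tilde O(n^{(t/2-1)(1-\delta')})$ dominates the density requirement for all $|S|\le t$ is the easy part; the hard part is that the projected instances are not in the Bernoulli model, and the proposal needs the paper's case analysis to bridge that gap.
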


We now prove \pref{thm:csp-refutation}, and then below we will describe the mild changes needed to prove \pref{thm:twise}.
We will utilize our own \pref{thm:maindlin}, as well as the following theorem which has appeared in \cite{AOW15} (and also partially in \cite{BM15}).
We cite the exact form of the theorem given in \cite{AOW15}.
\begin{theorem}[\cite{AOW15}, Theorem 4.1]\label{thm:AOW}
    For $k \ge 2$, $q \ge n^{-k/2}$, let $\{w_S\}_{S \in [n]^k}$ be independent random variables such that for each $S \in [n]^k$,
    \begin{align*}
	\E[w_S] = 0,\qquad
	\Pr[w_S\neq 0] \le q,\qquad\text{and}\qquad
	|w_S| \le 1.
    \end{align*}
    Then there is an efficient algorithm certifying that
    \[
	\left|\sum_{S \in [n]^k} w_S \prod_{i \in S} x_i\right| \le 2^{O(k)} \sqrt{q} n^{3k/4} \log^{3/2} n
    \]
    for all $x$ with $\|x\|_{\infty} \le 1$ with high probability.
\end{theorem}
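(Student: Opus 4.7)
The plan is to establish the certificate via a standard matrix-flattening plus matrix-concentration argument, handling even and odd $k$ separately, with a Cauchy--Schwarz step to reduce the odd case to the even case. Throughout I will treat the polynomial $f(x) = \sum_{S \in [n]^k} w_S \prod_{i \in S} x_i$ and produce an efficiently computable spectral upper bound on $\max_{\|x\|_\infty \le 1} |f(x)|$.

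\textbf{Even $k$.} First I would symmetrize the coefficients $\{w_S\}$ (replacing $w_S$ by the sum over permutations of the index, supported on lexicographically first tuples) so that $f$ is unchanged and the $\{w_S\}$ remain independent up to symmetry. I would then form the natural flattening $W \in \R^{n^{k/2} \times n^{k/2}}$ of the order-$k$ tensor of coefficients, so that $f(x) = (x^{\otimes k/2})^\top W\, x^{\otimes k/2}$. Because $\|x\|_\infty \le 1$ implies $\|x^{\otimes k/2}\|_2 \le n^{k/4}$, the polynomial is controlled by $|f(x)| \le n^{k/2}\,\|W\|_{\mathrm{op}}$. The certificate is then just the spectral norm of $W$, which can be computed in polynomial time. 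The random matrix $W$ has independent, mean-zero entries bounded by $1$, each nonzero with probability at most $q$, hence of variance at most $q$. Matrix Bernstein (or non-commutative Khintchine) applied to the decomposition $W = \sum_{I,J} w_{IJ}\,e_I e_J^\top$ yields, with high probability,
\[
\|W\|_{\mathrm{op}} \;\le\; 2^{O(k)}\sqrt{q}\, n^{k/4}\,\log^{1/2} n \;+\; 2^{O(k)}\log n,
\]
and the first term dominates once $q \ge n^{-k/2}$. Multiplying by $n^{k/2}$ gives the claimed bound $2^{O(k)}\sqrt{q}\,n^{3k/4}\,\log^{1/2} n$.

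\textbf{Odd $k$.} Write $f(x) = \sum_{i \in [n]} x_i\,R_i(x)$ where $R_i(x) = \sum_{T \in [n]^{k-1}} w_{T,i} \prod_{j \in T} x_j$, and apply Cauchy--Schwarz:
\[
f(x)^2 \;\le\; \Bigl(\sum_i x_i^2\Bigr)\Bigl(\sum_i R_i(x)^2\Bigr) \;\le\; n \cdot \sum_i R_i(x)^2.
\]
The degree-$2(k-1)$ polynomial $\sum_i R_i(x)^2 = \sum_{T,T'}\bigl(\sum_i w_{T,i}w_{T',i}\bigr)\,x^T x^{T'}$ splits into a diagonal part ($T=T'$) and an off-diagonal part. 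The diagonal contributes at most $\sum_T \sum_i w_{T,i}^2 \le O(qn^k)$ with high probability (this is just concentration of the number of nonzero coefficients). The off-diagonal part is a $(2k-2)$-degree polynomial with coefficient matrix $M_{T,T'} = \sum_i w_{T,i}w_{T',i}\cdot\mathbf{1}[T\neq T']$, each of whose entries is a sum of $n$ independent mean-zero terms of variance $\le q^2$, hence has variance $\le nq^2$ and magnitude bounded by $n$. Flattening $M$ to an $n^{k-1} \times n^{k-1}$ matrix and applying matrix Bernstein once more gives $\|M\|_{\mathrm{op}} \le \tilde{O}(q\sqrt{n}\cdot n^{(k-1)/2}) = \tilde{O}(q\, n^{k/2})$, so that the off-diagonal part is bounded by $n^{k-1}\cdot \tilde O(q n^{k/2}) = \tilde{O}(q\,n^{3k/2 - 1})$. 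Adding the diagonal and multiplying by $n$ yields $f(x)^2 \le \tilde O(q\,n^{3k/2})$, i.e., $|f(x)| \le 2^{O(k)}\sqrt{q}\,n^{3k/4}\,\log^{3/2} n$. The certificate is the spectral norm of $M$ plus the (deterministic, easily computable) diagonal bound.

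\textbf{Main obstacle.} The delicate point is the matrix concentration for the sparse, nonhomogeneous matrix $W$ in the even case, and even more so for $M$ in the odd case where the entries are themselves sums of products of Rademacher-like random variables and can have heavy tails when $q$ is close to the threshold $n^{-k/2}$. A clean application of matrix Bernstein suffices because the entries are uniformly bounded by $1$, but capturing the $\sqrt{q}$ dependence (rather than losing a factor of $\log^{1/2}$ to subgaussian behavior) requires either a careful truncation argument or the use of the moment version of matrix Khintchine via the trace power method, exactly as in the proof of \pref{thm:lin-upper} specialized to $d=1$. The polylogarithmic loss in the final bound reflects the looseness of these black-box matrix inequalities, which is why the exponent on $\log n$ is $3/2$ rather than $1/2$.
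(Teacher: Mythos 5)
This statement is not proved in the paper at all: it is imported verbatim as Theorem~4.1 of \cite{AOW15} and used as a black box in \pref{sec:sat}, so there is no in-paper proof to compare against. Judged on its own, your even-$k$ argument is sound (and standard): the flattening $W$ has genuinely independent, mean-zero entries bounded by $1$ with variance at most $q$, the matrix variance statistic is $qn^{k/2}$, and matrix Bernstein gives $\|W\| \le O(\sqrt{q}\,n^{k/4}\log^{1/2}n + \log n)$; multiplying by $\|x^{\otimes k/2}\|^2 \le n^{k/2}$ and using $q \ge n^{-k/2}$ to absorb the additive term lands within the claimed bound. (Your aside that ``the first term dominates'' is false at $q = n^{-k/2}$, but harmlessly so.)

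The odd case has a genuine gap at the step ``applying matrix Bernstein once more gives $\|M\|_{\mathrm{op}} \le \tilde O(q\sqrt{n}\cdot n^{(k-1)/2})$.'' The entries $M_{T,T'} = \sum_i w_{T,i}w_{T',i}$ are \emph{not} independent ($M_{T,T'}$ and $M_{T,T''}$ share all of $\{w_{T,i}\}_i$), so you cannot apply matrix Bernstein to the entrywise decomposition $\sum_{T,T'} M_{T,T'}e_Te_{T'}^\top$, and per-entry variance is not the quantity the inequality consumes. The only natural independent decomposition is $M=\sum_{i\in[n]}(v_iv_i^\top - D_i)$ with $v_i=(w_{T,i})_T$, and there the uniform norm bound is $\|v_iv_i^\top\| = \|v_i\|^2 \approx qn^{k-1}$ with high probability; the resulting Bernstein bound is $O\bigl(\sqrt{q^2n^k+qn}\,\log^{1/2}n + qn^{k-1}\log n\bigr)$, whose second term exceeds your target $qn^{k/2}$ by a factor of about $n^{k/2-1}$ for every odd $k\ge 3$. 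Propagating that through your chain gives $|f(x)|\le\tilde O(\sqrt{q}\,n^{k-1/2})$, which is weaker than $\sqrt{q}\,n^{3k/4}$ precisely when $k>2$. The target bound $\|M\|\le\tilde O(qn^{k/2})$ is believed correct, but establishing it is the actual content of the odd case in \cite{AOW15}: one needs the trace/moment method (or a decoupling plus row-bucketing argument) applied directly to $M$, not a black-box deviation inequality whose additive term is governed by the norm of a single heavy rank-one summand. Your closing paragraph gestures at exactly this (``the trace power method, as in \pref{thm:lin-upper} specialized to $d=1$''), but the assertion there that ``a clean application of matrix Bernstein suffices because the entries are uniformly bounded by $1$'' is true only of the even case; for the odd case the relevant summands are not bounded by $1$ and the argument as written does not close.
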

\begin{remark}
    In \cite{AOW15}, the theorem appears without the absolute value--however  the statement for the absolute value is implied by the fact that the negated variables $-w_S$ also satisfy all of the constraints.
\end{remark}

Given the above theorem and our results for refuting XOR instances (\pref{thm:maindlin}), the result for arbitrary binary CSPs follows easily.
\begin{proof}[Proof of \pref{thm:csp-refutation}]
    Let the Fourier expansion of $P$ on $y\in\{\pm 1\}^k$ be $P(y) = \sum_{S \subseteq [k]} \hat P(S) \cdot \chi_S(y)$.
    Let $\Phi$ have constraints $C_1,\ldots,C_m$, chosen independently on each $I \in [n]^k$ with probability $p$, so that the constraint $C_I$ asserts that $P(x_I \oplus \sigma^I)=1$ for a uniformly chosen signing $\sigma^I \in \{\pm 1\}^k$.
    We have that
    \begin{align*}
	P_{\Phi}(x)
	=  \sum_{I \in [n]^k} \Ind(C_I \in \Phi)\cdot P(x_I \oplus \sigma^I)
	&= \sum_{I \in [n]^k} \Ind(C_I \in \Phi) \cdot \sum_{S \subseteq [k]} \hat P(S) \cdot \prod_{i \in I} x_i \cdot \prod_{i \in S(I)} \sigma_i^{I}\mcom
	\intertext{where we have used $\oplus$ to denote the entry-wise product and $S(I)$ to denote the entries of $I$ corresponding to the subset $S$ of $[k]$.
	We will move the sum over ordered subsets $S \subseteq k$ outwards, then simplify further}
	&= \sum_{S \subseteq k} \sum_{I \in [n]^k} \Ind(C_I \in \Phi) \cdot \hat P(S) \cdot \prod_{i \in S(I)} x_i \cdot \prod_{i \in S(I)} \sigma^I_i\\
	&= \hat P (\emptyset)  + \sum_{\substack{S \subseteq k\\|S| \ge 1}} \hat P(S) \sum_{I \in [n]^k} \Ind(C_I \in \Phi) \cdot  \prod_{i \in S(I)} x_i \cdot \prod_{i \in S(I)} \sigma^I_i\mper
    \end{align*}
    Now, we will see that for each $S \subseteq [k]$, $|S| \ge 1$, we have a random weighted XOR instance $\Psi_S$ on $|S|$ variables.
    Letting $b_L^I \defeq \prod_{\ell \in L}\sigma^I_{\ell}$, we define
    \begin{align*}
	\Psi_S(x)
	&\defeq \sum_{I \in [n]^k} \Ind(C_I \in \Phi) \cdot  \prod_{i \in S(I)} x_i \cdot \prod_{i \in S(I)} \sigma^I_i\\
	&= \sum_{L \in [n]^S} \prod_{i\in L} x_i \cdot \left(\sum_{J \in [n]^{k\setminus S}} \Ind(C_{J \cup L} \in \Phi) \cdot  b_L^I\right)\mcom
    \end{align*}
where we have abused notation by allowing $J \cup L$ to denote the ordered multiset with $L$ in the exact positions corresponding to $S$ and $J$ in the positions corresponding to $k \setminus S$.
    Furthermore, by definition,
    \[
    \frac{P_{\Phi}(x)}{m}
    = \hat P(\emptyset) + \sum_{\substack{S \subseteq k\\|S| \ge 1}} \hat P(S) \cdot \frac{\Psi_S(x)}{m}
    \le \hat P(\emptyset) + \sum_{\substack{S \subseteq k\\|S| \ge 1}} \hat P(S) \cdot \frac{|\Psi_S(x)|}{m}
    \]
    and so bounding the values of the $\Psi_S$ suffices to get a bound on the value of $\Phi$.

We list some properties of  $\Psi_S$.
For convenience, we now use the notation $\chi_L(x) \defeq \prod_{\ell \in L} x_\ell$ and the notation $x_L$ to denote a string of elements of $x$ indexed by $L$.
For each $S\subseteq k, S \neq \emptyset$, $\Psi_S$ is an instance of $|S|$-XOR with independent constraints on each $\chi_L(x)$ for $L \in [n]^{|S|}$--the independence is because the constraints for $\chi_L(x)$ depend only on the presence of clauses in $C_I \in \Phi$ for $I\in[n]^k$ including $x_L$ in the positions corresponding to $S$.
The weight on each $\chi_L(x)$ is distributed according to a sum of $n^{k-|S|}$ independent random variables, each of which is $0$ with probability $1-p$, and uniformly $\pm 1$ with probability $p$.
For convenience, call the distribution over such sums $\hat k(n^{k-|S|},p)$, so that the coefficient $c_L$ of $\chi_L(x)$ is distributed according to $c_L \sim \hat k(n^{k-|S|},p)$.

We now perform a case analysis on $p$ and $|S|$, which allows us to bound the contributions of each $\Psi_S(x)$ individually.

    \paragraph{Case 1: $pn^{k-|S|} \ge 1$}
If $pn^{k-|S|} \ge 1$ then with high probability, every constraint $c_L$ has $|c_L| \le O(\sqrt{p n^{k - |S|} \log n})$ (where we have combined \pref{lem:dhat} with a union bound).
	Furthermore the $c_L$ are distributed symmetrically about $0$, and they are nonzero with probability at most $1 \le pn^{k-|S|}$.
	\begin{itemize}[parsep=2pt,partopsep=2pt, topsep=5pt,itemsep=2pt,leftmargin=1em]
	\item If $|S| = 1$, we have that with high probability,
	    \[
		\frac{|\Phi(x)|}{m}
		\le \frac{n \cdot \max_{\ell \in [n]} |c_\ell|}{m}
		\le \frac{n \cdot O(\sqrt{pn^{k-1}\log n})}{\Theta(pn^k)}
		\le O\left(\sqrt{\frac{\log n}{pn^{k-1}}}\right)\mcom
	    \]
	    where we have applied a Chernoff bound to use that $m = \Theta(pn^k)$.
	    By our assumption on the clause density, $p \ge n^{-(k-1)}\cdot \polylog n$, and therefore it follows that $|\Psi_S(x)|/m = o(1)$.
	\item Otherwise, if $|S| \ge 2$, we can divide each $c_L$ by $\beta = O(\sqrt{pn^{k-|S|}\log n})$ to obtain a polynomial with coefficients bounded in absolute value by $1$, with independent symmetrically distributed coefficients and probability at most $1 \le pn^{k-|S|}$ of being nonzero.
	    We can thus apply \pref{thm:AOW} to get that with high probability we can certify in polynomial time that,
	\[
	    \frac{|\Psi_S(x)|}{\beta}
	    \le O(n^{3|S|/4}\log^{3/2}n),
	\]
	Implying that with high probability,
	\begin{align*}
	    \frac{|\Psi_S(x)|}{m}
	    \le \frac{\beta \cdot O(n^{3|S|/4}\log^{3/2} n)}{\Theta(pn^k)}
	    &\le \frac{O(\sqrt{pn^{k-|S|}\log n}) \cdot O(n^{3|S|/4}\log^{3/2} n)}{\Theta(pn^k)}\\
	    &\le O\left( \frac{\log^{2} n}{p^{1/2}n^{k/2 - |S|/4}}\right)
	    \le O\left(\frac{\log^{2} n}{n^{|S|/4}}\right)\mper
	\end{align*}
	where the last inequality follows by the assumption that $pn^{k-|S|} \ge 1$.
    \end{itemize}

	\paragraph{Case 2: $pn^{k-|S|} < 1$}
If $pn^{k-|S|} < 1$, then with high probability all  $|c_L| \le O(\log n)$ (where we have combined \pref{lem:dhat} with a union bound).
    We now split into cases in which we can apply \pref{thm:AOW} and cases in which we must apply \pref{thm:maindlin}.
	\begin{itemize}[parsep=2pt,partopsep=2pt, topsep=5pt,itemsep=2pt,leftmargin=1em]
	\item If $|S| < k$ and  $p \ge n^{-|S|/2}$, then again letting $\beta = O(\log n)$, we can divide $\Psi_S$ by $\beta$ to obtain a polynomial with coefficients that are symmetrically distributed about $0$, bounded by $1$ in absolute value, and are nonzero with probability at most $pn^{k-|S|}$.
	By \pref{thm:AOW} and by a Chernoff bound on $m$, it follows that we can certify in polynomial time that
	\begin{align*}
	    \frac{|\Psi_S(x)|}{m}
	\le \frac{\beta \cdot O(\sqrt{pn^{k-|S|}} \cdot n^{3|S|/4}\log^{3/2} n)}{m}
    &\le \frac{O(\log n) \cdot O(\sqrt{pn^{k-|S|}} \cdot n^{3|S|/4}\log^{3/2} n)}{\Theta(pn^k)}\\
	    &\le O\left( \frac{\log^{3/2} n}{p^{1/2}n^{k/2 - |S|/4}}\right)
	    \le O\left(\frac{\log^{3/2} n}{n^{(k-|S|)/2}}\right)
	= o(1),
	\end{align*}
	where the second-to-last inequality follows by our assumption that $pn^{|S|/2}\ge 1$.
	\item If $|S| = k$, then $\Psi_S$ is a $k$-XOR instance with each constraint present with probability $p$.
	By \pref{thm:maindlin}, we can certify in time $\exp(\tilde O(n^{\delta}))$ that $\frac{|\Psi_S|}{m}\le \gamma$ for any constant $\gamma > 0$.
\item
    If $|S| < k$ and  $p < n^{-|S|/2}$, we must apply \pref{thm:maindlin}.
    We must modify the instances slightly first, since \pref{thm:maindlin} applies to unweighted instances.

    To obtain an unweighted instance, we split $\Psi_S$ further into $r = \log^2 n$ instances, $\Psi_S^{(1)},\ldots,\Psi_S^{(r)}$.
    We split as follows:
    let $c_L^{(i)}$ denote the coefficient of $\chi_L(x)$ in $\Psi_S^{(i)}$.
    For each nonzero $c_L$, we choose $|c_L|$ uniformly random indices $i_1,\ldots, i_{|c_L|} \in [r]$, and assign $c_L^{(i_j)} = \frac{c_L}{|c_L|}$ for each $j = 1,\ldots, |c_L|$
    (recall that with high probability $|c_L|\le O(\log n) < r$).
    Let $m_i$ be the number of constraints in $\Phi^{(i)}_S$, so that we have $m \ge \sum_{i} m_i$ (since $c_L$ may be a sum of negative and positive constraints from the full instance $\Phi$).

    First, note that
    \[
	\frac{|\Psi_S(x)|}{m} \le \frac{\left|\sum_{i = 1}^r \Psi_S^{(i)}(x)\right|}{\sum_{i} m_i} \le \max_{i\in[r]} \frac{|\Psi_S^{(i)}(x)|}{m_i}\mper
    \]

    It remains to argue that each instance $\Psi_S^{(i)}$ has bounded value with high probability.
    Towards this, consider the properties of $\Psi_S^{(i)}$.
    First, we note that the constraints of $\Psi_S^{(i)}$ are independent of one another and are distributed symmetrically about zero--this is because the $c_L$ are independent of one another and symmetrically distributed about zero.
    Furthermore, we have that each $c_L^{(i)}$ is nonzero with probability $\hat q$:
    \begin{align*}
	\hat q
	\defeq \Pr[c_{L}^{(i)} \neq 0]
	&= \sum_{j=1}^{r}\Pr[|c_L| = j,~ i ~\text{chosen}]\\
	&= \sum_{j=1}^{r} \Pr[i ~\text{chosen} ~|~ |c_L| = j] \cdot \Pr[|c_L| = j]\\
	&= \sum_{j=1}^{r} \frac{j}{r} \cdot \Pr[|c_L| = j]\mcom
    \end{align*}
    where we have taken the sum up to $r$ because we implicitly condition on $|c_L| \le O(\log n)$ (which occurs with high probability).
    We thus have that
    \[
	\hat q
	= \sum_{j=1}^{r} \frac{j}{r} \cdot \Pr[|c_L| = j]
	\le \Pr[|c_L| > 0 ] \leq p n^{k-|S|},
    \]
    and also that
    \[
	\hat q
	= \sum_{j=1}^{r} \frac{j}{r} \cdot \Pr[|c_L| = j]
	\ge \frac{1}{r} \cdot \Pr[|c_L| > 0]  \geq \frac{1}{\log^2 n} \cdot \frac{pn^{k-|S|}}{2} \mper
    \]
The last inequality follows by observing that with probability at
least $pn^{k-|S|}$, at least one of the chosen constraints in $\Phi$ contributes
to $c_L$, and conditioned on this event, the contributions to $c_L$ sum to zero with probability at most $1/2$.
    Therefore, $\hat q = \delta \cdot pn^{k-|S|}$ for some $\delta \in [\frac{1}{2\log^2 n}, 1]$.

    Thus, each ${\Phi^{(i)}_S}'$ is a random $|S|$-XOR instance in which each clause is revealed with probability $\hat q$.

    From \pref{thm:maindlin}, with high probability we can refute a random $|S|$-XOR instance in which each clause is present with probability $\hat q$ in time $\exp(\tilde O(n^{\delta}))$ so long as $\hat qn^{|S|-1} \ge \tilde O(n^{(|S|/2-1)(1-\delta)})$, certifying that the instance satisfies at most $\frac{1}{2} + \gamma + o(1)$ clauses for any constant $\gamma > 0$.
This condition on $\hat qn^{|S|-1}$ holds by our assumption that $p \ge \tilde O(n^{-k(1+\delta)/2 +\delta})$ (as long as we make the correct adjustments of logarithmic factors on $p$ to account for the value of $\hat q$).
We can also certify with high probability that the fraction of satisfied constraints is at least $\frac{1}{2} - \gamma$ for any constant $\gamma$, by applying the same argument with the negations of the $c_L^{(i)}$.

Thus, in time $\exp(\tilde O(n^{\delta}))$, with high probability we can certify that $\frac{| {\Phi^{(i)}_S(x)}|}{m_j} \le \gamma $, implying by a union bound over $i\in[r]$ that $\frac{|\Psi_S(x)|}{m} \le \gamma$.
    \end{itemize}
    Using this case analysis, we have that
    \[
    \frac{P_{\Phi}(x)}{m}
    \le \hat P(\emptyset) + \gamma \cdot \sum_{\substack{S \subseteq k\\|S| \ge 1}} \hat P(S)
    \]
    and since $\hat P(S)$ can depend only on $k$, for large enough $n$, with high probability over $\Phi$ we can certify that $\frac{P_{\Phi}(x)}{m} \le \hat P(\emptyset) + \gamma'$ for any constant $\gamma'$ in time $\exp(\tilde O(n^{\delta}))$ when $m/n \ge \tilde O(n^{(k/2-1)(1-\delta)})$.

    The same conclusion holds in the degree-$O(n^{\delta})$ \sos relaxation, as every step of this proof holds within the \sos proof system (because \pref{thm:AOW} and \pref{thm:maindlin} hold within the \sos proof system).
\end{proof}

The proof of \pref{thm:twise} proceeds almost identically, except that instead of using the Fourier expansion of the predicate $P$, we use the degree-$t$ polynomial $Q(x)$ given by the work of Allen et al. \cite{AOW15} (\pref{thm:od}).
Because $P(x) \le (1-\delta) + Q(x)$, and since $Q(x)$ has no constant term, the proof we applied to the degree $\ge 1$ terms of the Fourier expansion of $P$ applies to $Q(x)$, and this completes the proof.

\begin{lemma}\label{lem:dhat}
    For any $0\le q\le1$, define $\hat k(N,q)$ to be the distribution over scalars such that $X\sim \hat k(N,q)$ is a sum of $N$ independent variables, each $0$ with probability $1-q$, $-1$ with probability $q/2$, and $1$ with probability $q/2$.
    Then if $Nq \ge 1$, for any constant $c$, there exists a constant $c'$ such that
    \[
	\Pr_{X\sim \hat k(N,q)}[|X| \le \sqrt{c' Nq \log N}] \ge 1- N^{-c}\mcom
    \]
    and if $Nq < 1$, for any constant $c$ there exists a constant $c'$ such that
    \[
	\Pr_{X\sim \hat k(N,q)}[|X| \le c'\log N] \ge 1- N^{-c}\mper
    \]
\end{lemma}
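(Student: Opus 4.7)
The plan is to recognize $\hat k(N,q)$ as a sum of $N$ i.i.d., mean-zero, bounded random variables and to apply a standard Bernstein-type concentration inequality. Specifically, writing $X = \sum_{i=1}^N Y_i$ with $Y_i$ i.i.d. supported on $\{-1,0,1\}$, we have $\E[Y_i] = 0$, $|Y_i| \le 1$, and $\E[Y_i^2] = q$, so $\Var[X] = Nq$. Bernstein's inequality then gives
\[
\Pr[|X| \ge t] \;\le\; 2\exp\!\left(-\frac{t^2/2}{Nq + t/3}\right),
\]
and the two cases in the lemma correspond to the two regimes of this bound (the Gaussian regime $t \lesssim Nq$, and the Poisson/exponential regime $t \gtrsim Nq$).

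In the first case, when $Nq \ge 1$, I would take $t = \sqrt{c' Nq \log N}$. Since we may assume $t \le Nq$ (otherwise the bound is trivial, handled by the second regime), the denominator is dominated by $Nq$, so the exponent is at least $t^2/(4Nq) = (c'/4)\log N$. Choosing $c' \ge 4(c+1)$ makes the tail probability at most $2N^{-c-1} \le N^{-c}$ for $N$ large, as required.

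In the second case, when $Nq < 1$, I would take $t = c' \log N$. Here $Nq < 1 < t/3$ for $N$ large, so the denominator is at most $2t/3$, and Bernstein's exponent is at least $t^2/(2 \cdot 2t/3) = 3t/4 = (3c'/4)\log N$. Taking $c' \ge 4(c+1)/3$ yields $\Pr[|X| \ge c'\log N] \le 2N^{-c-1} \le N^{-c}$.

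There is no real obstacle here: the lemma is a direct application of Bernstein and the only care needed is to track the regime where the variance term versus the boundedness term dominates Bernstein's denominator. If one wishes to avoid invoking Bernstein as a black box, an equivalent route is a direct Chernoff MGF computation using $\E[e^{\lambda Y_i}] \le \exp(q(\cosh\lambda - 1)) \le \exp(q(e^{\lambda^2/2}-1))$ in regime one, and $\E[e^{\lambda Y_i}] \le 1 + q(\cosh\lambda - 1) \le \exp(q e^\lambda)$ with $\lambda = \log(t/(Nqe))$ in regime two (essentially the Poisson tail bound $\Pr[Z \ge t] \le (Nqe/t)^t$ applied to the count of nonzero terms, which upper-bounds $|X|$).
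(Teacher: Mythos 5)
Your proof is essentially the paper's proof: decompose $X$ into $N$ i.i.d.\ bounded mean-zero summands with variance $Nq$, invoke Bernstein's inequality, and set $t=\sqrt{c'Nq\log N}$ in the first regime and $t=c'\log N$ in the second; the paper's proof of \prettyref{lem:dhat} does exactly this.

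One shared caveat, worth flagging only because your write-up makes it explicit: in the range $1\le Nq\lesssim\log N$, the term $t/3$ in Bernstein's denominator dominates $Nq$, so the exponent degrades to $\Theta\bigl(\sqrt{Nq\log N}\bigr)$ rather than $\Theta(\log N)$. Your remark that this sub-case is ``trivial'' or ``handled by the second regime'' is not quite right, since there $\sqrt{c'Nq\log N}$ can be strictly smaller than $c'\log N$, so the first bound is a genuinely stronger claim and does not follow from the second. The paper's own proof silently elides the same corner, and in the downstream application (the proof of \prettyref{thm:csp-refutation}) the quantity $Nq=pn^{k-|S|}$ is taken either $<1$ or sufficiently large that the issue does not affect the final $o(1)$ bounds; but as stated the regime-one claim would need either the stronger hypothesis $Nq\ge\Omega(\log N)$ or the weaker conclusion $|X|\le\sqrt{c'Nq\log N}+c'\log N$.
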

\begin{proof}
    By definition, for $X\sim \hat k(N,q)$, $X = \sum_{i=1}^N x_i$ for $x_i$ distributed according to $\hat k(1,q)$.

    It is easy to see that $\E[X] = 0$, and to calculate that $\Var(X) = qN$, and we note also that $|x_i| \le 1$.
   Therefore when $Nq \ge 1$, from a Bernstein inequality we have that
   \[
	\Pr[|X| \ge t] \le \exp\left(\frac{-t^2/2}{t/3 + qN}\right),
   \]
   and taking $t = \sqrt{4cqN\log N}$, and using that $qN \ge 1$, we have the desired result.

   When $qN < 1$, we apply the same bound with $t = 4c\log N$ to obtain our second result.
\end{proof}

\section{Sum-of-Squares Algorithms}\label{sec:sos}
In this section, we use our spectral algorithms to certify \sos upper bounds.
\subsection{Background}

Suppose we wish to maximize an $n$-variate polynomial $f_{obj}(x)$ over $x \in \cC \subset \R^{n}$, where $\cC$ is some subset of $\R^n$ defined by polynomial constraints.
This problem is clearly \np-hard in general.
The sum-of-squares hierarchy is a hierarchy of semidefinite programming relaxations for such polynomial optimization problems.
The $d$-round \sos relaxation is a program with variables $X_S$ for each $S \in \{\emptyset \cup [n]\}^{2d}$.
For each $S \in\{\emptyset \cup [n]\}^{2k}$, the variable $X_S$ is a relaxation for the monomial $\prod_{i\in S} x_i$.
\begin{definition}[standard degree-$d$ \sos constraints]\label{def:sosconst}
The basic constraints for the $d$-round sum-of-squares relaxation are:
\begin{align}
     X_{\emptyset} &= 1 \qquad \text{representing the constant term} \\
    X_{A,B} &= X_{C,D} \qquad  \forall A,B,C,D \in \{\emptyset \cup [n]\}^{k}~ \text{if}~ (A,B) = (C,D) ~ \text{as unordered multisets} \label{eq:symconst}\\
    \cX &\succeq 0 \label{eq:psdness}
\end{align}
where $\cX$ is a $\{\emptyset \cup [n]\}^d \times \{\emptyset \cup [n]\}^d$ matrix whose $(A,B)$th entry contains $X_{A,B}$.
We refer to this set of constraints as $\sos_d$.
    If there are additional polynomial constraints $g_1(x) = 0,\ldots,g_m(x) = 0$, then we also add the constraints
    \[
	X_{S} \circ g_j(X) = 0 \quad \forall j, S: \deg(g_j) +|S| \le 2d,
    \]
where the notation $\circ$ is used to mean replacing each variable $X_{T}$ appearing in $g_j$ with the variable $X_{S,T}$.
\end{definition}
A useful alternate characterization of \pref{eq:psdness} is that for any polynomial $f$ of degree at most $d$, we have that $f^2(X) \ge 0$, where $f^2(X)$ is the function given by evaluating the coefficients of $f^2$ at the stand-in monomials given by the variables of the program.
These are all constraints that any true polynomial solution satisfies.

We define the linear operator $\pE:\R[x]^{\le 2d} \to \R$, which maps any monomial of degree at most $2d$ to the \sos variable identified with it.
It is sometimes instructive to think of the variable $X_{S}$ as a \emph{pseudoexpectation} or a \emph{pseudomoment} of the monomial $\prod_{i\in S} x_i$ over feasible solutions which maximize the objective function:
\[
    X_{S} = \pE_{x~\text{maximizing}~f_{obj}}\left[\prod_{i \in S} x_i\right].
\]
Intuitively, the constraints of the SDP force the solution to behave somewhat like the moments of a probability distribution over feasible maximizing solutions, although they needn't correspond to the moments of a true distribution, hence the term \emph{pseudomoment}.
See e.g. \cite{Barak14} for more background.

\subsection{Relaxations for tensor norm and $k$-XOR}
The natural \sos relaxations for computing the tensor norm and for maximizing $k$-CSPs are very similar to each other.
Both correspond to polynomial maximization problems, where the constraint is that the maximizing solution $x \in \R^n$ lie on the unit sphere or on the Boolean hypercube.
Save for these ``normalization'' constraints and the natural SDP constrains $\sos_d$, there are no other constraints.

\begin{definition}[$d$-round \sos relaxation for tensor norm]
    Given an order-$k$ tensor $\bT$, for any $d \ge \lceil k/2\rceil$, the $d$-round \sos relaxation for the injective tensor norm is
\begin{align}
    & \max~ \pE\left[\iprod{\bT,x^{\tensor k}}\right]\nonumber\\
    &\quad s.t. \quad
    \pE\left[\sum_{i \in [n]} x_i^2\right] = \sum_{i \in [n]} X_{i,i} = 1\mcom \label{eq:normconst}
\end{align}
With the addition of the standard $d$-round \sos constraints.
\end{definition}

\begin{definition}[$d$-round \sos relaxation for $k$-XOR]
    Given an instance $\Phi$ of $d$-XOR with constraint tensor $\bT_{\Phi}$ defined as described in \pref{sec:lin}, for any $d \ge \lceil k/2\rceil$, the $d$-round \sos relaxation is given by
\begin{align}
    &\max~\pE\left[\iprod{\bT_{\Phi},x^{\tensor k}}\right]\nonumber \\
    &\quad s.t. \quad
    \pE\left[x_i^2\right] = 1 \qquad \forall i \in [n]\mcom \label{eq:booleanity}
\end{align}
With the addition of the standard $d$-round \sos constraints.
\end{definition}

\subsection{Bounds for tensor norm}
In this subsection, we show how bound the objective value of the $d$-round \sos relaxation for a polynomial optimization problem when $\pE(\sum_{i} x_i^2)$ is known, in terms of the operator norm of a specific matrix.
We will use $\succeq$ and $\preceq$ to denote inequalities that are sum-of-squares identities.

We will require the use of the following lemma, which is standard in \sos-proofs.
\begin{lemma}[\sos matrix inner product]\label{lem:sos-matrix}
    Let $M$ be an $[n]^{d} \times [n]^d$ matrix, and let $\pE$ be a degree-$2d$ pseudoexpectation.
    Then
    \[
	\pE\iprod{x^{\tensor 2d}, M} \preceq \pE[\|x\|^{2d}] \cdot \|M\|.
    \]
\end{lemma}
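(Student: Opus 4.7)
The plan is to reduce the lemma to the basic matrix fact that $\|M\| \cdot I - M$ is a symmetric positive semidefinite matrix (after a symmetrization step), and then lift this PSD inequality into a sum-of-squares identity in the polynomial variables $x$.

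First I would reduce to the case where $M$ is symmetric. Since $\iprod{x^{\tensor 2d}, M} = (x^{\tensor d})^\top M (x^{\tensor d})$ is a scalar, it equals its own transpose and hence equals $(x^{\tensor d})^\top \bigl(\tfrac{M + M^\top}{2}\bigr)(x^{\tensor d})$. Because $\|\tfrac{M + M^\top}{2}\|_{op} \le \|M\|_{op}$, it suffices to prove the lemma for the symmetric matrix $M' = (M+M^\top)/2$ with the original bound $\|M\|_{op}$.

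Next I would verify the underlying SoS identity. Writing the spectral decomposition $\|M\|_{op} \cdot I - M' = \sum_i \mu_i v_i v_i^\top$ with $\mu_i \ge 0$ (valid since every eigenvalue of $M'$ is at most $\|M\|_{op}$ in absolute value), I get
\[
\|M\|_{op}\cdot \iprod{x^{\tensor d}, x^{\tensor d}} - \iprod{x^{\tensor 2d}, M'} \;=\; \sum_i \mu_i \bigl\langle v_i, x^{\tensor d}\bigr\rangle^2,
\]
which is a nonnegative linear combination of squares of degree-$d$ polynomials in $x$. The quantity $\iprod{x^{\tensor d}, x^{\tensor d}}$ equals $(\sum_i x_i^2)^d = \|x\|^{2d}$, also a manifest sum of squares of degree $2d$. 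Hence we have the SoS identity
\[
\|M\|_{op}\cdot \|x\|^{2d} - \iprod{x^{\tensor 2d}, M} \;\succeq\; 0
\]
(of degree $2d$, using the symmetrization in the previous step to pass from $M'$ back to $M$).

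Finally, applying the pseudoexpectation $\pE$ to both sides preserves the inequality, since $\pE$ is by definition nonnegative on sums of squares of degree $\le d$ polynomials. This yields $\pE\bigl[\iprod{x^{\tensor 2d}, M}\bigr] \le \|M\|_{op} \cdot \pE[\|x\|^{2d}]$, which is exactly the stated bound. There is no real obstacle here: the entire argument is an instance of the standard \emph{SoS Cauchy-Schwarz / PSD-lift} trick, and the only small subtlety is ensuring the matrix is symmetric before invoking its spectral decomposition, which we handle by the initial symmetrization.
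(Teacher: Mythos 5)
Your proof is correct and follows essentially the same route as the paper's: write $\|M\|\cdot I - M$ (after symmetrization) as a sum of squares via its spectral decomposition, then apply the pseudoexpectation's nonnegativity on squares. The only difference is that you make the symmetrization step explicit, whereas the paper tacitly assumes $M$ is (or may be taken to be) symmetric when asserting $\lambda \cdot I - M \succeq 0$.
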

\begin{proof}
    By assumption, $\lambda \cdot I - M \succeq 0$, and therefore the expression $I - M$ can be written as a sum-of-squares of degree at most $2d$.
    We thus have that
    \begin{align*}
	0
	& \preceq \pE \left[ \iprod{x^{\tensor 2d}, \lambda \cdot I - M}\right]\\
\pE \left[ \iprod{x^{\tensor 2d}, M}\right]
	& \preceq
	\lambda \cdot \pE \left[ \iprod{x^{\tensor 2d}, I}\right]
	~= \lambda \cdot \pE[\|x\|^{2d}]\mcom
    \end{align*}
    as desired.
\end{proof}

We will also make use of standard \sos versions of the Cauchy-Schwarz Inequality, and H\"older's Inequality, proofs of which can be found in \cite{BBHKSZ12}, for example.
Additionally we will use the following fact, which can be proven by induction:
\begin{fact}[\sos-Convexity]
    For any $d \in \N$, if $\pE$ is a degree-$2dk$ pseudodistribution and $f(x)$ is a polynomial of degree at most $k$, then
    \[
	\pE[f(x)]^{2d} \preceq \pE[f(x)^{2d}]\mper
    \]
\end{fact}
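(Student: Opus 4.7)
The proof proceeds by induction on $d$, with SoS Cauchy-Schwarz (cited from \cite{BBHKSZ12}) as the sole engine. The base case $d = 1$ is immediate: applying SoS Cauchy-Schwarz with $u = f(x)$ and $v = 1$ yields $\pE[f]^2 = \pE[f \cdot 1]^2 \preceq \pE[f^2]\,\pE[1] = \pE[f^2]$, using the SDP normalization $\pE[1] = 1$ from $\sos_d$.

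For the inductive step, I would split the target inequality $\pE[f]^{2d} \preceq \pE[f^{2d}]$ into two pieces. Raising the base-case inequality to the $d$th power gives $\pE[f]^{2d} \preceq (\pE[f^2])^d$ (valid since both sides are non-negative real scalars). It then remains to show $(\pE[f^2])^d \preceq \pE[f^{2d}]$, which is a log-convexity statement about the moment sequence $\mu_m := \pE[f^{2m}]$ for $0 \leq m \leq d$. The log-convexity $\mu_m^2 \leq \mu_{m-1}\,\mu_{m+1}$ (for $1 \leq m \leq d-1$) is itself a further application of SoS Cauchy-Schwarz, this time with $u = f^{m-1}$ and $v = f^{m+1}$. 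Combined with $\mu_0 = 1$, log-convexity implies that the ratios $\mu_m/\mu_{m-1}$ are non-decreasing in $m$, hence $\mu_d = \mu_0 \prod_{m=1}^d \mu_m/\mu_{m-1} \geq (\mu_1/\mu_0)^d = \mu_1^d$, which is precisely the needed $\pE[f^{2d}] \geq (\pE[f^2])^d$.

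The main (minor) obstacle is the degenerate case $\mu_1 = \pE[f^2] = 0$, which must be handled separately: the base-case inequality then forces $\pE[f] = 0$, so $\pE[f]^{2d} = 0 \preceq \pE[f^{2d}]$ (the right side is non-negative as a pseudoexpectation of a square). In the non-degenerate case $\pE[f^2] > 0$, log-convexity ensures $\mu_m > 0$ for all $m \in [0, d]$, so the ratio argument is well defined. Throughout, every intermediate moment $\pE[f^{2m}]$ with $m \leq d$ has degree at most $2m \cdot \deg(f) \leq 2dk$, within the pseudoexpectation's budget, and each Cauchy-Schwarz application stays within budget for $m \leq d-1$ (the polynomial $(\alpha u + v)^2$ has degree at most $2(m+1)k \leq 2dk$). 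Besides this degree accounting, the proof is a straightforward combination of two SoS Cauchy-Schwarz applications.
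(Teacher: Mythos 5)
Your proof is correct. Note that the paper itself does not supply a proof here; it merely asserts that the fact ``can be proven by induction,'' so there is nothing in the source to compare against. Your argument fills that gap cleanly: the base case $\pE[f]^2 \preceq \pE[f^2]$ is exactly SoS Cauchy--Schwarz with $v=1$; the reduction to $(\pE[f^2])^d \le \pE[f^{2d}]$ follows by raising both (real, non-negative) sides to the $d$th power; and the log-convexity of the even pseudomoment sequence $\mu_m = \pE[f^{2m}]$, i.e.\ $\mu_m^2 \le \mu_{m-1}\mu_{m+1}$ via Cauchy--Schwarz on $u=f^{m-1}$, $v=f^{m+1}$, chained against $\mu_0 = 1$, gives $\mu_d \ge \mu_1^d$. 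The degree bookkeeping is tight and correct --- for $m \le d-1$ the square $(\alpha f^{m-1} + f^{m+1})^2$ has degree $2(m+1)k \le 2dk$, so every Cauchy--Schwarz application stays within the pseudoexpectation's budget --- and you correctly isolate the degenerate case $\mu_1 = 0$, where the base case already forces $\pE[f] = 0$. One could shorten the write-up slightly by observing that $\mu_m \mu_1 \le \mu_{m+1}\mu_0 = \mu_{m+1}$ (i.e.\ $\mu_{m+1} \ge \mu_1 \mu_m$) follows from log-convexity plus $\mu_0=1$ by a one-line induction, avoiding the division-and-ratios framing entirely and hence the need for a separate degenerate case; but as written the argument is already complete and correct.
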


\begin{proposition}\label{prop:sos-even}
    Let $\bT$ be an order-$k$ tensor for even $k = 2\kappa$.
    Let $R,d\in\N$ such that $R \le dk$ and $k$ is even.
     Consider the $dk$-round \sos relaxation for the problem $\cP$,
    \begin{align}
	\max~ \pE\left[\iprod{\bT, x^{\tensor k}}\right] \qquad s.t. \quad \pE\left[ \|x\|^2_2\right] = \alpha \label{eq:normalpha}\mcom
    \end{align}
    Furthermore, let $T$ be the natural flattening of $\bT$ to an $n^{k/2}\times n^{k/2}$ tensor, let $\hat \cS_{dk/2}$ be the set of matrices that permute rows and columns of matrices in $[n]^{dk/2} \times [n]^{dk/2}$ according to actions of $\cS_{dk}$ on the coordinates in $[n]$
    Then in the $dk$-round \sos relaxation,
    \[
	\pE\left[\Iprod{\bT,x^{\otimes k}}\right] \le \alpha^{k/2} \cdot \left\|\E_{\Pi,\Sigma\in \hat\cS_{dk/2}}\left[\Pi T^{\otimes d} \Sigma\right]\right\|^{1/d}\mper
    \]
\end{proposition}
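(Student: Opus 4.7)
The plan is to reduce the SoS value directly to the matrix norm bound, lifting the informal derivation in \pref{eq:properrelax} to the pseudoexpectation setting. I would proceed in four steps.

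First, I would use an SoS convexity (power-mean) inequality to bound $\pE[f(x)]^d \preceq \pE[f(x)^d]$ for $f(x) = \iprod{\bT, x^{\otimes k}}$. Since $f$ has degree $k$, the polynomial $f^d$ has degree $dk$ and is visible to the $dk$-round SoS. For even $d$ this is immediate from $\pE[f]^{2d'} \le \pE[f^{2d'}]$; for odd $d$ we may assume $\pE[f] \ge 0$ (otherwise the target upper bound is trivial as the RHS is nonnegative) and handle the reduction by sandwiching with the $2d$-th power.

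Second, I would express the $d$-th power as a matrix quadratic form. Since $k = 2\kappa$ is even, $f(x) = (x^{\otimes \kappa})^\top T (x^{\otimes \kappa})$, and so
\[
    f(x)^d \;=\; (x^{\otimes d\kappa})^\top T^{\otimes d} (x^{\otimes d\kappa}) \;=\; \iprod{(x^{\otimes d\kappa})(x^{\otimes d\kappa})^\top,\; T^{\otimes d}}.
\]
Third, I would invoke the SoS symmetry constraints \pref{eq:symconst} to replace $T^{\otimes d}$ with its symmetrization $M := \E_{\Pi,\Sigma \in \hat\cS_{dk/2}}[\Pi T^{\otimes d} \Sigma]$. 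The key point is that the $dk$-round SoS pseudomoment $\pE[\prod_{i\in I}x_i \prod_{j\in J}x_j]$ depends only on the unordered multiset $I \sqcup J$, so for every $\Pi,\Sigma\in\hat\cS_{dk/2}$,
\[
    \pE\bigl[\iprod{(x^{\otimes d\kappa})(x^{\otimes d\kappa})^\top,\; T^{\otimes d}}\bigr] \;=\; \pE\bigl[\iprod{(x^{\otimes d\kappa})(x^{\otimes d\kappa})^\top,\; \Pi T^{\otimes d} \Sigma}\bigr],
\]
and averaging over $\Pi,\Sigma$ by linearity of $\pE$ yields the same identity with $T^{\otimes d}$ replaced by $M$ on the right. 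Finally, I would apply \pref{lem:sos-matrix} to $M$ to obtain $\pE[\iprod{(x^{\otimes d\kappa})(x^{\otimes d\kappa})^\top, M}] \preceq \pE[\|x\|^{dk}]\cdot \|M\|$, and use the polynomial constraint $\|x\|^2 = \alpha$ (ideal-level, giving $\pE[p(x)(\|x\|^2-\alpha)]=0$ for all $p$ of degree $\le dk-2$) to conclude $\pE[\|x\|^{dk}] = \alpha^{dk/2}$. Taking $d$-th roots yields the desired bound.

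The main obstacle is making the symmetrization step in the third point rigorous within SoS. The substitution $T^{\otimes d} \to \Pi T^{\otimes d}\Sigma$ must leave the pseudoexpectation unchanged, and this crucially requires that the $dk$-round symmetry constraints \pref{eq:symconst} equate all pseudomoments indexed by the same unordered multiset. Unpacking the coordinates, $(\Pi T^{\otimes d} \Sigma)_{I,J} = (T^{\otimes d})_{\pi^{-1}(I),\sigma^{-1}(J)}$, and when summed against $\pE[(x^{\otimes d\kappa})_I (x^{\otimes d\kappa})_J]$ a change of summation variable $I \mapsto \pi(I)$, $J \mapsto \sigma(J)$ combined with the multiset-invariance of the pseudomoment gives the invariance. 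A secondary technicality is the parity of $d$ in the SoS-convexity step, which is handled as noted above by reducing to the nonnegative case.
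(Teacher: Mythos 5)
Your proof is correct and follows the paper's argument essentially step for step: SoS-convexity to pass from $\pE[f]^d$ to $\pE[f^d]$, rewriting $f^d$ as a quadratic form $(x^{\otimes d\kappa})^\top T^{\otimes d}(x^{\otimes d\kappa})$, invoking the degree-$dk$ symmetry constraints to replace $T^{\otimes d}$ by its row/column-permutation average, applying \pref{lem:sos-matrix}, and using the ideal-level norm constraint to evaluate $\pE[\|x\|^{dk}]=\alpha^{dk/2}$. Your remark about the parity of $d$ flags a real gap the paper leaves implicit---the stated SoS-convexity fact covers only even powers, so for odd $d$ one must either choose $d$ even (harmless in the applications, since $d$ is a free parameter up to constants) or settle for the bound with $T^{\otimes 2d}$ in place of $T^{\otimes d}$; simply sandwiching with the $2d$-th power does not literally recover the stated bound with $\|\E_{\Pi,\Sigma}[\Pi T^{\otimes d}\Sigma]\|^{1/d}$, so if you want the proposition as written you should explicitly assume $d$ even.
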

\begin{proof}
    We have that
    \begin{align*}
	 \left(\pE\left[\iprod{\bT, x^{\tensor k}}\right]\right)^{d}
	&\preceq \pE\left[ \left(\iprod{\bT, x^{\tensor k}}\right)^{d} \right]\qquad \text{(by \sos-convexity)}\\
	&= \pE\left[ \iprod{\bT^{\tensor d}, x^{\tensor dk}} \right] \qquad \text{(by the symmetry constraints \pref{eq:symconst})}\\
	&= \pE\left[\Iprod{\left(\E_{\Pi,\Sigma \in \hat \cS_{dk}}\left[ \Pi( \bT^{\tensor k})\Sigma\right]\right),  x^{\tensor dk}} \right] \qquad \text{(by \pref{eq:symconst})}\\
	 &\preceq \left\|\E_{\Pi,\Sigma \in \hat \cS_{dk/2}}\left[ \Pi( \bT^{\tensor k})\Sigma\right]\right\| \cdot \pE\left[ \|x\|^{dk}\right]  \qquad \text{(by \pref{lem:sos-matrix})}
\end{align*}
The conclusion follows from \pref{eq:normalpha}.
\end{proof}

As an immediate corollary of the above and of \pref{thm:D-upper}, we have \pref{thm:inj-tensor-informal} for even $k$.
To get \pref{thm:inj-tensor-informal} for odd $k$, we can apply Cauchy-Schwarz before applying \sos-convexity, so that we are working with
\begin{align*}
    \iprod{\bT,x^{\otimes k}}^2
    &\preceq \left(\sum_{i \in [n]} x_i^2\right)\cdot \Iprod{\sum_{i\in[n]} T_i \tensor T_i, x^{\otimes 2(k-1)}}\\
    &= \left(\sum_{i \in [n]} x_i^2\right)\cdot \left(\Iprod{\sum_{i\in[n]} T_i \tensor T_i - \squares(T_i \tensor T_i), x^{\otimes 2(k-1)}} + \sum_{i\in[n]}\sum_{A [n]^{k-1}} T_{A,i}^2 \prod_{j\in A} x_j^2 \right),
\end{align*}
where $T_i$ is the $i$th slice of $\bT$, and $\squares(T_i\tensor T_i)$ corresponds to the entries of $T_i \tensor T_i$ which are squares of the base variables $\bT$.
The right-hand term is bounded by obtaining a high-probability bound of $O(\log n)$ on the maximum coefficient $T_{iA}^2$, and the left-hand term is bounded by following the same steps as in the proof of \pref{prop:sos-even}, then applying \pref{thm:odd-tensor}.

\subsection{Bounds for $k$-XOR}

For the case of $k$-XOR, the proof is a bit more complicated than for the case of tensor norms, because the matrix certificates we used have certain rows and columns deleted.
Still, the arguments are similar to our proof from \pref{sec:lin}.
All steps in the proofs from \pref{sec:lin-norm-sec} and \pref{sec:odd-lin-norm-sec} we can make into \sos proofs in an analogous way to the tensor norm \sos proofs above, except for the steps in which the high-multiplicity rows and columns are deleted.
This too is not difficult to see, and we will prove it for the even case.
We will require the following \sos fact:
\begin{claim} \label{claim:l1normsos}
Suppose $\pE$ is a degree $2d$ pseudoexpectation functional with
Boolean constraints, i.e., $\pE[x_i^2 \cdot r(x)] = \pE[r(x)]$ for all
$r(x)$ with $\deg(r) \leq 2d-2$.

Let $q = \sum_{\sigma} \hat{q}_{\sigma} x_\sigma$ and $r$ be
polynomials such that $\deg(q r^2) \leq 2d$.  Then,
    \[
	\pE[ q(x) r^2(x)] \leq \pE[ \norm{\hat{q}}_1 \cdot r^2]\mper
	\]
\end{claim}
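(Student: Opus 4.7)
The plan is to prove the stronger pointwise-\sos statement $(\Vert \hat q\Vert_1 - q(x)) \, r^2(x) \succeq 0$ modulo the Boolean ideal $(x_i^2 - 1)_{i \in [n]}$; applying $\pE$ then gives $\pE[q r^2] \le \pE[\Vert\hat q\Vert_1 \, r^2]$ as desired. By linearity I would first decompose
\[
    \Vert \hat q\Vert_1 - q(x) \;=\; \sum_\sigma |\hat q_\sigma|\bigl(1 - \epsilon_\sigma x_\sigma\bigr), \qquad \epsilon_\sigma := \operatorname{sign}(\hat q_\sigma) \in \{\pm 1\},
\]
so that it suffices to verify, for every monomial $x_\sigma$ appearing in $q$ and every sign $\epsilon \in \{\pm 1\}$, the single-term bound $\pE\bigl[(1 - \epsilon x_\sigma) r^2(x)\bigr] \ge 0$.

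The key observation is that modulo the Boolean ideal $x_\sigma^2 \equiv 1$, so
\[
    (1 - \epsilon x_\sigma)^2 \;=\; 1 - 2\epsilon x_\sigma + x_\sigma^2 \;\equiv\; 2(1 - \epsilon x_\sigma).
\]
Multiplying by $r^2$ and rearranging yields the formal identity
\[
    (1 - \epsilon x_\sigma)\, r^2(x) \;\equiv\; \tfrac{1}{2}\bigl((1 - \epsilon x_\sigma)\, r(x)\bigr)^2 \pmod{x_i^2 - 1},
\]
which exhibits $(1-\epsilon x_\sigma) r^2$ as a sum of squares up to the Boolean ideal. Since $\pE$ vanishes on the ideal (that is what the hypothesis $\pE[x_i^2 s(x)] = \pE[s(x)]$ encodes) and is nonnegative on genuine squares of low enough degree, we conclude $\pE[(1-\epsilon x_\sigma) r^2] \ge 0$.

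The step I expect to be fiddly is the degree bookkeeping. The hypothesis $\deg(q r^2) \le 2d$ only yields $|\sigma| + 2\deg(r) \le 2d$ for each $\sigma$ in the support of $q$, whereas the square $((1-\epsilon x_\sigma) r)^2$ has a priori polynomial degree $2(|\sigma| + \deg(r))$, which can exceed $2d$ when $|\sigma|\ge 1$. To handle this I would reduce the identity modulo the Boolean ideal before invoking nonnegativity: each occurrence of $x_i^2$ in $((1-\epsilon x_\sigma)r)^2$ can be eliminated one at a time using $\pE[x_i^2 s] = \pE[s]$ for $\deg(s)\le 2d-2$, bringing the expression inside $\pE$ down to the reduced form $2(1-\epsilon x_\sigma)r^2$, which has degree $|\sigma|+2\deg(r) \le 2d$ and to which $\pE$ applies directly. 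Equivalently, one can work throughout in the quotient ring $\mathbb{R}[x]/(x_i^2 - 1)$, where multilinear degrees are what matter and all the relevant polynomials are well within reach of $\pE$.

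Summing the single-term estimates $\pE[\epsilon_\sigma x_\sigma r^2] \le \pE[r^2]$ weighted by $|\hat q_\sigma|$ then gives
\[
    \pE[q(x)\, r^2(x)] \;=\; \sum_\sigma \hat q_\sigma\, \pE[x_\sigma r^2] \;\le\; \sum_\sigma |\hat q_\sigma|\, \pE[r^2] \;=\; \Vert\hat q\Vert_1\, \pE[r^2] \;=\; \pE[\Vert\hat q\Vert_1\, r^2],
\]
where the last equality uses that $\Vert\hat q\Vert_1$ is a scalar and $\pE$ is linear.
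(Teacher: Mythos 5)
Your argument is essentially the paper's (the paper compresses it to one line, $\pE[1 - x_\sigma] = \pE[(1-x_\sigma)^2] \ge 0$, applied per monomial, and in fact drops a factor of $\tfrac12$); you correctly add the sign $\epsilon_\sigma$ and the $r^2$ factor, and you rightly identify that the degree bookkeeping is the delicate part. But your proposed way of handling the degree does not work. The PSD constraint gives $\pE[p^2] \ge 0$ only for $\deg p \le d$, and $\pE$ is not defined on polynomials of degree exceeding $2d$. Under the stated hypothesis $|\sigma| + 2\deg r \le 2d$, the polynomial $(1-\epsilon x_\sigma)\,r$ may have degree $|\sigma| + \deg r$ as large as $2d$ (take $\deg r = 0$, $|\sigma| = 2d$), so its square has degree up to $4d$; there is then no expression $\pE[((1-\epsilon x_\sigma)r)^2]$ to ``reduce'' via $\pE[x_i^2 s]=\pE[s]$ in the first place, and passing to the quotient ring does not help since the multilinear degree of $(1-\epsilon x_\sigma)r$ can still exceed $d$.

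The missing idea is the standard monomial-splitting trick: factor $x_\sigma = a\cdot b$ with $\deg a, \deg b \le d - \deg r$, which is always possible since $|\sigma| \le 2(d-\deg r)$. Then $ar - \epsilon\, br$ has degree $\le d$, its square
\[
(ar - \epsilon\, br)^2 \;=\; a^2 r^2 \;-\; 2\epsilon\, x_\sigma r^2 \;+\; b^2 r^2
\]
has degree $\le 2d$ termwise, and iterated use of Booleanity gives $\pE[a^2 r^2] = \pE[b^2 r^2] = \pE[r^2]$, hence $\pE[(1-\epsilon x_\sigma) r^2] = \tfrac12 \pE[(ar - \epsilon\, br)^2] \ge 0$; summing over $\sigma$ with weights $|\hat q_\sigma|$ finishes. (In the paper's actual use of the claim, in \pref{prop:formula-sos}, one has $\deg q + \deg r \le dk$ with a degree-$2dk$ pseudoexpectation, so there your direct argument happens to go through; but the claim as stated is broader and requires the split.)
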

\begin{proof}
    Note that for each monomial $x_\sigma$,  $\pE[(1-x_{\sigma})] = \pE[
		(1-x_{\sigma})^2] \geq 0$.  Using this inequality for
		each of the monomials in $q$, the claim follows
		immediately.
\end{proof}

Now, we prove an \sos analogue of \pref{prop:formula}, which allows us to use the low-multiplicity restrictions of our certificate matrices to get our upper bounds.

\begin{proposition}\label{prop:formula-sos}
    Let $\Phi$ be a random $k$-XOR formula in which each clause is sampled independently with probability $p$.
    Let $\cC^d_{low} \subset [m]^d$ be the set of all ordered multisets of clauses $C_{i_1},\ldots,C_{i_d}$ from $\Phi$ with the property that if we form two multisets of variables $I,J \in [n]^{dk/2}$ with $I$ containing the first $k/2$ variables of each $C_{i_\ell}$ and $J$ containing the last $k/2$ variables of each $C_{i_\ell}$, then $I,J$ are both low-multipicity multisets, in that both have no element of $[n]$ with multiplicity $\ge 100\log n$.

    Then if $p \ge 200 \frac{\log n}{n^{k-1}}$ and $d \ll n $,
    and if $\pE$ is a pseudoexpectation of degree at least $2dk$, then
    \[
	\pE[P_{\Phi}(x)] \le \left(\E_{i_1,\ldots,i_d \sim \cC_{low}^d}\left[\prod_{\ell=1}^d P_{i_\ell}(x)\right]\right)^{1/d} + o\left(1\right)
    \]
for all $x \in \{\pm 1\}^n$ with high probability.
\end{proposition}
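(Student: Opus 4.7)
My plan is to parallel the classical proof of \pref{prop:formula}, with pseudoexpectation analogs replacing each pointwise inequality. The three main steps will be: (i) lift to the $d$-th moment via an SoS H\"older inequality, (ii) expand over clause indices and replace the full average over $[m]^d$ by one over $\cC_{low}^d$ up to an $o(1)$ additive error, and (iii) extract the $d$-th root.

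For (i), the key observation is that under the Booleanity constraints $\{x_j^2 = 1\}$, each $P_i(x) = \frac{1}{2}(1 + C_i(x))$ is congruent to $\frac{1}{4}(1 + C_i(x))^2$, so $P_i$---and hence $P_\Phi = \frac{1}{m}\sum_i P_i$---has an SoS certificate of nonnegativity. Iterated Cauchy--Schwarz for pseudoexpectations then gives $\pE[P_\Phi]^d \le \pE[P_\Phi^d]$ when $d$ is a power of two; for general $d$ one rounds up or invokes SoS-H\"older with integer exponents. For (ii), I expand $\pE[P_\Phi(x)^d] = m^{-d}\sum_{(i_1,\ldots,i_d) \in [m]^d} \pE[\prod_\ell P_{i_\ell}(x)]$ and split the sum into $\cC_{low}^d$ and its complement. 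Each summand satisfies $\pE[\prod_\ell P_{i_\ell}(x)] \in [0,1]$: nonnegativity follows from the SoS-square identity $\prod_\ell P_{i_\ell} \equiv \bigl(\prod_\ell \tfrac{1}{2}(1+C_{i_\ell})\bigr)^2$ modulo Booleanity, while the upper bound follows from \pref{claim:l1normsos}, since expanding $\prod_\ell \tfrac{1}{2}(1+C_{i_\ell}(x))$ in the Fourier basis yields $2^d$ terms each with coefficient $2^{-d}$, so the $\ell_1$ norm after multilinear reduction is at most $1$. Combining with \pref{lem:sizeofmultilinear}, which bounds $1 - |\cC_{low}^d|/m^d \le 1 - (1-\epsilon)^d = o(1)$ in the assumed density regime, yields $\pE[P_\Phi^d] \le \E_{i \sim \cC_{low}^d}[\pE[\prod_\ell P_{i_\ell}]] + o(1)$. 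Step (iii) then follows by taking $d$-th roots and using the concavity estimate $(a + b)^{1/d} \le a^{1/d} + b^{1/d}$ for $a,b \ge 0$.

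The hard part will be the root extraction in step (iii) when $d$ grows with $n$: an additive $o(1)$ error inside a $d$-th root does not automatically remain $o(1)$ afterward unless the dominant term inside the root is of constant order. This is precisely what happens in the intended applications (compare the use of \pref{prop:formula} in \pref{thm:refutation}), where the average on the right is $\Theta\bigl((\tfrac{1}{2} + \gamma)^d\bigr)$ in the refutation regime; in practice one works with the $d$-th-power bound directly and expands via binomial coefficients rather than applying the $1/d$-th root literally. A minor technical point is the SoS-H\"older step for non-power-of-two $d$, which is standard.
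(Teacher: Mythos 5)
Your Step (ii) contains a genuine gap: the claimed bound $1 - |\cC_{low}^d|/m^d \le 1-(1-\epsilon)^d = o(1)$ is false in the regime of interest. By \pref{lem:sizeofmultilinear}, $|\cC_{low}^d| \ge (1-\epsilon)^d m^d$ with $\epsilon$ at best $\Theta(d/(n\log n))$ (and the paper takes $\epsilon = \Theta(1/\log n)$). When $d$ grows polynomially in $n$, say $d = n^{\delta}$ for $\delta$ close to~$1$, we have $\epsilon d \gg 1$, so $(1-\epsilon)^d \approx e^{-\epsilon d} = o(1)$ and hence $1 - (1-\epsilon)^d = 1-o(1)$. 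After your split
\[
\pE[P_\Phi^d] \;=\; \tfrac{|\cC_{low}^d|}{m^d}\,\E_{\cC_{low}^d}\!\bigl[\pE[\textstyle\prod_\ell P_{i_\ell}]\bigr] \;+\; \tfrac{|\text{comp}|}{m^d}\,\E_{\text{comp}}\!\bigl[\pE[\textstyle\prod_\ell P_{i_\ell}]\bigr],
\]
bounding each complement summand by its $\ell_1$-coefficient norm (at most~$1$) gives only $\pE[P_\Phi^d] \le \E_{\cC_{low}^d}[\pE[\prod_\ell P_{i_\ell}]] + \bigl(1-(1-\epsilon)^d\bigr)$, and the additive error here is close to $1$ rather than $o(1)$. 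In the refutation application the main term $\E_{\cC_{low}^d}[\pE[\cdot]]$ is roughly $(\tfrac12+\gamma)^d$, so this error swamps it entirely. Your Step (iii) acknowledges that root extraction is delicate, but it misidentifies the problem: the error term is already not $o(1)$ \emph{before} taking the $d$-th root.

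The paper sidesteps this by never decomposing $\pE[P_\Phi^d]$ additively over $[m]^d$. Instead it conditions step by step: writing $\E_{\cC_{low}^d}\pE[\prod_i P_i^2]$ and peeling off one clause at a time, it shows
\[
\E_{C_1,\dots,C_d \sim \cC_{low}^d}\,\pE\Bigl[\textstyle\prod_{i\in[d]} P_i^2\Bigr] \;\succeq\; \E_{C_1,\dots,C_{d-1}}\,\pE\Bigl[(P_\Phi - o(1))\textstyle\prod_{i<d} P_i^2\Bigr],
\]
where the $o(1)$ controls, via \pref{claim:l1normsos}, the coefficient-$\ell_1$ norm of the gap $\Delta = \E[P_d^2 \mid C_{<d}] - P_\Phi$. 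Iterating yields $\pE[(P_\Phi - o(1))^d]$, so the additive error sits \emph{inside} the $d$-th power attached to the constant-order quantity $P_\Phi$, and SoS-convexity then gives $(\pE[P_\Phi] - o(1))^d$. This is structurally different from your plan: the error stays multiplicative at every step rather than accumulating additively over the complement of $\cC_{low}^d$. Without this restructuring, the bound you derive is not strong enough to conclude.
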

\begin{proof}

    We sample a uniform element  $\cC\sim\cC_{low}^d$, $\cC = C_1,\ldots,C_d$ in the following way:
    \begin{compactitem}
	\item For $t = 1,\ldots,d$:
	    Let $\cA_t \subset \cI$ be the set of clauses such that for any $C' \in \cA$, $C_1,\ldots,C_{t-1}, C' \in \cC_{low}^{t}$.
		Choose a uniformly random $C \sim \cA_t$ and set $C_t := C$, adding $C$ to $\cC$.
    \end{compactitem}
	This sampling process clearly gives a uniformly random element of $\cC_{low}^d$.

	Let $P_i(x)$ be the $0-1$ predicate corresponding to whether $x$ satisfies the clause $C_i$.
	Let $m_{\max}$ be the maximum number of clauses any variable in $\Phi$ participates in.
	Because $\pE\left[(P_i^2(x)-P_i(x))r(x)\right] = 0~ \forall r(x), \deg(r) \leq 2d$ we can write,
\begin{align*}
	\E_{C_1,\ldots,C_d \sim \cC_{low}^d} \pE\left[\prod_{i \in [d]} P_i(x)\right]
& =	\E_{C_1,\ldots,C_d \sim \cC_{low}^d} \pE\left[\prod_{i \in [d]}
    P^2_i(x)\right]  \\
    & =	\E_{C_1,\ldots,C_{d-1}} \pE\left[\left(\prod_{i \in [d-1]}
    P_i^2(x)\right) \cdot \left(P_{\Phi}(x) + \Delta_{C_1.\ldots,C_{d-1}}(x)
\right)\right]
\end{align*}
where $\Delta_{C_1,\ldots,C_{d-1}}(x) \defeq \E\left[ P_d^2(x) |
	C_1,\ldots,C_{d-1}\right] - \E[ P_d^2(x)] = \E\left[ P_d^2(x) |
	C_1,\ldots,C_{d-1}\right] - P_{\Phi}(x)$.  By definition, the
		$\ell_1$-norm of the coefficients of the polynomial
		$\Delta_{C_1,\ldots,C_{k-1}}$ is at most $d
		m_{max}/100 m \log n < o(1)$ with high probability, by concentration argument for $m$ and $m_{\max}$ (see the proof of \pref{prop:formula}) and by our requirement that $d \ll n$.
		Using \prettyref{claim:l1normsos}, this implies that
\begin{align*}
\E_{C_1,\ldots,C_{d-1}} \pE\left[\prod_{i \in [d-1]} P^2_i(x) \cdot \Delta_{C_1.\ldots,C_{d-1}}(x)\right]
& \preceq \E_{C_1,\ldots,C_{d-1}} \pE\left[\prod_{i \in [d-1]} P^2_i(x) \cdot
\norm{\Delta_{C_1.\ldots,C_{d-1}}(x)}_1\right]\\
    & \preceq o(1) \cdot \E_{C_1,\ldots,C_{d-1}} \pE\left[\prod_{i \in [d-1]}
P^2_i(x) \right]
\end{align*}
Therefore,
\begin{align*}
\E_{C_1,\ldots,C_d \sim \cC_{low}^d} \pE\left[\prod_{i \in [d]} P^2_i(x)\right]
    & \succeq  \E_{C_1,\ldots,C_{d-1}} \pE\left[ (P_{\Phi}(x) - o(1)) \prod_{i \in [d-1]} \cdot P_i^2(x)\right]
\end{align*}
Repeating the argument $d$ times, we can conclude that for even $d$,
\begin{align*}
\E_{C_1,\ldots,C_d \sim \cC_{low}^d} \pE\left[\prod_{i \in [d]} P^2_i(x)\right] & \succeq
    \pE \left[(P_{\Phi}(x) - o(1))^d\right]
     \geq \left(\pE\left[ P_{\Phi}(x) - o(1)\right]\right)^d
\end{align*}
Where the last inequality follows from \sos convexity.
This concludes the argument.
\end{proof}

This proposition, plugged into the argument from \pref{sec:lin-norm-sec} along with the \sos-ifying steps used for the tensor norm upper bound, gives \pref{thm:maindlin} for the even $k$ case.
The odd $k$ case can be obtained in a similar way.

\section*{Acknowledgements}
T.S. thanks Sam Hopkins for helpful conversations, and Jonah Brown-Cohen for helpful comments in the preparation of this manuscript.
\addreferencesection
\bibliographystyle{amsalpha}
\bibliography{writeup,pc}

\appendix
\section{Useful matrix concentration facts}\label{app:useful}

\restateprop{prop:tpm}
\begin{proof}
    For a positive semidefinite matrix $P$, $\|P\| \le \Tr(P)$.
    We apply this along with Markov's inequality:
    \begin{align*}
	\Pr[\|M\| \ge t]
	~=~ \Pr[\|(MM^\top)^\ell\| \ge t^{2\ell}]
	~\le~ \Pr[\Tr((MM^\top)^\ell) \ge t^{2\ell}]
	~\le~ \frac{1}{t^{2\ell}}\E[\Tr((MM^\top)^\ell)]
	~\le~ \frac{\beta}{t^{2\ell}}\mcom
    \end{align*}
    and the conclusion follows from taking $t = c \beta^{1/2\ell}$
\end{proof}

\subsection{Bound on the norm of a Rademacher matrix}
Here, we prove an upper bound on the norm of a Rademacher matrix.
Although tighter bounds are known (see e.g. \cite{NKV02}, we are off by a constant factor), we include this simpler, looser proof here in an effort to be self-contained.

The following lemma gives a bound on the size of an epsilon net needed to cover the unit sphere.
\begin{lemma}[see Lemma 5.2 in \cite{vershynin}]\label{lem:eps-net}
    For every $\epsilon > 0$, the unit Euclidean sphere $S^{n-1}$ equipped with the Euclidean metric has an $\epsilon$-net with volume at most $\left(1+\frac{2}{\epsilon}\right)^n$.
\end{lemma}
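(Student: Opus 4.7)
The plan is to prove this bound by the standard volume-packing argument for epsilon-nets on the sphere. The idea is to produce the $\epsilon$-net as a \emph{maximal} $\epsilon$-separated subset of $S^{n-1}$, and then bound its cardinality by a volume comparison in the ambient Euclidean space $\R^n$.

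First, I would let $\cN \subseteq S^{n-1}$ be a maximal set of points with the property that any two distinct $x,y \in \cN$ satisfy $\|x-y\|_2 > \epsilon$. Such an $\cN$ exists (take any maximal family with respect to inclusion, e.g. by Zorn's lemma or by a greedy construction). I would then observe that $\cN$ is automatically an $\epsilon$-net for $S^{n-1}$: if some point $p \in S^{n-1}$ satisfied $\|p-x\|_2 > \epsilon$ for every $x \in \cN$, then $\cN \cup \{p\}$ would be a strictly larger $\epsilon$-separated set, contradicting maximality.

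Next comes the volume argument. The open Euclidean balls of radius $\epsilon/2$ centered at the points of $\cN$ are pairwise disjoint, because the centers are pairwise at distance greater than $\epsilon$. Moreover, since every center lies on the unit sphere, each such ball is contained in the Euclidean ball of radius $1 + \epsilon/2$ about the origin. Comparing total Lebesgue volumes in $\R^n$ and using the fact that volumes of balls scale like the $n$-th power of the radius,
\[
    |\cN| \cdot \left(\frac{\epsilon}{2}\right)^n \mathrm{Vol}(B_1)
    \ \le\ \left(1 + \frac{\epsilon}{2}\right)^n \mathrm{Vol}(B_1),
\]
where $B_1$ is the unit ball in $\R^n$. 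Rearranging gives
\[
    |\cN| \ \le\ \left(\frac{1 + \epsilon/2}{\epsilon/2}\right)^n
    \ =\ \left(1 + \frac{2}{\epsilon}\right)^n,
\]
which is exactly the bound claimed.

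There is no substantial technical obstacle here; the only things to be careful about are (i) invoking maximality correctly to get the net property for free from the separation property, and (ii) checking that the dilated balls $B(x,\epsilon/2)$ really do sit inside $B(0,1+\epsilon/2)$, which is immediate from the triangle inequality since $\|x\|_2 = 1$ for $x \in \cN$.
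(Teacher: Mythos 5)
Your proof is correct and is the standard volume-packing argument; the paper does not supply its own proof of this lemma but simply cites Vershynin, whose Lemma~5.2 is proved by exactly the same maximal-separated-set plus disjoint-balls volume comparison you give. One small note: the lemma as stated says ``volume at most,'' which is a slip for ``cardinality at most''; you correctly interpret and bound the cardinality $|\cN|$.
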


We are ready to prove our bound.
\begin{theorem}\label{thm:rademacher}
Let $A$ be an $n \times n$ symmetric matrix with i.i.d. Rademacher entries.
Then for all $s \ge 0$,
\[
    \Pr\left( |\|A\| - 12\sqrt{n}| \ge s\right) \le \exp(-t^2/16).
\]
\end{theorem}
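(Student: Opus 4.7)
The plan is to follow the standard epsilon-net approach to bounding the operator norm of a random symmetric matrix, combined with Hoeffding's inequality to control $x^\top A x$ at each net point.

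First I would set up the epsilon-net reduction. Fix $\epsilon = 1/4$ and invoke \pref{lem:eps-net} to obtain a net $\cN \subset S^{n-1}$ of size $|\cN| \le 9^n$. A standard argument (approximating the top eigenvector of the symmetric matrix $A$ by a net point and bouding the error) gives
\[
    \|A\| ~\le~ \frac{1}{1-2\epsilon}\sup_{x\in\cN}\left|x^\top A x\right| ~=~ 2\sup_{x\in\cN}\left|x^\top A x\right|\mper
\]
So it suffices to obtain a tail bound on $|x^\top A x|$ for each fixed $x$, then apply a union bound over $\cN$.

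Next, I would control $x^\top A x$ for a fixed unit vector $x$. Expanding and using the symmetry of $A$,
\[
    x^\top A x ~=~ \sum_{i\in[n]} A_{ii}x_i^2 + 2\sum_{i<j} A_{ij}x_ix_j\mcom
\]
which is a sum of independent bounded random variables, since the only independent entries of $A$ are $A_{ii}$ and $A_{ij}$ for $i<j$. The sum of squared ranges is
\[
    \sum_i x_i^4 + \sum_{i<j} (2x_ix_j)^2 ~\le~ 2\Paren{\sum_i x_i^2}^2 ~=~ 2\mcom
\]
so Hoeffding's inequality yields $\Pr(|x^\top A x| \ge t) \le 2\exp(-t^2/4)$ for every $t \ge 0$.

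Finally I would combine the two ingredients. For any $r \ge 0$,
\[
    \Pr(\|A\| \ge r) ~\le~ |\cN|\cdot 2\exp(-r^2/16) ~\le~ 2\cdot 9^n\exp(-r^2/16)\mper
\]
Setting $r = 12\sqrt{n}+s$ and using $9^n = \exp(n\ln 9) \le \exp(2.2n)$,
\[
    2\cdot\exp\Paren{2.2n - \tfrac{(12\sqrt n + s)^2}{16}} ~\le~ 2\exp(2.2n - 9n)\exp(-s^2/16) ~\le~ \exp(-s^2/16)\mcom
\]
which gives the one-sided bound $\Pr(\|A\| - 12\sqrt n \ge s) \le \exp(-s^2/16)$. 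The lower tail follows symmetrically (or trivially, since the deviation below $12\sqrt n$ is bounded by $12\sqrt n$ itself and the bound is only informative when $s\le 12\sqrt n$). There is no real obstacle in this proof; the only care required is in choosing the net parameter $\epsilon$ so that the factor $9^n$ from the net is absorbed by the Gaussian tail at the centering point $12\sqrt n$, which is why the constant $12$ appears in the statement.
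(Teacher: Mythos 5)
Your proof follows the same route as the paper's: reduce to an $\epsilon$-net of $S^{n-1}$ via \pref{lem:eps-net}, apply Hoeffding's inequality to the bilinear form $x^\top A x$ at each net point, union bound over the net, and absorb the net's $9^n$ factor into the Gaussian tail by centering at $12\sqrt n$. If anything, your version is slightly cleaner than the paper's in tracking the two-sided bound $|x^\top A x|$ and the final constants.
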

\begin{proof}
    Let $\Lambda$ be an $\epsilon$-net over $\cS_{n-1}$,  with $\epsilon$ to be chosen later.
    By \pref{lem:eps-net}, we can choose $|\Lambda| \le (1 + \frac{2}{\epsilon})^n$.
    For any fixed $x \in \Lambda$,
    \[
	x^\top A x = \sum_{i<j} 2x_i x_j A_{ij} + \sum_{i} x_i^2 A_{ii}\mper
    \]
    Each $x_ix_j A_{ij}$ is an independent random variable.
    We have absolute bounds on the values of each variable, so we can apply a Hoeffding bound to this sum,
    \begin{align*}
	\Pr\left(x^\top A x \ge t\right)
	&\le \exp\left(\frac{-2t^2}{\sum_{i<j} (4x_i x_j)^2  + \sum_{i} (2 x_i^2)^2}\right)
	= \exp\left(\frac{-2t^2}{8\|x\|_2^4 - 4\|x\|_4^4}\right)
	\le \exp\left(\frac{-t^2}{4}\right)\mper
    \end{align*}
    Taking a union bound over $\Lambda$, we have
    \begin{align*}
	\Pr\left(\max_{x \in \Lambda} x^\top A x \ge t\right)
	\le (1+ \frac{2}{\epsilon})^n \cdot \exp\left(\frac{-t^2}{4}\right)\mper
    \end{align*}
    To extend the bound to any point $y \in \cS_{n-1}$, let $y$ be the maximizer of $y^\top A y$.
    We note that there must exist some $x \in \Lambda$ so that $\|x - y\| \le \epsilon$ by assumption.
    We have
    \[
	\left|y^\top Ay - x^\top A x\right|
	= \left|y^\top A(y-x) - x^\top A (x-y)\right|
	\le 2\epsilon \|A\|\mcom
    \]
    which by the triangle inequality implies
    \[
	\|A\| = y^\top A y \le \max_{x \in \Lambda} \{x^\top A x \}+ 2\epsilon \|A\|\qquad \implies \qquad
	\|A\| \le \frac{1}{1-2\epsilon} \max_{x\in\Lambda}\{ x^\top A x\}.
    \]
    Taking $\epsilon = 1/4$ and  $t = 2 \sqrt{ n \log(1 + \frac{2}{\epsilon})} + s$ concludes the proof.
\end{proof}

\end{document}